\providecommand{\algorithmname}{Algorithm}
\numberwithin{equation}{section}
\numberwithin{figure}{section}
\numberwithin{table}{section}
\theoremstyle{plain}
\newtheorem{thm}{\protect\theoremname}
\theoremstyle{plain}
\newtheorem{assumption}[thm]{\protect\assumptionname}
\theoremstyle{plain}
\newtheorem{prop}[thm]{\protect\propositionname}
\theoremstyle{definition}
\newtheorem{defn}[thm]{\protect\definitionname}
\theoremstyle{remark}
\newtheorem{rem}[thm]{\protect\remarkname}
\theoremstyle{plain}
\newtheorem{lem}[thm]{\protect\lemmaname}
\theoremstyle{plain}
\newtheorem{cor}[thm]{\protect\corollaryname}
\theoremstyle{definition}
\newtheorem{example}[thm]{\protect\examplename}
\numberwithin{algorithm}{section}
\theoremstyle{plain}
\newcommand{\Uniform}{\text{Uniform}}
\newcommand{\dif}{\bar \Delta}
\newcommand{\lr}[1]{\left(#1\right)}
\newcommand{\lrb}[1]{\left[#1\right]}
\newcommand{\lrcb}[1]{\left\{#1\right\}}
\providecommand{\assumptionname}{Assumption}
\providecommand{\corollaryname}{Corollary}
\providecommand{\definitionname}{Definition}
\providecommand{\lemmaname}{Lemma}
\providecommand{\propositionname}{Proposition}
\providecommand{\remarkname}{Remark}
\providecommand{\theoremname}{Theorem}
\providecommand{\examplename}{Example}
\begin{document}
\title{Solving the Poisson equation using coupled Markov chains}
\author{Randal Douc\thanks{randal.douc@telecom-sudparis.eu}~ (SAMOVAR, T\'el\'ecom SudParis, Institut Polytechnique de Paris),\\
Pierre E. Jacob\thanks{pierre.jacob@essec.edu}~ (ESSEC Business
School),
 Anthony Lee\thanks{anthony.lee@bristol.ac.uk}~ (University of Bristol)\\
\& Dootika Vats\thanks{dootika@iitk.ac.in}~ (IIT Kanpur)}

\maketitle

\begin{abstract}
This article shows how coupled Markov chains that meet exactly after a random number of iterations
can be used to generate unbiased estimators of the solutions of the Poisson equation. 
Through this connection, we re-derive
known unbiased estimators of expectations with respect to the stationary distribution of a Markov chain
and provide conditions for the finiteness of their moments.
We further construct
unbiased estimators of the asymptotic variance of Markov chain
ergodic averages, and provide conditions for the finiteness of the estimators' moments
of any order. If their second moment is finite, the average of independent copies of such estimators converges to the asymptotic variance at the Monte Carlo rate, 
comparing favorably to known rates for batch means and spectral variance estimators.
The results are illustrated with numerical experiments.
\end{abstract}
\global\long\def\fishytestestimator{G}%
\global\long\def\fishytest{g}%
\global\long\def\test{h}%

\tableofcontents{}

\section{Introduction\label{sec:intro}}

\subsection{Central Limit Theorem and the Poisson equation\label{sec:intro:clt}}

Markov chain Monte Carlo (MCMC) methods form a convenient family of
simulation techniques with many applications in statistics. With $(\mathbb{X},\mathcal{X})$
a measurable space and $\pi$ a probability measure of statistical
interest, MCMC involves simulation of a discrete-time, time-homogeneous
Markov chain $X=(X_{t})_{t\geq0}$, with a $\pi$-invariant Markov
transition kernel $P$ and initial distribution $\pi_{0}$. Letting
$L^{p}(\pi)=\{f:\pi(\left|f\right|^{p})<\infty\}$, where $\pi(f)=\int f(x)\pi({\rm d}x)$,
the interest is to approximate $\pi(h)$ for some function $h\in L^{1}(\pi)$, termed the \emph{test function}.

In particular, after simulating the chain until time $t$, one may
approximate an integral of interest $\pi(h)$ via the \emph{ergodic average} $t^{-1}\sum_{s=0}^{t-1}h(X_{s})$.
Under weak assumptions, such averages converge almost surely to $\pi(h)$
as $t\to\infty$ \citep[see for example Theorem 17.0.1 in][]{meyn:tweedie:1993}, and under stronger but still realistic assumptions
on $\pi_{0}$, $P$ and $h$, ergodic averages
satisfy central limit theorems (CLTs),
\begin{equation}
\sqrt{t}\left(\frac{1}{t}\sum_{s=0}^{t-1}h(X_{s})-\pi(h)\right)\overset{d}{\to}\text{Normal}(0,v(P,h)),\qquad\text{as }t\to\infty,\label{eq:clt}
\end{equation}
where $v(P,h)$ is the asymptotic variance associated with the Markov
kernel $P$ and the function $h$ \citep[see for example Theorem 1 in][]{jones2004markov}. 
One standard route to proving a CLT is via a solution of the Poisson
equation for $h$ associated with $P$, i.e. any function $g$ such
that
\begin{equation}
(I-P)g=h-\pi(h)=:h_{0},\label{eq:poisson-equation}
\end{equation}
where $P$ is viewed as a Markov operator, i.e. $Pg:x\mapsto \int P(x,{\rm d}y)g(y)$ and $I$ is the identity.
Theorem~\ref{thm:clt-kappa} below \citep[or, for example, Theorem 21.2.5 in][]{douc2018MarkovChains} 
provides conditions for \eqref{eq:clt} to hold and for the asymptotic variance to be of the form
\begin{equation}
v(P,\test)=\mathbb{E}_{\pi}\left[\left\{ \fishytest(X_{1})-P\fishytest(X_{0})\right\} ^{2}\right]=2\pi(h_{0}\cdot g)-\pi(h_{0}^{2}),\label{eq:avar-intro}
\end{equation}
where $\mathbb{E}_{\pi}$ indicates that $X_{0}\sim\pi$. 
We focus on solutions $g=g_{\star}+c$ with $g_{\star}$ defined as follows.

\begin{defn}
\label{def:g-star} With $\test\in L^{1}(\pi)$ and $\test_{0}=\test-\pi(\test)$,
define the function
\begin{equation}
\fishytest_{\star}:=\sum_{t=0}^{\infty}P^{t}h_{0}.
\end{equation}
\end{defn}

If $g_{\star}$ is well-defined,
then it is straightforward to check that $(I-P)g_{\star}=h_{0},$
so that $g_{\star}$ is indeed a solution to the Poisson equation \eqref{eq:poisson-equation}. For brevity, we will call solutions to \eqref{eq:poisson-equation} \emph{fishy}
functions. Fishy functions are not unique, since $g_{\star}+c$
is also fishy for any constant $c\in\mathbb{R}$. Moreover, if $\pi$ is the unique invariant distribution of $P$ and if $g_{\star}\in L^{1}(\pi)$
(see Theorem \ref{thm:clt-kappa})
then $\pi(g_{\star})=0$, and by \citet[Lemma 21.2.2,][]{douc2018MarkovChains}
all fishy functions $g$ are equal to $g_{\star}$ up to an additive
constant.
Inserting the function $g_\star$ in place of $g$ in \eqref{eq:avar-intro} leads to the
more familiar expression
\begin{equation}\label{eq:avarfamiliar}
v(P,\test)={\rm var}_{\pi}(h(X_{0}))+2\sum_{t=1}^{\infty}{\rm cov}_{\pi}(h(X_{0}),h(X_{t})),
\end{equation}
where the subscript $\pi$ indicates that $X_{0}\sim\pi$. That expression 
is retrieved with direct calculations from developing ${\rm var}_{\pi}(t^{-1/2}\sum_{s=0}^{t-1}h(X_{s}))$
and taking the limit as $t\to\infty$.

\subsection{Couplings of Markov chains\label{sec:intro:coupling}}

Our contributions rely on couplings of Markov chains.
For any two distributions $\mu$ and $\nu$ on $(\mathbb{X},\mathcal{X})$, a coupling refers to
a pair of random variables $(U,V)$ on $(\mathbb{X}\times \mathbb{X},\mathcal{X}\otimes \mathcal{X})$ such that $U\sim \mu$ and $V\sim \nu$.
This extends to Markov chains, viewed as distributions on the path space $(\prod_{t=0}^\infty \mathbb{X}, \bigotimes_{t=0}^\infty\mathcal{X})$.
We consider pairs of chains that start from different distributions but evolve with the same transition $P$, 
and we will focus on Markovian couplings: for two Markov chains $(X_t)_{t\geq 0}$ and $(Y_t)_{t\geq 0}$
we will consider a joint process on $(\prod_{t=0}^\infty \mathbb{X}\times \mathbb{X}, \bigotimes_{t=0}^\infty\mathcal{X}\otimes\mathcal{X})$ that is Markov
with transition kernel $\bar{P}$, and such that its first and second coordinates are distributed
as $(X_t)_{t\geq 0}$ and $(Y_t)_{t\geq 0}$ respectively.

\subsection{Contributions\label{sec:intro:contrib}}

In Section \ref{sec:coupledchainsandfishyfunctions} we elicit links between couplings of Markov chains and the Poisson equation \eqref{eq:poisson-equation}. The connection leads to estimators of pointwise evaluations of the solutions. We establish the lack of bias of these estimators and the finiteness of their moments under conditions on the function $h$ and on the coupled transition $\bar{P}$.
In Section \ref{subsec:recoverGlynnRhee} we re-derive the unbiased estimators of $\pi(h)$ pioneered by \citet{glynn2014exact}, and the variants of \citet{joa2020}. We refer to these estimators of $\pi(h)$ as unbiased MCMC. Our derivation leads to simple conditions for the finiteness of their moments of any order (as opposed to second moments only in the aforementioned articles), which will be used in subsequent sections. We also relate the efficiency of unbiased MCMC with the efficiency of ergodic average MCMC, which is the inverse of the asymptotic variance $v(P,h)$.
In Section \ref{sec:Asymptotic-variance} we propose estimators of the asymptotic variance $v(P,h)$, combining the estimators of fishy functions from Section \ref{sec:coupledchainsandfishyfunctions} with unbiased MCMC as in Section \ref{subsec:recoverGlynnRhee}. The proposed estimators of $v(P,h)$ are built from independent runs of coupled chains of random lengths, instead of long runs. They converge to $v(P,h)$ at the Monte Carlo rate, which is faster than known rates of convergence for batch means and spectral variance estimators.  Estimators of $v(P,h)$ can be used to compare the 
efficiencies of unbiased and ergodic average MCMC, and to compare MCMC algorithms to one another without ever relying on long run asymptotics.
We experiment with the proposed estimators in 
Section \ref{sec:numerical:experiments} and Section \ref{sec:discussion} concludes.

\section{Coupled chains and fishy functions\label{sec:coupledchainsandfishyfunctions}}

\subsection{\label{subsec:Coupled-Markov-chains}Coupled Markov chains and meeting time}

For a given $\pi$-invariant Markov kernel $P$, 
the kernel $\bar{P}$ is a coupling of $P$ with
itself in that it satisfies
\begin{equation}
\bar{P}(x,y;A\times\mathbb{X})=P(x,A),\quad\bar{P}(x,y;\mathbb{X}\times A)=P(y,A),\qquad A\in\mathcal{X}.
\end{equation}
We consider a time-homogeneous,
discrete-time Markov chain $(X,Y)=(X_{t},Y_{t})_{t\geq0}$ with Markov
kernel $\bar{P}$, such that $X=(X_{t})_{t\geq0}$ and $Y=(Y_{t})_{t\geq0}$ are
both time-homogeneous, discrete-time Markov chains with kernel $P$.
We use subscripts to denote the distribution of $(X_{0},Y_{0})$.
For example $\mathbb{P}_{x,y}$ is the law of $(X,Y)$ when $(X_{0},Y_{0})=(x,y)$,
and $\mathbb{P}_{\bar{\nu}}$ indicates $(X_{0},Y_{0})\sim \bar{\nu}$, for some distribution $\bar{\nu}$ on the joint space. When only
one chain is referenced, e.g. $X$, we may similarly write $\mathbb{P}_{x}$
and $\mathbb{P}_{\mu}$ to indicate $X_{0}=x$ and $X_{0}\sim\mu$,
respectively. 

We require the coupling $\bar{P}$ to be \emph{successful}, in the terminology of \citet{pitman1976coupling}, which means that
the \emph{meeting time}, defined  as
\begin{equation}\label{eq:deftau}
\tau:=\inf\{t\geq0:X_{t}=Y_{t}\},
\end{equation}
is almost surely finite. Furthermore, we impose that $X_{t}=Y_{t}$ with probability $1$
for all $t\geq\tau$. \citet{johnson1998coupling,joa2020} and
others have shown how to construct such couplings for
some realistic MCMC algorithms;
Appendix~\ref{appx:couplingrh}
provides pointers.

We now introduce the main assumption in this manuscript: we require the meeting time $\tau$ to have $\kappa$ finite
moments, with $\kappa>1$, for two chains that would independently start from the stationary distribution $\pi$, i.e. $(X_{0},Y_{0})\sim\pi\otimes\pi$, the independent coupling of $\pi$ with itself.

\begin{assumption}
\label{assu:tau-moment-kappa}The Markov transition kernel $P$ is $\pi$-irreducible
and for some $\kappa>1$, $\mathbb{E}_{\pi\otimes\pi}[\tau^{\kappa}]<\infty.$
\end{assumption}

Interpretation and verification of this assumption are discussed in Section \ref{subsec:interpretassumption}.
The assumption that $P$ is $\pi$-irreducible is equivalent to assuming that it has an irreducibility measure,
since an invariant probability measure is a maximal irreducibility measure \citep[Theorem~9.2.15]{douc2018MarkovChains} and implies that
$\pi$ is necessarily the unique invariant probability measure for $P$ \citep[Corollary~9.2.16]{douc2018MarkovChains}.
Assumption~\ref{assu:tau-moment-kappa} is sufficient to guarantee that the chain is aperiodic (Appendix~\ref{subsec:ass-meeting-times}), 
to justify 
the existence of the function $g_\star$ in Definition \ref{def:g-star}, 
to justify that $g_\star$ is solution to \eqref{eq:poisson-equation} in $L_0^1(\pi)=\{f\in L^1(\pi):\pi(f)=0\}$, and 
to establish the CLT \eqref{eq:clt} for a class of test functions.

\begin{thm}
\label{thm:clt-kappa}Under Assumption~\ref{assu:tau-moment-kappa},
let $h\in L^{m}(\pi)$ for some $m>2\kappa/(\kappa-1)$. Then $g_{\star}\in L_{0}^{1}(\pi)$,
$h_{0}\cdot g_{\star}\in L^{1}(\pi)$ and the CLT \eqref{eq:clt}
holds for $\pi$-almost all $X_{0}$ with $v(P,h)=2\pi(h_{0}\cdot g_{\star})-\pi(h_{0}^{2})<\infty$.
\end{thm}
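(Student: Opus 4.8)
The plan is to use the faithful coupling both to represent $g_\star$ and to control the second-order quantities $\langle P^jh_0,P^kh_0\rangle_\pi$, and then to invoke the Maxwell--Woodroofe central limit theorem. \emph{Step 1 ($g_\star\in L^1_0(\pi)$).} Set $H:=\sum_{t=0}^{\tau-1}|h_0(X_t)-h_0(Y_t)|$. Under $\mathbb{P}_{\pi\otimes\pi}$ both $X_t$ and $Y_t$ are $\pi$-distributed, so H\"older's inequality with exponents $m$ and $m/(m-1)$, together with the tail bound $\mathbb{P}_{\pi\otimes\pi}(\tau>t)\le\mathbb{E}_{\pi\otimes\pi}[\tau^\kappa](t+1)^{-\kappa}$ (Markov's inequality; cf.\ Proposition~\ref{prop:tau-moment-survival}), bounds $\mathbb{E}_{\pi\otimes\pi}\big[|h_0(X_t)-h_0(Y_t)|\mathbf{1}\{\tau>t\}\big]$ by a constant multiple of $(t+1)^{-\kappa(1-1/m)}$, which is summable because $m>\kappa/(\kappa-1)$ (implied by the hypothesis). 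Hence $\mathbb{E}_{\pi\otimes\pi}[H]<\infty$, so $\mathbb{E}_{x,\pi}[H]<\infty$ for $\pi$-almost every $x$; since $h_0(X_t)-h_0(Y_t)$ vanishes for $t\ge\tau$ and $\mathbb{E}_{x,\pi}[h_0(Y_t)]=\pi(h_0)=0$, dominated convergence gives $\sum_{t\ge0}P^th_0(x)=\mathbb{E}_{x,\pi}\big[\sum_{t=0}^{\tau-1}(h_0(X_t)-h_0(Y_t))\big]$. Thus $g_\star$ is well defined with $|g_\star(x)|\le\mathbb{E}_{x,\pi}[H]$, hence $\pi(|g_\star|)\le\mathbb{E}_{\pi\otimes\pi}[H]<\infty$; so $g_\star\in L^1(\pi)$, $\pi(g_\star)=0$ as noted in the text, and $(I-P)g_\star=h_0$, i.e.\ $g_\star\in L^1_0(\pi)$.

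\emph{Step 2 (a polynomial bound on $\langle P^jh_0,P^kh_0\rangle_\pi$).} Fix $0\le j\le k$. Running the coupling from $(x,\pi)$ gives $P^kh_0(x)=\mathbb{E}_{x,\pi}[(h_0(X_k)-h_0(Y_k))\mathbf{1}\{\tau>k\}]$ and $P^jh_0(x)=\mathbb{E}_{x,\pi}[h_0(X_j)]$; taking two copies $(X,Y)$, $(X',Y')$ of the coupled chain that are conditionally independent given the common starting point and integrating it over $\pi$ yields
\[
\langle P^jh_0,P^kh_0\rangle_\pi=\mathbb{E}\big[h_0(X_j)\,(h_0(X'_k)-h_0(Y'_k))\,\mathbf{1}\{\tau'>k\}\big],
\]
where $X_j$, $X'_k$, $Y'_k$ are each $\pi$-distributed and $\tau'$ has the law of $\tau$ under $\pi\otimes\pi$. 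H\"older with exponents $(m,m,m/(m-2))$ (valid since $m>2$) and the tail bound for $\tau'$ then give $|\langle P^jh_0,P^kh_0\rangle_\pi|\le C\,(\max(j,k)+1)^{-\beta}$ with $\beta:=\kappa(1-2/m)$, where $\beta>1$ exactly because $m>2\kappa/(\kappa-1)$. In particular $\sum_{t\ge0}|\pi(h_0P^th_0)|<\infty$; the same estimate bounds $\pi(|h_0\,P^th_0|)$ by $C(t+1)^{-\beta}$, so $\pi(|h_0g_\star|)\le\sum_{t\ge0}\pi(|h_0|\,|P^th_0|)<\infty$, i.e.\ $h_0g_\star\in L^1(\pi)$, and Fubini gives $\pi(h_0g_\star)=\sum_{t\ge0}\pi(h_0P^th_0)$.

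\emph{Step 3 (the CLT).} The stationary chain is ergodic by $\pi$-irreducibility, and $h_0\in L^2_0(\pi)$ since $m>2$. With $b_n:=\|\sum_{k=1}^nP^kh_0\|_{L^2(\pi)}$, expanding the square and using Step~2 gives $b_n^2\le\sum_{j,k=1}^nC(\max(j,k)+1)^{-\beta}\le 2C\sum_{l=1}^nl(l+1)^{-\beta}$, which is $O(1)$, $O(\log n)$ or $O(n^{2-\beta})$ according as $\beta>2$, $\beta=2$ or $\beta<2$; in every case $\sum_{n\ge1}n^{-3/2}b_n<\infty$ precisely because $\beta>1$, so the criterion of \citet{maxwell2000central} yields the CLT \eqref{eq:clt} for the stationary chain with asymptotic variance $\lim_nn^{-1}\|\sum_{k=1}^nh_0(X_k)\|_{L^2(\pi)}^2$. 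Expanding this Ces\`aro average and using $\sum_{t\ge0}|\pi(h_0P^th_0)|<\infty$ identifies the limit with $\pi(h_0^2)+2\sum_{t\ge1}\pi(h_0P^th_0)=2\pi(h_0g_\star)-\pi(h_0^2)<\infty$. Finally, to pass to $\pi$-almost every $X_0$, couple a chain started at $x$ to a stationary chain through $\bar P$: the difference of their partial sums equals $\sum_{t=0}^{\tau-1}(h_0(X_t)-h_0(Y_t))$, finite $\mathbb{P}_{x,\pi}$-almost surely for $\pi$-almost every $x$ since $\mathbb{E}_{x,\pi}[H]<\infty$, so after division by $\sqrt{t}$ it vanishes and Slutsky's lemma transfers the CLT.

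I expect the main obstacle to be Step~2: the naive bound using only $\sum_t|\pi(h_0P^th_0)|<\infty$ gives merely $b_n=O(\sqrt{n})$, whence $\sum_nn^{-3/2}b_n$ diverges by a logarithmic factor. One genuinely needs the estimate on $\langle P^jh_0,P^kh_0\rangle_\pi$ to decay in $\max(j,k)$, and it is the conditionally-independent-copies device, together with careful bookkeeping of the three H\"older exponents, that makes the Maxwell--Woodroofe sum converge exactly on the stated range $m>2\kappa/(\kappa-1)$.
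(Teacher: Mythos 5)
Your proof is correct and reaches the stated conclusion on exactly the stated range of $m$, but it takes a genuinely different route through the key second-order estimate. The paper bounds $\left|\pi(P^{j}h_{0}\cdot P^{k}h_{0})\right|$ by treating the stationary chain as a strongly mixing sequence: it derives the $\alpha$-mixing bound from the integrated total-variation inequality of Proposition~\ref{prop:int-tv} (itself a consequence of the coupling), then invokes Rio's covariance inequality (Lemma~\ref{lem:rio-cov}) together with tail-quantile-function manipulations (Lemmas~\ref{lem:Plkh-tail-quantile} and~\ref{lem:rho-sum-g-L2-avar}) to obtain the decay $(\rho_{j}\wedge\rho_{k})^{(m-2)/m}\propto(\max(j,k)+1)^{-\kappa(m-2)/m}$. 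You instead represent $P^{k}h_{0}(x)=\mathbb{E}_{x,\pi}[(h_{0}(X_{k}')-h_{0}(Y_{k}'))\mathbf{1}\{\tau'>k\}]$, take two conditionally independent copies of the coupled chain given the common start, and apply a three-exponent H\"older inequality $(m,m,m/(m-2))$; this lands on the same exponent $\beta=\kappa(m-2)/m$ and hence the same threshold $m>2\kappa/(\kappa-1)$, while staying entirely within the coupling framework and avoiding Rio's inequality and quantile functions. Your Step~1 is essentially the paper's Theorem~\ref{thm:poisson-Lp} specialized to $p=1$, derived directly from the coupling, and your observation that the naive bound $\sum_{t}|\pi(h_{0}P^{t}h_{0})|<\infty$ alone only gives $b_{n}=O(\sqrt{n})$ correctly identifies why the decay in $\max(j,k)$ is the crux; the Maxwell--Woodroofe summation and the identification of the limit as $2\pi(h_{0}\cdot g_{\star})-\pi(h_{0}^{2})$ match the paper's. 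One small added value of your write-up is the explicit Slutsky argument transferring the stationary CLT to $\pi$-almost every initial point, a step the paper's proof leaves implicit.
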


The proof in Appendix~\ref{subsec:clt-poisson}
relies heavily on the strategy of \citet[Section~21.4.1]{douc2018MarkovChains}
but features $g_{\star}$ from Definition~\ref{def:g-star}
more prominently, and uses the CLT condition from \citet{maxwell2000central}
rather than \citet[Theorem~21.4.1]{douc2018MarkovChains}. Note that
the CLT does not require $g_{\star}\in L_{0}^{2}(\pi)$.
As a complement to Theorem \ref{thm:clt-kappa}, Theorem~\ref{thm:poisson-Lp}
states that, if $m>\kappa/(\kappa - 1)$,  then $g_\star \in L_0^p(\pi)$ for $p\geq 1$ such that $p^{-1} > m^{-1} + \kappa^{-1}$.

\subsection{\label{subsec:estimatingsolutions}Unbiased approximation of fishy
functions}

Solutions of the Poisson equation have been studied extensively
\citep[see, e.g.,][]{glynn1996liapounov,glynn2022solving}.
However, the Poisson equation is  not analytically solvable for most Markov chains and
functions of interest, and consistent approximations have been lacking.
We consider a family of fishy functions that
are amenable to estimation using coupled chains.


\begin{defn}
\label{def:gy}For $y\in\mathbb{X}$ and $g_\star$ in Definition~\ref{def:g-star}, define the function
\begin{equation}
\fishytest_{y}:x\mapsto\fishytest_{\star}(x)-\fishytest_{\star}(y) = \sum_{t=0}^\infty P^th(x) - P^th(y).\label{eq:explicitfishy-1}
\end{equation}
\end{defn}
Since $g_{\star}(y)$ is
a constant with respect to $x$, $x\mapsto g_y(x)$ is fishy as long as $g_{\star}$ is well-defined. When $y$ is fixed, in the sequel we may write $g$ instead
of $g_{y}$.

\begin{defn}
\label{def:Gyx}For $x,y \in \mathbb{X}$, the 
proposed estimator of $g_{y}(x)$ is
\begin{equation}
G_{y}(x):=\sum_{t=0}^{\tau-1}h(X_{t})-h(Y_{t}),\label{eq:estimatorpoisson}
\end{equation}
where $(X,Y)$ is a Markov chain starting from $(X_0,Y_0)=(x,y)$, evolving according to $\bar{P}$ defined in Section~\ref{subsec:Coupled-Markov-chains},
and $\tau$ is defined in \eqref{eq:deftau}. The random variable $G_{y}(x)$ is thus a function of the coupled chains $(X,Y)$ from time zero to $\tau-1$. 
We will sometimes denote $G_{y}$ by $G$.
\end{defn}

The simple intuition behind Definition~\ref{def:Gyx} is that we
can equivalently write
\[
G_{y}(x)=\sum_{t=0}^{\infty}h(X_{t})-h(Y_{t}),
\]
since $h(X_t) = h(Y_t)$ for all $t\geq \tau$, and each term $h(X_t) - h(Y_t)$ has expectation
$P^th(x) - P^th(y)$ under $\mathbb{P}_{x,y}$. Hence, $\mathbb{E}[G_y(x)]$ is equal to $g_y(x)$
if we can justify the interchange of expectation and infinite sum.
The following result, established in Appendix~\ref{subsec:approx-fishy-function}
provides conditions for $G_y(x)$ to be unbiased and to have finite moments.
\begin{thm}
\label{thm:fishy-main-text}Under Assumption~\ref{assu:tau-moment-kappa},
let $h\in L^{m}(\pi)$ for some $m>\kappa/(\kappa-1)$. For $\pi\otimes\pi$-almost
all $(x,y)$, $\mathbb{E}\left[G_{y}(x)\right]=g_{\star}(x)-g_{\star}(y) = g_y(x)$
and for $p\geq1$ such that $\frac{1}{p}>\frac{1}{m}+\frac{1}{\kappa}$,
$\mathbb{E}\left[\left|G_{y}(x)\right|^{p}\right]<\infty$.
\end{thm}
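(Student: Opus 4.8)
The plan is to prove the two assertions of Theorem~\ref{thm:fishy-main-text} in tandem, since the integrability bound is what justifies the interchange of expectation and summation needed for the identity $\mathbb{E}[G_y(x)] = g_\star(x) - g_\star(y)$. The natural route is to control $G_y(x) = \sum_{t=0}^{\tau-1} \{h_0(X_t) - h_0(Y_t)\}$ by the obvious pointwise bound
\begin{equation}
|G_y(x)| \leq \sum_{t=0}^{\tau-1} \{|h_0(X_t)| + |h_0(Y_t)|\} \leq \sum_{t=0}^{\infty} \mathds{1}\{\tau > t\}\,\{|h_0(X_t)| + |h_0(Y_t)|\},
\end{equation}
and then take $L^p$ norms. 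First I would fix $p \geq 1$ with $\tfrac1p > \tfrac1m + \tfrac1\kappa$, and apply Minkowski's inequality in $L^p(\mathbb{P}_{x,y})$ to the series, giving $\|G_y(x)\|_p \leq \sum_{t\geq0} \|\mathds{1}\{\tau>t\}(|h_0(X_t)| + |h_0(Y_t)|)\|_p$. Each summand is then handled by H\"older's inequality with a three-way split of exponents: choose $r, m, \kappa$ (or rather exponents $a,b$ with $a \le m$, and a tail exponent) so that $\tfrac1p = \tfrac1a + \tfrac1b$ with $\tfrac1b < \tfrac1\kappa$ — the strict inequality in the hypothesis is precisely what gives this room. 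This bounds the $t$-th summand by $\|h_0(X_t)\|_{a}\,\mathbb{P}_{x,y}(\tau>t)^{1/b}$ plus the symmetric term.

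The key inputs are then: (i) a bound on $\|h_0(X_t)\|_a$ that is uniform in $t$ for $\pi\otimes\pi$-almost every starting point, and (ii) the polynomial tail bound on $\mathbb{P}_{x,y}(\tau>t)$. For (ii), Proposition~\ref{prop:tau-moment-survival} gives $\mathbb{P}_{x,y}(\tau>t) \leq \mathbb{E}_{x,y}[\tau^\kappa](t+1)^{-\kappa}$ for $\pi\otimes\pi$-almost all $(x,y)$, so $\mathbb{P}_{x,y}(\tau>t)^{1/b} \leq C(x,y)(t+1)^{-\kappa/b}$, and since $\kappa/b > 1$ this is a summable sequence in $t$. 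For (i), the subtlety is that $X_t$ is not distributed according to $\pi$ unless $X_0\sim\pi$; here I would use $\pi$-irreducibility together with the standard fact that, starting from $\pi\otimes\pi$, the marginal law of $X_t$ is exactly $\pi$, and then bound $\mathbb{E}_{x,y}[|h_0(X_t)|^a]$ for almost every $(x,y)$ by relating it back to the stationary expectation $\pi(|h_0|^a)$; since $a \leq m$ and $h\in L^m(\pi)$, this is finite. Care must be taken that the almost-everywhere exceptional set is the union of those from Proposition~\ref{prop:tau-moment-survival} and from the marginal argument, still a $\pi\otimes\pi$-null set. Combining, $\|G_y(x)\|_p \lesssim C(x,y)\sum_{t\geq0}(t+1)^{-\kappa/b} < \infty$, which is the second claim, and a fortiori $G_y(x)$ is integrable, so $\mathbb{E}[G_y(x)]$ is well-defined and finite.

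Once absolute integrability of $\sum_t \{h_0(X_t)-h_0(Y_t)\}\mathds{1}\{\tau>t\}$ is in hand — which follows from the $p=1$ case of the bound, valid since $1 > \tfrac1m + \tfrac1\kappa$ under $m > \kappa/(\kappa-1)$ — dominated convergence (or Fubini/Tonelli on the product of counting measure and $\mathbb{P}_{x,y}$) licenses the interchange $\mathbb{E}[G_y(x)] = \sum_{t\geq0}\{\mathbb{E}_{x,y}[h_0(X_t)\mathds{1}\{t<\tau\}] - \mathbb{E}_{x,y}[h_0(Y_t)\mathds{1}\{t<\tau\}]\}$. The final step is to collapse this to $\sum_t P^t h_0(x) - \sum_t P^t h_0(y)$: this uses that for $t \geq \tau$ one has $X_t = Y_t$ so $h_0(X_t) - h_0(Y_t) = 0$ almost surely, hence $\mathbb{E}_{x,y}[\{h_0(X_t)-h_0(Y_t)\}\mathds{1}\{t<\tau\}] = \mathbb{E}_{x,y}[h_0(X_t)-h_0(Y_t)] = P^t h_0(x) - P^t h_0(y)$ using the marginal kernels; summing gives $g_\star(x) - g_\star(y)$, which is well-defined because Theorem~\ref{thm:poisson-Lp} (via $g_\star \in L^1(\pi)$) guarantees $g_\star$ is finite $\pi$-almost everywhere, so the difference makes sense for $\pi\otimes\pi$-almost all $(x,y)$.

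The main obstacle I anticipate is step (i): getting a clean, $t$-uniform almost-everywhere bound on $\mathbb{E}_{x,y}[|h_0(X_t)|^a]$ without circularity, since one cannot simply invoke stationarity pointwise. The right move is to work under $\mathbb{P}_{\pi\otimes\pi}$ throughout, establish finiteness of $\mathbb{E}_{\pi\otimes\pi}[|G_y(x)|^p]$ (i.e. integrate over $(x,y)\sim\pi\otimes\pi$) via Tonelli — where the marginal-of-$X_t$-is-$\pi$ fact is used under the expectation, not pointwise — and then deduce finiteness for $\pi\otimes\pi$-almost every $(x,y)$ as a consequence; this also automatically delivers the exceptional null set and is presumably why the theorem is stated "for $\pi\otimes\pi$-almost all $(x,y)$". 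This is also the point where one references Theorem~\ref{thm:fishy-estimator}, of which this is stated to be a special case, so the cleanest exposition is likely to just invoke that general statement with the specific exponent choices.
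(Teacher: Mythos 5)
Your proposal is correct and follows essentially the same architecture as the paper's proof: bound $|G_y(x)|$ by $\sum_{t\ge 0}\mathds{1}\{\tau>t\}(|h(X_t)|+|h(Y_t)|)$, apply Minkowski to the series and H\"older with the exponent split $\tfrac1p=\tfrac1m+(\tfrac1p-\tfrac1m)$, control the tail factor by Markov's inequality so that summability reduces to $\kappa(\tfrac1p-\tfrac1m)>1$, and then obtain unbiasedness by dominated convergence once the $p=1$ bound is in hand, using Theorem~\ref{thm:poisson-Lp} to ensure $g_\star$ is well defined (this is exactly Lemmas~\ref{lem:tech-new}, \ref{lem:unbiasedness} and \ref{lem:fishy-lemma-new} in the appendix). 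The one point where you genuinely diverge is the input you correctly flag as the main obstacle, the $t$-uniform bound on $\mathbb{E}_{x,y}[|h(X_t)|^m]$: you propose to work under $\mathbb{P}_{\pi\otimes\pi}$, where $X_t\sim\pi$ exactly, and deduce a.e.\ pointwise finiteness afterwards by Tonelli; the paper instead proves a genuinely pointwise statement (Lemma~\ref{lem:sup-finite}) via the $f$-norm ergodic theorem, namely $\sup_{t\ge 0}P^t\phi(x)<\infty$ for $\pi$-almost all $x$ whenever $\phi\in L^1(\pi)$. Your route is more elementary and suffices for the theorem as stated, but it only yields the qualitative a.e.\ conclusion, whereas the paper's route produces the explicit pointwise bound \eqref{eq:fishy-moment-bound-1} in terms of $\mathbb{E}_{x,y}[\tau^\kappa]$ and $\psi(h,m,\delta_x)+\psi(h,m,\delta_y)$, which is what the later analysis (where $y$ is held fixed in $G_y$ and in SUAVE) actually relies on — this is why the paper emphasizes that the result is pointwise.
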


For our subsequent asymptotic variance estimators in Section \ref{sec:Asymptotic-variance}, it is important that this result is pointwise.
The random variable $G_{y}(x)$ may be simulated using Algorithm~\ref{alg:coupledchains}
with $L=0$ (setting $L\geq 1$ will be useful from Section~\ref{subsec:recoverGlynnRhee} onwards).
The cost of sampling $\fishytestestimator_{y}(x)$ is the cost of
running a pair of chains until they meet, which is typically comparable
to twice the cost of running one chain for the same number of steps,
i.e. $2\tau$.

\begin{rem}
  We can generalize Definitions~\ref{def:gy}-\ref{def:Gyx} by introducing a distribution $\nu$ on $\mathbb{X}$
instead of a fixed $y\in\mathbb{X}$. We may sample
$Y_{0}\sim\nu$, and generate
$G_{Y_{0}}(x)$ given $Y_{0}$ as above. Under adequate assumptions,
this would have expectation $g_\nu(x)$ with $\fishytest_{\nu}:x\mapsto g_{\star}(x)-\nu(g_{\star})$.
To unbiasedly
approximate $g_{\star}(x)$ itself, we could estimate $g_{\nu}(x)$
and $\pi(g_{\nu})$ in an unbiased manner, for an arbitrary distribution
$\nu$, and take their difference.
\end{rem}

\begin{algorithm}[H]
Input: initial states $x$, $y$, Markov kernel $P$, coupled kernel
$\bar{P}$, lag $L\geq0$.
\begin{enumerate}
\item Set $X_{0}=x$, $Y_{0}=y$.
\item If $L\geq1$, for $t=1,\ldots,L$, sample $X_{t}$ from $P(X_{t-1},\cdot)$.
\item For $t\geq L$, sample $(X_{t+1},Y_{t-L+1})$ from $\bar{P}((X_{t},Y_{t-L}),\cdot)$
until $X_{t+1}=Y_{t-L+1}$.
\end{enumerate}
Output: coupled chains and meeting time $\tau=\inf\{t> L: X_t = Y_{t-L}\}$. \caption{Simulation of coupled lagged chains.\label{alg:coupledchains}}
\end{algorithm}


\subsection{Interpretation and verification of Assumption \texorpdfstring{\ref{assu:tau-moment-kappa}}{2}\label{subsec:interpretassumption}}

Since we use Assumption \ref{assu:tau-moment-kappa} extensively,
this section provides details on how to interpret and to verify it via drift conditions.
The assumption implies a polynomially decaying survival function
$\mathbb{P}_{x,y}(\tau>t)$ of order $\kappa$ and, conversely, one
may verify the assumption by showing that $\mathbb{P}_{x,y}(\tau>t)$
decays polynomially with order $s>\kappa$ with a dependence on $(x,y)$
that is not too strong. The proof of the next result is in Appendix~\ref{subsec:ass-meeting-times}.
\begin{prop}
\label{prop:tau-moment-survival}If Assumption~\ref{assu:tau-moment-kappa}
holds then
\[
\forall t \geq 0 \quad \mathbb{P}_{x,y}(\tau>t)\leq\mathbb{E}_{x,y}[\tau^{\kappa}](t+1)^{-\kappa},
\]
and $\mathbb{E}_{x,y}[\tau^{\kappa}]<\infty$ for $\pi\otimes\pi$-almost
all $(x,y)$. Conversely, if for some $s>\kappa$, there exists $\tilde{C}:\mathbb{X}\times\mathbb{X}\to\mathbb{R}$
with $\pi\otimes\pi(\tilde{C})<\infty$ such that for $\pi\otimes\pi$-almost
all $(x,y)$, we have 
\[
\forall t \geq 0 \quad  \mathbb{P}_{x,y}(\tau>t)\leq\tilde{C}(x,y)(t+1)^{-s},
\]
then $\mathbb{E}_{\pi\otimes\pi}\left[\tau^{\kappa}\right]<\infty$.
\end{prop}

We can adapt results from \citet{joa2020} to verify Assumption \ref{assu:tau-moment-kappa} in the case of geometrically ergodic Markov chains.
Variants of the following assumption are referred to as a \emph{geometric drift condition}. 
Assumption \ref{assumption:drift} is very similar to Condition $D_{\text{g}}(V,\lambda,b,C)$, Definition 14.1.5 in \citet{douc2018MarkovChains}.
Such conditions have been shown to hold for many 
combinations of MCMC algorithms and target distributions, see
for example
\citet{jarner2000geometric} on Metropolis--Rosenbluth--Teller--Hastings (MRTH) 
with random walk proposals,
\citet{roberts1996exponential} on Langevin algorithms,
and \citet{durmus2020hmc} on Hamiltonian Monte Carlo. 

\begin{assumption}\label{assumption:drift}
  The Markov kernel $P$ is $\pi$-invariant, $\pi$-irreducible, and there exists a measurable function 
  $V:\;\mathbb{X}\to [1,\infty)$, $\lambda\in (0,1)$, $b \in (0,\infty)$ and a small set $\mathcal{C}$ such that
  \[\forall x \in \mathbb{X} \quad PV(x) \leq \lambda V(x) + b\mathds{1}(x \in \mathcal{C}).\]
\end{assumption}

While Assumption \ref{assumption:drift} is on the transition $P$, 
the next assumption is on the coupling $\bar{P}$. 

\begin{assumption}\label{assumption:meetingprobafromC}
  There exist $\mathcal{C}$ and $\epsilon\in (0,1)$ such that, with $\mathcal{D} = \{(x,y)\in\mathbb{X}\times\mathbb{X}:x=y\}$, 
\begin{equation}\label{mino}
\inf_{(x,y)\in\mathcal{C}\times\mathcal{C}} \bar P((x,y), \mathcal{D}) \geq \epsilon.
\end{equation}
\end{assumption}

Assumption~\ref{assumption:meetingprobafromC} states that meeting occurs in one step with probability at least $\epsilon$ 
when both chains are simultaneously in $\mathcal{C}$ and evolve according to $\bar{P}$. It is for example satisfied
for couplings of Metropolis--Rosenbluth--Teller--Hastings (MRTH) with Normal proposals using the reflection-maximal coupling described in Appendix~\ref{appx:couplingrh}
for any bounded set
$\mathcal{C}$ on which the target density is upper bounded.

\begin{prop}\label{prop:vgeometric}
  Suppose that $P$ satisfies Assumption \ref{assumption:drift} with a small set $\mathcal{C}$
  of the form $\mathcal{C} = \{x: V(x)\leq \bar{v}\}$, for some $\bar{v}\geq 1$,
  for which Assumption \ref{assumption:meetingprobafromC} holds with some $\epsilon>0$,
  and assume that $\lambda + 2b/(1+\bar{v})<1$.
Then Assumption \ref{assu:tau-moment-kappa} holds for all $\kappa>1$.
\end{prop}

The above result is very similar to Proposition 4 of \citet{joa2020}, and its proof is omitted.
The only difference comes from the initialization from $\pi \otimes \pi$ rather than
$\pi_0 P \otimes \pi_0$, where $\pi_0$ is the initial distribution of the chain. The explicit assumption $\pi_0(V)<\infty$ in Proposition 4 of \citet{joa2020} becomes $\pi(V)<\infty$, but this always holds under Assumption \ref{assumption:drift},
as stated in Lemma 14.1.10 of \citet{douc2018MarkovChains}. 
As noted in \citet{joa2020}, if sub-level sets $\mathcal{C}(v) = \{x: V(x)\leq v\}$ are small for all sufficiently large $v$ and Assumption~\ref{assumption:drift} holds with $\mathcal{C}=\mathcal{C}(v)$,
then Assumption~\ref{assumption:drift} also holds for $\mathcal{C}=\mathcal{C}(\bar{v})$ with $\bar{v}\geq v$ and so the condition $\lambda + 2b/(1+\bar{v})<1$ can be satisfied by taking $\bar{v}$ large enough, while the quantity $\epsilon$ in Assumption~\ref{assumption:meetingprobafromC} typically decreases but remains positive.

Proposition 4 of \citet{joa2020} was used to show that the meeting time had geometric tails
for couplings of Hamiltonian Monte Carlo in \citet[][Theorem 1]{heng2019unbiased}, 
and for couplings of the Bouncy Particle Sampler in \citet[][Proposition 3.1]{corenflos2023debiasing}, under assumptions similar to strong log-concavity and smoothness of the target. It 
was also used in \citet[][Proposition 2]{biswas2021couplingbased} for couplings of a Gibbs sampler for high-dimensional regression with shrinkage priors. Thus Proposition~\ref{prop:vgeometric} can similarly be applied to these settings 
where geometric drift conditions have been shown to hold.

To cover other cases, the reasoning can be extended to polynomially ergodic chains,
for which Assumption \ref{assu:tau-moment-kappa} may hold only for some values of $\kappa$.
We consider the following assumption, which is a variant of a polynomial drift condition.
The assumption is very similar to Condition $D_{\text{sg}}(V,\phi,b,C)$, Definition 16.1.7 in \citet{douc2018MarkovChains}, but includes
the condition $\inf_{{\mathcal C}^\complement} V^{\alpha}>b/\vartheta$. As discussed above, this condition can be met if sub-level sets are small.

\begin{assumption}\label{assumption:drift:subgeom}
  The Markov kernel $P$ is $\pi$-invariant, $\pi$-irreducible, and there exists a measurable function 
     $V:\;\mathbb{X}\to [1,\infty)$, $\alpha\in (0,1)$, $\vartheta,b \in (0,\infty)$ and a small set $\mathcal{C}$ such that we have $\sup_{\mathcal C} V<\infty$ and $\inf_{{\mathcal C}^\complement} V^{\alpha}>b/\vartheta$ and
      \[ \forall x \in \mathbb{X} \quad PV(x) \leq V(x) -\vartheta V^{\alpha}(x)+ b\mathds{1}(x \in \mathcal{C}).\] 
\end{assumption}

Under Assumptions~\ref{assumption:meetingprobafromC}-\ref{assumption:drift:subgeom} we obtain the next result, established in Appendix~\ref{subsec:ass-meeting-times}.

\begin{prop}\label{prop:subgeometric}
  Suppose that $P$ satisfies Assumption \ref{assumption:meetingprobafromC} and Assumption \ref{assumption:drift:subgeom}
  with the same small set ${\mathcal C}$.
Then Assumption \ref{assu:tau-moment-kappa} holds for $\kappa< \alpha/ (1-\alpha)$.
\end{prop}

The above assumptions and proposition can be compared to Assumptions 4--5 and Theorem 2 in \citet{middleton2020unbiased}.
Our assumptions are similar but slightly weaker, namely we do not assume that $P$ is aperiodic since it is implied by the other conditions as stated in Proposition~\ref{prop:int-tv}. 
If we were to employ Theorem 2 in \citet{middleton2020unbiased} to directly verify Assumption~\ref{assu:tau-moment-kappa}, setting $\pi_0$ to $\pi$, we would have to assume that $\pi$ is supported on a compact set. This is the main motivation for our Proposition~\ref{prop:subgeometric}. 
On the other hand, our result establishes less than $\alpha/(1-\alpha)$ moments of $\tau$, whereas \citet[Theorem 2,][]{middleton2020unbiased} obtain
$1/(1-\alpha)$ moments, which is strictly larger.

Proposition~\ref{prop:subgeometric} is useful 
if a polynomial drift condition holds for the MCMC algorithm under consideration. 
\citet{andrieu2015convergence} obtain such drift conditions for pseudo-marginal methods, such
as random walk MRTH where target density evaluations are replaced by unbiased estimates,
see their Corollary 31, Theorem 38 and Theorem 45. Using these results, \citet[Section 2.3,][]{middleton2020unbiased}
proceed to verify the assumptions of their Theorem 2 in the context of pseudo-marginal MRTH,
under conditions on the noise of the target density estimators and the tails of $\pi$. Under these conditions stated
in their Proposition 2, our assumptions are also satisfied. 
In the case of MRTH with independent proposals, we provide details on the verification of the assumptions of Proposition~\ref{prop:subgeometric} in the following toy example.

\begin{example}
  Following \citet[Example~17.2.3]{douc2018MarkovChains}, we consider an independent MRTH kernel $P$ with target $\pi={\rm Uniform}(0,1)$ and proposal density $q(x)=(r+1)x^{r}$ on $(0,1)$, for some $r>0$.
  Since $\pi(x)/q(x)$ is unbounded, the resulting Markov chain cannot be geometrically ergodic \citep{mengersen1996rates} so Assumption~\ref{assumption:drift} cannot hold.
  However, Assumption \ref{assumption:drift:subgeom} does hold.
  Indeed, letting $s\in(\max(r,1),r+1)$ one can take $V(x)=x^{-s}$ and verify that
  \begin{align*}
    PV(x)	&= V(x)-(r+1)V(x)^{1-\frac{r}{s}}+bx^{r-s+1}-\frac{r+1}{s-1}x^{r}, \\
	        &\leq V(x)-(r+1)V(x)^{1-\frac{r}{s}}+bx^{r-s+1},
  \end{align*}
  where $b=\frac{r+1}{r-s+1}+\frac{r+1}{s-1}+r$.
  We define $\mathcal{C}=[x_{0},1)$ with $x_{0}<b^{\frac{1}{r-s}}$. One can verify that
  Assumption \ref{assumption:drift:subgeom} holds with $\vartheta=1$, $\alpha=1-r/s$ and the given $b$,
  noting that $\inf_{x\in\mathcal{C}^\complement}V(x)^{1-\frac{r}{s}}=V(x_{0})^{1-\frac{r}{s}}>b$.
  Assumption \ref{assumption:meetingprobafromC} is simple to verify for the coupled kernel obtained by using a common proposal and a common Uniform random variable in the acceptance step.
  We verify Assumption \ref{assu:tau-moment-kappa} with $\kappa$ arbitrarily close to $r^{-1}$ by taking $s$ arbitrarily close to $r+1$.
\end{example}

The tails of the meeting times can also be studied 
directly, without drift conditions, as is done in \citet{lee2020coupling,karjalainen2025mixingtimeconditionalbackward} for couplings of conditional particle filters,
in \citet{nguyen2022many} for couplings of a Gibbs sampler for Bayesian clustering with Dirichlet process mixtures, 
in \citet{deligiannidis2024importance} for the common random numbers coupling of MRTH with independent proposals, 
and in Section \ref{subsec:fishyillustration}
below.


\subsection{Illustration\label{subsec:fishyillustration}}

We illustrate the fishy function estimator of Definition~\ref{def:Gyx} and the verification of Assumption~\ref{assu:tau-moment-kappa} in the following example.
The target $\pi$ is defined as the posterior in a Cauchy location model
with parameter $\theta$.
We observe $n=3$ measurements $z_{1}=-8,z_{2}=+8,z_{3}=+17$, assumed
to be realizations of $\text{Cauchy}(\theta,1)$. The prior on $\theta$
is $\text{Normal}(0,100)$. Since the prior is Normal and the likelihood is upper-bounded,
$h:x\mapsto x$ is in $L^m(\pi)$ for all $m>0$.
We consider the Gibbs sampler described
in \citet{robert1995convergence}, that introduces auxiliary variables $\eta$ and alternates between Exponential
and Normal draws as follows:
\begin{align*}
\eta_{i}|\theta & \sim\text{Exponential}\left(\frac{1+(\theta-z_{i})^{2}}{2}\right)\quad\forall i=1,\ldots,n,\\
\theta'|\eta_{1},\ldots,\eta_{n} & \sim\text{Normal}\left(\frac{\sum_{i=1}^{n}\eta_{i}z_{i}}{\sum_{i=1}^{n}\eta_{i}+\sigma^{-2}},\frac{1}{\sum_{i=1}^{n}\eta_{i}+\sigma^{-2}}\right),
\end{align*}
where $\sigma^{2}=100$ is the prior variance. The coupling of this
Gibbs sampler is done using common random numbers for the $\eta$-variables,
and a maximal coupling for the update of $\theta$.

We can verify Assumption \ref{assu:tau-moment-kappa} by direct calculations.
Consider any pair $\theta$, $\tilde{\theta}$,
and the pair of next values, $\theta'$, $\tilde{\theta}'$.
Write $\eta$ and $\tilde{\eta}$ for the auxiliary variables in each chain.
First observe that the means of the Normal distributions, being of the form
\[\frac{\sum_{i=1}^{n}\eta_{i}z_{i}}{\sum_{i=1}^{n}\eta_{i}+\sigma^{-2}} = \frac{\sum_{i=1}^{n}\eta_{i}z_{i}}{\sum_{i=1}^{n}\eta_{i}} \frac{\sum_{i=1}^{n}\eta_{i}}{\sum_{i=1}^{n}\eta_{i}+\sigma^{-2}},\]
take values within $(-a, +a)$ with $a = \max|z_i|$, since they are weighted averages of $(z_i)$ multiplied by a value in $(0,1)$.
Therefore, the mean of the next $\theta'$ is within a finite interval that does not depend on the previous $\theta$.
Regarding the variance $(\sum_{i=1}^{n}\eta_{i}+\sigma^{-2})^{-1}$,
note that $\eta_i = -{2}/({1+(\theta-z_{i})^{2}})\log U_i$,
where $U_i$ is Uniform$(0,1)$, thus $\eta_i \leq -2\log U_i$
and finally
$(\sum_{i=1}^{n}\eta_{i}+\sigma^{-2})^{-1} \geq (\sigma^{-2} + \sum_{i=1}^n (-2\log U_i))^{-1}$.
Also, $(\sum_{i=1}^{n}\eta_{i}+\sigma^{-2})^{-1} \leq \sigma^{2}$. 
The distribution of $\sum_{i=1}^n (-2\log U_i)$
does not depend on $\theta$, thus
we can define an interval $(c,d) \subset (0,\infty)$ and $\epsilon\in(0,1)$,
independently of $\theta$ and $\tilde{\theta}$, such that
$(\sum_{i=1}^{n}\eta_{i}+\sigma^{-2})^{-1} \in (c,d)$ and $(\sum_{i=1}^{n}\tilde{\eta}_{i}+\sigma^{-2})^{-1} \in (c,d)$ simultaneously with probability $\epsilon$.
Therefore, with probability $\epsilon$, $(\theta',\tilde{\theta}')$ is
drawn from a maximal coupling of two Normals, which means and variances are
in finite intervals defined independently of $(\theta,\tilde{\theta})$.
Two such Normals have a total variation distance that is bounded away
from one, thus there exists $\delta>0$ such that,
$\mathbb{P}(\theta'=\tilde{\theta}'|\theta,\tilde{\theta})>\epsilon\delta$,
for some $\epsilon>0$, $\delta>0$ and all $(\theta$,$\tilde{\theta})$.
Hence, Assumption \ref{assu:tau-moment-kappa} holds for all $\kappa\geq 1$.

The second algorithm is a Metropolis--Rosenbluth--Teller--Hastings
(MRTH) algorithm with Normal proposal on $\theta$, with standard deviation $10$.
Its coupling employs a reflection-maximal coupling of the proposals as described in Appendix~\ref{appx:couplingrh}.
Verification of Assumption \ref{assu:tau-moment-kappa} can be
done as described in Section \ref{subsec:interpretassumption} via a drift condition \citep{jarner2000geometric}, since
the target is super-exponential.
Thus, Assumption \ref{assu:tau-moment-kappa} holds again for all $\kappa\geq 1$.

\begin{figure}[t]
  \centering \begin{subfigure}[b]{0.45\columnwidth} \includegraphics[width=1\columnwidth]{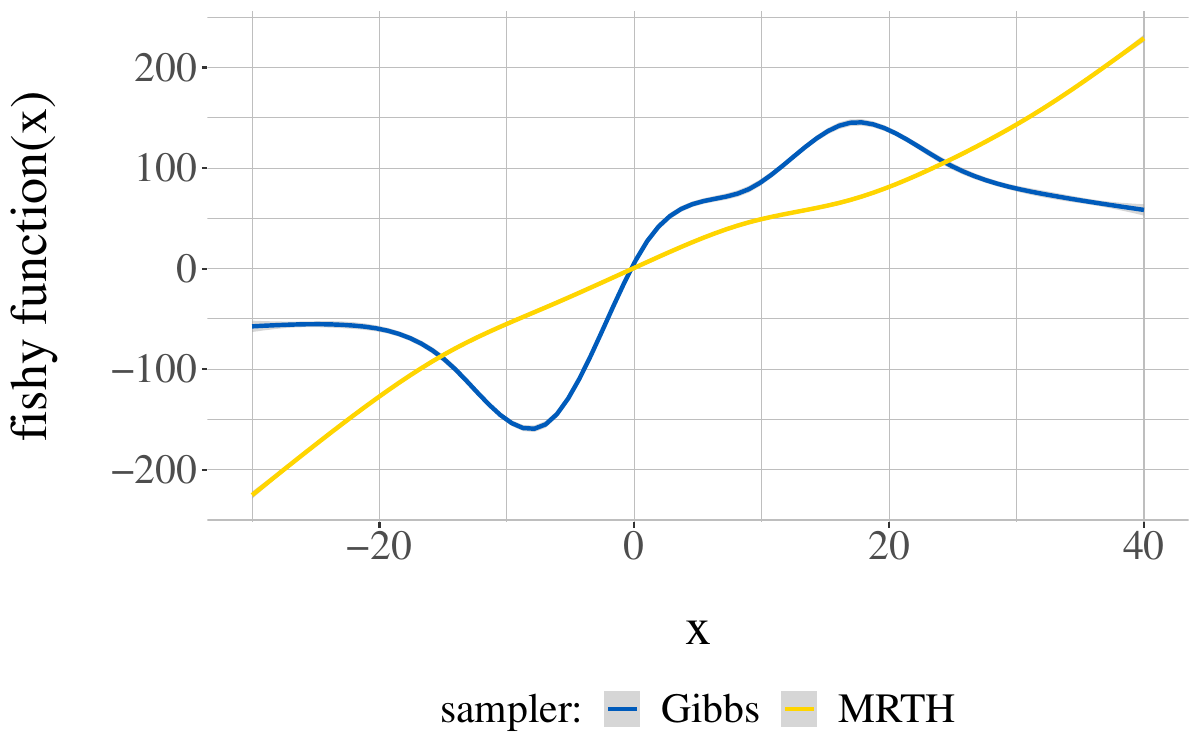}
\caption{}
\label{subfig:cauchynormal:htilde} \end{subfigure} \hspace*{1cm}
\begin{subfigure}[b]{0.45\columnwidth} \includegraphics[width=1\columnwidth]{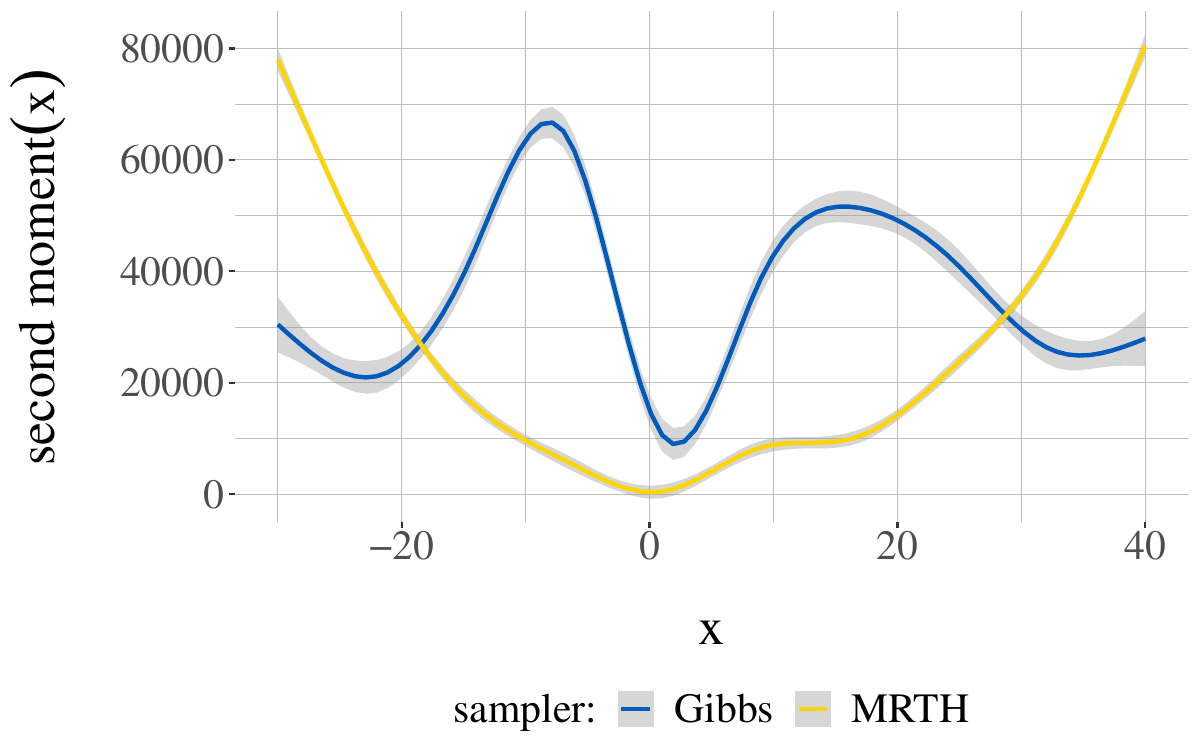}
\caption{}
\label{subfig:cauchynormal:Htildem2} \end{subfigure}
\caption{Cauchy-Normal example: estimation of $\protect\fishytest(x)$ (left) and of second moment
of the estimator, $\mathbb{E}[\protect\fishytestestimator(x)^{2}]$
(right), for two different MCMC algorithms.
\label{fig:cauchynormal:htilde}}
\end{figure}

The state $y$ used to define $\fishytest=g_{y}$ in \eqref{eq:explicitfishy-1}
is set to zero.  Figure \ref{subfig:cauchynormal:htilde} shows the estimated
fishy functions for the Gibbs sampler and MRTH, and Figure
\ref{subfig:cauchynormal:Htildem2} shows the estimated second moments
$\mathbb{E}[\fishytestestimator(x)^{2}]$, for a grid of values of $x$ and using
$10^{3}$ independent repeats of $\fishytestestimator(x)$ for each $x$. 
According to Theorem \ref{thm:fishy-main-text}, $\mathbb{E}[|\fishytestestimator(x)|^p|<\infty$  for all $p\geq 1$, for both algorithms.
Figure \ref{subfig:cauchynormal:htilde} shows that the
fishy functions $\fishytest$ are markedly different for both algorithms. For the Gibbs sampler
the fishy function appears uniformly bounded, whereas it seems to diverge for MRTH.
To understand this difference,
we recall an interpretation of fishy functions due to \citet[Section 4]{kontoyiannis2009notes}.
The average $t^{-1}\sum_{s=0}^{t-1}\test(X_{s})$ has expectation
$t^{-1}\sum_{s=0}^{t-1}P^{s}\test(x)$ given $X_{0}=x$, thus
$\fishytest_{\star}(x)$ in \eqref{def:g-star} represents the leading term in the asymptotic
bias:
\begin{equation}
\fishytest_{\star}(x)=\lim_{t\to\infty}t\left\{ \mathbb{E}_{x}\left[t^{-1}\sum_{s=0}^{t-1}\test(X_{s})\right]-\pi(\test)\right\} .\label{eq:asymptoticbias}
\end{equation}
Thus $g_\star(x)$ measures the asymptotic influence of a starting point $X_0 = x$ in the MCMC estimate.
Figure \ref{subfig:cauchynormal:htilde} shows that this influence
is unbounded for the random walk MRTH algorithm, which is intuitive: more MRTH steps are required for the chain to reach stationarity as $|X_0|$ increases. The same is not true for the Gibbs sampler. Indeed, an arbitrary large initial value for $\theta_0$ would result in $\eta$-variables
arbitrarily close to zero, which in turn would result in $\theta_1$ being distributed approximately as the prior distribution $\text{Normal}(0,\sigma^2)$. Thus, the chain returns
to the center of the space in one step, from an arbitrary initial state.

\section{Glynn--Rhee estimators\label{subsec:recoverGlynnRhee}}

\subsection{Recovering Glynn--Rhee estimators from the Poisson equation}

\citet{glynn2014exact} show how certain couplings of Markov chains could be simulated so as to
construct unbiased estimators of $\pi(h)$, for a class of test functions $h$, that can be computed for a random but finite cost. If the expected cost and the variance of an estimator are finite, then averages of independent copies
converge at the Monte Carlo rate to $\pi(h)$ \citep{glynn1992asymptotic}. Such estimators can be advantageous compared to ergodic average MCMC estimators, as independent copies can be produced by parallel machines. Section 3.3 in \citet{joa2020} provides a review on the use of unbiased estimators generated by parallel machines.
In this section we explain how the unbiased estimators of \citet{glynn2014exact}, as well as the variants of \citet{joa2020}, can
be derived from the Poisson equation, and we provide conditions implying the finiteness on their moments of any order (Theorem~\ref{thm:unbiased-estimator-main-text}). In Proposition~\ref{prop:cltunbiasedmcmc} we provide an asymptotic result facilitating the comparison between standard and unbiased MCMC estimators. Finally, in Section~\ref{subsec:subssampledunbiasedestimators} we discuss the equivalent of ``thinning'' in unbiased MCMC, which is useful for the developments of Section~\ref{sec:Asymptotic-variance}.

First observe
that we may rearrange \eqref{eq:poisson-equation} as
\begin{equation}
\pi(\test)=\test(x)+P\fishytest(x)-\fishytest(x),\qquad x\in\mathbb{X},\label{eq:rearrange-fishy-point}
\end{equation}
where the left-hand side does not depend on $x$ and $g$ is any fishy
function for $h$. It then seems natural to estimate $\pi(h)$ by
estimating the terms on the right-hand side, for any $x$. With $g=g_{\star}$
as in Definition~\ref{def:g-star} we
may write
\begin{equation}
\pi(h)=\test(x)+\sum_{t=0}^{\infty}P^{t+1}\test(x)-P^{t}\test(x),\qquad x\in\mathbb{X},
\end{equation}
where the right-hand side is a familiar quantity in the light of \citet{glynn2014exact}.
In particular, one may run Algorithm \ref{alg:coupledchains} with $L=1$:
starting from $X_{0}=Y_{0}=x$, sample $X_{1}\sim P(X_{0},\cdot)$,
and iteratively sample $(X_{t+1},Y_{t})$ using the transition $\bar{P}$ defined in Section~\ref{subsec:Coupled-Markov-chains}.
The generated $(X,Y)$ process is such that $X_{t+1}=Y_{t}$ almost surely for all $t$ large enough.
Since $P^{t}\test(x)$ is the expectation of both $\test(X_{t})$
and $\test(Y_{t})$, the estimator $h(X_{0})+\sum_{t=0}^{\infty}\{\test(X_{t+1})-\test(Y_{t})\}$
is unbiased for $\pi(\test)$ under suitable assumptions (see Theorem~\ref{thm:unbiased-estimator-main-text}).

This same perspective on \eqref{eq:rearrange-fishy-point} suggests
that for any $x\in\mathbb{X}$ we may define the equivalent and notationally
convenient approximation $h(x)+G_{x}(X_{1})$, where $X_{1}\sim P(x,\cdot)$,
and it is clear that if $\mathbb{E}[G_{y'}(x')]=g_{y'}(x')$ for $\pi$-almost
all $x'$ and $y'=x$, then
\begin{align*}
\mathbb{E}_{x}\left[h(x)+G_{x}(X_{1})\right] & =h(x)+\mathbb{E}_{x}\left[g_{x}(X_{1})\right]\\
 & =h(x)+\mathbb{E}_{x}\left[g_{\star}(X_{1})\right]-g_{\star}(x)  = \pi(h).
\end{align*}
The initialization of the chains can be generalized from a point mass
at $x\in\mathbb{X}$ to a probability distribution $\mu$ (a condition on the initial distribution will be stated in Theorem~\ref{thm:unbiased-estimator-main-text}). Indeed,
a re-arranged and integrated Poisson equation is
\begin{equation}
\pi(\test)=\mu(h)+\mu P(\fishytest)-\mu(\fishytest),
\end{equation}
and this suggests the following estimator of $\pi(h)$, with
the same justification as above.

\begin{defn}
\label{def:unbiased-estimator}For $h\in L^{1}(\pi)$, denote the
approximation of $\pi(h)$ by
\begin{equation}
H=h(X_{0}')+G_{Y_{0}'}(X_{1}'),
\end{equation}
where marginally $X_{0}'\sim\mu$, $Y_{0}'\sim\mu$ and $X_{1}'\sim\mu P$
for some probability measure $\mu$, and $G$ is in Definition~\ref{def:Gyx}. We denote by $\gamma$ the joint
probability measure for $(X_{1}',Y_{0}')$, since this features in
our analysis, noting that this is a coupling of $\mu P$ and $\mu$.
\end{defn}

The estimator in Definition~\ref{def:unbiased-estimator} is identical
to the estimator of \citet{glynn2014exact}, denoted by $H_{0}(X,Y)$ in \citet{joa2020}, if one
chooses $X_{1}'\sim P(X_{0}',\cdot)$ and $(X_{0}',Y_{0}')$ is drawn
from some coupling of $\mu=\pi_{0}$ with itself. In the following we focus on an
independent initialization of the chains.

\subsection{Recovering unbiased MCMC estimators\label{subsec:recoverUMCMC}}

We can also retrieve the more efficient variants proposed in \citet{joa2020}.
By changing the
initial distribution $\gamma$, Definition~\ref{def:unbiased-estimator}
also admits the estimators denoted by $H_{k}(X,Y)$ 
for some $k\in\mathbb{N}$ in \citet{joa2020}, and the estimators denoted by $H_{k:m}(X,Y)$
are obtained as averages of $H_{k}(X,Y)$ over a range of values of $k$.
Unbiased estimators based on chains coupled with a lag $L>1$ \citep{VanettiDoucet2020}
can be retrieved as well by considering the Poisson equation associated
with the iterated Markov kernel $P^{L}$. To make this precise, we
present the following definition.

\begin{defn}
\label{def:lagged-offset}The $L$-lagged and $k$-offset approximation
of $\pi(h)$ is
\[
H_{k}:=h(X_{k})+\sum_{j=1}^{\infty}h(X_{k+jL})-h(Y_{k+(j-1)L}),
\]
where $k\in\mathbb{N}$, $L\geq 1$, $(X_{t+L},Y_{t})_{t\geq0}$
is a time-homogeneous Markov chain with Markov kernel $\bar{P}$,
and $(X_{t})_{t=0}^{L}$ is a Markov chain with transition kernel
$P$ and initial distribution $\pi_{0}$ and $Y_{0}\sim\pi_{0}$
independent of $(X_{0},\ldots,X_{L})$. In particular, $(X_{L},Y_{0})\sim \pi_{0}P^{L}\otimes\pi_{0}$. The coupled chain may be generated with Algorithm~\ref{alg:coupledchains}, starting from two independent draws from $\pi_0$, and the meeting time $\tau$ here is $\inf\{t> L: X_t = Y_{t-L}\}$.
We define for any $k,\ell\in\mathbb{N}$ with $k\leq\ell$, the
average of such estimators as
$H_{k:\ell}:={(\ell-k+1)}^{-1}\sum_{t=k}^{\ell}H_{t}$.
\end{defn}

It will be convenient to view the unbiased MCMC estimator $H_{k:\ell}$
in Definition~\ref{def:lagged-offset} as equivalent to the expectation
of $h$ with respect to an unbiased MCMC signed measure $\hat{\pi}_{k:\ell}$. 
After some routine calculations described in Appendix~\ref{appx:unbiasedmcmc}, and replacing evaluations of the test function $h$
by Dirac delta masses, one obtains the following signed measure
\begin{align}
  \hat{\pi}_{k:\ell}({\rm d}x)&=\frac{1}{\ell-k+1}\sum_{t=k}^{\ell}\delta_{X_{t}}({\rm d}x)+\sum_{t=k+L}^{\tau-1}\frac{v_t}{\ell-k+1}\left\{ \delta_{X_{t}}-\delta_{Y_{t-L}}\right\} ({\rm d}x),\label{eq:pihatmeasure}\\
  \text{with} \quad
  v_t &= \lfloor(t-k) / L\rfloor - \lceil \max(L, t-\ell)/L\rceil + 1.
  \label{eq:weight_in_pihatmeasure}
\end{align}

The following result is established in
Appendix~\ref{subsec:Unbiased-approx-pi-test}, as a particular case of
Proposition~\ref{prop:lag-average-transference}, where upper bounds are given
for $\mathbb{E}[|H_{k:\ell}|^{p}]$. This result can be compared with
\citet[Proposition 1,][]{joa2020} and \citet[Theorem 1,][]{middleton2020unbiased}, which provide only finite
second moments. The latter obtains the same conditions on $\kappa$ and $m$ for
$p=2$. The bounded ${\rm d}\pi_{0}/{\rm d}\pi$ assumption allows one to avoid
the less explicit assumption that $\sup_{n\geq 0} \pi_0 P^n(|h|^{2+\eta}) <
\infty$ and can often be verified as the MCMC user chooses $\pi_0$.
\begin{thm}
\label{thm:unbiased-estimator-main-text}Under Assumption~\ref{assu:tau-moment-kappa},
let $h\in L^{m}(\pi)$ with $m>\kappa/(\kappa-1)$, and ${\rm d}\pi_{0}/{\rm d}\pi\leq M$ with $M<\infty$.
Then, with $H_{k:\ell}$ in Definition~\ref{def:lagged-offset}
for any $k,\ell\in\mathbb{N}$ with $k \leq \ell$, any $L\geq 1$, $\mathbb{E}[H_{k:\ell}]=\pi(h)$
and for $p\geq1$ such that $\frac{1}{p}>\frac{1}{m}+\frac{1}{\kappa}$,
$\mathbb{E}[|H_{k:\ell}|^{p}]<\infty$.
\end{thm}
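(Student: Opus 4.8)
The plan is to prove the statement first for $H_{k}^{(L)}$ and then pass to $H_{k:\ell}^{(L)}$, which is a finite average of $H_{t}^{(L)}$, $t=k,\dots,\ell$, so that $\mathbb{E}[H_{k:\ell}^{(L)}]=\pi(h)$ follows from linearity and $\mathbb{E}[|H_{k:\ell}^{(L)}|^{p}]<\infty$ from Minkowski's inequality. Writing $(\tilde{X}_{t},\tilde{Y}_{t}):=(X_{t+L}',Y_{t}')$, Definition~\ref{def:lagged-offset} shows that $(\tilde{X}_{t},\tilde{Y}_{t})_{t\geq0}$ is a $\bar{P}$-coupled chain started from $\gamma^{(L)}=\pi_{0}P^{L}\otimes\pi_{0}$, with meeting time $\tau$ in the sense of Section~\ref{subsec:Coupled-Markov-chains}; reindexing the sum gives $H_{k}^{(L)}=h(X_{k}')+\sum_{i\geq0}\{h(\tilde{X}_{k+iL})-h(\tilde{Y}_{k+iL})\}$, which is almost surely a finite sum since its $i$-th summand vanishes once $k+iL\geq\tau$, leaving at most $1+\tau/L$ non-zero terms. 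Marginally $X_{t}'\sim\pi_{0}P^{t}$ and $Y_{t}'\sim\pi_{0}P^{t}$, hence $\tilde{X}_{k+iL}\sim\pi_{0}P^{k+(i+1)L}$ and $\tilde{Y}_{k+iL}\sim\pi_{0}P^{k+iL}$.

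The assumption $\mathrm{d}\pi_{0}/\mathrm{d}\pi\leq M$ does all the ``transference'' work. As measures, $\pi_{0}\leq M\pi$ gives $\pi_{0}P^{n}\leq M\pi P^{n}=M\pi$ for all $n\geq0$, so $\sup_{n\geq0}\pi_{0}P^{n}(|h|^{q})\leq M\|h\|_{L^{m}(\pi)}^{q}$ for every $q\in[1,m]$, and $\gamma^{(L)}\leq M^{2}(\pi\otimes\pi)$. Combining the latter with Proposition~\ref{prop:tau-moment-survival} and Assumption~\ref{assu:tau-moment-kappa},
\[
\mathbb{P}_{\gamma^{(L)}}(\tau>t)\leq M^{2}\,\mathbb{P}_{\pi\otimes\pi}(\tau>t)\leq M^{2}\,\mathbb{E}_{\pi\otimes\pi}[\tau^{\kappa}]\,(t+1)^{-\kappa},
\]
and $\mathbb{E}_{\gamma^{(L)}}[\tau^{\kappa}]\leq M^{2}\mathbb{E}_{\pi\otimes\pi}[\tau^{\kappa}]<\infty$. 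Finally, the coupling inequality for a $\bar{P}$-coupled pair started from $\pi_{0}\otimes\pi$ yields $\|\pi_{0}P^{n}-\pi\|_{\mathrm{TV}}\leq\mathbb{P}_{\pi_{0}\otimes\pi}(\tau>n)\leq M\,\mathbb{P}_{\pi\otimes\pi}(\tau>n)\to0$.

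For the moment bound, fix $p\geq1$ with $1/p>1/m+1/\kappa$, so that $p<m$ and $p<\kappa$. Bound $|H_{k}^{(L)}|\leq|h(X_{k}')|+\sum_{i\geq0}w_{i}\mathbf{1}\{\tau>k+iL\}$ with $w_{i}:=|h(\tilde{X}_{k+iL})|+|h(\tilde{Y}_{k+iL})|$, where $\sup_{i}\mathbb{E}[w_{i}^{q}]\leq(2M^{1/q}\|h\|_{L^{m}(\pi)})^{q}$ for $q\in[1,m]$. Split the $i$-th summand as $(w_{i}\mathbf{1}\{\tau>k+iL\}(k+iL+1)^{\epsilon})((k+iL+1)^{-\epsilon})$ for a parameter $\epsilon>0$, apply Hölder over $i$ with exponents $p$ and $p'$ (the second factor is summable when $\epsilon>1/p'$), take expectations, and estimate $\mathbb{E}[w_{i}^{p}\mathbf{1}\{\tau>k+iL\}]$ by Hölder with exponents $m/p$ and $m/(m-p)$ using the survival bound above; this leaves a series $\sum_{i\geq0}(k+iL+1)^{p\epsilon-\kappa(1-p/m)}$, summable precisely when $p\epsilon<\kappa(1-p/m)-1$. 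A valid $\epsilon$, i.e. one with $1/p'<\epsilon<(\kappa(1-p/m)-1)/p$, exists if and only if $1/p>1/m+1/\kappa$, which is the hypothesis; together with $\mathbb{E}|h(X_{k}')|^{p}\leq M\|h\|_{L^{m}(\pi)}^{p}<\infty$ this gives $\mathbb{E}[|H_{k}^{(L)}|^{p}]<\infty$ with an explicit bound in $M$, $\|h\|_{L^{m}(\pi)}$, $\mathbb{E}_{\pi\otimes\pi}[\tau^{\kappa}]$, $L$, $k$ and $p$, matching the bounds of Proposition~\ref{prop:lag-average-transference}. (Alternatively, $(\tilde{X}_{k+iL},\tilde{Y}_{k+iL})_{i\geq0}$ is a faithful coupling of $P^{L}$ with meeting time at most $1+\tau/L$, so one could instead invoke Theorems~\ref{thm:fishy-main-text} and~\ref{thm:fishy-estimator} for $P^{L}$; I prefer the direct route, which avoids discussing $\pi$-irreducibility of $P^{L}$.)

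For unbiasedness, $m>\kappa/(\kappa-1)$ forces $1/m+1/\kappa<1$, so $p=1$ is admissible and the $p=1$ version of the above estimate gives $\sum_{i\geq0}\mathbb{E}|h(\tilde{X}_{k+iL})-h(\tilde{Y}_{k+iL})|<\infty$, justifying the exchange of expectation and sum. Taking expectations termwise and using $\tilde{X}_{k+iL}\sim\pi_{0}P^{k+(i+1)L}$, $\tilde{Y}_{k+iL}\sim\pi_{0}P^{k+iL}$, the sum telescopes and $\mathbb{E}[H_{k}^{(L)}]=\lim_{n\to\infty}\pi_{0}P^{n}(h)$. Since $\pi_{0}P^{n}\to\pi$ in total variation and $\sup_{n}\pi_{0}P^{n}(|h|^{m})<\infty$ with $m>1$, the family $\{h\}$ is uniformly integrable under $(\pi_{0}P^{n})_{n}$, hence $\pi_{0}P^{n}(h)\to\pi(h)$ and $\mathbb{E}[H_{k}^{(L)}]=\pi(h)$; averaging over $t\in\{k,\dots,\ell\}$ finishes the proof. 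The main obstacle is the moment step, specifically threading the auxiliary exponent $\epsilon$ between $1/p'$ and $(\kappa(1-p/m)-1)/p$ and verifying that this window is non-empty exactly under $1/p>1/m+1/\kappa$, which is where the precise form of the hypothesis is consumed; the telescoping identity and the total-variation/uniform-integrability argument for unbiasedness are routine once the domination $\pi_{0}P^{n}\leq M\pi$ is in hand.
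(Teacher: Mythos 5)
Your proof is correct, and while it follows the same overall architecture as the paper's (reduce $H_{k:\ell}^{(L)}$ to $H_{k}^{(L)}$ by Minkowski, reindex so that $(X_{t+L}',Y_t')$ is a $\bar P$-chain from $\gamma^{(L)}$, and use ${\rm d}\pi_0/{\rm d}\pi\leq M$ to transfer meeting-time and moment bounds from $\pi\otimes\pi$), it diverges in two genuine ways. First, for the $p$th moment the paper passes to the $L$-skeleton chain with meeting time $0\vee\lceil(\tau-k)/L\rceil$ and invokes the fishy-function machinery (Theorem~\ref{thm:fishy-estimator} via Lemma~\ref{lem:tech-new}, which applies Minkowski over the index and then H\"older plus Markov termwise, producing the $\zeta\bigl(\tfrac{(m-p)\kappa}{mp}\bigr)$ factor); you instead stay with the lagged chain and use a weighted H\"older inequality over the index $i$ with an auxiliary exponent $\epsilon$ threaded between $1/p'$ and $(\kappa(1-p/m)-1)/p$ — your verification that this window is non-empty exactly when $\tfrac1p>\tfrac1m+\tfrac1\kappa$ is correct and recovers the same condition, at the cost of a slightly clumsier bookkeeping than Minkowski-first. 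Second, for unbiasedness the paper goes through the Poisson equation: it identifies the tail sum as an unbiased estimator of $g^{(L)}_{Y_k'}(X_{k+L}')$ and uses $\mu(h)+\mu P(g_\star)-\mu(g_\star)=\pi(h)$, which requires establishing $g_\star\in L_0^1(\pi)$ (Theorem~\ref{thm:poisson-Lp}); you telescope the termwise expectations directly to $\lim_n\pi_0P^n(h)$ and conclude by total-variation convergence plus uniform integrability from $\sup_n\pi_0P^n(|h|^m)\leq M\pi(|h|^m)$ with $m>1$. Your route is more elementary and self-contained — it never needs the fishy function to exist — whereas the paper's buys the explicit, $k$- and $L$-dependent moment bound of Proposition~\ref{prop:bound-L-k} (monotone in $k$ and $L$) as a reusable byproduct for the later asymptotic-variance analysis. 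Your parenthetical remark that one could alternatively apply the fishy-function theorems to $P^L$ is in fact precisely the paper's proof.
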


\begin{rem}
By Theorem~\ref{thm:unbiased-estimator-main-text} it is sufficient
that $\kappa>2$ and $m>2\kappa/(\kappa-2)$ for $H$ to have a finite
variance. On the other hand, Theorem~\ref{thm:clt-kappa} implies
that a CLT holds for $h$ if $\kappa>1$ and $m>2\kappa/(\kappa-1)$,
which is weaker. The stronger condition in Theorem~\ref{thm:unbiased-estimator-main-text}
is because finite second moment of the unbiased estimator is shown
via finite second moment of the approximation of $g_{\star}$, and
this requires $g_{\star}\in L_{0}^{2}(\pi)$.
\end{rem}

Theorem~\ref{thm:unbiased-estimator-main-text} provides conditions for the finiteness of moments of unbiased MCMC estimators,
but does not provide a direct comparison with the moments of ergodic average MCMC estimators.
Comparisons of the second moments
of unbiased and ergodic average MCMC are provided in Proposition 3 of \citet{joa2020}
and Proposition 1 of \citet{middleton2020unbiased}.
In the same spirit, but perhaps more directly, consider the following CLT for the unbiased MCMC estimator $H_{k:\ell}$,
where the asymptotic variance is the same as in the CLT for ergodic average MCMC \eqref{eq:clt}.
\begin{prop}\label{prop:cltunbiasedmcmc}
Assume the conditions of Theorem~\ref{thm:clt-kappa} and let $k\in\mathbb{N}$, and let $L\geq 1$.
Then $H_{k:\ell}$ in Definition~\ref{def:lagged-offset} satisfies, as $\ell\to\infty$,
\[
\sqrt{\ell-k+1}\left( H_{k:\ell}-\pi(h)\right) \overset{d}{\to}{\rm Normal}(0,v(P,h)).
\]
\end{prop}

Proposition \ref{prop:cltunbiasedmcmc} is established in Appendix~\ref{subsec:Unbiased-approx-pi-test}. 
Under the stated conditions, for suitably large $\ell-k$ the
concentration of $H_{k:\ell}$ is similar to that of the ergodic average MCMC
estimator of a similar computational cost, noting that one is simulating
only a single chain after the meeting time. 
Proposition \ref{prop:cltunbiasedmcmc} suggests that the efficiency of unbiased MCMC estimators 
can be made arbitrarily close to that of ergodic average MCMC, by choosing a large enough $\ell$.
For a fixed $\ell$, one can assess the efficiency loss by comparing the product of expected cost
and variance of unbiased MCMC, with the asymptotic variance $v(P,h)$.
However, the numerical approximation of $v(P,h)$ is not easy,
and typically requires long chains. Accordingly, 
\citet{glynn2014exact,agapiou2018unbiased,joa2020} all compare the efficiency of unbiased estimators
with that of ergodic averages obtained from long runs. In Section \ref{sec:Asymptotic-variance}
we propose a new asymptotic variance estimator that is itself unbiased, and thus 
can be generated independently on parallel machines, for a random but finite time, and averaged to obtain a
consistent estimator of $v(P,h)$ converging at the Monte Carlo rate. This alleviates
the need for long chains, not only for the estimation of $\pi(h)$ but also for the efficiency assessment of unbiased MCMC.

\subsection{Subsampled unbiased estimators\label{subsec:subssampledunbiasedestimators}}

To prepare for the description of the estimators in Section~\ref{sec:Asymptotic-variance},
we simplify the notation of the signed measure $\hat{\pi}_{k:\ell}$ in \eqref{eq:pihatmeasure}-\eqref{eq:weight_in_pihatmeasure}.
Let $N=(\ell-k+1) + 2 \max(0, \tau-(k+L))$ be the random number of atoms in  $\hat{\pi}_{k:\ell}$,
and denote these atoms by $Z_1, \ldots, Z_N$.
The first $\ell-k+1$ atoms correspond to $X_k,\ldots,X_\ell$, and 
the following $2 \max(0, \tau-(k+L))$ atoms (if any) correspond to $X_{k+L},Y_k,\ldots,X_{\tau-1},Y_{\tau-1-L}$.
There may be duplicate atoms among $Z_{1},\ldots,Z_{N}$.
We can write
\begin{equation}
\hat{\pi}_{k:\ell}({\rm d}x) = \sum_{i=1}^{N} \omega_i \delta_{Z_i}({\rm d}x),
  \label{eq:pihatmeasure_simplified}
\end{equation}
where $\omega_i$ are the weights, 
of the form $(\ell-k+1)^{-1}$ for the first $\ell-k+1$ atoms, and of the form
$\pm v_t (\ell-k+1)^{-1}$ with $v_t$ defined in \eqref{eq:weight_in_pihatmeasure} and $t\in\{k+L,\ldots,\tau-1\}$ for the remaining atoms.
We further simplify the notation by writing $\hat{\pi}$ instead of $\hat{\pi}_{k:\ell}$.
For a function $h$, $\hat{\pi}(h)=\sum_{i=1}^N \omega_i h(Z_i)= H_{k:\ell}$,
where $H_{k:\ell}$ is the unbiased MCMC estimator in Definition~\ref{def:lagged-offset}.

We can obtain estimators of $\pi(h)$ by subsampling 
the atoms of $\hat{\pi}$ in \eqref{eq:pihatmeasure_simplified}.
This results in estimators with lower
computational cost, for example if evaluations of the test function $h$ are expensive
in comparison to the simulation of $\hat{\pi}$. The computational benefits of subsampling in this context are related to the thinning ideas in \citet{owen2017statistically}.
The following result is established in Appendix~\ref{subsec:Unbiased-approx-pi-test}. 
It demonstrates that the sufficient conditions for
lack-of-bias and finite $p$th moments are identical for $H_{k:\ell}$
and the subsampled estimator $S_{R}$ for any $R\geq1$.
\begin{thm}
\label{thm:subsampled-main-text}Under Assumption~\ref{assu:tau-moment-kappa},
let $h\in L^{m}(\pi)$ for some $m>\kappa/(\kappa-1)$, ${\rm d}\pi_{0}/{\rm d}\pi\leq M$,
$k,\ell\in\mathbb{N}$ with $k\leq\ell$, and $\hat{\pi}=\sum_{i=1}^{N}\omega_{i}\delta_{Z_{i}}$
be the unbiased signed measure in \eqref{eq:pihatmeasure_simplified}.
Define for some integer $R\geq1$,
\[
S_{R}=\frac{1}{R}\sum_{i=1}^{R} \xi_{I_i}^{-1} \omega_{I_{i}}h(Z_{I_{i}}),
\]
where $I_{1},\ldots,I_{R}$ are conditionally independent ${\rm Categorical}\{\xi_{1},\ldots,\xi_{N}\}$
variables with
\[
\frac{a}{N}\leq\min_{i}\xi_{i}\leq\max_{i}\xi_{i}\leq\frac{b}{N},
\]
for some constants $0<a\leq b<\infty$ that may be functions of $k,\ell,L$ but not $\tau$.
Then $\mathbb{E}[S_{R}]=\pi(h)$
and for $p\geq1$ such that $\frac{1}{p}>\frac{1}{m}+\frac{1}{\kappa}$,
$\mathbb{E}\left[\left|S_{R}\right|^{p}\right]<\infty$.
\end{thm}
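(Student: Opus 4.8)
The plan is to derive Theorem~\ref{thm:subsampled-main-text} from Theorem~\ref{thm:unbiased-estimator-main-text} by a conditioning argument, treating the subsampling as an additional independent randomization layered on top of the unbiased signed measure $\hat\pi = \sum_{i=1}^N \omega_i \delta_{Z_i}$. The key observation is that $S_R$ is a simple importance-sampling/Horvitz--Thompson estimator of the ``integral'' $\sum_{i=1}^N \omega_i h(Z_i) = H_{k:\ell}^{(L)}$: conditionally on $\hat\pi$ (equivalently on all the randomness $\mathcal F$ generating the coupled chains, hence $N$, the $\omega_i$ and the $Z_i$), each summand $N\omega_{I_i} h(Z_{I_i})$ has conditional expectation $\sum_{j=1}^N \xi_j \cdot N\omega_j h(Z_j) = \sum_{j=1}^N (N\xi_j) \omega_j h(Z_j)$. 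Here one must be slightly careful: this equals $\sum_j \omega_j h(Z_j)$ only if $N\xi_j = 1$ for all $j$, which is not assumed — only $a/N \le \xi_j \le b/N$. So the clean route is to note that the standard unbiased importance sampling identity requires weights $\omega_j / (N \xi_j)$, i.e. $S_R$ should really be $\tfrac1R\sum_i \omega_{I_i} h(Z_{I_i}) / \xi_{I_i}$; if the paper's $S_R$ as written genuinely uses $N\omega_{I_i}$ then the $\xi$'s must in fact be uniform or I am missing a normalization. I would therefore first restate $S_R$ in the self-normalized/proper Horvitz--Thompson form with conditional mean exactly $H_{k:\ell}^{(L)}$, take $\mathbb E[S_R \mid \mathcal F] = H_{k:\ell}^{(L)}$, and conclude $\mathbb E[S_R] = \mathbb E[H_{k:\ell}^{(L)}] = \pi(h)$ by the tower property and Theorem~\ref{thm:unbiased-estimator-main-text}.

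For the moment bound, I would fix $p \ge 1$ with $\tfrac1p > \tfrac1m + \tfrac1\kappa$ and bound $\mathbb E[|S_R|^p]$ in two stages. First, conditionally on $\mathcal F$, apply Jensen's inequality to pull the $1/R$ average outside the $p$th power (since $x \mapsto |x|^p$ is convex), reducing to controlling $\mathbb E[|N\omega_{I_1} h(Z_{I_1})|^p \mid \mathcal F] = \sum_{j=1}^N \xi_j |N\omega_j h(Z_j)|^p \le b N^{p-1} \sum_{j=1}^N |\omega_j h(Z_j)|^p$, using $\xi_j \le b/N$. The factor $N^{p-1}$ is the worst-case price of a single categorical draw. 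Then I would relate $N^{p-1}\sum_j |\omega_j h(Z_j)|^p$ back to quantities already controlled in the proof of Theorem~\ref{thm:unbiased-estimator-main-text}: in the signed-measure representation of $H_{k:\ell}^{(L)}$ the number of atoms $N$ is an affine function of the (lagged, offset) meeting time $\tau$, and the weights $\omega_j$ are bounded (typically $|\omega_j| \le 1$ or $|\omega_j| \le 1/(\ell-k+1)$ up to the $1/L$ bookkeeping), so $N^{p-1}\sum_j |\omega_j h(Z_j)|^p \lesssim (\tau+c)^{p-1} \sum_{t} |h(\text{chain}_t)|^p$ over the $O(\tau)$ relevant chain states. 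This is precisely the type of sum bounded in Proposition~\ref{prop:lag-average-transference} and in the proof of Theorem~\ref{thm:unbiased-estimator-main-text}, where a Hölder argument splitting $(\tau+c)^{p-1}$ against $|h|^p$ with exponents tied to $\kappa$ and $m$ yields finiteness exactly under $\tfrac1p > \tfrac1m + \tfrac1\kappa$; the extra $N^{p-1}$ here is absorbed because it only shifts which Hölder exponents one uses, and the budget $\tfrac1p - \tfrac1m - \tfrac1\kappa > 0$ already accounts for one power of $\tau$ per power of the estimator.

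Concretely the steps are: (i) set up the filtration $\mathcal F$ generated by $(N,\{\omega_i\},\{Z_i\})$ and record $\mathbb E[\cdot\mid\mathcal F]$ for a single subsample draw; (ii) unbiasedness via the tower property plus Theorem~\ref{thm:unbiased-estimator-main-text}; (iii) conditional Jensen to eliminate the $R$-average; (iv) bound the conditional $p$th moment of one term by $b\,N^{p-1}\sum_j|\omega_j h(Z_j)|^p$ using the lower/upper bounds on $\xi_i$; (v) take expectations and invoke the moment estimates from Proposition~\ref{prop:lag-average-transference} (the generalization of Theorem~\ref{thm:unbiased-estimator-main-text}), possibly after one more Hölder split of $N^{p-1}$ against the chain-state sum. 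I expect step (iv)--(v) — getting the combinatorial factor $N^{p-1}$ to coexist with the existing moment bound without tightening the hypothesis on $(\kappa,m)$ — to be the main obstacle, and the resolution is that the proof of Theorem~\ref{thm:unbiased-estimator-main-text} already carries enough slack (it bounds a sum of $O(\tau)$ terms, so it implicitly handles one power of $\tau$ beyond $|h|$) that the subsampling factor is free; making this ``already carries enough slack'' rigorous means re-reading the Hölder step in Proposition~\ref{prop:lag-average-transference} and checking the exponent arithmetic, which is why the cleanest exposition is to state Proposition~\ref{prop:subsample} in the appendix with the $N^{p-1}$ factor visible and derive Theorem~\ref{thm:subsampled-main-text} as its $R$-averaged, $\xi$-bounded specialization.
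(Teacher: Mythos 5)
Your proposal is correct in substance and your reading of the statement is sharp: the form $N\omega_{I_i}$ is indeed only the proper importance-sampling weight when $\xi_i=1/N$, and the paper's own proof (Proposition~\ref{prop:subsample}) silently switches to the Horvitz--Thompson form $\omega_{K}\xi_{K}^{-1}h(Z_K)$ for the moment bound while using $\mathbb{E}[S_R\mid\hat\pi]=\hat\pi(h)$ for unbiasedness; so your decision to restate $S_R$ with weights $\omega_{I_i}/\xi_{I_i}$ is the right repair, and your unbiasedness argument (tower property plus Theorem~\ref{thm:unbiased-estimator-main-text}, with integrability supplied by the $p=1$ moment bound since $m>\kappa/(\kappa-1)$ is exactly $1>\frac1m+\frac1\kappa$) matches the paper's. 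For the moments, your route genuinely differs from the paper's after the common first step (your conditional Jensen reduction to a single draw is the same device as the paper's convex-order comparison $S_R\leq_{\mathrm{cx}}S_1$). The paper keeps the random index $K$, bounds $|S_1|^p\leq a^{-p}|N\omega_K h(Z_K)|^p$, applies H\"older to split $N^p$ from $|h(Z_K)|^{q}$ with $q=\kappa p/(\kappa-p)<m$, and then handles $\mathbb{E}[|h(Z_K)|^q]$ by decomposing over $\{K=i\}$, using $\mathbb{P}(K=i\mid\cdot)\leq b/i$ on $\{N\geq i\}$, the bound $\mathbb{I}(N\geq i)\leq(N/i)^{\varepsilon}$, a second H\"older, and the uniform bound $\sup_t\mathbb{E}[|h(Z_i)|^m]\leq M\pi(|h|^m)$ from Lemma~\ref{lem:sup-finite}, together with the explicit weight bounds of Lemma~\ref{lem:signed-measure-weight-bounds}. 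You instead sum over all atoms up front, paying the deterministic factor $bN^{p-1}$, and reduce to $\mathbb{E}\bigl[N^{p-1}\sum_j|\omega_j h(Z_j)|^p\bigr]$. Your one deferred step does close: with $|\omega_j|$ bounded and $N\lesssim\tau+c$, writing the quantity as $\sum_{t}\mathbb{E}[\tau^{p-1}\mathbb{I}(\tau>t/c)\,|h(W_t)|^p]$ and applying a three-factor H\"older with exponents $\kappa/(p-1)$ for $\tau^{p-1}$, $m/p$ for $|h|^p$, and the remainder $v$ for the indicator, summability requires $\kappa/v>1$, i.e. $\kappa-(p-1)-\kappa p/m>1$, which is exactly $\frac1p>\frac1m+\frac1\kappa$; so the $N^{p-1}$ factor is absorbed without strengthening the hypotheses, as you conjectured. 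One caution: this is not a literal invocation of Proposition~\ref{prop:lag-average-transference} (which bounds the $p$th moment of a sum via Lemma~\ref{lem:tech-new}, a smaller object than $N^{p-1}$ times the sum of $p$th powers), so the H\"older computation above must actually be written out rather than cited. Your approach buys a somewhat more transparent argument that avoids the random-index/$(N/i)^{\varepsilon}$ machinery; the paper's buys an explicit constant in the bound on $\mathbb{E}[|S_R|^p]^{1/p}$.
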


\begin{rem}
  The more detailed Proposition~\ref{prop:subsample} in Appendix~\ref{subsec:Unbiased-approx-pi-test} provides a bound
on $\mathbb{E}\left[\left|S_{R}\right|^{p}\right]$
but this bound does not depend on the value of $R$. For $p=2$, we
can see that increasing $R$ decreases the variance of $S_{R}$,
through  the law of total variance
\begin{align*}
{\rm var}\left(S_{R}\right) & =\mathbb{E}\left[{\rm var}(S_{R}\mid\hat{\pi})\right]+{\rm var}\left(\mathbb{E}\left[S_{R}\mid\hat{\pi}\right]\right) =\frac{1}{R}\mathbb{E}\left[{\rm var}(S_{1}\mid\hat{\pi})\right]+{\rm var}(\hat{\pi}(h)).
\end{align*}
Hence, there is a tradeoff between computational cost and variance,
with increasing $R$ potentially improving efficiency
up to point where the variance is dominated
by ${\rm var}(\hat{\pi}(h))$.
\end{rem}

\section{Asymptotic variance estimation\label{sec:Asymptotic-variance}}

\subsection{Standard methods and motivation}

The asymptotic variance $v(P,h)$ in \eqref{eq:clt} is commonly estimated
using batch means, spectral variance \citep[see, e.g., ][]{flegal2010batch,vats2018strong,vats2019multivariate,chakraborty2019estimating}, or initial sequence estimators \citep{geyer1992practical,BergSong2022}.
We propose new estimators of $v(P,h)$, one of
which has the distinctive property of being unbiased, computable in a (random) finite time, and with $p$ finite moments under conditions given in Theorem \ref{thm:unbiasedavar_pfinitemoments}. 
Importantly, if their two first moments are finite, unbiased estimators of $v(P,h)$ can be generated independently
in parallel, and their average converges at the Monte Carlo rate. 
This compares favorably to the rate of standard estimators of $v(P,h)$.

Indeed, with batch means and spectral variance estimators, the length of the chain $t$ is decomposed as a product $a_t b_t$. For batch means, $a_t$ refers to a number of batches and $b_t$ to a batch size. For spectral variance estimators, $b_t$ refers to a truncation or bandwidth parameter and $a_t$ denotes $t/b_t$.
In either case, $b_t$ must grow with $t$ for the bias to vanish as $t\to\infty$, 
thus $a_t$ cannot grow linearly in $t$. On the other hand, the variance of these estimators is typically of the order of $a_t^{-1}$. Thus, even with optimal choices for $b_t$, the mean squared error vanishes slower than the Monte Carlo rate $t^{-1}$. 
Appendix~\ref{appx:reviewlongrunvarianceestimation}  provides a review of relevant results on asymptotic variance estimation.
This difference in convergence rates, which we observe in numerical experiments in Section~\ref{subsec:ar}, does not mean that the proposed estimators are always preferable. 
For some fixed computational budget, the variance of the proposed estimators can 
be much larger than that of classical estimators, as we observe in Section \ref{subsec:logitrandom}.

Asymptotic variance estimators are routinely employed to construct confidence intervals in MCMC settings, using the CLT \eqref{eq:clt}; although \citet{atchade2016markov} shows that consistent estimators of $v(P,h)$ may not be necessary for this task. The estimators proposed below require couplings of Markov chains, that can be used to obtain unbiased MCMC estimators as in Section \ref{subsec:recoverUMCMC}. Confidence intervals for unbiased MCMC can be constructed without estimating $v(P,h)$. Nevertheless, estimation of $v(P,h)$ remains a critical task. It enables for example  efficiency comparisons between unbiased and ergodic average MCMC estimators, as argued in Section \ref{subsec:recoverUMCMC}. It may also be useful in comparing MCMC algorithms, as we illustrate in the numerical experiments of Section \ref{subsec:logitrandom},
or in choosing the tuning parameters $k$, $L$ and $\ell$.

\subsection{\label{subsec:consistentavar}Ergodic Poisson asymptotic variance
estimator}

We employ coupled Markov chains, as generated
by Algorithm~\ref{alg:coupledchains}, to define new estimators of
$v(P,\test)$. We start with a consistent estimator in this section, before
developing an unbiased estimator in Section \ref{subsec:unbiasedavar}. We first re-express
\eqref{eq:avar-intro} as
\begin{align}
v(P,\test) & =-v(\pi,\test)+2\pi((\test-\pi(\test))\cdot\fishytest),\label{eq:asymptoticvariance:estimable2}
\end{align}
where $v(\pi,\test)$ is the variance of $\test(X)$ under $X\sim\pi$.
Consider an MCMC approach to estimate \eqref{eq:asymptoticvariance:estimable2}.
Expectations with respect to $\pi$ can be consistently estimated
from long MCMC runs. Using $t$ steps of a  Markov chain evolving with transition $P$, after a burn-in period, and re-indexing iterations
so that $t=0$ corresponds to the first iteration after the discarded burn-in period, we define the
empirical measure $\pi^{\text{MC}}=t^{-1}\sum_{s=0}^{t-1}\delta_{X_{s}}$.
We approximate $\pi(\test)$ and $v(\pi,\test)$ using the
empirical mean and variance, denoted by $\pi^{\text{MC}}(\test)$
and $v^{\text{MC}}(\test)$. 

The difficulty is in the term $\pi((\test-\pi(\test))\cdot\fishytest)$
in \eqref{eq:asymptoticvariance:estimable2}, since $\fishytest$
cannot be evaluated exactly. We employ unbiased estimators $\fishytestestimator(x)$
as in Definition~\ref{def:Gyx} of Section \ref{sec:coupledchainsandfishyfunctions} in place of evaluations $\fishytest(x)$.
This leads to the following estimator of $v(P,\test)$, which we
call the ergodic Poisson asymptotic variance estimator (EPAVE),
\begin{equation}
\hat{v}_{E}(P,h)=-v^{\text{MC}}(\test)+\frac{2}{t}\sum_{s=0}^{t-1}(\test(X_{s})-\pi^{\text{MC}}(\test))\cdot\fishytestestimator(X_{s}),\label{eq:asymptoticvariance:consistentestimator}
\end{equation}
where each $G(X_{s})$ is conditionally independent of all others
given $(X_{0},\ldots,X_{t-1})$. 
We can compute \eqref{eq:asymptoticvariance:consistentestimator}
online by keeping track of the sums
\[
\sum_{s}\test(X_{s}),\sum_{s}\test(X_{s})^{2},\sum_{s}\fishytestestimator(X_{s}),\sum_{s}\test(X_{s})\fishytestestimator(X_{s}).
\]
We can modulate the relative cost of estimating the fishy function
evaluations in \eqref{eq:asymptoticvariance:consistentestimator}
by generating $\fishytestestimator(X_{s})$ at times $s$ such that $s \mod D=0$ for $D\in\mathbb{N}$. 
Preliminary runs can provide an estimate of the average cost 
$C$ of generating $G$ at states that approximately follow $\pi$, in units of transitions from $P$.
Setting $D$ as $\lceil C \rceil$ ensures that the effort of computing \eqref{eq:asymptoticvariance:consistentestimator}
 is split approximately equally between the MCMC run and the estimators $G$.

In Appendix~\ref{subsec:mcmc-estimator-avar}, we show that the estimator $\hat{v}_{E}(P,h)$ is
strongly consistent as $t\to\infty$, and satisfies a $\sqrt{t}$-CLT under
Assumption~\ref{assu:tau-moment-kappa} and moment assumptions on $h$. The
following summarizes Proposition~\ref{prop:epave-as} 
and Theorem~\ref{thm:epave-clt}.
An expression for the asymptotic variance of EPAVE can be extracted from the proof.

\begin{thm}\label{thm:validityEPAVE}
Under Assumption~\ref{assu:tau-moment-kappa}, let $X$ be a Markov
chain with Markov kernel $P$, and $h\in L^{m}(\pi)$ with $m>2\kappa/(\kappa-1)$.
Then for $\pi$-almost all $X_{0}$ and $\pi$-almost all $y$,
\begin{enumerate}
  \item The CLT in \eqref{eq:clt} holds for $h$, and $v(P,h)=-v(\pi,h)+2\pi(h_{0}\cdot g_{y})$.
\item The estimator \eqref{eq:asymptoticvariance:consistentestimator} with
$G=G_{y}$, satisfies $\hat{v}_E(P,h)\to_{{\rm a.s.}}v(P,h)$ as $t\to\infty$.
\item If $m>4\kappa/(\kappa-3)$, the estimator \eqref{eq:asymptoticvariance:consistentestimator}
with $G=G_{y}$ satisfies a $\sqrt{t}$-CLT.
\end{enumerate}
\end{thm}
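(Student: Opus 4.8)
The plan is to establish the three claims in sequence, since each builds on the previous one. For claim (1), I would invoke Theorem~\ref{thm:clt-kappa} directly: the hypothesis $m > 2\kappa/(\kappa-1)$ is exactly the condition required there, so the CLT \eqref{eq:clt} holds for $\pi$-almost all $X_0$ with $v(P,h) = 2\pi(h_0 \cdot g_\star) - \pi(h_0^2)$. To rewrite this in the form stated, note that $\pi(h_0^2) = v(\pi,h)$ by definition, and that $g_y = g_\star - g_\star(y)$ differs from $g_\star$ by a constant, so $\pi(h_0 \cdot g_y) = \pi(h_0 \cdot g_\star) - g_\star(y)\pi(h_0) = \pi(h_0 \cdot g_\star)$ since $\pi(h_0) = 0$. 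Hence $v(P,h) = -v(\pi,h) + 2\pi(h_0 \cdot g_y)$. This requires $g_\star(y)$ to be well-defined, which holds for $\pi$-almost all $y$ by Theorem~\ref{thm:clt-kappa} (which gives $g_\star \in L^1_0(\pi)$), and also requires $h_0 \cdot g_y \in L^1(\pi)$, which follows from $h_0 \cdot g_\star \in L^1(\pi)$ plus $g_\star \in L^1(\pi)$.

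For claim (2), strong consistency, the strategy is a law-of-large-numbers argument combined with a conditional-expectation identity. Write $\hat v(P,h) = -v^{\text{MC}}(h) + \frac{2}{t}\sum_{s=0}^{t-1}(h(X_s) - \pi^{\text{MC}}(h)) G(X_s)$. Since $P$ is $\pi$-irreducible (Assumption~\ref{assu:tau-moment-kappa}) and $h \in L^2(\pi)$, the ergodic theorem gives $\pi^{\text{MC}}(h) \to \pi(h)$ and $v^{\text{MC}}(h) \to v(\pi,h)$ almost surely. For the cross term, I would first handle $\frac{1}{t}\sum_{s=0}^{t-1} h(X_s) G(X_s)$: conditionally on the trajectory, $\mathbb{E}[G(X_s) \mid X_s] = g_y(X_s)$ for $\pi$-almost all $X_s$ by Theorem~\ref{thm:fishy-main-text}. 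The cleanest route is to decompose $G(X_s) = g_y(X_s) + \varepsilon_s$ where $\varepsilon_s$ are conditionally independent given the trajectory with conditional mean zero, and show $\frac{1}{t}\sum h(X_s)\varepsilon_s \to 0$ a.s.\ via a martingale strong law or a variance/Borel--Cantelli argument, using the pointwise moment bounds from Theorem~\ref{thm:fishy-estimator} to control $\mathbb{E}[\varepsilon_s^2 \mid X_s]$ (this needs a moment of $h_0 \cdot g_y$ plus a moment of the conditional variance of $G$, hence the slightly stronger $m$-condition than in claim (1); one checks $m > 2\kappa/(\kappa-1)$ suffices via Theorem~\ref{thm:fishy-main-text} with an appropriate $p$). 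Then $\frac{1}{t}\sum h(X_s)g_y(X_s) \to \pi(h \cdot g_y)$ by the ergodic theorem applied to the augmented chain, provided $h \cdot g_y \in L^1(\pi)$; combined with $\pi^{\text{MC}}(h)\frac{1}{t}\sum G(X_s) \to \pi(h)\pi(g_y) = \pi(h)g_\star(\cdot)$-type terms handled similarly, one assembles $\frac{1}{t}\sum (h(X_s) - \pi^{\text{MC}}(h))G(X_s) \to \pi(h_0 \cdot g_y)$, giving $\hat v(P,h) \to -v(\pi,h) + 2\pi(h_0 \cdot g_y) = v(P,h)$.

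For claim (3), the $\sqrt{t}$-CLT, I would write $\hat v(P,h) - v(P,h)$ as a sum of a term depending only on the Markov trajectory and a term coming from the estimation noise $\varepsilon_s$. After centering, $\sqrt{t}(\hat v(P,h) - v(P,h))$ decomposes (up to $o_p(1)$ terms, via a delta-method linearization to handle $v^{\text{MC}}$ and $\pi^{\text{MC}}$) into $\frac{1}{\sqrt t}\sum_{s=0}^{t-1} \psi(X_s)$, for a suitable function $\psi$ built from $h$, $h^2$, $g_y$, $h\cdot g_y$ and their $\pi$-expectations, plus $\frac{2}{\sqrt t}\sum_{s=0}^{t-1}(h(X_s) - \pi(h))\varepsilon_s$. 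The first piece satisfies a CLT if $\psi \in L^2(\pi)$ and the Poisson-equation/Maxwell--Woodroofe condition holds for $\psi$ --- this is where the strong moment requirement $m > 4\kappa/(\kappa-3)$ enters, since $\psi$ involves the product $h_0 \cdot g_y$ which must lie in $L^2(\pi)$, and by the $L^p$ bounds on $g_\star$ (Theorem~\ref{thm:poisson-Lp}) together with Hölder this forces $m$ to be large relative to $\kappa$; one verifies the exponent $4\kappa/(\kappa-3)$ is exactly what makes $h_0 g_\star \in L^2(\pi)$ with enough room for the CLT condition, noting this also requires $\kappa > 3$. The second piece is a martingale-difference array (given the trajectory, the $\varepsilon_s$ are independent, mean zero) and satisfies a conditional CLT by a martingale CLT, with conditional variance $\frac{2}{t}\sum (h(X_s)-\pi(h))^2 \mathbb{E}[\varepsilon_s^2\mid X_s] \to 4\pi((h_0)^2 \sigma_G^2)$ by the ergodic theorem, where $\sigma_G^2(x)$ is the conditional variance of $G_y(x)$; the two pieces are asymptotically independent because the noise is drawn conditionally on the trajectory, so the limiting variances add. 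I expect the main obstacle to be the careful bookkeeping in this last step: justifying the linearization of $v^{\text{MC}}$ and $\pi^{\text{MC}}$ uniformly, verifying the Maxwell--Woodroofe/Lindeberg conditions simultaneously for the trajectory part and the martingale part, and --- most delicately --- pinning down the precise moment exponents so that all the $L^2(\pi)$ and CLT-condition requirements on the composite function $\psi$ (especially the $h_0 \cdot g_y$ term) are met under $m > 4\kappa/(\kappa-3)$, which is the crux of why this third claim needs a substantially stronger assumption than the first two.
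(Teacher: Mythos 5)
Your treatment of claim (1) matches the paper exactly (Theorem~\ref{thm:clt-kappa} plus the observation that $g_y$ and $g_\star$ differ by a constant annihilated by $\pi(h_0)=0$), so no issue there.

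There is, however, a genuine gap in your route for claim (2). You propose to write $G(X_s)=g_y(X_s)+\varepsilon_s$ and kill $\tfrac{1}{t}\sum_s h(X_s)\varepsilon_s$ by a variance/Borel--Cantelli or martingale argument controlled through $\mathbb{E}[\varepsilon_s^2\mid X_s]$, asserting that $m>2\kappa/(\kappa-1)$ suffices ``via Theorem~\ref{thm:fishy-main-text} with an appropriate $p$''. It does not: Theorem~\ref{thm:fishy-main-text} gives $\mathbb{E}[|G_y(x)|^p]<\infty$ only for $\tfrac1p>\tfrac1m+\tfrac1\kappa$, so a finite \emph{second} conditional moment of $G_y(x)$ requires $\kappa>2$ and $m>2\kappa/(\kappa-2)$, which is strictly stronger than the hypothesis of claim (2) (e.g.\ $\kappa=1.5$, $m=7$ satisfies $m>2\kappa/(\kappa-1)=6$ yet $\tfrac1m+\tfrac1\kappa>\tfrac12$, so $G_y(x)$ may have infinite variance). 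The paper avoids second moments entirely: it augments the chain to $(X_s,U_s)$ with kernel $T=P\otimes Q$ and invariant law $\check\pi=\pi\otimes Q$ (Section~\ref{subsec:estimator-pi-h-g}), writes $G_y(X_s)=g_y(X_s,U_s)$, and applies the ordinary Markov chain law of large numbers to $f(x,u)=h(x)g_y(x,u)$, which lies in $L^1(\check\pi)$ precisely when $1>\tfrac2m+\tfrac1\kappa$, i.e.\ $m>2\kappa/(\kappa-1)$ (Lemma~\ref{lem:check-phi-1-moment} with $s=1$). If you want to keep your decomposition you would at minimum have to replace the variance argument by a Chow-type martingale strong law with exponent $p\in(1,2)$, but the augmented-chain LLN makes the whole step one line.

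For claim (3) your two-part decomposition (delta-method linearization of the trajectory functionals plus a conditional martingale CLT for the estimation noise, with variances adding) is a legitimate but substantially heavier alternative; under $m>4\kappa/(\kappa-3)$ and $\kappa>3$ the second moments you need do become available, though you would still have to verify $\pi(h_0^2\,\sigma_G^2)<\infty$ and a joint CLT for the two pieces. The paper instead stays on the augmented chain: it writes $\hat v_t(P,h)=\phi(\check\pi_t(f_1),\ldots,\check\pi_t(f_4))$ with $f_1=h$, $f_2=h^2$, $f_3=h\cdot g_y$, $f_4=g_y$ viewed as functions on $\mathbb{X}\times\mathbb{U}$, applies Theorem~\ref{thm:clt-kappa} coordinatewise and Cram\'er--Wold to get a multivariate $\sqrt t$-CLT for $\check\pi_t(f)$, and finishes with the delta method; the constraint $m>4\kappa/(\kappa-3)$ arises from requiring $f_3\in L^s(\check\pi)$ for some $s>2\kappa/(\kappa-1)$ (the CLT moment threshold), combined with Lemma~\ref{lem:check-phi-1-miss}---rather, Lemma~\ref{lem:check-phi-1-moment}---which gives $f_3\in L^s(\check\pi)$ for $\tfrac1s>\tfrac2m+\tfrac1\kappa$. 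This subsumes the noise term automatically, so no separate martingale CLT or asymptotic-independence argument is needed.
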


The last item of Theorem \ref{thm:validityEPAVE} is remarkable because it implies, at least under some conditions, a rate of convergence faster than that of batch means and spectral variance estimators, 
see Appendix~\ref{appx:reviewlongrunvarianceestimation}.
On the other hand, EPAVE requires the implementation of successful couplings,
as well as the length $t$ going to infinity. 
We next propose an unbiased estimator of $v(P,h)$, so that, at least, long runs can be avoided: independent copies can be generated in parallel, and averaged.
Note that EPAVE is biased only because the MCMC chain does not start at stationarity. Batch means and spectral variance estimators computed in finite time would be biased even if the chains started at stationarity.

\subsection{\label{subsec:unbiasedavar}Unbiased Poisson asymptotic variance
estimator}

Starting again from \eqref{eq:asymptoticvariance:estimable2}, we
propose an unbiased estimator of $v(P,h)$ by combining unbiased estimators
$\fishytestestimator(x)$ of $\fishytest(x)$ with unbiased
approximations of $\pi$, as retrieved in Section \ref{subsec:recoverGlynnRhee}.
We thus assume that we can generate
random signed measures as in \eqref{eq:pihatmeasure}, denoted by $\hat{\pi}=\sum_{n=1}^{N}\omega_{n}\delta_{Z_{n}}$
as in \eqref{eq:pihatmeasure_simplified}.
Recall that $\hat{\pi}$ is such that $\mathbb{E}[\hat{\pi}(h)]=\pi(\test)$
for a class of test functions $\test$. 
Combining these measures
$\hat{\pi}$ with $\fishytestestimator(x)$ in \eqref{eq:estimatorpoisson},
each term in \eqref{eq:asymptoticvariance:estimable2} can be estimated
without bias, as we next describe. First, unbiased estimators of $v(\pi,\test)=\mathbb{V}_{\pi}[\test(X)]$
can be obtained using two independent unbiased measures $\hat{\pi}^{(1)}$
and $\hat{\pi}^{(2)}$, by computing
\begin{equation}
\hat{v}(\pi,\test)=\frac{1}{2}\{\hat{\pi}^{(1)}(\test^{2})+\hat{\pi}^{(2)}(\test^{2})\}-\hat{\pi}^{(1)}(\test)\times\hat{\pi}^{(2)}(\test).\label{eq:unbiasedtargetvariance}
\end{equation}

Unbiased estimation of the term $\pi((\test-\pi(\test))\cdot\fishytest)$ in
\eqref{eq:asymptoticvariance:estimable2} is more involved. We first
provide an informal reasoning that motivates the proposed estimator
given below in \eqref{eq:unbiasedasymptvar} and described in pseudocode
in Algorithm~\ref{alg:unbiasedvar}.
Consider the problem of estimating $\pi(h\cdot g)$ without bias,
and assume that we can generate unbiased measures $\hat{\pi}$
of $\pi$ and estimators $G(x)$ with expectation equal to $g(x)$
for all $x$. Then we can generate $\sum_{n=1}^{N}\omega_{n}h(Z_{n})\cdot G(Z_{n})$,
where all $(G(Z_{n}))_{n=1}^{N}$ are conditionally independent given $(Z_{n})$. Conditioning on $\hat{\pi}$, we have
\[
\mathbb{E}\left[\left.\sum_{n=1}^{N}\omega_{n}h(Z_{n})\cdot G(Z_{n})\right|\hat{\pi}\right]=\sum_{n=1}^{N}\omega_{n}h(Z_{n})\cdot g(Z_{n})=\hat{\pi}(h\cdot g),
\]
and then taking the expectation with respect to $\hat{\pi}$ yields
$\pi(h\cdot g)$, under adequate assumptions on $h\cdot g$. However,
the variable $\sum_{n=1}^{N}\omega_{n}h(Z_{n})\cdot G(Z_{n})$ requires
estimators of the fishy function for all $N$ locations, and $N$
could be large. Alternatively, after generating $\hat{\pi}$ we can
sample an index $I\in\{1,\ldots,N\}$ according to a Categorical
distribution with strictly positive probabilities $\xi=(\xi_{1},\ldots,\xi_{N})$,
and given $Z_{I}$ we can generate $G(Z_{I})$. This amounts to subsampling as described in Section \ref{subsec:subssampledunbiasedestimators}. By default, we set $\xi_n = 1/N$ for all $n\in\{1,\ldots,N\}$. Then we observe
that, conditioning on $\hat{\pi}$, integrating out the randomness
in $G(Z_{I})$ given $Z_{I}$, and then the randomness in $I$,
\begin{align*}
\mathbb{E}\left[\left.\omega_{I}\xi_{I}^{-1}h(Z_{I})G(Z_{I})\right|\hat{\pi}\right] & =\mathbb{E}\left[\left.\mathbb{E}\left[\left.\omega_{I}\xi_{I}^{-1}h(Z_{I})G(Z_{I})\right|I,\hat{\pi}\right]\right|\hat{\pi}\right]\\
 & =\mathbb{E}\left[\left.\omega_{I}\xi_{I}^{-1}h(Z_{I})g(Z_{I})\right|\hat{\pi}\right]
  =\sum_{n=1}^{N}\omega_{n}h(Z_{n})g(Z_{n})
  =\hat{\pi}(h\cdot g),
\end{align*}
and therefore $\omega_{I}\xi_{I}^{-1}h(Z_{I})\cdot G(Z_{I})$
is an unbiased estimator of $\pi(h\cdot g)$ that requires only one
estimation of $g$ at $Z_{I}$. The estimator proposed below employs
$R\geq1$ estimators of the fishy function for each signed measure
$\hat{\pi}$, where $R$ is a tuning parameter. Its choice
and the selection probabilities $\xi$ are discussed in Section \ref{subsec:improvements}.

\begin{algorithm}
Input: unbiased signed measures $\hat{\pi}$, unbiased fishy function estimators $G$, method to compute selection
probabilities $\xi$, integer $R$.
\begin{enumerate}
\item Obtain two independent unbiased signed measures, $\hat{\pi}^{(j)}=\sum_{n=1}^{N^{(j)}}\omega_{n}^{(j)}\delta_{Z_{n}^{(j)}}$
for $j\in\{1,2\}$.
\item Compute $\hat{v}(\pi,\test)$ as in \eqref{eq:unbiasedtargetvariance}.
\item For $j\in\{1,2\}$,
\begin{enumerate}
\item Compute selection probabilities $(\xi_{1}^{(j)},\ldots,\xi_{N^{(j)}}^{(j)})$.
\item Draw $I^{(j)}_r$ among $\{1,\ldots,N^{(j)}\}$ with probabilities
$(\xi_{1}^{(j)},\ldots,\xi_{N^{(j)}}^{(j)})$, for $r\in\{1,\ldots,R\}$.
\item Evaluate $\test(Z_{I^{(j)}_r}^{(j)})$ and generate estimator $\fishytestestimator(Z_{I^{(j)}_r}^{(j)})$
in \eqref{eq:estimatorpoisson}, for $r\in\{1,\ldots,R\}$. \label{alg:step:evalestimateatom}
\end{enumerate}
\item Return $\hat{v}(P,\test)$ as in \eqref{eq:unbiasedasymptvar}.
\end{enumerate}
\caption{Unbiased Poisson asymptotic variance estimator (UPAVE). \label{alg:unbiasedvar}}
\end{algorithm}

We gather the above considerations to define the proposed estimator. We write
$\hat{\pi}^{(j)}$ as $\sum_{n=1}^{N^{(j)}}\omega_{n}^{(j)}\delta_{Z_{n}^{(j)}}$
for $j\in\{1,2\}$. Given $\hat{\pi}^{(j)}$, we sample integers $I^{(j)}_r\in\{1,\ldots,N^{(j)}\}$
with probabilities $(\xi_{1}^{(j)},\ldots,\xi_{N^{(j)}}^{(j)})$,
independently for $r\in\{1,\ldots,R\}$. Noting that each $(\test(x)-\pi(\test))\fishytest(x)$
is the expectation of $(\test(x)-\hat{\pi}^{(j)}(\test))\fishytestestimator(x)$
given $x$, we obtain
\begin{equation}
\pi((\test-\pi(\test))\cdot\fishytest)=\mathbb{E}\left[\frac{1}{2R}\sum_{i\neq j\in\{1,2\}}\sum_{r=1}^{R}\frac{\omega_{I^{(j)}_r}^{(j)}}{\xi_{I^{(j)}_r}^{(j)}}(\test(Z_{I^{(j)}_r}^{(j)})-\hat{\pi}^{(i)}(\test))\fishytestestimator(Z_{I^{(j)}_r}^{(j)})\right].\nonumber
\end{equation}
Our proposed unbiased estimator of $v(P,\test)$ is thus
\begin{equation}
\hat{v}(P,\test)=-\hat{v}(\pi,\test)+\frac{1}{R}\sum_{i\neq j\in\{1,2\}}\sum_{r=1}^{R}\frac{\omega_{I^{(j)}_r}^{(j)}}{\xi_{I^{(j)}_r}^{(j)}}(\test(Z_{I^{(j)}_r}^{(j)})-\hat{\pi}^{(i)}(\test))\fishytestestimator(Z_{I^{(j)}_r}^{(j)}),\label{eq:unbiasedasymptvar}
\end{equation}
and its generation is described in Algorithm \ref{alg:unbiasedvar}.
The cost of $\hat{v}(P,\test)$ will typically be dominated
by the cost of obtaining $\hat{\pi}^{(j)}$ for
$j\in\{1,2\}$ and $2R$ estimators of evaluations of $g$. We call
$\hat{v}(P,h)$ the unbiased Poisson asymptotic variance estimator
(UPAVE).

We show that under Assumption~\ref{assu:tau-moment-kappa}, UPAVE
is unbiased and has $p$ finite moments whenever $h$ has sufficiently
many moments. 
The following statement combines Theorem~\ref{thm:avar-unbiased-moments}
and Remark~\ref{rem:average-avar-estimators} 
in Appendix~\ref{subsec:Unbiased-asymptotic-variance},
with the further assumption that $\xi_{n}^{(j)}=1/N^{(j)}$
for all $n\in\{1,\ldots,N^{(j)}\}$, for simplicity.

\begin{thm}\label{thm:unbiasedavar_pfinitemoments}
Under Assumption~\ref{assu:tau-moment-kappa}, let $h\in L^{m}(\pi)$
for some $m>2\kappa/(\kappa-2)$, and ${\rm d}\pi_0/{\rm d} \pi\leq M$. Assume $\xi_{n}^{(j)}=1/N^{(j)}$
for $n\in\{1,\ldots,N^{(j)}\}$. Then for any $R\geq1$ and $\pi$-almost
all $y$, $\mathbb{E}\left[\hat{v}(P,\test)\right]=v(P,h)$ and for
$p\geq1$ such that $\frac{1}{p}>\frac{2}{m}+\frac{2}{\kappa}$, $\mathbb{E}\left[\left|\hat{v}(P,\test)\right|^{p}\right]<\infty$.
\end{thm}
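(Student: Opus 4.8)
The plan is to treat the two components of $\hat v(P,\test)$ in \eqref{eq:unbiasedasymptvar} separately, namely the target-variance estimator $-\hat v(\pi,\test)$ from \eqref{eq:unbiasedtargetvariance} and the correlation-term estimator $\hat C:=\frac{1}{R}\sum_{i\neq j\in\{1,2\}}\sum_{r=1}^{R}\frac{\omega_{I^{(r,j)}}^{(j)}}{\xi_{I^{(r,j)}}^{(j)}}(\test(Z_{I^{(r,j)}}^{(j)})-\hat{\pi}^{(i)}(\test))\fishytestestimator(Z_{I^{(r,j)}}^{(j)})$, showing for each unbiasedness for the corresponding term in \eqref{eq:asymptoticvariance:estimable2} and a finite $p$th moment, and then combining via Minkowski's inequality and \eqref{eq:asymptoticvariance:estimable2}. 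I would fix a $\pi$-good $y$ so that $g_{\star}$ is well defined and $g_y=g_{\star}-g_{\star}(y)$; note that $m>2\kappa/(\kappa-2)$ forces $\kappa>2$, implies $g_{\star}\in L^1_0(\pi)$ (Theorem~\ref{thm:clt-kappa}, since $2\kappa/(\kappa-2)>2\kappa/(\kappa-1)$), and is equivalent to $\frac2m+\frac2\kappa<1$, so the stated moment range is nonempty and contains $p=1$. For $-\hat v(\pi,\test)$ the argument is short: $\test^2\in L^{m/2}(\pi)$ with $m/2>\kappa/(\kappa-1)$, so Theorem~\ref{thm:unbiased-estimator-main-text} applied to the test functions $\test$ and $\test^2$ gives $\mathbb{E}[\hat\pi^{(j)}(\test)]=\pi(\test)$, $\mathbb{E}[\hat\pi^{(j)}(\test^2)]=\pi(\test^2)$, finite $p$th moments of $\hat\pi^{(j)}(\test)$ for $\frac1p>\frac1m+\frac1\kappa$ and of $\hat\pi^{(j)}(\test^2)$ for $\frac1p>\frac2m+\frac1\kappa$; using independence of $\hat\pi^{(1)}$ and $\hat\pi^{(2)}$ (so $\mathbb{E}[\hat\pi^{(1)}(\test)\hat\pi^{(2)}(\test)]=\pi(\test)^2$ and $\|\hat\pi^{(1)}(\test)\hat\pi^{(2)}(\test)\|_p=\|\hat\pi^{(1)}(\test)\|_p\|\hat\pi^{(2)}(\test)\|_p$), \eqref{eq:unbiasedtargetvariance} yields $\mathbb{E}[\hat v(\pi,\test)]=v(\pi,\test)$ and $\|\hat v(\pi,\test)\|_p<\infty$ for $\frac1p>\frac2m+\frac1\kappa$, a fortiori for $\frac1p>\frac2m+\frac2\kappa$.

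The substance is in $\hat C$, and I would establish its $p$th-moment bound first, since this also supplies the integrability needed to justify the conditioning used for unbiasedness. By Minkowski it suffices to bound the $L^p$ norm of one summand $N^{(j)}\omega_I^{(j)}(\test(Z_I^{(j)})-\hat\pi^{(i)}(\test))\fishytestestimator(Z_I^{(j)})$ for a single ordered pair $i\neq j$ (here $\xi_k^{(j)}=1/N^{(j)}$), and by the triangle inequality to treat the $\test(Z_I^{(j)})$ and $\hat\pi^{(i)}(\test)$ contributions apart. In both I would condition on $\hat\pi^{(j)}$ and on $I$, leaving only $\fishytestestimator(Z_I^{(j)})$ random given $Z_I^{(j)}$, and invoke the \emph{pointwise} moment bound for $\fishytestestimator_y$ in Theorem~\ref{thm:fishy-estimator}, which under Assumption~\ref{assu:tau-moment-kappa} controls $\mathbb{E}[|\fishytestestimator_y(x)|^q]$ for $\frac1q>\frac1m+\frac1\kappa$ by a function of $x$ with sufficient $\pi$-integrability. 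Combining this (budget $\frac1m+\frac1\kappa$) with the evaluation $\test(Z_I^{(j)})$ (budget $\frac1m$: $\test\in L^m(\pi)$, and each atom $Z_n^{(j)}$ has law dominated by $M\pi$ because, as for the unbiased measures $\hat\pi$, the chains are initialised from $\pi_0$ with ${\rm d}\pi_0/{\rm d}\pi\le M$, hence ${\rm d}(\pi_0P^k)/{\rm d}\pi\le M$) and with the importance weight $N^{(j)}\omega_I^{(j)}$ (budget $\frac1\kappa$: $N^{(j)}$ is of the order of the meeting time, which has a finite $\kappa$th moment by Proposition~\ref{prop:tau-moment-survival}, while the $\omega_n^{(j)}$ are uniformly bounded) through Hölder's inequality, and controlling the randomly many, dependent atoms exactly as in the proofs of Theorems~\ref{thm:unbiased-estimator-main-text}--\ref{thm:subsampled-main-text}, the three budgets add to $\frac2m+\frac2\kappa$; hence $\mathbb{E}[|\hat C|^p]<\infty$ for $\frac1p>\frac2m+\frac2\kappa$, and in particular $\hat C\in L^1$.

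With integrability in hand, I would obtain unbiasedness of $\hat C$ by iterated conditioning. Conditioning on $(\hat\pi^{(1)},\hat\pi^{(2)})$ and the indices $\{I^{(r,j)}\}$ and integrating out each $\fishytestestimator_y(Z_{I^{(r,j)}}^{(j)})$ via $\mathbb{E}[\fishytestestimator_y(x)]=g_y(x)$ (Theorem~\ref{thm:fishy-main-text}, applicable at $x=Z_n^{(j)}$ almost surely by the same domination of the atom laws by $M\pi$) turns the summand into $N^{(j)}\omega_{I^{(r,j)}}^{(j)}(\test(Z_{I^{(r,j)}}^{(j)})-\hat\pi^{(i)}(\test))g_y(Z_{I^{(r,j)}}^{(j)})$; averaging over the uniform indices replaces the subsampled atom by $\hat\pi^{(j)}(\cdot)$, giving $\mathbb{E}[\hat C\mid\hat\pi^{(1)},\hat\pi^{(2)}]=\sum_{i\neq j}(\hat\pi^{(j)}(\test\,g_y)-\hat\pi^{(i)}(\test)\,\hat\pi^{(j)}(g_y))$; and taking expectation over $(\hat\pi^{(1)},\hat\pi^{(2)})$ using independence together with $\mathbb{E}[\hat\pi^{(j)}(\test\,g_y)]=\pi(\test\,g_y)$ and $\mathbb{E}[\hat\pi^{(j)}(g_y)]=\pi(g_y)$, both from Theorem~\ref{thm:unbiased-estimator-main-text}, yields $\mathbb{E}[\hat C]=2(\pi(\test\,g_y)-\pi(\test)\pi(g_y))=2\pi((\test-\pi(\test))\cdot g_y)$, in agreement with \eqref{eq:unbiasedcorrelationterm}. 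The moment hypotheses on the test functions $\test\,g_y$ and $g_y$ needed for this last application of Theorem~\ref{thm:unbiased-estimator-main-text} are exactly where $m>2\kappa/(\kappa-2)$ and the $L^p$ bound on $g_{\star}$ from Theorem~\ref{thm:poisson-Lp} enter (via Hölder, $\|\test\,g_y\|_{m'}\le\|\test\|_{m}\|g_y\|_{b}$ with $\frac1b=\frac1{m'}-\frac1m$, which can be pushed above $\frac1m+\frac1\kappa$ precisely when $m>2\kappa/(\kappa-2)$).

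Combining the pieces, $\mathbb{E}[\hat v(P,\test)]=-v(\pi,\test)+2\pi((\test-\pi(\test))\cdot g_y)=v(P,\test)$ by \eqref{eq:asymptoticvariance:estimable2}, and $\|\hat v(P,\test)\|_p\le\|\hat v(\pi,\test)\|_p+\|\hat C\|_p<\infty$ whenever $\frac1p>\frac2m+\frac2\kappa$. General selection probabilities with $a/N\le\xi\le b/N$ change nothing essential, since $|\omega_I/\xi_I|\le|\omega_I|N/a$ preserves the order of the importance weight while $\mathbb{E}[\omega_If(Z_I)/\xi_I\mid\hat\pi]=\hat\pi(f)$ regardless of $\xi$; and averaging independent copies of $\hat v(P,\test)$ (Remark~\ref{rem:average-avar-estimators}) is one more Minkowski step. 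The step I expect to be most delicate is the $L^p$ bound on $\hat C$: arranging the Hölder exponents so that the importance weight, the test-function evaluation, and the fishy estimator consume budgets summing to exactly $\frac2m+\frac2\kappa$ relies on the \emph{pointwise} (not merely $\pi$-almost-everywhere) moment control of $\fishytestestimator_y$ from Theorem~\ref{thm:fishy-estimator}, and on estimating sums of the form $\mathbb{E}[\sum_{n=1}^{N}\cdots]$ over the randomly many, correlated atoms of $\hat\pi$ — the technical core shared with the analysis of the earlier unbiased estimators.
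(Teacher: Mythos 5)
Your overall architecture matches the paper's: the same decomposition of $\hat{v}(P,\test)$ into $-\hat{v}(\pi,\test)$ and the correlation term, the same treatment of $\hat{v}(\pi,\test)$ via independence of $\hat{\pi}^{(1)},\hat{\pi}^{(2)}$ and the moment results for unbiased estimators of $\pi(\test)$ and $\pi(\test^2)$, the same iterated-conditioning argument for unbiasedness of the correlation term, and the same exponent arithmetic. The paper implements this as Lemma~\ref{lem:v-pi-h-estimator}, Proposition~\ref{prop:Phi-R-unbiased-moments} and Theorem~\ref{thm:avar-unbiased-moments}.

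The genuine gap is in the $L^p$ bound for $\hat{C}$, exactly the step you flag as delicate. Your plan is to condition on the atom $Z_{I}^{(j)}$, invoke the pointwise moment bound of Theorem~\ref{thm:fishy-estimator} for $\mathbb{E}[|\fishytestestimator_y(x)|^q]$, and then integrate over the law of the subsampled atom with H\"older budgets $\tfrac1\kappa+\tfrac1m+\tfrac1m+\tfrac1\kappa$. But the pointwise bound is of the form $\zeta(\cdot)\,\mathbb{E}_{x,y}[\tau^{\kappa}]^{(m-q)/(mq)}\{\psi(\test,m,\delta_{x})+\psi(\test,m,\delta_{y})\}$, and its dependence on $x$ is not integrable in the way your budget requires: $\psi(\test,m,\delta_{x})=\sup_{t}\mathbb{E}_{x}[|\test(X_{t})|^{m}]^{1/m}$ is only shown to be finite for $\pi$-almost all $x$ (Lemma~\ref{lem:sup-finite}), with no $L^{r}(\pi)$ control for any $r$, and the subsampled atom's law is a size-biased mixture over a random number of time-marginals, not $\pi$ itself. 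The machinery you point to in the proofs of Theorems~\ref{thm:unbiased-estimator-main-text} and \ref{thm:subsampled-main-text} controls moments of a \emph{fixed} function evaluated at the atoms via ${\rm d}\pi_{0}P^{t}/{\rm d}\pi\leq M$ and the $(N/i)^{\varepsilon}$ trick in Proposition~\ref{prop:subsample}; it does not apply to a conditional moment bound that itself varies with the atom. So the route "pointwise bound, then integrate" does not close as written.

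The paper's resolution, which is the one missing idea, is the extended-space construction of Section~\ref{subsec:estimator-pi-h-g}: introduce $U\sim Q$ so that $\fishytestestimator_{y}(x)\overset{d}{=}g_{y}(x,U)$ is a fixed measurable function on $(\mathbb{X}\times\mathbb{U},\check{\pi}=\pi\otimes Q)$, with an extended kernel $T$ and coupling $\bar{T}$ whose meeting time coincides with that of $(X,Y)$. One then establishes $g_{y}\in L^{q}(\check{\pi})$ for $\tfrac1q>\tfrac1m+\tfrac1\kappa$ and $\test\cdot g_{y}\in L^{s}(\check{\pi})$ for $\tfrac1s>\tfrac2m+\tfrac1\kappa$ (Lemmas~\ref{lem:check-gy-moment} and \ref{lem:check-phi-1-moment}) by taking $\gamma=\pi\otimes\delta_{y}$ in Theorem~\ref{thm:fishy-estimator} — i.e.\ integrating over $x\sim\pi$ \emph{inside} the fishy-estimator bound, which replaces the uncontrolled $\psi(\test,m,\delta_{x})$ by $\|\test\|_{L^{m}(\pi)}$ — and then applies Proposition~\ref{prop:subsample} verbatim to the function $\phi_{H}(x,u)=(\test(x)-H)g_{y}(x,u)$ on the extended chain, which is where the remaining $\tfrac1\kappa$ is spent and the random, correlated atoms are handled. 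With that device your budget accounting and the rest of your argument go through; without it, the moment bound on $\hat{C}$ is not established.
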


If Assumption~\ref{assu:tau-moment-kappa} holds for all $\kappa\geq1$
(respectively $h\in L^{m}(\pi)$ for all $m\geq1)$, one requires
only slightly more than $2$ moments of $h$ (respectively $\tau$)
to estimate the asymptotic variance consistently and slightly more
than $4$ moments of $h$ (respectively $\tau$) to approximate the
asymptotic variance with a variance in $O(1/M)$, if $M$ is the number
of independent unbiased estimators averaged. This seems close to tight,
since $4$ moments of $h$ are required for the sample variance to
have a finite variance in the setting of i.i.d. variables.

\subsection{Implementation and improvements\label{subsec:improvements}}

We provide some heuristics to guide the tuning and implementation of UPAVE,
leaving more principled approaches to future work.

\emph{Tuning of unbiased MCMC.}
The proposed estimator relies on unbiased signed measures $\hat{\pi}$
of $\pi$, as in \eqref{eq:pihatmeasure}. In
our experiments, we start by generating some lagged chains, with a lag $L=1$ by lack of a better guess, and record the meeting
times $\tau$. We then re-define $L$ as a large quantile of the meeting
times, and we set $k=L$. We choose $\ell$ as a multiple of $k$ such as $5k$, following
suggestions in \citet{joa2020}, to ensure a low proportion
$k/\ell$ of discarded iterations. 
We expect the inefficiency
of the resulting unbiased MCMC estimators to be at least $\ell/(\ell-k)$ that of ergodic average MCMC, due to the proportion $k/\ell$ of discarded iterations.

\emph{Choice of $y$.}
The proposed estimator requires setting $y$ to define $g_y$ as in Definition
\ref{def:gy} and its estimator in Definition \ref{def:Gyx}. Then $G_y(x)$
is generated for various $x$  which are approximately distributed according to $\pi$.
As $G_y(x)$ should preferably have a smaller cost and a smaller variance, we should set $y$
such that two chains starting at $x\sim \pi$ and $y$ are likely to meet quickly. Thus,
$y$ should preferably be central with respect to $\pi$. 

\emph{Selection probabilities.}
To implement UPAVE we need to choose selection probabilities $\xi=(\xi_{1},\ldots,\xi_{N})$
given $N$, the number of atoms in the signed measure $\hat{\pi}$.
We set these probabilities to $1/N$ as a default choice. We
can also try to minimize the variance of the resulting estimators
with respect to $\xi$. This requires information on the variance
of $\fishytestestimator$. Indeed, if we condition on the realization
of $\hat{\pi}$, then the variance of the term
\begin{equation}
\frac{\omega_{I_r}}{\xi_{I_r}}(\test(Z_{I_r})-\hat{\pi}(\test))\fishytestestimator(Z_{I_r}),\;\text{where}\; I_r\sim\text{Categorical}(\xi_{1},\ldots,\xi_{N}),\label{eq:termdependingonxi}
\end{equation}
is minimized over $\xi$ as follows. Since its expectation is independent
of $\xi$, we can equivalently minimize its second moment, thus we
define
\begin{align}
\alpha_{n} & =\{\omega_{n}(\test(Z_{n})-\hat{\pi}(\test))\}^{2}\mathbb{E}\left[\fishytestestimator(Z_{n})^{2}|Z_{n}\right],\nonumber \\
\xi_{n}^{\star} & =\frac{\sqrt{\alpha_{n}}}{\sum_{n'=1}^{N}\sqrt{\alpha_{n'}}},\quad n=1,\ldots,N.\label{eq:optimalselection}
\end{align}
The use of $\xi^{\star}$ leads to a second moment of \eqref{eq:termdependingonxi} equal to $(\sum_{n=1}^{N}\sqrt{\alpha_{n}})^{2}$,
but for any $\xi$ such that $\sum_{n=1}^{N}\xi_{n}=1$, the
Cauchy--Schwarz inequality implies $\sum_{n=1}^{N}\alpha_{n}/\xi_{n}\geq(\sum_{n=1}^{N}\sqrt{\alpha_{n}})^{2}$.
Therefore, $\xi^{\star}$ in \eqref{eq:optimalselection} results in
the smallest variance of the term in \eqref{eq:termdependingonxi}.
We experiment with the estimation of $\xi^\star$ in Appendix H.

\emph{Choice of $R$.}
We need to choose $R$, the integer such that $2R$ is the number of states
at which the fishy function $g$ is estimated.
We can guide the choice of $R$
numerically by monitoring the inefficiency defined as the product of expected
cost and variance, which can be approximated from
independent copies of UPAVE.
In our experiments we observe gains in efficiency
when setting $R$ to a value such that 
the cost of generating the two unbiased signed measures
matches approximately the cost of $2R$ fishy function estimators.
This way, at most half
of the computing budget is allocated suboptimally.
We note that when we run UPAVE for a given choice of $R$, we can
also easily output estimators corresponding to smaller values of $R$,
at no extra cost, which helps in monitoring the effect of $R$.

\emph{Reservoir sampling.}
A naive implementation of UPAVE with Algorithm~\ref{alg:unbiasedvar}
could incur a large memory cost when each state in $\mathbb{X}$ is
large, as in high-dimensional regression \citep{biswas2021couplingbased},
or phylogenetic inference \citep{kelly2021lagged}. Indeed, storing
all the atoms of the generated signed measures might be cumbersome.
However, for UPAVE we only need to select within each measure $R$
atoms at which to evaluate $\test$ and to estimate $\fishytest$;
see Line \ref{alg:step:evalestimateatom} in Algorithm~\ref{alg:unbiasedvar}.
We can address the memory issue by setting $\xi_{n}=1/N$
for all $n$ and by using reservoir sampling \citep{vitter1985random}.
This technique allows sampling $I_r$ uniformly in $\{1,\ldots,N\}$,
$R$ times independently, without knowing $N$ in advance and
keeping only $R$ objects in memory.

We mention other methodological variations that we do not investigate further
in this manuscript.
Instead of sampling $R$ atoms from each signed measure with replacement,
we could sample without replacement.
Also, the $2R$ estimators $\fishytestestimator$ employed in \eqref{eq:unbiasedasymptvar}
could be generated jointly instead of independently. In particular,
we can couple $2R$ chains starting from $(Z_{I^{(j)}_r})_{r=1}^{R}$,
$j\in\{1,2\}$ with a common chain starting from $y$: in other words,
we could simulate a single coupling of $2R+1$ chains instead of simulating
$2R$ couplings of two chains.

\subsection{\label{subsec:multivariate-extension}Multivariate extension}

Using the Cram\'er--Wold theorem \citep[][Theorem 29.4]{Billingsley95}, we can handle test
functions $h$ taking values in $\mathbb{R}^d$:
$h(x)=(h_1(x),\ldots,h_d(x))$. Write $h_0 = h - \pi(h)$.  Consider the
Poisson equation for each $h_i$ and introduce the associated solutions denoted
by $g_i$, and $g=(g_1,\ldots,g_d)$. The sum $\sum_{s=0}^{t-1} (h_i(X_s)-\pi(h_i))$ can be re-written as
$\sum_{s=1}^{t-1} (g_i(X_s)-Pg_i(X_{s-1})) + g_i(X_0) - Pg_i(X_{t-1})$.
Observe that $S_i =  (g_i(X_t) - Pg_i(X_{t-1}))_{t\geq 1}$ is a martingale
difference sequence for which a central limit theorem applies, with asymptotic
variance as in \eqref{eq:avar-intro},
with $g_i$ instead of $g$.
Write $S=(S_1,\ldots,S_d)$.
For any vector $t\in \mathbb{R}^d$, we find that $t^T S$ is a martingale difference sequence as well,
and by the Cram\'er--Wold theorem the multivariate asymptotic variance is
\begin{equation}
  v(P,h) = \mathbb{E}_\pi\left[(g(X_1) - Pg(X_0))(g(X_1) - Pg(X_0))^T\right].
  \label{eq:avar-multivariate}
\end{equation}
Next, the multivariate extension of the alternate representation in \eqref{eq:asymptoticvariance:estimable2}
is obtained by developing the product,
then by using $Pg = g - h_0$ pointwise and elementwise. We obtain
\begin{equation}
  v(P,h) = 
  -\pi(h_0 h_0^T + (h - \pi(h)) g^T + g (h - \pi(h))^T).
  \label{eq:avar-multivariate-estimable}
\end{equation}
The $(i,j)$-th entry of that matrix can be written
\begin{equation}
  -(\pi(h_i\cdot h_j) - \pi(h_i)\pi(h_j)) + \pi\left[(h_i - \pi(h_i)) g_j + g_i (h_j - \pi(h_j))\right].
  \label{eq:avar-multivariate-estimable2}
\end{equation}
Therefore we can estimate multivariate asymptotic variances with the proposed UPAVE,
using pairs of independent unbiased signed measure approximations of $\pi$,
and unbiased estimators of evaluations of each coordinate of the fishy function $g$.

\section{Numerical experiments with asymptotic variance estimators\label{sec:numerical:experiments}}

We implement the proposed estimators of $v(P,h)$ and investigate their distinctive features, 
first in an AR(1) example where $v(P,h)$ is analytically available.
Then we consider MCMC algorithms that are representative of 
methods commonly used in Bayesian data analysis.
In Section~\ref{subsec:logitrandom},
we consider a Hamiltonian Monte Carlo algorithm targeting a posterior distribution in logistic regression
with random effects \citep{heng2019unbiased}.
Supplementary material \citep{aospoisson-supp} includes experiments with
a Gibbs sampler for Bayesian regression with shrinkage prior where the target is high-dimensional,
multimodal and heavy-tailed \citep{biswas2021couplingbased}, and a particle marginal Metropolis--Hastings
algorithm to infer parameters in state space models \citep{middleton2020unbiased}.

In all examples below, we tune unbiased MCMC as suggested in Section \ref{subsec:improvements}, with $L$ obtained from preliminary runs, $k=L$ and $\ell=5k$, 
so that unbiased MCMC is at least 20\% less efficient than ergodic average MCMC.
We compare the proposed UPAVE with batch means (BM) and spectral variance (SV) estimators.
We implement them as in the \texttt{mcmcse} package \citep{mcmcse}, 
with details provided in Appendix~\ref{appx:reviewlongrunvarianceestimation}. In particular, the
batch size is selected automatically with the method of \citet{liuvatsflegal2022}, the bias is reduced using the lugsail variants proposed in \citet{vats:fleg:2022},
with lugsail parameter $r\in\{1,2,3\}$, and each estimator is based parallel chains, as proposed in \citet{agarwal2022globally,gupta2020estimating}.
We have also experimented with the \texttt{spectrum0} function of the \texttt{coda} package \citep{plummer2006coda}, which gave estimates with low bias but prohibitively large variance
in the examples considered below, and the results are not reported.

\subsection{AR(1)\label{subsec:ar}}

We consider the autoregressive process $X_{t}=\phi X_{t-1}+W_{t}$,
where $W_{t}\sim\text{Normal}(0,1)$, and $(W_t)$ are independent. We set $\phi=0.99$. The initial
distribution is $\pi_{0}=\text{Normal}(0,4^{2})$. The target distribution
is $\text{Normal}(0,(1-\phi^{2})^{-1})$, and for $h:x\mapsto x$
the asymptotic variance is analytically available: $v(P,\test)=(1-\phi)^{-2}=10^4$.
We use a reflection-maximal
coupling as described in Appendix~\ref{appx:couplingrh}.
Assumption~\ref{assu:tau-moment-kappa} holds for all $\kappa > 1$ by Proposition~\ref{prop:vgeometric}, taking
$\lambda=1+\frac{\phi}{2}$, $V(x)=\left|x\right|+1$, $b=1+\sqrt{2/\pi}$ and $\bar{v}=\frac{2b}{1-\phi}$ therein.
We also investigate quantitative bounds in Appendix~\ref{appx:AR1}.
We choose $k=500$, $L=500$, $\ell=5k$ for
unbiased MCMC. The state $y\in\mathbb{X}=\mathbb{R}$,
used to define $\fishytest=g_{y}$ in \eqref{eq:explicitfishy-1},
is set as $y=0$. 
We can calculate
that $g_y$ here is the function $x\mapsto (1-\phi)^{-1} x$.

\begin{table}[t]
\centering 
\begin{tabular}{r|l|l|l|l|l}
\hline
R & estimate & cost & fishy cost & variance of estimator & inefficiency\\
\hline
1 & [8049 - 10417] & [5234 - 5262] & [145 - 169] & [2.5e+08 - 4.9e+08] & [1.3e+12 - 2.5e+12]\\
\hline
10 & [9431 - 10264] & [6677 - 6758] & [1585 - 1666] & [4e+07 - 5.6e+07] & [2.7e+11 - 3.7e+11]\\
\hline
50 & [9742 - 10210] & [13155 - 13340] & [8055 - 8247] & [1.2e+07 - 1.5e+07] & [1.6e+11 - 2e+11]\\
\hline
100 & [9829 - 10250] & [21262 - 21578] & [16174 - 16470] & [9.2e+06 - 1.1e+07] & [1.9e+11 - 2.4e+11]\\
\hline
\end{tabular} \caption{AR(1) example: unbiased estimation of the asymptotic variance $v(P,\protect\test)$.
Here $v(P,\test)=10^4$. Each entry provides a $95\%$ confidence interval obtained from $M=10^3$ independent runs.}
\label{tab:ar1:unbiased}
\end{table}

The performance of the proposed estimator of $v(P,\test)$ is shown in Table
\ref{tab:ar1:unbiased}. 
The columns correspond to:
1) $R$: the number of atoms in each measure at which the fishy function is estimated,
2) estimate: overall estimate of $v(P,h)$, obtained by averaging $M$ independent runs,
3) cost: average cost of each run, in units of MCMC transitions,
4) fishy cost: average cost associated with fishy function estimates within each run, 5) empirical variance of the proposed estimators, and 6) inefficiency defined as
the product of variance and average cost (smaller is better).
The results are based on
$M=10^{3}$ independent replicates, 
and each entry shows a $95\%$ confidence interval obtained with the nonparametric bootstrap.
Increasing $R$ leads to a higher average cost but better efficiency. Overall we obtain accurate estimates
of $v(P,\test)$ with parallel runs that each costs of the order of $10^4$
iterations.

Next we compare the proposed estimator with the following naive strategy:
generate a chain of length $T$ (post burn-in), compute the estimate
$T^{-1}\sum_{t=0}^{T-1}h(X_t)$, repeat $M$ times independently and compute the
empirical variance of the $M$ estimates.  We set $T=13,000$ to match the cost of our
estimator with $R=50$. We find that the naive strategy has a bias (as $M\to
\infty$) equal to $-77$, and an asymptotic variance of $2\times 10^8$.  This is
about 10 times larger than the variance of the proposed estimator reported in
Table \ref{tab:ar1:unbiased}.

\begin{figure}[t]
  \centering \begin{subfigure}[b]{0.45\columnwidth} \includegraphics[width=1\columnwidth]{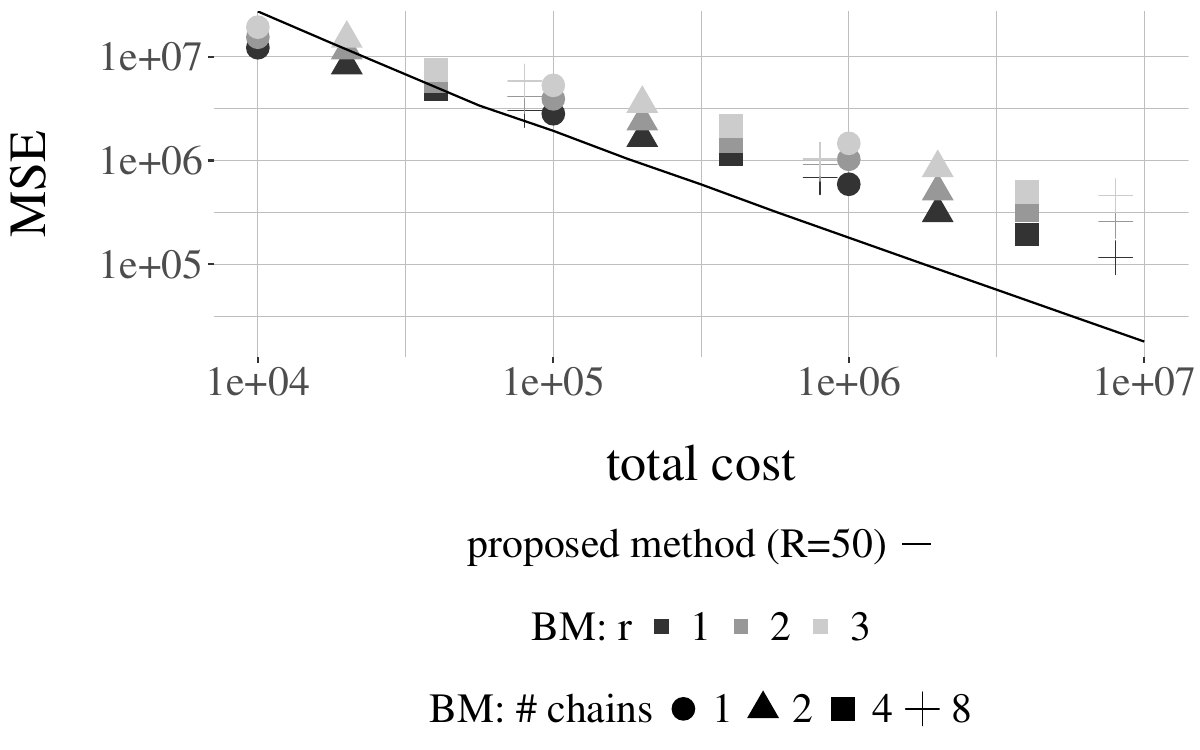}
\caption{}
\label{subfig:ar:bm} \end{subfigure} \hspace*{1cm} \begin{subfigure}[b]{0.45\columnwidth}
\includegraphics[width=1\columnwidth]{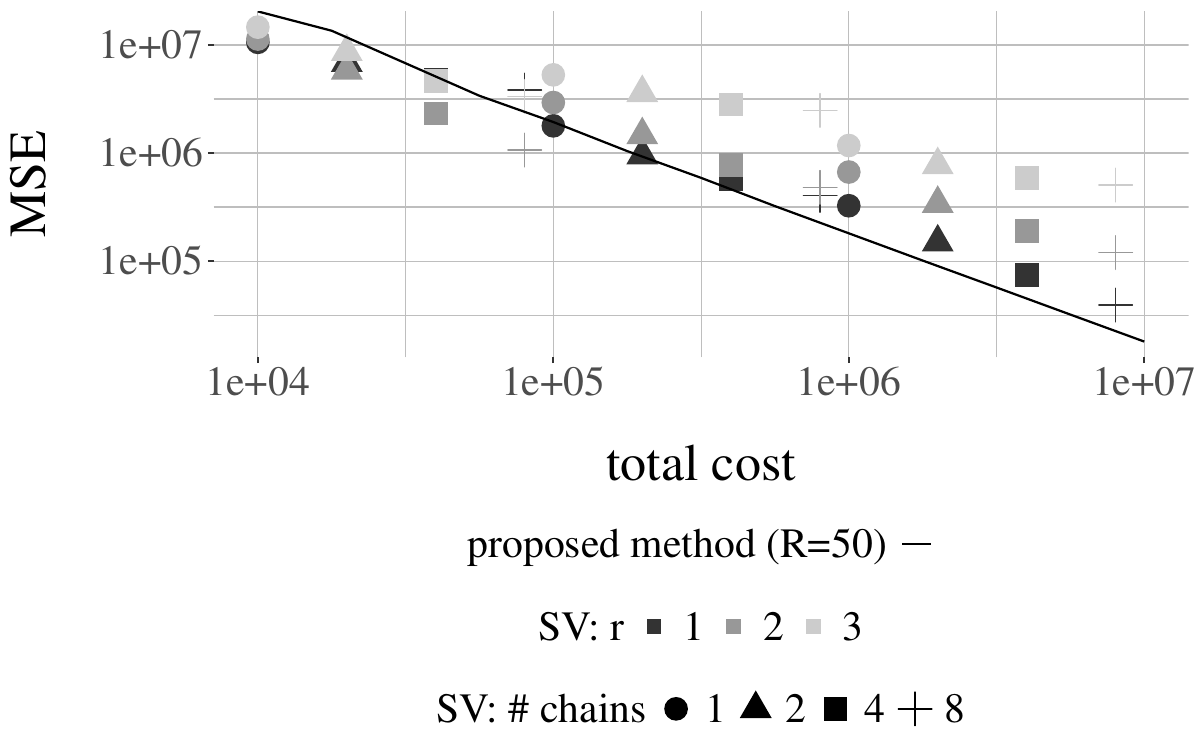}
\caption{}
\label{subfig:ar:sv} \end{subfigure}
\caption{AR(1) example: mean squared error against total cost,
  for batch means (BM, left) and spectral variance (SV, right) estimators
  of $v(P,h)$. The performance of the proposed method (with $R=50$) is indicated with a full line.\label{fig:ar:bmsv}}
\end{figure}

We continue the comparison with batch means and spectral variance estimators.
We compute these estimators of $v(P,\test)$ using chains of
lengths in $\{10^{4},10^{5},10^{6}\}$, and numbers of parallel chains
in $\{1,2,4,8\}$. We base the results in this section on $400$ independent trajectories of
length $10^6$, so for example we obtain $400$ independent estimators based on
one chain, $200$ based on two chains, etc.  For each configuration we
approximate the mean squared error (MSE), and the total cost is equal to number
of chains multiplied by the time horizon.  Finally, we report the MSE that
would be achieved by our proposed method, using $R=50$, if we generated
sequentially a number of independent replicates of UPAVE corresponding to the
given total cost. The comparison here does not account for any potential speed
up on parallel architectures. The results are shown in Figure
\ref{fig:ar:bmsv}, where both axes are on logarithmic scale. The proposed method is worse than standard
estimators when the total computing budget is low. However, we observe that the MSE of the different estimators converge
at different rates, as predicted by the theory reviewed in Appendix~\ref{appx:reviewlongrunvarianceestimation}.
The proposed estimator thus has a smaller MSE when the computing budget is large enough.

\begin{figure}[t]
  \centering \begin{subfigure}[b]{0.45\columnwidth} \includegraphics[width=1\columnwidth]{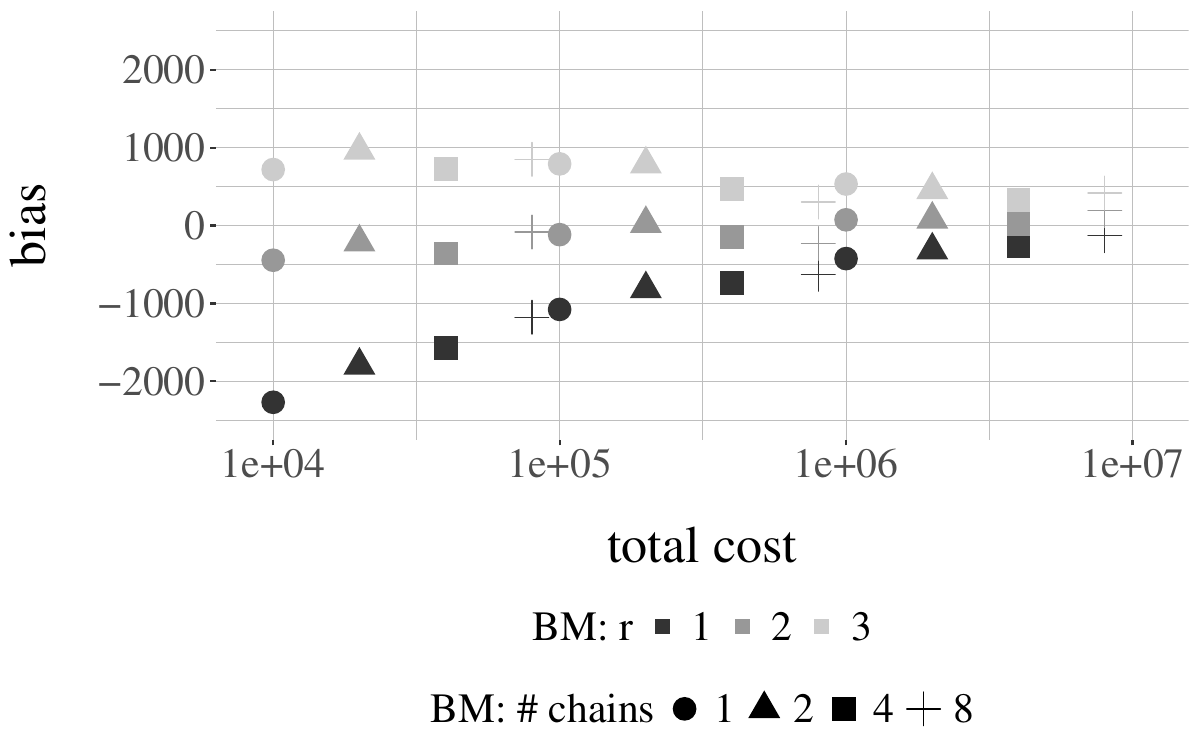}
\caption{}
\label{subfig:ar:bmbias} \end{subfigure} \hspace*{1cm} \begin{subfigure}[b]{0.45\columnwidth}
\includegraphics[width=1\columnwidth]{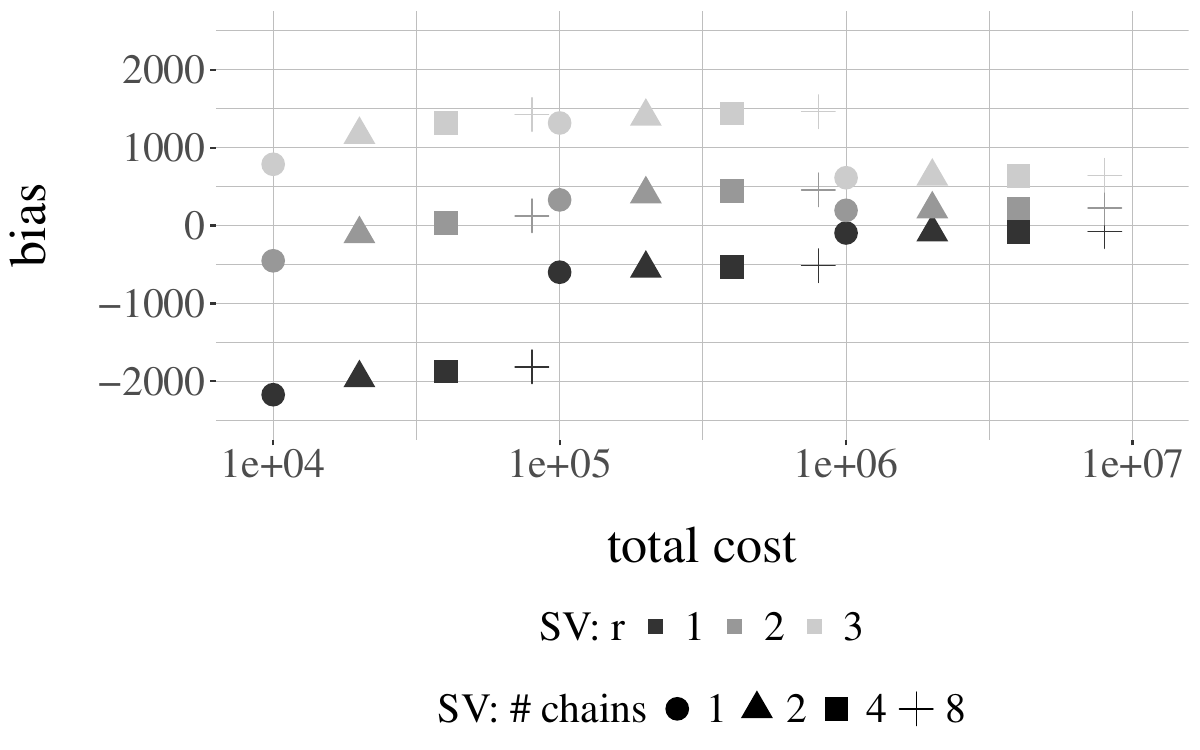}
\caption{}
\label{subfig:ar:svbias} \end{subfigure}
\caption{AR(1) example: bias against total cost,
  for batch means (BM, left) and spectral variance (SV, right) estimators
  of $v(P,h)$. The proposed method is, on the other hand, unbiased.\label{fig:ar:bmsvbias}}
\end{figure}

Finally, we produce similar plots for the bias instead of the mean squared
error, shown in Figure \ref{fig:ar:bmsvbias}. The figure includes only
batch means and spectral variance estimators since the proposed method is
unbiased by design. We see that the $r$ lugsail parameter has a strong effect on the
bias. In particular, the value $r=1$ that resulted in the smallest MSE in Figure
\ref{fig:ar:bmsv} corresponds to a noticeable negative bias.

\subsection{Hamiltonian Monte Carlo for Bayesian logistic regression\label{subsec:logitrandom}}

We consider a logistic regression with random effects \citep{rodriguez2008multilevel}, taken from pedagogical material by Germ\'an Rodr\'iguez.
Here $N$ individuals belong to $G$ groups. The function $\psi:[N]\to[G]$ maps individuals to groups, where $[N]$ denotes the set $\{1,\ldots,N\}$.
Each individual $i$ is associated with a response $Y_i$ in $\{0,1\}$ and a row vector of covariates $x_{i\cdot}\in\mathbb{R}^p$. 
The model is 
\begin{equation}\label{eq:modellogisticrandomeffect}
  \forall i\in [N]\quad Y_i \sim \text{Bernoulli with probability } 1/(1+\exp(-(\alpha + x_{i\cdot} \beta + \gamma_{\psi(i)}))),
\end{equation}
with intercept $\alpha$, regression coefficient $\beta\in\mathbb{R}^p$, and random effects $\gamma_{\psi(i)}$. Each random effect $\gamma_g$, for $g\in[G]$, is assumed $\text{Normal}(0,\sigma^2_a)$. Conditionally on the covariates, the random effects and the responses are independent.  The prior distributions are $\alpha\sim\text{Normal}(0,100^2)$, $\beta_j\sim\text{Normal}(0,100^2)$ and $\sigma_a\sim \text{Uniform}(0,10)$,
but the target distribution also integrates over the random effects $\gamma_g$ for $g\in[G]$: it is the distribution $\pi$ 
of $(\alpha,\beta,\gamma,\sigma_a)$ given the data. 
The state space is $\mathbb{R}^{2+p+G}$. We implement the model in \texttt{Stan} \citep{carpenter2017stan}.

We use data provided with the software of \citet{lillard2000aml} that describe 1060 births to 501 mothers. The response indicates whether the delivery occurred at a hospital or not.
The covariates include the logarithm of the mother's income, the distance between their home and the nearest hospital, and two binary variables indicating the mother's education level: whether it was less than ``high school'', and whether it was ``college'' or more. The births are grouped by mothers, so $\gamma_g$ is the random effect associated with the $g$-th mother. The state space is therefore of dimension 507.
We focus on the coefficient denoted by $\beta_c$ associated with the indicator of college education, and
the test function is $h:(\alpha,\beta,\gamma,\sigma_a)\to\beta_c$.

For the logistic regression model with a prior standard deviation $\sigma_a>0$, 
the target is strongly log-concave, and the gradient of the log density is Lipschitz
(see e.g. supplementary materials of \citet{heng2019unbiased}). Under these conditions, 
Theorem 2 of \citet{heng2019unbiased} states
that their coupling of a variant of Hamiltonian Monte Carlo leads to meeting times with Geometric tails, if the stepsize 
parameter and the number of leapfrog steps  are small enough. 
However, the arguments in \citet{heng2019unbiased} do not hold when $\sigma_a$ is allowed to be arbitrarily close to zero. In that case,
the formal study of the meeting times is an open question. We employ a coupling of Hamiltonian Monte Carlo described in Appendix~\ref{sec:hmc}.
To sample the momentum variables we use a diagonal mass matrix, with entries approximately equal to the inverses of the posterior variances
from 2000 samples generated by \texttt{Stan} \citep{carpenter2017stan}.  
We use a number of leapfrog steps uniformly distributed between $1$ and $20$ at each iteration,
and the stepsize is Exponential with rate $5$. The coupling strategy involves reflection-maximal couplings
when the number of leapfrog steps is equal to one, and common random numbers otherwise. We compare HMC with 
the Metropolis-adjusted Langevin Algorithm (MALA), with the same random stepsize; we do not try to optimize the stepsize for either algorithm.
The HMC algorithm employs here on average ten times more 
gradient evaluations than MALA per iteration. Yet in our implementation the elapsed real time of an HMC iteration is less than twice that of a MALA iteration.

Figure \ref{fig:randomeffectslogit} shows three independent HMC trajectories (left), a histogram of the parameter of interest $\beta_c$ (middle), 
and upper bounds on the total variation (TV) distance to stationarity for HMC and for MALA (see Appendix~\ref{appx:tvupperbounds}). From the plot we set $k=L=500,\ell=5k$ for unbiased HMC
and $k=L=10^4,\ell=5k$ for unbiased MALA.
To implement UPAVE
we need to define $\fishytest=g_{y}$. We draw $y$ (once and for all) from a Normal approximation of the posterior, obtained from 2000 samples from \texttt{Stan} \citep{carpenter2017stan}.

\begin{figure}[t]
  \centering \begin{subfigure}[b]{0.3\columnwidth} \includegraphics[width=1\columnwidth]{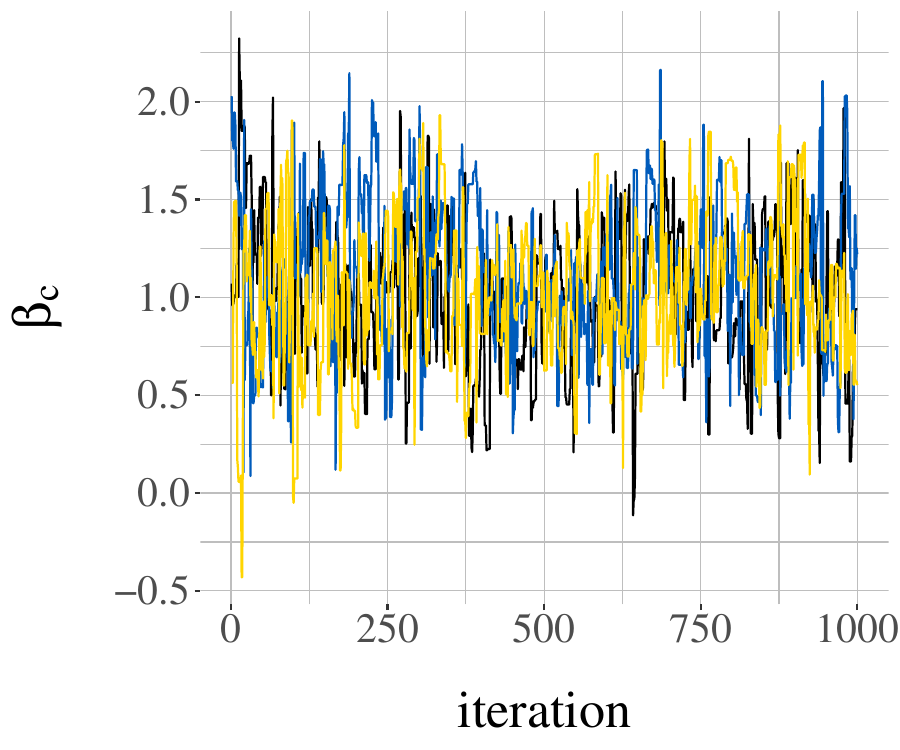}
\caption{}
\label{subfig:randomeffectslogit:trace} \end{subfigure} \hspace*{0.1cm}
\begin{subfigure}[b]{0.3\columnwidth} \includegraphics[width=1\columnwidth]{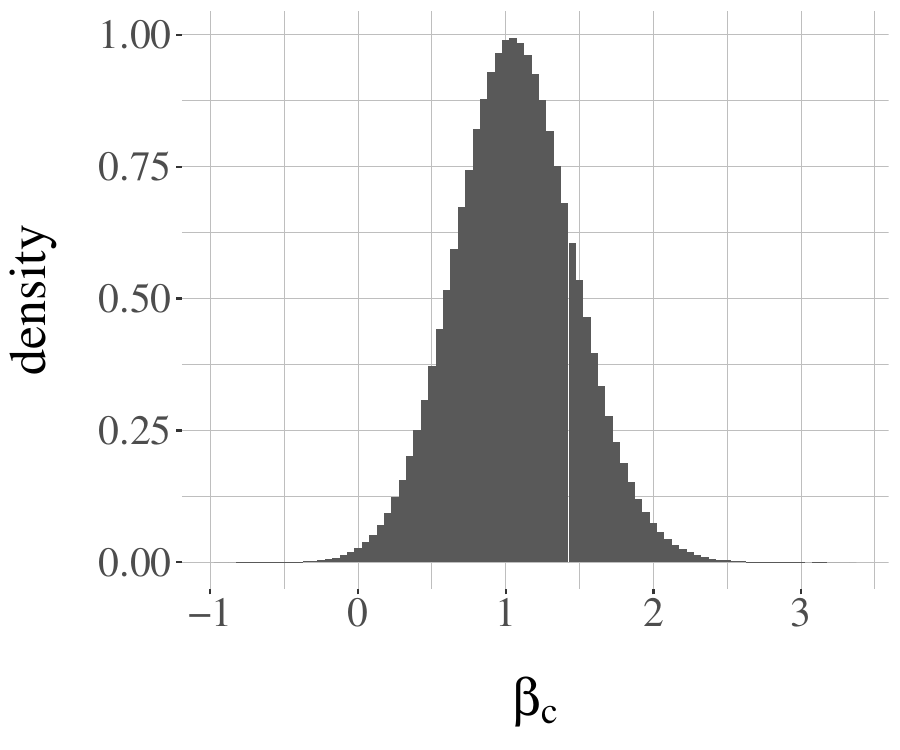}
\caption{}
\label{subfig:randomeffectslogit:hist} \end{subfigure} \hspace*{0.1cm}
\begin{subfigure}[b]{0.3\columnwidth} \includegraphics[width=1\columnwidth]{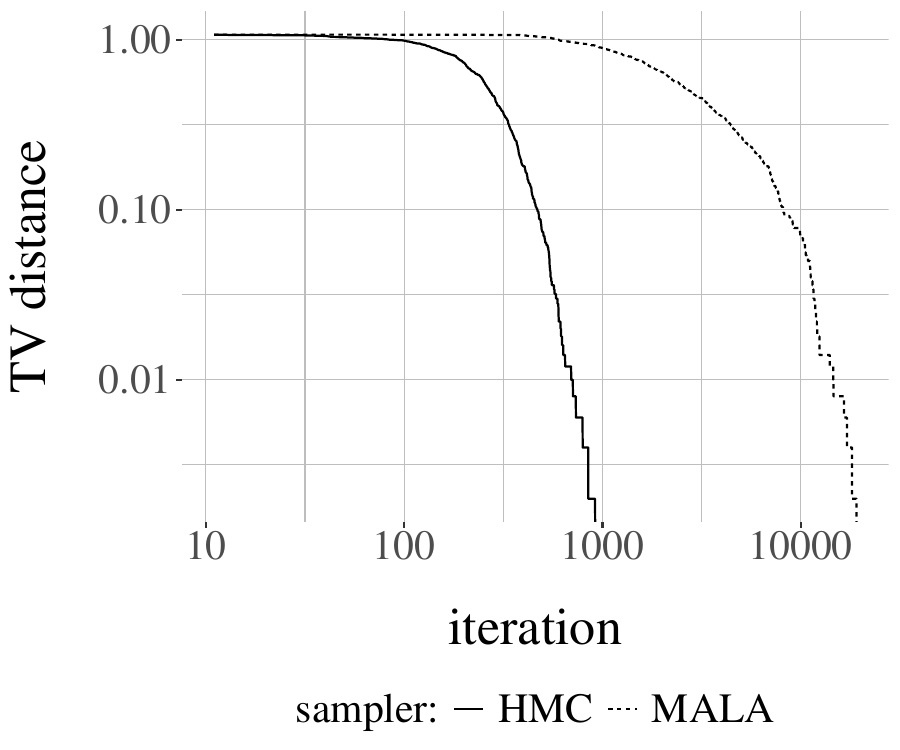}
\caption{}
\label{subfig:randomeffectslogit:tvbounds} \end{subfigure} \caption{Bayesian logistic
regression with random effects. Left: trace of the component $\beta_{c}$ of three
independent HMC chains. Middle: histogram
of $\beta_{c}$, obtained from long HMC runs. Right: upper
bounds on $|\pi_0 P^t - \pi|_{\text{TV}}$, for HMC and MALA.\label{fig:randomeffectslogit}}
\end{figure}

Table \ref{tab:randomeffectslogit:hmc:UPAVE} displays the results obtained for HMC, with $M=5000$ independent runs, and 
for different values of $R$. The average of the $M$ UPAVE estimates of $v(P,h)$ obtained for $R=20$ is $1.76$, which we take as the ground truth. 
This can be interpreted as the inefficiency of ergodic average MCMC, to be compared with the inefficiency of unbiased MCMC which we estimate at 2.47.
This amounts to a 40\% increase in inefficiency for unbiased MCMC relative to ergodic average MCMC, which is arguably a modest price to pay for unbiasedness. The user can decide to increase $k$, $L$ and $\ell$ to 
improve the relative efficiency of unbiased MCMC.

\begin{table}[t]
  \centering 
\begin{tabular}{r|l|l|l|l|l}
\hline
R & estimate & total cost & fishy cost & variance of estimator & inefficiency\\
\hline
1 & [1.57 - 2.12] & [6832 - 6857] & [1252 - 1274] & [8.23e+01 - 9.72e+01] & [5.64e+05 - 6.64e+05]\\
\hline
5 & [1.64 - 1.89] & [11857 - 11908] & [6277 - 6326] & [1.68e+01 - 1.87e+01] & [1.99e+05 - 2.22e+05]\\
\hline
10 & [1.65 - 1.81] & [18176 - 18247] & [12596 - 12666] & [8.24e+00 - 9.09e+00] & [1.5e+05 - 1.65e+05]\\
\hline
20 & [1.7 - 1.82] & [30805 - 30905] & [25223 - 25326] & [4.29e+00 - 4.66e+00] & [1.32e+05 - 1.44e+05]\\
\hline
\end{tabular} \caption{HMC for logistic regression with random effects: unbiased estimation of the asymptotic variance $v(P_{\text{HMC}},\protect\test)$.}
  \label{tab:randomeffectslogit:hmc:UPAVE}
  \end{table}

We perform similar calculations for MALA, as reported in Table \ref{tab:randomeffectslogit:mala:UPAVE}, based on $M=10^3$ independent runs. With $R=10$ we obtain an estimate of $v(P_{\text{MALA}}, \protect\test)$ around 115, which is approximately 65 times larger than that of HMC. Since HMC is not 65 times more costly per iteration, it is advantageous to employ HMC for the estimation of $\pi(h)$. 
We could similarly compare the asymptotic variances resulting from different choices of tuning parameters.

\begin{table}[t]
  \centering 
\begin{tabular}{r|l|l|l|l|l}
\hline
R & estimate & total cost & fishy cost & variance of estimator & inefficiency\\
\hline
1 & [77.5 - 127] & [127801 - 129157] & [20563 - 21913] & [1.45e+05 - 2.04e+05] & [1.86e+10 - 2.59e+10]\\
\hline
5 & [109 - 134] & [213295 - 216188] & [106069 - 108786] & [3.48e+04 - 4.56e+04] & [7.49e+09 - 9.8e+09]\\
\hline
10 & [107 - 125] & [320633 - 325076] & [213506 - 217665] & [1.95e+04 - 2.42e+04] & [6.27e+09 - 7.78e+09]\\
\hline
\end{tabular} \caption{MALA for logistic regression with random effects: unbiased estimation of the asymptotic variance $v(P_{\text{MALA}},\protect\test)$.}
  \label{tab:randomeffectslogit:mala:UPAVE}
\end{table}

We next compare UPAVE with EPAVE in the context of HMC. We run $M=100$ independent chains of length 50,000, discard a burn-in of $500$ iterations, and estimate the fishy function every $D=578$
iterations, chosen so that the cost of each EPAVE run is of the order of 100,000. The average of the $M$ EPAVE runs yields an estimate of $v(P_\text{HMC},h)$ equal to 1.72 and the variance is estimated at 1.86. 
Using the nonparametric bootstrap and the $M$ runs of EPAVE, we obtain a confidence interval [1.45e+05 - 2.4e+05] for the inefficiency of EPAVE, which is aligned with the inefficiency of UPAVE reported in Table~\ref{tab:randomeffectslogit:hmc:UPAVE}.

Finally, we compare UPAVE with batch means and spectral variance estimators. We compute these from $M=25$ independent runs, each of which involves $4$ parallel HMC chains of length $t=10^5$, with the first 1000 discarded as burn-in. Figure \ref{fig:randomeffectslogit:bmsv}
shows the resulting estimates. Comparing with the variances reported in Table \ref{tab:randomeffectslogit:hmc:UPAVE}, we notice that BM and SV estimates have much lower variability
compared to UPAVE. For example, the estimators ``SV r = 2'' with $t=10^5$ have a cost of $4\times 10^5$ and a variance of $0.002$. To match that variance, one would need to average about 2250 UPAVE runs with $R=20$, which would cost about $7\times 10^7$ units of transitions, that is 175 times more than that used to compute ``SV r = 2''.
On the other hand, Figure \ref{fig:randomeffectslogit:bmsv} illustrates that the bias of batch means and spectral variance estimators is noticeable, even in relatively long runs. While all estimates yield plausible values for $v(P_\text{HMC},h)$, many of them fall outside the $95\%$ confidence interval constructed from $5000$ runs of UPAVE, considered as the reference here and represented by horizontal dashed lines. We obtained similar results with MALA: the bias of BM and SV estimators is noticeable, and their variances are much smaller than those of UPAVE for given computing budget. 
For example, the estimator ``SV r = 2'' computed for a cost of $4\times 10^6$ iterations of MALA has the same variance as that of an average of 1000 runs of UPAVE with $R=10$, $k=L=10^4$ and $\ell=5k$, which would cost 45 times more units of Markov transitions. 

  \begin{figure}[t]
    \centering 
  \includegraphics[width=0.8\columnwidth]{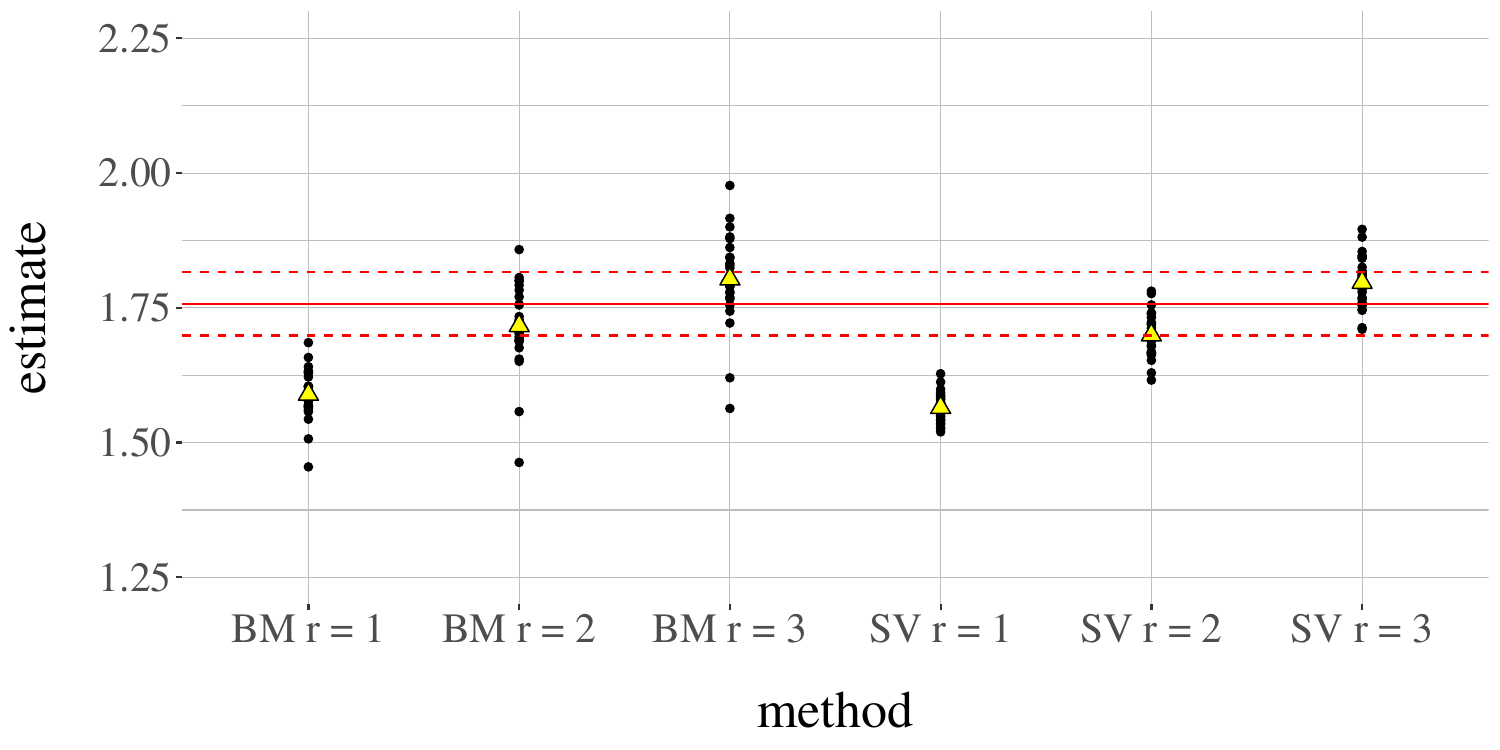}
  \caption{HMC for logistic regression with random effects: batch mean and spectral variance estimators, with tuning parameter $r\in\{1,2,3\}$ in the package \texttt{mcmcse}, obtained from runs of 4 parallel chains of length $10^5$. Each dot represents an estimate of $v(P_\text{HMC},h)$. The yellow triangles represent the means for each method. The horizontal lines represent the estimate of $v(P,h)$ and a $95\%$ confidence interval obtained from 5000 runs of UPAVE with $R=20$. \label{fig:randomeffectslogit:bmsv}}
  \end{figure}

\section{Discussion\label{sec:discussion}}

Our contributions are grounded in the coupling approach that is implementable
in a variety of MCMC settings \citep{atchade2024unbiased}.
The proposed estimators of fishy functions in Section \ref{sec:coupledchainsandfishyfunctions}
could be considered for control variates.
Indeed, approximating solutions of
the Poisson equation is a well-known strategy for variance
reduction via control variates
\citep{andradottir1993variance,henderson1997variance,dellaportas2012control,mijatovic2018poisson,alexopoulos2022variance}.
This involves replacing $h$ by $h-\phi$
in an ergodic average, where $\phi\in L_{0}^{1}(\pi)=\{f \in
L^{1}(\pi):\pi(f)=0\}$, so that $\pi(h-\phi)=\pi(h)$
and the limit of the MCMC estimator is unchanged, but the variance
may be smaller with a judicious choice of $\phi$. A convenient family
of $\phi$ is $\{(I-P)f:f\in L^{1}(\pi)\}$ since $\phi\in L_{0}^{1}(\pi)$
by construction and indeed an optimal choice of $f$ is a fishy function.
Our approach may be the first widely-applicable strategy for
consistent estimation of fishy functions in MCMC settings.

Regeneration methods may provide a viable alternative to couplings.
When a Markov chain admits an accessible atom $\alpha$, a solution
of the Poisson equation \citep[e.g.][]{glynn1996liapounov} is $g(x)=\mathbb{E}_{x}\left[\sum_{k=0}^{\sigma_{\alpha}}h_{0}(X_{k})\right]$,
where $\sigma_{\alpha}=\inf\{n\geq0:X_{n}\in\alpha\}$. This allows
approximation of $g$ pointwise by simulation if one can identify
entries into $\alpha$ and one can approximate $h_{0}=h-\pi(h)$ pointwise.
Proper atoms may not exist for a given general state space Markov
chain, and identification of hitting times of an atom for a suitable
split chain as in \citet{mykland1995regeneration} is not always feasible.
While it is often possible to define a modified Markov chain that
admits an easily identified, accessible atom \citep{brockwell:kadane:05,leedoucetperfectsimulation},
the corresponding solution of the Poisson equation may not be similar
to that of the original chain. Note also that, when atoms can be identified,
one would often use regenerative simulation to approximate the asymptotic
variance \citep{hobert2002applicability}, which can be expressed
as $v(P,h)=\pi(\alpha)\mathbb{E}_{\alpha}[\{ \sum_{k=1}^{\tau_{\alpha}}h_{0}(X_{k})\} ^{2}]$
\citep{bednorz2008regeneration}.

With their reliance on couplings, the proposed estimators of the asymptotic variance $v(P,h)$ in Section \ref{sec:Asymptotic-variance}
require more from MCMC users than batch means or spectral variance estimators, 
but have distinctive features: UPAVE is unbiased, and both EPAVE and UPAVE converge at the Monte Carlo rate under realistic conditions.
The experiments in Section \ref{sec:numerical:experiments}
indicate that they can be implemented in non-trivial settings and display promising performance.
They tend to have a large variance, which, for UPAVE, can be arbitrarily reduced by averaging
over more independent copies.
The lack of bias is in contrast to the somewhat unpredictable bias of existing asymptotic variance estimators,
which stands as a long-standing issue. 
The proposed methods appear useful in settings where accurate estimates of $v(P,h)$ are sought, possibly with confidence intervals, and when parallel processors are available.

As with other works on unbiased MCMC \citep{atchade2024unbiased}, it is worth emphasizing that
the performance depends on the underlying MCMC algorithm,
its initialization and its coupling. We refer to the bimodal target
in Section 5.1 of \citet{joa2020} for a situation where multimodality
in the target distribution combined with a poor choice of MCMC
sampler gives misleading estimates, despite the
lack of bias and finite variance. In our theoretical results,
we have prioritized assumptions on the moments of meeting times of
the coupled Markov chains under strong but reasonable initialization
assumptions, which can be cleanly separated from assumptions on the
moments of functions. Given this emphasis, the results appear to be
fairly strong and provide a sensible relationship between the moments
of the meeting time and moments of the proposed estimators.

The appendices are structured as follows:
\begin{itemize}
\item[Appendix~\ref{appx:unbiasedmcmcwithcouplings}] provides reminders on unbiased MCMC with couplings.
\item[Appendix~\ref{sec:theory}] contains the proofs of our results.
\item[Appendix~\ref{appx:reviewlongrunvarianceestimation}] describes some existing results on the convergence of batch means and spectral variance estimators. 
\item[Appendix~\ref{appx:AR1}] verifies Assumption~\ref{assu:tau-moment-kappa} quantitatively for the AR(1) process considered in Section~\ref{subsec:ar}.
\item[Appendix~\ref{sec:hmc}] describes couplings of Hamiltonian Monte Carlo employed in Section~\ref{subsec:logitrandom}.
\item[Appendix~\ref{appx:highdimreg}] provides experiments with a Gibbs sampler for high-dimensional regression with shrinkage prior.
\item[Appendix~\ref{subsec:pmmh}] provides experiments with a particle marginal Metropolis--Hastings algorithm for parameter inference
in a nonlinear state space model.
\item[Appendix~\ref{appx:twocompetingmcmc}] contains additional experiments in the Cauchy location model introduced in Section~\ref{subsec:fishyillustration}.
\end{itemize}
Code to reproduce the figures
of this article can be found at: \url{https://github.com/pierrejacob/unbiasedpoisson/}.

\section*{Acknowledgements}
Research of AL supported by EPSRC grants EP/R034710/1 and EP/Y028783/1. Research of DV supported by SERB (SPG/2021/001322).
The authors thank Mohamed Keteb for finding a typo in an earlier version of the manuscript.

\bibliographystyle{agsm}
\bibliography{../../Biblio}

\appendix

\section{Unbiased MCMC with couplings\label{appx:unbiasedmcmcwithcouplings}}

\subsection{Couplings of MCMC algorithms\label{appx:couplingrh}}

We recall some implementable
coupling techniques for MCMC. We focus on Metropolis--Rosenbluth--Teller--Hastings
(MRTH) algorithms \citep{hastings:1970}.
The goal is to couple generated trajectories such that exact meetings
can occur. Relevant considerations can be found in \citet{joa2020,wang2021maximal,PappSherlock2024}.

Algorithm \ref{alg:coupledmh} describes a simple way of coupling
$P(x,\cdot)$ and $P(y,\cdot)$ where $P$ is the transition associated
with MRTH, with proposal transition $q$ and acceptance rate $\alpha_{\text{MRTH}}$.
In the algorithmic description, a maximal coupling of two distributions $\mu$ and $\nu$ for random variables $X$ and $Y$, respectively,
refers to a joint distribution $\gamma$ for $(X,Y)$ with marginals $\mu$
and $\nu$, and such that $\mathbb{P}(X=Y)$ is maximized over all
such joint distributions.

\begin{algorithm}
\begin{enumerate}
\item Sample $(X^{\star},Y^{\star})$ from a maximal coupling of $q(x,\cdot)$
and $q(y,\cdot)$.
\item Sample $U\sim\text{Uniform}(0,1)$.
\item If $U<\alpha_{\text{MRTH}}(x,X^{\star})$, set $X'=X^{\star}$, otherwise
$X'=x$.
\item If $U<\alpha_{\text{MRTH}}(y,Y^{\star})$, set $Y'=Y^{\star}$, otherwise
$Y'=y$.
\item Return $(X',Y')$.
\end{enumerate}
\caption{A coupled Metropolis--Rosenbluth--Teller--Hastings kernel \label{alg:coupledmh}}
\end{algorithm}

Next we provide details on how to sample
from maximal couplings of $q(x,\cdot)$ and $q(y,\cdot)$. A possibility,
mentioned in Section 4.5 of \citet{thorisson2000coupling}, and called
$\gamma$-coupling in \citet{johnson1998coupling}, is described in
Algorithm~\ref{alg:maximalcoupling}. The algorithm requires the
ability to sample from $\mu$ and $\nu$. The cost of executing
Algorithm~\ref{alg:maximalcoupling} is random, its expectation is
independent of $\mu$ and $\nu$, and its variance goes to infinity
as $|\mu-\nu|_{\text{TV}}\to0$. A variant of the algorithm for which
the variance is bounded for all $\mu,\nu$ is described in \citet{GerberLee2020}.

\begin{algorithm}
\begin{enumerate}
\item Sample $X\sim\mu$.
\item Sample $W\sim\text{Uniform}(0,1)$.
\item If $W\leq\nu(X)/\mu(X)$, set $Y=X$.
\item Otherwise, sample $Y^{\star}\sim\nu$ and $W^{\star}\sim\text{Uniform}(0,1)$
\\
 until $W^{\star}>\mu(Y^{\star})/\nu(Y^{\star})$, and set $Y=Y^{\star}$.
\item Return $(X,Y)$.
\end{enumerate}
\caption{Sampling from a maximal coupling of $\mu$ and $\nu$. \label{alg:maximalcoupling}}
\end{algorithm}

When $\mu$ and $\nu$ are Normal distributions with the same variance,
an alternative maximal coupling procedure is described in Algorithm~\ref{alg:reflmax}; it was proposed in \citet{bou2020coupling}. Its
cost is deterministic and independent of $\mu$ and $\nu$. In the
algorithmic description, $\varphi$ refers to the probability density
function of the standard Normal distribution. We use
Algorithm~\ref{alg:reflmax} to 
couple the Metropolis-adjusted Langevin algorithm
in Section~\ref{subsec:logitrandom} and Appendix~\ref{sec:hmc}.

\begin{algorithm}
\begin{enumerate}
\item Let $z=\Sigma^{-1/2}(\mu_{1}-\mu_{2})$ and $e=z/|z|$.
\item Sample $\dot{X}\sim\mathcal{N}(0_{d},I_{d})$, and $W\sim\text{Uniform}(0,1)$.
\item If $\varphi(\dot{X})W\leq\varphi(\dot{X}+z)$, set $\dot{Y}=\dot{X}+z$;
else set $\dot{Y}=\dot{X}-2(e^{T}\dot{X})e$.
\item Set $X=\Sigma^{1/2}\dot{X}+\mu_{1},Y=\Sigma^{1/2}\dot{Y}+\mu_{2}$,
and return $(X,Y)$.
\end{enumerate}
\caption{\label{alg:reflmax} Reflection-maximal coupling of $\text{Normal}(\mu_{1},\Sigma)$
and $\text{Normal}(\mu_{2},\Sigma)$.}
\end{algorithm}

In the case of univariate Normal distributions $\text{Normal}(\mu_{1},\sigma^{2})$
and $\text{Normal}(\mu_{2},\sigma^{2})$, the procedure simplifies
to Algorithm \ref{alg:reflmaxunivariate}. We use Algorithm~\ref{alg:reflmaxunivariate}
to couple AR(1) processes in Section~\ref{subsec:ar}.

\begin{algorithm}
\begin{enumerate}
\item Let $z=(\mu_{1}-\mu_{2})/\sigma$.
\item Sample $\dot{X}\sim\text{Normal}(0,1)$, and $W\sim\text{Uniform}(0,1)$.
\item Set $X=\mu_{1}+\sigma\dot{X}$.
\item If $W<\varphi(z+\dot{X})/\varphi(\dot{X})$, set $Y=X$; else set
$Y=\mu_{2}-\sigma\dot{X}$.
\item Return $(X,Y)$.
\end{enumerate}
\caption{\label{alg:reflmaxunivariate} Reflection-maximal coupling of (univariate)
$\text{Normal}(\mu_{1},\sigma^{2})$ and $\text{Normal}(\mu_{2},\sigma^{2})$.}
\end{algorithm}

\subsection{Unbiased estimators of expectations and signed measures\label{appx:unbiasedmcmc}}

\citet{glynn2014exact} show how coupled Markov chains can be employed
to construct unbiased estimators of stationary expectations. We recall here the variations presented in \citet{joa2020,VanettiDoucet2020}, that rely on a coupled Markov kernel $\bar{P}$ that induces
meetings of the two chains. This removes the need to specify a truncation variable
as in \citet{glynn2014exact,agapiou2018unbiased}.

Specifically, we construct the chains with a time lag $L\in\mathbb{N}$,
which is a tuning parameter. The construction is as in Algorithm \ref{alg:coupledchains}.
We also introduce a ``starting time'' $k\in\mathbb{N}$, a ``prospective
end time'' $\ell$, with $k\leq \ell$, which are tuning parameters. Under
assumptions provided in Section~\ref{subsec:Unbiased-approx-pi-test},
the following
random variable is an unbiased estimator of $\pi(\test)$:
\begin{equation}
H_{k}=\test(X_{k})+\sum_{j=1}^{\infty}\{\test(X_{k+jL})-\test(Y_{k+(j-1)L})\}.\label{eq:H_k^L}
\end{equation}
Consider a range of integers $k,\ldots,\ell$ for $k\leq \ell$ and associated
estimators $H_{k},\ldots,H_{\ell}$ obtained from the same
trajectories $(X_{t},Y_{t})_{t\geq0}$. All of these estimators $(H_{t})_{t=k}^{\ell}$
are unbiased, so their average is unbiased and reads
\begin{align*}
H_{k:\ell} & =\frac{1}{\ell-k+1}\sum_{t=k}^{\ell}H_{t}\\
 & =\frac{1}{\ell-k+1}\sum_{t=k}^{\ell}\test(X_{t})+\frac{1}{\ell-k+1}\sum_{s=k}^{\ell}\sum_{j=1}^{\infty}\{\test(X_{s+jL})-\test(Y_{s+(j-1)L})\}.
\end{align*}
We can find a simpler representation for the double sum in the above equation. Denote by $v_t$ the number of times that the term $\Delta_t = \test(X_t) - \test(Y_{t-L})$
appears in the double sum.
Then $v_t$ is the number of terms of the form $s+jL$
equal to
$t$ as $s$ moves in $\{k,\ldots,\ell\}$ and $j\geq 1$. We can focus on $t$ in $\{k+L,\ldots,\tau - 1\}$ since $v_t=0$ for $t$ outside of that range.
Note that, for a given $s$, there can be at most one value of
$j$ such that $s + jL = t$. So
$$v_t = |\{s \in \{k,\ldots,\ell\}: \exists j \geq 1: s+jL=t \}|.$$
We can re-write this as
$$v_t = |\{n \in \{t-\ell,\ldots,t-k\}: \frac{n}{L}\in \mathbb{Z}^+ \}|,$$
where $\mathbb{Z}^+$ is the set of positive integers. In other words we are
counting the positive multiples of $L$ within the range $\{t-\ell,\ldots,t-k\}$,
for any $t\geq k+L$. We can restrict that range to
$\{\max(L,t-\ell),\ldots,t-k\}$, since we cannot find a positive multiple of $L$
smaller than $L$. Now the range is between two positive integers. This yields:
\begin{equation}
v_t = \lfloor(t-k) / L\rfloor - \lceil \max(L, t-\ell)/L\rceil + 1.
  \label{eq:weightdelta}
\end{equation}
Indeed for two positive integers $a\leq b$, the number of multiples of $L$ within $\{a,\ldots,b\}$ is $\lfloor b/L\rfloor - \lceil a/L\rceil + 1$.
Thus we obtain the \emph{unbiased MCMC} estimator 
\begin{equation}
H_{k:\ell}=\frac{1}{\ell-k+1}\sum_{t=k}^{\ell}\test(X_{t})+\text{BC}_{k:\ell},\label{eq:H_kell^L}
\end{equation}
where $\text{BC}_{k:\ell}$ refers to a ``bias cancellation'' term,
\begin{equation}
\text{BC}_{k:\ell}=\sum_{t=k+L}^{\tau-1}\frac{v_t}{\ell-k+1}\left\{ \test(X_{t})-\test(Y_{t-L})\right\} .\label{eq:BC_kell^L}
\end{equation}
Recall that $\tau=\inf\{t>L:X_t = Y_{t-L}\}$, for chains generated e.g. by Algorithm~\ref{alg:coupledchains}.
Instead of the above expressions that involve a test function $\test$,
we can view \emph{unbiased MCMC} as providing the signed measure
\begin{equation}
\hat{\pi}({\rm d}x)=\frac{1}{\ell-k+1}\sum_{t=k}^{\ell}\delta_{X_{t}}({\rm d}x)+\sum_{t=k+L}^{\tau-1}\frac{v_t}{\ell-k+1}\left\{ \delta_{X_{t}}-\delta_{Y_{t-L}}\right\} ({\rm d}x),\label{eq:pihat_kell^L}
\end{equation}
as an unbiased approximation of $\pi$. 

If we count the cost of sampling from the kernel $P$ as one unit,
and the cost of sampling from $\bar{P}$ as one unit if the chains
have met already, and two units otherwise, then the random cost of
obtaining \eqref{eq:H_kell^L}, or \eqref{eq:pihat_kell^L}, equals $\max(L,\ell+L-\tau)+2(\tau-L)$
units.

The above unbiased estimators are to be generated independently in
parallel, and averaged to obtain a final approximation of $\pi$.
Since they are unbiased, the mean squared error is equal to the variance.
To compare unbiased estimators with different cost, e.g. to compare
different configurations of the tuning parameters $k,\ell,L$, we can
compute the asymptotic inefficiency defined as the expected cost multiplied
by the variance, as described in \citet{glynn1992asymptotic}; the
lower value, the better.

\subsection{Upper bounds on the distance to stationarity\label{appx:tvupperbounds}}

A by-product of unbiased MCMC is an upper bound on the total variation
distance between the marginal distribution of the chain at time $t$ and the stationary distribution $|\pi_{t}-\pi|_{\text{TV}}$
\citep{joa2020,biswas2019estimating,craiu2022double}, given by
\begin{equation}
|\pi_{t}-\pi|_{\text{TV}}\leq\mathbb{E}\left[\max\left(0,\left\lceil \frac{\tau-L-t}{L}\right\rceil \right)\right].\label{eq:upperboundtv}
\end{equation}
The upper bound can be estimated for any $t\geq 0$, using independent replications of
the meeting time $\tau=\inf\{t>L:X_t=Y_{t-L}\}$ obtained by running coupled chains with a lag $L$.

\section{Theoretical results\label{sec:theory}}

    \setcounter{thm}{0}
    \renewcommand{\thethm}{\Alph{section}.\arabic{thm}}

\subsection{\label{subsec:ass-meeting-times}Assumption on the meeting time}
We first prove Proposition~\ref{prop:tau-moment-survival}, which
can be used to verify Assumption~\ref{assu:tau-moment-kappa}.
\begin{proof}[Proof of Proposition~\ref{prop:tau-moment-survival}]
The first part follows by Markov's inequality:
\[
\mathbb{P}_{x,y}(\tau>t)=\mathbb{P}_{x,y}\left[\tau^{\kappa}\geq(t+1)^{\kappa}\right]\leq\mathbb{E}_{x,y}[\tau^{\kappa}](t+1)^{-\kappa}.
\]
For the second part, we have $\mathbb{P}_{x,y}(\tau<\infty)=1$. Using
Tonelli's theorem,
\begin{align*}
\mathbb{E}_{\pi\otimes\pi}\left[\tau^{\kappa}\right] & =\mathbb{E}_{\pi\otimes\pi}\left[\int_{0}^{\tau}\kappa u^{\kappa-1}{\rm d}u\right]\\
 & =\mathbb{E}_{\pi\otimes\pi}\left[\int_{0}^{\infty}\kappa u^{\kappa-1}{\mathds{1}}_{(0,\tau)}(u){\rm d}u\right]\\
 & =\int_{0}^{\infty}\kappa u^{\kappa-1}\mathbb{P}_{\pi\otimes\pi}(\tau>u){\rm d}u\\
 & =\sum_{i=0}^{\infty}\mathbb{P}_{\pi\otimes\pi}(\tau>i)\int_{i}^{i+1}\kappa u^{\kappa-1}{\rm d}u\\
 & \leq\sum_{i=0}^{\infty}\mathbb{P}_{\pi\otimes\pi}(\tau>i)\kappa(i+1)^{\kappa-1}\\
 & \leq\pi\otimes\pi(\tilde{C})\kappa\sum_{i=0}^{\infty}(i+1)^{\kappa-s-1},
\end{align*}
which is finite since $s>\kappa$, and we conclude.
\end{proof}

\begin{proof}[Proof of Proposition \ref{prop:subgeometric}]
  Applying Proposition \ref{prop:subgeometric:general} below to $\psi (v) =\phi (v) =\vartheta v^{\alpha}$, there exists a constant $\beta$ such that for all $x,y \in {\mathcal X}$ and all $n\in {\mathbb N}$,
  $$
      {\mathbb P}_{x,y} (\tau>n) \leq  \beta  \frac{V^{\alpha}(x)+V^{\alpha}(y)}{(1+n)^{\alpha/(1-\alpha)}}.
      $$
In particular, with this $\phi$ and any $\sigma \in (0,1)$, denoting
$H_{\phi}(t)=\int_{1}^{t} \frac{1}{\phi(u)} {\rm d}u$,
there exists $c_\sigma>0$ such that $H_\phi^{-1}(\sigma n) \geq c_\sigma H_\phi^{-1}(n)$
and concavity and non-negativity of $\phi$ allows us to conclude that $\phi \circ H_\phi^{-1}(\sigma n) \geq c_\sigma \, \phi \circ H_\phi^{-1}(n)$.
Finally, there exists $c'_{\vartheta,\alpha} > 0$ such that $\phi \circ H_\phi^{-1}(n) \geq c'_{\vartheta,\alpha} (1+n)^{\frac{\alpha}{1-\alpha }}$.
According to Proposition 4.3.2 in \citet{douc2018MarkovChains}, the drift condition $P V(x) \leq V(x) -\vartheta V^{\alpha}(x)+ b\mathds{1}(x \in \mathcal{C})$ implies $ \pi(V^{\alpha})<\infty$, which allows to apply Proposition \ref{prop:tau-moment-survival} and the proof is concluded. 
\end{proof}

\begin{prop} \label{prop:subgeometric:general}
Assume that there exist a measurable function 
     $V:\;\mathbb{X}\to [1,\infty)$, a concave, increasing function $\phi:\;[1,\infty)\to (0,\infty)$ continuously differentiable on $(0,\infty)$ such that $\lim_{v \to \infty} \phi' (v) =0$, $b<\infty$ and a small set $\mathcal{C}$ such that we have $\sup_{\mathcal C} V<\infty$ and $d:=\inf_{{\mathcal C}^\complement} \phi \circ V>b$ and
      for all $x\in \mathbb{X}$,
      \begin{equation} \label{eq:drift:subgeo:general}
        P V(x) \leq V(x) -\phi \circ V(x)+ b\mathds{1}(x \in \mathcal{C}).
      \end{equation}
      Assume in addition that Assumption \ref{assumption:meetingprobafromC} holds with the same small set ${\mathcal C}$.
      Then, for any increasing concave function $\psi:\;[1,\infty)\to (0,\infty)$ and any $\sigma \in (0,1-b/d)$,
      there exists a constant $\beta$ such that for all $x,y \in {\mathcal X}$ and all $n \in {\mathbb N}$, 
      $$
      {\mathbb P}_{x,y} (\tau>n) \leq  \beta \frac{\psi\circ V(x)+\psi \circ V(y)}{\psi \circ H_{\phi}^{-1} (\sigma n) }, 
      $$
      where $H_{\phi}(t)=\int_{1}^{t} \frac{1}{\phi(u)} {\rm d}u$. 
    \end{prop}
    In contrast with other similar (but not identical) existing results, our bounds provide, through the function $\psi$, a tradeoff between the rate $\psi \circ H_{\phi}^{-1} (n)$ and the dependence on the initial condition $\psi\circ V(x)+\psi \circ V(y)$. It allows to integrate the bound with respect to $\pi$, provided  that $\pi(\psi \circ V)<\infty$. For example, choosing $\psi=\phi$, and noting that \eqref{eq:drift:subgeo:general} implies $\pi(\phi \circ V)<\infty$, we can directly obtain for all $n\in {\mathbb N}$, 
    \[
\int \pi({\rm d}x) \pi({\rm d}y) {\mathbb P}_{x,y} (\tau>n) \leq  \frac{\beta_{0}}{\phi \circ H_{\phi}^{-1} (\sigma n) },
\]
for some constant $\beta_{0}$.

    \begin{proof}[Proof of Proposition \ref{prop:subgeometric:general}]
      In this proof, for any function $\phi$, we use the notation $H_{\phi}(t)=\int_{1}^{t} \frac{1}{\phi(u)} {\rm d}u$ and $r_{\phi}(t)=\phi \circ H_{\phi}^{-1} (t)$. 
      By Lemma 19.5.3 in \citet{douc2018MarkovChains}, for any $\sigma\in (0,1-b/d)$, $\bar P \bar V \leq \bar V - \bar \phi \circ \bar V + \bar b \mathds{1}_{\bar {\mathcal C}}$ where $\bar \phi = \sigma \phi$, $\bar b=2b$, $\bar V(x,y)= V(x) + V(y)-1$ and $\bar {\mathcal C}={\mathcal C} \times {\mathcal C}$.

      Moreover, applying Proposition 16.1.11 in \citet{douc2018MarkovChains}, and setting
      $V_{k}=H_{\bar \phi}^{-1} (H_{\bar \phi} \circ \bar V+k)- H_{\bar \phi}^{-1} (k)$, we get 
$$
\bar P V_{k+1} + r_{\bar \phi} (k) \leq V_{k} + b' r_{\bar \phi}(k) \mathds{1}_{\bar {\mathcal C}},
$$
where $b'=\bar b r_{\bar{\phi}}(1)/r_{\bar{\phi}}^{2} (0)$. Set $r^{0}_{\bar\phi}(n)=\sum_{j=0}^{n}r_{\bar\phi}(j)$ and define $M=\sup_{k} -r_{\bar \phi}^{0} (k-1) + b' r_{\bar\phi} (k)/\epsilon$, which is finite since $r_{\bar\phi} (k)/r^{0} _{\bar \phi} (k-1) \to 0$ as $k$ tends to infinity.

Define $S_{k}=V_{k}+r^{0}_{\bar \phi} (k-1)+M $. We first show for any $x,y \in {\mathcal X}$, 
 \begin{equation}
  \label{eq:subgeom:one}
  \bar P[\dif S_{k+1}] (x,y) \leq [\dif S_{k}](x,y),
\end{equation}
where $\dif(x,y)=\mathds{1}(x\neq y)$. The coupling kernel $\bar P$ is assumed to be sticky, i.e. $\bar P \dif (x,y)=0$ if $\dif (x,y)=0$. Hence, if $x=y$, the l.h.s. and r.h.s of \eqref{eq:subgeom:one} are null. We now assume that $x\neq y$. Using $\dif  \leq 1$, we get
\begin{align*}
  \bar P [\dif S_{k+1}] (x,y) & \leq \bar P \lrb{V_{k+1} + (r^{0}_{\bar\phi} (k) + M)\dif} (x,y)\\
                              & \leq \bar P V_{k+1} (x,y)+ r_{\bar\phi} (k)+ \lr{r^{0}_{\bar \phi} (k-1) +M } \bar P \dif (x,y) \\
                              &\leq \lrcb{V_{k}+b' r_{\bar \phi} (k) \mathds{1}_{\bar {\mathcal C}} + \lr{r^{0}_{\bar \phi} (k-1) +M} \lr{1 - \epsilon \mathds{1}_{\bar {\mathcal C}}} } (x,y) \\
  & \leq \lrcb{S_{k} + \lr{b' r_{\bar \phi} (k) - \epsilon (r^{0}_{\bar \phi} (k-1) + M)} \mathds{1}_{\bar {\mathcal C}} } (x,y) \leq S_{k}(x,y),
\end{align*}
where we have used Assumption~\ref{assumption:meetingprobafromC}
and the fact that $b' r_{\bar \phi} (k) - \epsilon (r^{0}_{\bar \phi} (k-1) + M) \leq 0$ by definition of $M$. Since $x \neq y$, we have $\dif (x,y)=1$ and hence $  \bar P[\dif S_{k+1}] (x,y) \leq [\dif S_{k}](x,y)$. At this point, \eqref{eq:subgeom:one} is proved for all $x,y \in {\mathcal C}$. Now write
\begin{align*}
  r^{0}_{\bar \phi} (n-1) \bar P^{n} \dif &\leq \bar P^{n} (\dif S_{n}) \leq \bar P^{n-1} (\dif S_{n-1}) \leq \ldots \leq \dif S_{0}\\
  & \leq \dif \lr{\bar V + M} \leq (1+M)\dif \bar V.
\end{align*}
Finally, we have the two inequalities for all $x,y \in {\mathcal X}$,
\begin{align*}
  \bar P^{n} \dif (x,y) &\leq (1+M) \frac{V(x) + V (y)-1}{r^{0}_{\bar \phi} (n-1) }, \\
  \bar P^{n} \dif (x,y) & \leq 1,
\end{align*}
where the last inequality follows from $\dif \leq 1$. Hence, writing $$A_{n}=\left\{(x,y)\in\mathbb{X}\times\mathbb{X}:\; V (x)+ V (y)> H^{-1}_{\bar\phi} (n)\right\},$$ we have 
\begin{align*}
  \bar P^{n} \dif (x,y) &\leq \mathds{1}_{A_{n}} (x,y) + \frac {1+M}{r^{0}_{\bar \phi} (n-1)} \lr{V (x)+ V (y)-1}\mathds{1}_{A^\complement_{n}}  (x,y)\\
  & \leq \frac{\psi \circ \lr{V (x) + V (y)}} {\psi \circ H^{-1}_{\bar \phi} (n)} + \frac{1+M}{r^{0}_{\bar \phi} (n-1)} \frac{\psi\lr{V (x)+V (y)} - \psi(1)}{\psi \circ H^{-1}_{\bar \phi} (n)- \psi (1)} \lrb{H^{-1}_{\bar \phi} (n)-1},
\end{align*}
where the last inequality holds true since $1\leq V(x)+V (y) \leq H^{-1}_{\bar \phi} (n)$ implies, noting that $\psi$ is concave, 
$$
\frac{\psi \lr{V (x)+ V (y)} - \psi (1)}{V (x)+V (y) -1} \geq \frac{\psi \circ H^{-1}_{\bar \phi} (n)-\psi (1)}{H^{-1}_{\bar \phi} (n)-1}. 
$$
Finally, there exists a constant $\beta$ such that 
\begin{align*}
  \bar P^{n} \dif (x,y) &\leq \frac{\psi \lr{V (x)+V (y)}} {\psi \circ H^{-1}_{\bar \phi} (n)} \lr{1+ \frac{1+M} {1-\psi (1)/\psi \circ H^{-1}_{\bar \phi} (2)} \frac{H^{-1}_{\bar \phi} (n)}{r^{0}_{\bar \phi} (n-1)}}
  \\
  &\leq \beta \frac{\psi\lr{V (x)+V (y)}}{\psi \circ H^{-1}_{\bar \phi} (n)},
\end{align*}
where the last inequality follows from the fact that $\bar \phi \circ H^{-1}_{\bar \phi}=\lrb{H^{-1}_{\bar \phi}}'$ which implies that $H^{-1}_{\bar \phi} (n)/r^{0}_{\bar \phi} (n-1)$ is bounded from above by comparison between series and integrals. In addition, noting that $\psi$ is concave, for any $u,v\geq 0$, $\psi(u+v)-\psi(u)\leq \psi(v) - \psi (0)\leq \psi (v)$. Hence,
$$
{\mathbb P}_{x,y} (\tau>n)=  \bar P^{n} \dif (x,y) \leq  \beta \frac{\psi\circ V (x)+\psi\circ V (y)} {\psi \circ H^{-1}_{\bar \phi} (n)}.
$$
We conclude by observing that $\bar{\phi} = \sigma \phi$ implies $H^{-1}_{\bar{\phi}}(n) = H^{-1}_{\phi}(\sigma n)$.
\end{proof}

The following result provides conditions for the finiteness
of moments of $\fishytest_{\star}$.

\begin{prop}[{\citealt[Proposition 21.2.3]{douc2018MarkovChains}}]
\label{prop:normal-conv-g-star}Let $P$ be a Markov kernel with
unique invariant distribution $\pi$. Let $\test_{0}\in L_{0}^{p}(\pi)$
for some $p\geq1$. If
\[
\sum_{t=0}^{\infty}\left\Vert P^{t}\test_{0}\right\Vert _{L^{p}(\pi)}<\infty,
\]
then $\fishytest_{\star}=\sum_{t=0}^{\infty}P^{t}h_{0}$ is fishy
for $\test_{0}$ and $\fishytest_{\star}\in L^{p}(\pi)$.
\end{prop}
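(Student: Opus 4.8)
The plan is a short functional-analytic argument based on the fact that a $\pi$-invariant kernel $P$ acts as a contraction on each $L^{p}(\pi)$. First I would check this contraction property: for $f\in L^{p}(\pi)$, since $t\mapsto|t|^{p}$ is convex and $P(x,\cdot)$ is a probability measure, Jensen's inequality gives
\[
\|Pf\|_{L^{p}(\pi)}^{p}=\int\Big|\int f(y)\,P(x,{\rm d}y)\Big|^{p}\pi({\rm d}x)\le\int\!\!\int|f(y)|^{p}\,P(x,{\rm d}y)\,\pi({\rm d}x)=\|f\|_{L^{p}(\pi)}^{p},
\]
the last equality by invariance of $\pi$; hence $P$ is a bounded, and in particular continuous, linear operator on the Banach space $L^{p}(\pi)$, with $\|P^{t}\|\le1$ for all $t$.

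Next I would set $S_{n}=\sum_{t=0}^{n}P^{t}h_{0}$ and argue that $(S_{n})_{n}$ is Cauchy in $L^{p}(\pi)$: for $m<n$, the triangle inequality gives $\|S_{n}-S_{m}\|_{L^{p}(\pi)}\le\sum_{t=m+1}^{n}\|P^{t}h_{0}\|_{L^{p}(\pi)}$, which vanishes as $m,n\to\infty$ by the summability hypothesis. By completeness of $L^{p}(\pi)$, $S_{n}$ converges in $L^{p}(\pi)$ to a limit, which I would call $g_{\star}$; since $\|P^{t}h_{0}\|_{L^{1}(\pi)}\le\|P^{t}h_{0}\|_{L^{p}(\pi)}$, the series also converges absolutely $\pi$-a.e. by monotone convergence, so $g_{\star}$ is a $\pi$-a.e. representative of $\sum_{t=0}^{\infty}P^{t}h_{0}$ as in Definition~\ref{def:g-star}, and $g_{\star}\in L^{p}(\pi)$.

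Finally I would derive the Poisson equation by commuting $P$ with the limit, which is legitimate by the continuity established in the first step: $Pg_{\star}=\lim_{n}PS_{n}=\lim_{n}\sum_{t=0}^{n}P^{t+1}h_{0}=\lim_{n}(S_{n+1}-h_{0})=g_{\star}-h_{0}$ in $L^{p}(\pi)$, i.e. $(I-P)g_{\star}=h_{0}$, so $g_{\star}$ is fishy for $h_{0}$. If desired, one also notes $\pi(g_{\star})=0$, since $f\mapsto\pi(f)$ is continuous on $L^{p}(\pi)\subseteq L^{1}(\pi)$ and $\pi(P^{t}h_{0})=\pi(h_{0})=0$ for every $t$, so in fact $g_{\star}\in L_{0}^{p}(\pi)$. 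I do not foresee a genuine obstacle here: the entire content is the contraction property of $P$ on $L^{p}(\pi)$ together with the standard completeness argument for absolutely convergent series in a Banach space; the only point deserving a line of care is the justification that $P$ passes through the $L^{p}$-limit, which is precisely the boundedness/continuity of $P$ recorded at the outset.
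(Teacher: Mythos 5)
Your proof is correct and is essentially the standard argument for this result, which the paper does not prove itself but imports by citation from \citet[Proposition 21.2.3]{douc2018MarkovChains}: summability of the $L^{p}(\pi)$-norms plus completeness of $L^{p}(\pi)$ gives convergence of the partial sums, and the contraction property of $P$ on $L^{p}(\pi)$ lets you pass $P$ through the limit to obtain $(I-P)\fishytest_{\star}=\test_{0}$. All steps, including the $\pi$-a.e.\ identification of the pointwise series with the $L^{p}$ limit and the observation that $\pi(\fishytest_{\star})=0$, are sound.
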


We consider here what Assumption~\ref{assu:tau-moment-kappa} implies
about the corresponding $P$ and its fishy functions. First, we observe
that this assumption implies that $P$ is aperiodic, and also ergodic
of degree $2$ \citep[as defined in][Section 6.4]{nummelinbook},
which implies e.g. that a CLT holds for ergodic averages of bounded
functions.
\begin{prop}
\label{prop:int-tv}If Assumption~\ref{assu:tau-moment-kappa} holds
then
\begin{equation}
\int\pi({\rm d}x)\left\Vert P^{t}(x,\cdot)-\pi\right\Vert _{{\rm TV}}\leq\mathbb{E}_{\pi\otimes\pi}[\tau^{\kappa}](t+1)^{-\kappa},\label{eq:int-tv}
\end{equation}
and $P$ is aperiodic, and ergodic of degree 2.
\end{prop}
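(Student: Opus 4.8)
The plan is to derive \eqref{eq:int-tv} from the coupling inequality together with Markov's inequality, and then to read off aperiodicity and ergodicity of degree 2 as consequences. For the bound, run the faithful coupled kernel $\bar P$ from $(X_0,Y_0)=(x,Y_0)$ with $Y_0\sim\pi$. By $\pi$-invariance $Y_t\sim\pi$ for every $t$, while $X_t\sim P^t(x,\cdot)$, so the coupling inequality gives $\|P^t(x,\cdot)-\pi\|_{\mathrm{TV}}\le\mathbb{P}(X_t\neq Y_t)$; since $\bar P$ is faithful the two chains coincide once they have met, so $\{X_t\neq Y_t\}\subseteq\{\tau>t\}$ and hence $\|P^t(x,\cdot)-\pi\|_{\mathrm{TV}}\le\mathbb{P}_{\delta_x\otimes\pi}(\tau>t)$. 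Integrating in $x$ against $\pi$ replaces the initial law by $\pi\otimes\pi$, yielding $\int\pi(\mathrm dx)\,\|P^t(x,\cdot)-\pi\|_{\mathrm{TV}}\le\mathbb{P}_{\pi\otimes\pi}(\tau>t)$, and since $\tau$ is integer-valued, Markov's inequality — equivalently, the first part of Proposition~\ref{prop:tau-moment-survival} integrated against $\pi\otimes\pi$ — bounds this by $\mathbb{E}_{\pi\otimes\pi}[\tau^\kappa](t+1)^{-\kappa}$, which is \eqref{eq:int-tv}.

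For aperiodicity, note that \eqref{eq:int-tv} forces $\int\pi(\mathrm dx)\,\|P^t(x,\cdot)-\pi\|_{\mathrm{TV}}\to0$, hence $\|P^t(x,\cdot)-\pi\|_{\mathrm{TV}}\to0$ for $\pi$-a.e.\ $x$. If $P$, which is $\pi$-irreducible by Assumption~\ref{assu:tau-moment-kappa}, had period $d\ge2$ with cyclic classes $C_0,\dots,C_{d-1}$, then $\pi(C_0)=1/d>0$; choosing $x\in C_0$ for which both the cyclic structure and the above convergence hold, one has $P^t(x,C_0)=0$ for every $t$ not divisible by $d$, so $\|P^t(x,\cdot)-\pi\|_{\mathrm{TV}}\ge\pi(C_0)=1/d$ along infinitely many $t$, a contradiction. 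Hence $d=1$.

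For ergodicity of degree 2, sum \eqref{eq:int-tv} over $t\ge0$: since $\kappa>1$ we have $\sum_{t\ge0}(t+1)^{-\kappa}<\infty$, so $\sum_{t\ge0}\int\pi(\mathrm dx)\,\|P^t(x,\cdot)-\pi\|_{\mathrm{TV}}<\infty$, and Tonelli's theorem then gives $\sum_{t\ge0}\|P^t(x,\cdot)-\pi\|_{\mathrm{TV}}<\infty$ for $\pi$-a.e.\ $x$. Combined with aperiodicity, this is precisely the condition characterizing ergodicity of degree 2 in \citet[Section~6.4]{nummelinbook}, which in particular entails the CLT for ergodic averages of bounded functions.

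I do not expect a genuine obstacle here; the care points are bookkeeping rather than substance. The coupling inequality must be invoked with $Y_0\sim\pi$ exactly, so that the second chain is stationary; the passage to $\pi\otimes\pi$ must be carried out after — not before — applying the coupling inequality pointwise in $x$; and the equivalence used in the final step (that $\pi$-a.e.\ summability of $\|P^t(x,\cdot)-\pi\|_{\mathrm{TV}}$, together with aperiodicity, is what ``ergodic of degree 2'' means) should be cited from Nummelin rather than reproved.
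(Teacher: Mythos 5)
Your argument is correct and follows essentially the same route as the paper: the coupling inequality with a stationary partner chain plus Markov's inequality gives the pointwise bound, integrating against $\pi$ gives \eqref{eq:int-tv}, and aperiodicity and degree-2 ergodicity are read off from the resulting pointwise convergence and summability (the paper simply cites \citet[Lemma~9.3.9]{douc2018MarkovChains} for aperiodicity and \citet[Theorem~II.4.1]{chen1999limit} for degree 2, where you argue the cyclic-class contradiction directly and appeal to Nummelin). One small point of bookkeeping: the sufficient condition actually invoked for ergodicity of degree 2 is the \emph{integrated} summability $\sum_{t\geq0}\int\pi({\rm d}x)\left\Vert P^{t}(x,\cdot)-\pi\right\Vert_{{\rm TV}}<\infty$ --- which you do establish before passing to Tonelli --- rather than the $\pi$-a.e.\ pointwise summability you describe as the defining characterization, so the citation should be to that integrated form (and, similarly, deducing $\pi$-a.e.\ convergence from $L^{1}(\pi)$ convergence tacitly uses that $t\mapsto\left\Vert P^{t}(x,\cdot)-\pi\right\Vert_{{\rm TV}}$ is non-increasing, or simply the pointwise bound you already have).
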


\begin{proof}
Using Proposition~\ref{prop:tau-moment-survival},
\begin{align}
\left\Vert P^{t}(x,\cdot)-\pi\right\Vert _{{\rm TV}} & =\left\Vert P^{t}(x,\cdot)-\pi P^{t}\right\Vert _{{\rm TV}}\nonumber \\
 & \leq\int\pi({\rm d}y)\mathbb{P}_{x,y}(\tau>t)\nonumber \\
 & \leq(t+1)^{-\kappa}\mathbb{E}_{x,\pi}[\tau^{\kappa}],\label{eq:Pt-pi-tv}
\end{align}
and since Assumption~\ref{assu:tau-moment-kappa} provides $\kappa>1$
and $\mathbb{E}_{x,\pi}[\tau^{\kappa}]<\infty$ for $\pi$-almost
all $x$, $\left\Vert P^{t}(x,\cdot)-\pi\right\Vert _{{\rm TV}}$
goes to zero as $t\to\infty$
for $\pi$-almost all $x$ and so $P$ is aperiodic by \citet[Lemma~9.3.9]{douc2018MarkovChains}.
We also observe that (\ref{eq:Pt-pi-tv}) implies (\ref{eq:int-tv})
and since $\kappa>1$ we have
\[
\sum_{t=0}^{\infty}\int\pi({\rm d}x)\left\Vert P^{t}(x,\cdot)-\pi\right\Vert _{{\rm TV}}<\infty,
\]
and $P$ is ergodic of degree 2 by \citet[Theorem II.4.1]{chen1999limit}.
\end{proof}
We now turn to the implication of Assumption~\ref{assu:tau-moment-kappa}
on properties of $g_{\star}$. In particular, we see that $\kappa>p$
with sufficiently many moments of $h$ implies that $g_{\star}\in L^{p}(\pi)$.
\begin{thm}
\label{thm:poisson-Lp}Under Assumption~\ref{assu:tau-moment-kappa},
let $h\in L_{0}^{m}(\pi)$ for some $m>\kappa/(\kappa-1)$. For $p\geq1$
such that $\frac{1}{p}>\frac{1}{m}+\frac{1}{\kappa}$, $g_{\star}\in L_{0}^{p}(\pi)$.
\end{thm}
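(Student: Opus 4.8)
The plan is to verify the hypothesis of Proposition~\ref{prop:normal-conv-g-star} with $h_{0}$ playing the role of $\test_{0}$ --- note that $h\in L_{0}^{m}(\pi)$ means $\pi(h)=0$, so $h_{0}=h$ --- namely that $\sum_{t=0}^{\infty}\|P^{t}h_{0}\|_{L^{p}(\pi)}<\infty$. Once this is established, Proposition~\ref{prop:normal-conv-g-star} gives that $g_{\star}=\sum_{t=0}^{\infty}P^{t}h_{0}$ is fishy for $h_{0}$ and lies in $L^{p}(\pi)$; since $p\geq1$ and $\pi$ is a probability measure, $L^{p}(\pi)\subseteq L^{1}(\pi)$, so $g_{\star}\in L^{1}(\pi)$ and hence $\pi(g_{\star})=0$ by the observation following Definition~\ref{def:g-star}, i.e.\ $g_{\star}\in L_{0}^{p}(\pi)$. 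Before anything else I would record that a $p\geq1$ with $\frac{1}{p}>\frac{1}{m}+\frac{1}{\kappa}$ exists exactly because $m>\kappa/(\kappa-1)$, equivalently $\frac{1}{m}+\frac{1}{\kappa}<1$, and that any such $p$ automatically satisfies $p<m$ (in particular $p\leq m$, so $h_{0}\in L_{0}^{m}(\pi)\subseteq L_{0}^{p}(\pi)$, as required by Proposition~\ref{prop:normal-conv-g-star}).

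The engine of the proof is a coupling representation of $P^{t}h_{0}$. Running the coupled chain under $\mathbb{P}_{x,\pi}$, i.e.\ with $X_{0}=x$ and $Y_{0}\sim\pi$, invariance of $\pi$ gives $Y_{t}\sim\pi$ for all $t$, so $\mathbb{E}_{x,\pi}[h_{0}(Y_{t})]=\pi(h_{0})=0$, while $\mathbb{E}_{x,\pi}[h_{0}(X_{t})]=P^{t}h_{0}(x)$; and because the coupling is faithful, $h_{0}(X_{t})=h_{0}(Y_{t})$ on $\{\tau\leq t\}$. Hence
\[
P^{t}h_{0}(x)=\mathbb{E}_{x,\pi}\!\left[(h_{0}(X_{t})-h_{0}(Y_{t}))\,\mathds{1}(\tau>t)\right],
\]
and by Jensen's inequality for the convex map $u\mapsto|u|^{p}$ ($p\geq1$), together with $\mathds{1}(\tau>t)^{p}=\mathds{1}(\tau>t)$,
\[
|P^{t}h_{0}(x)|^{p}\leq\mathbb{E}_{x,\pi}\!\left[|h_{0}(X_{t})-h_{0}(Y_{t})|^{p}\,\mathds{1}(\tau>t)\right].
\]
Integrating over $x\sim\pi$ (Tonelli, since the integrand is nonnegative) and recognising $(X_{0},Y_{0})\sim\pi\otimes\pi$ yields $\|P^{t}h_{0}\|_{L^{p}(\pi)}^{p}\leq\mathbb{E}_{\pi\otimes\pi}[|h_{0}(X_{t})-h_{0}(Y_{t})|^{p}\,\mathds{1}(\tau>t)]$.

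Next I would apply Hölder's inequality with the conjugate exponents $m/p$ and $m/(m-p)$, both positive since $1\leq p<m$:
\[
\mathbb{E}_{\pi\otimes\pi}\!\left[|h_{0}(X_{t})-h_{0}(Y_{t})|^{p}\,\mathds{1}(\tau>t)\right]\leq\mathbb{E}_{\pi\otimes\pi}\!\left[|h_{0}(X_{t})-h_{0}(Y_{t})|^{m}\right]^{p/m}\,\mathbb{P}_{\pi\otimes\pi}(\tau>t)^{1-p/m}.
\]
Here $\mathbb{E}_{\pi\otimes\pi}[|h_{0}(X_{t})-h_{0}(Y_{t})|^{m}]\leq 2^{m}\pi(|h_{0}|^{m})<\infty$ because $X_{t}\sim\pi$ and $Y_{t}\sim\pi$ marginally, and $\mathbb{P}_{\pi\otimes\pi}(\tau>t)\leq\mathbb{E}_{\pi\otimes\pi}[\tau^{\kappa}](t+1)^{-\kappa}$ by Markov's inequality (as in Proposition~\ref{prop:tau-moment-survival}). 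Taking $p$-th roots, $\|P^{t}h_{0}\|_{L^{p}(\pi)}\leq C\,(t+1)^{-\kappa(1/p-1/m)}$ for a finite constant $C$ depending only on $\pi(|h_{0}|^{m})$, $\mathbb{E}_{\pi\otimes\pi}[\tau^{\kappa}]$, $m$ and $p$. Since the chosen $p$ satisfies $\kappa(1/p-1/m)>1$, the series $\sum_{t\geq0}\|P^{t}h_{0}\|_{L^{p}(\pi)}$ converges, and the proof concludes via Proposition~\ref{prop:normal-conv-g-star}.

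I do not expect a genuine obstacle. The only points needing care are: inserting the indicator $\mathds{1}(\tau>t)$, which rests on the faithful-coupling identity $X_{t}=Y_{t}$ for $t\geq\tau$; checking the Hölder exponents are admissible, which is exactly where $p<m$ --- and hence the standing hypothesis $m>\kappa/(\kappa-1)$ --- is used; and using $p\geq1$ to pass from $g_{\star}\in L^{p}(\pi)$ to $g_{\star}\in L^{1}(\pi)$ so that $\pi(g_{\star})=0$ may be invoked.
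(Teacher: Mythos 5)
Your proposal is correct and follows essentially the same route as the paper's proof: the coupling representation $P^{t}h_{0}(x)=\mathbb{E}_{x,\pi}[(h_{0}(X_{t})-h_{0}(Y_{t}))\mathds{1}(\tau>t)]$, Jensen's inequality, H\"older with exponent $\delta=p/m$, the Markov bound on $\mathbb{P}_{\pi\otimes\pi}(\tau>t)$, summability from $\kappa(1/p-1/m)>1$, and Proposition~\ref{prop:normal-conv-g-star}. The only cosmetic difference is that you apply the triangle inequality at the $L^{m}$ level rather than via $(a+b)^{p}\leq 2^{p-1}(a^{p}+b^{p})$ before H\"older, and you are slightly more explicit than the paper about why $g_{\star}\in L^{p}(\pi)$ upgrades to $g_{\star}\in L_{0}^{p}(\pi)$.
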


\begin{proof}
For arbitrary $t\in\mathbb{N}$, since $\pi(h)=0$ and $\pi P^{t}=\pi$, for $\pi$-almost all $x$,
\begin{align*}
\left|P^{t}h(x)\right| & =\left|\mathbb{E}_{x,\pi}\left[h(X_{t})-h(Y_{t})\right]\right|\\
 & \leq\mathbb{E}_{x,\pi}\left[\mathds{1}(\tau>t)\left\{ \left|h(X_{t})\right|+\left|h(Y_{t})\right|\right\} \right],
\end{align*}
and hence by Jensen's inequality and $(a+b)^{p}\leq2^{p-1}(a^{p}+b^{p})$
for $a,b\geq0$,
\begin{align*}
\left|P^{t}h(x)\right|^{p} & \leq\mathbb{E}_{x,\pi}\left[\mathds{1}(\tau>t)\left\{ \left|h(X_{t})\right|+\left|h(Y_{t})\right|\right\} \right]^{p}\\
 & \leq\mathbb{E}_{x,\pi}\left[\mathds{1}(\tau>t)\left\{ \left|h(X_{t})\right|+\left|h(Y_{t})\right|\right\} ^{p}\right]\\
 & \leq2^{p-1}\mathbb{E}_{x,\pi}\left[\mathds{1}(\tau>t)\left\{ \left|h(X_{t})\right|^{p}+\left|h(Y_{t})\right|^{p}\right\} \right].
\end{align*}
Using H\"older's inequality with $\delta=p/m$, which is in $(0,1)$
by assumption,
\begin{align*}
\left\Vert P^{t}h\right\Vert _{L^{p}(\pi)}^{p} & =\int\pi({\rm d}x)\left|P^{t}h(x)\right|^{p}\\
 & \leq2^{p-1}\mathbb{E}_{\pi\otimes\pi}\left[\mathds{1}(\tau>t)\left\{ \left|h(X_{t})\right|^{p}+\left|h(Y_{t})\right|^{p}\right\} \right]\\
 & \leq2^{p}\mathbb{P}_{\pi\otimes\pi}(\tau>t)^{1-\delta}\mathbb{E}_{\pi}\left[\left|h(X_{t})\right|^{\frac{p}{\delta}}\right]^{\delta}\\
 & =2^{p}\mathbb{P}_{\pi\otimes\pi}(\tau>t)^{1-\frac{p}{m}}\pi(\left|h\right|^{m})^{\frac{p}{m}}.
\end{align*}
Since $\kappa(m-p)/(mp)>1$ by assumption, using Proposition~\ref{prop:tau-moment-survival},
\[
\sum_{t\geq0}\left\Vert P^{t}h\right\Vert _{L^{p}(\pi)}\leq2\left\Vert h\right\Vert _{L^{m}(\pi)}\sum_{t\geq0}\mathbb{P}_{\pi\otimes\pi}(\tau>t)^{\frac{m-p}{mp}}<\infty,
\]
and we conclude by appealing to Proposition~\ref{prop:normal-conv-g-star}. 
\end{proof}
\begin{rem}
The results above and Theorem~\ref{thm:clt-kappa} rely only on the
existence of meeting times with polynomial survival functions, so
one can deduce that they hold for any Markov kernel $P$ such that
$\left\Vert P^{t}(x,\cdot)-\pi\right\Vert _{{\rm TV}}\leq M(x)(t+1)^{-s}$
with $s>\kappa$ and $\pi(M)<\infty$, since then
\[
\left\Vert P^{t}(x,\cdot)-P^{t}(y,\cdot)\right\Vert _{{\rm TV}}\leq\left\{ M(x)+M(y)\right\} (t+1)^{-s}.
\]
Conversely, for such a Markov kernel there exists a possibly non-Markovian
coupling that would satisfy Assumption~\ref{assu:tau-moment-kappa}
\citep[see, e.g.,][]{griffeath1975maximal}, but we do not pursue
this here.
\end{rem}

\subsection{\label{subsec:clt-poisson}Proof of Theorem~\ref{thm:clt-kappa}}

\begin{lem}[{\citealt[Theorem~1.1]{rio1993covariance}}]
\label{lem:rio-cov}Let $X$ and $Y$ be integrable random variables
such that
\[
\alpha=\sup_{A,B}\mathbb{P}(X\in A,Y\in B)-\mathbb{P}(X\in A)\mathbb{P}(Y\in B)=\sup_{A,B}{\rm cov}({\mathds{1}}_{A}(X),{\mathds{1}}_{B}(Y)),
\]
where the supremum is over all measurable sets. Then
\[
\left|{\rm cov}(X,Y)\right|\leq2\int_{0}^{2\alpha}Q_{X}(u)Q_{Y}(u){\rm d}u\leq4\int_{0}^{\alpha}Q_{X}(u)Q_{Y}(u){\rm d}u,
\]
where for a random variable $Z$, $Q_{Z}$ is the tail quantile function
of $|Z|$, i.e. $Q_{Z}(u)=\inf\{t:\mathbb{P}(\left|Z\right|>t)\leq u\}$.
\end{lem}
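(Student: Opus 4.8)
The plan is to reduce the covariance to a double integral of covariances of three-valued random variables built from the sets in the definition of $\alpha$, to bound each of those covariances in two complementary ways (one from the mixing coefficient, one from integrability), and to finish with a Fubini computation that inverts the survival functions into $Q_X$ and $Q_Y$. We may assume ${\rm cov}(X,Y)$ is well defined, i.e. $\mathbb{E}|XY|<\infty$ (otherwise there is nothing to prove); write $\mathcal{A}=\sigma(X)$, $\mathcal{B}=\sigma(Y)$, so that $\alpha=\alpha(\mathcal{A},\mathcal{B})$. For $s\ge0$ set $u_X(s):=\mathrm{sgn}(X)\,{\bf 1}_{\{|X|>s\}}={\bf 1}_{\{X>s\}}-{\bf 1}_{\{X<-s\}}$, and define $u_Y(t)$ analogously. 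Then $X=\int_0^\infty u_X(s)\,{\rm d}s$ and $Y=\int_0^\infty u_Y(t)\,{\rm d}t$ almost surely, $|u_X(s)|={\bf 1}_{\{|X|>s\}}$ so $\mathbb{E}|u_X(s)|=\mathbb{P}(|X|>s)=:\bar F_{|X|}(s)$, and since $\int_0^\infty\!\int_0^\infty\mathbb{E}|u_X(s)u_Y(t)|\,{\rm d}s\,{\rm d}t=\mathbb{E}|XY|<\infty$, Fubini gives
\[
{\rm cov}(X,Y)=\int_0^\infty\!\!\int_0^\infty {\rm cov}\bigl(u_X(s),u_Y(t)\bigr)\,{\rm d}s\,{\rm d}t .
\]

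The second step is a two-sided pointwise bound on the integrand. Each $u_X(s)$ is a difference of indicators of two $\mathcal{A}$-sets and $u_Y(t)$ a difference of indicators of two $\mathcal{B}$-sets; expanding the covariance into its four cross terms and using $|{\rm cov}({\bf 1}_A,{\bf 1}_B)|\le\alpha$ for $A\in\mathcal{A}$, $B\in\mathcal{B}$ yields $|{\rm cov}(u_X(s),u_Y(t))|\le4\alpha$. Independently, since $|u_Y(t)|\le1$,
\[
|{\rm cov}(u_X(s),u_Y(t))|\le\mathbb{E}|u_X(s)u_Y(t)|+\mathbb{E}|u_X(s)|\,\mathbb{E}|u_Y(t)|\le2\,\mathbb{E}|u_X(s)|=2\bar F_{|X|}(s),
\]
and symmetrically $\le2\bar F_{|Y|}(t)$. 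Hence $|{\rm cov}(u_X(s),u_Y(t))|\le2\min\bigl(2\alpha,\bar F_{|X|}(s),\bar F_{|Y|}(t)\bigr)$.

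The third step is the quantile inversion. Substituting the bound above into the identity of the first step, writing $\min(c,a,b)=\int_0^\infty{\bf 1}_{\{v<c\}}{\bf 1}_{\{v<a\}}{\bf 1}_{\{v<b\}}\,{\rm d}v$, and using that $\int_0^\infty{\bf 1}_{\{v<\bar F_{|X|}(s)\}}\,{\rm d}s=Q_X(v)$ (the survival function of $|X|$ and $Q_X$ being generalized inverses of one another), Fubini gives
\[
|{\rm cov}(X,Y)|\le2\int_0^\infty\!\!\int_0^\infty\min\bigl(2\alpha,\bar F_{|X|}(s),\bar F_{|Y|}(t)\bigr)\,{\rm d}s\,{\rm d}t=2\int_0^{2\alpha}Q_X(u)Q_Y(u)\,{\rm d}u,
\]
which is the first claimed inequality. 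For the second, $u\mapsto Q_X(u)Q_Y(u)$ is nonnegative and non-increasing and $2\alpha\le1$ (indeed $\alpha\le1/4$), so $\int_\alpha^{2\alpha}Q_XQ_Y\le\int_0^\alpha Q_XQ_Y$ and therefore $2\int_0^{2\alpha}Q_XQ_Y\le4\int_0^\alpha Q_XQ_Y$.

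The delicate point, and the reason for the signed layer-cake representation, is the sharp constant. A naive decomposition $X=X^+-X^-$, $Y=Y^+-Y^-$ into four covariances of nonnegative variables combined with the bound $|{\rm cov}(U,V)|\le\int_0^\alpha Q_U Q_V$ for such variables only yields the weaker estimate $|{\rm cov}(X,Y)|\le4\int_0^\alpha Q_XQ_Y\,{\rm d}u$ directly; recovering $2\int_0^{2\alpha}Q_XQ_Y$ genuinely requires the signed variables $u_X(s),u_Y(t)$ together with the two-sided bound $|{\rm cov}(u_X(s),u_Y(t))|\le\min(4\alpha,2\bar F_{|X|}(s),2\bar F_{|Y|}(t))$, and some care is needed with the generalized-inverse identities (right-continuity, null sets) in the last step. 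This is exactly \citet[Theorem~1.1]{rio1993covariance}, which may alternatively simply be invoked.
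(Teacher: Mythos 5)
Your proof is correct. Note that the paper does not prove this lemma at all: it is imported verbatim as \citet[Theorem~1.1]{rio1993covariance}, so there is no internal argument to compare against. What you have written is essentially Rio's own proof, and every step checks out: the signed layer-cake representation $X=\int_0^\infty\bigl({\bf 1}_{\{X>s\}}-{\bf 1}_{\{X<-s\}}\bigr)\,{\rm d}s$ together with Fubini (legitimate under $\mathbb{E}|XY|<\infty$ and integrability of $X$, $Y$); the two-sided bound $|{\rm cov}(u_X(s),u_Y(t))|\leq 2\min\bigl(2\alpha,\bar F_{|X|}(s),\bar F_{|Y|}(t)\bigr)$, where the $4\alpha$ bound uses that $|{\rm cov}({\bf 1}_A,{\bf 1}_B)|\leq\alpha$ (which follows from the paper's signless definition of $\alpha$ by replacing $A$ with its complement — worth one sentence to make explicit); the identity $\int_0^\infty{\bf 1}_{\{v<\bar F_{|X|}(s)\}}\,{\rm d}s=Q_X(v)$, which holds because $\{s:\bar F_{|X|}(s)>v\}=[0,Q_X(v))$ by right-continuity of the survival function; and the final monotonicity argument $\int_\alpha^{2\alpha}Q_XQ_Y\leq\int_0^\alpha Q_XQ_Y$. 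Your closing remark is also apt: the crude decomposition into $X^\pm$, $Y^\pm$ loses the constant, and the signed variables $u_X(s)$ are what make the sharper bound $2\int_0^{2\alpha}Q_XQ_Y$ attainable — though for the paper's downstream use (Lemma~\ref{lem:Plkh-tail-quantile} and Theorem~\ref{thm:clt-kappa}) only the weaker form $4\int_0^\alpha Q_XQ_Y$ is actually needed.
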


\begin{lem}
\label{lem:int-tv-rhok-cov}Assume that, for all $k\in\mathbb{N}$,
\[
\int\pi({\rm d}x)\left\Vert P^{k}(x,\cdot)-\pi\right\Vert _{{\rm TV}}\leq\rho_{k}.
\]
Then for $v$, $h$ measurable functions, and $A$, $B$ measurable
sets,
\[
\left|{\rm cov}\left({\mathds{1}}_{A}(v(X_{0})),{\mathds{1}}_{B}(h(X_{k}))\right)\right|\leq\rho_{k}.
\]
\end{lem}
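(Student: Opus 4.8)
The plan is to combine stationarity, the Markov property, and the dual description of total variation distance. Since the covariance is taken under the stationary chain, $X_0\sim\pi$; write $f:={\bf 1}_A\circ g$ and $\phi:={\bf 1}_B\circ h$, both of which take values in $[0,1]$.

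First I would rewrite the covariance by conditioning on $X_0$ and using $\mathbb{E}_\pi[\phi(X_k)]=\pi(\phi)$:
\[
{\rm cov}\left(f(X_0),\phi(X_k)\right)=\int\pi({\rm d}x)\,f(x)\left\{P^k\phi(x)-\pi(\phi)\right\},
\]
with $P^k\phi(x):=\int P^k(x,{\rm d}y)\,\phi(y)$. Next, because $\phi$ takes values in $[0,1]$, the dual characterisation $\left\Vert P^k(x,\cdot)-\pi\right\Vert_{{\rm TV}}=\sup_{0\le\psi\le1}\left|P^k\psi(x)-\pi(\psi)\right|$ yields $\left|P^k\phi(x)-\pi(\phi)\right|\le\left\Vert P^k(x,\cdot)-\pi\right\Vert_{{\rm TV}}$ pointwise in $x$. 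Using also $|f|\le1$ and integrating against $\pi$ gives
\[
\left|{\rm cov}\left(f(X_0),\phi(X_k)\right)\right|\le\int\pi({\rm d}x)\,\left|f(x)\right|\left|P^k\phi(x)-\pi(\phi)\right|\le\int\pi({\rm d}x)\left\Vert P^k(x,\cdot)-\pi\right\Vert_{{\rm TV}}\le\rho_k,
\]
which is the assertion.

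There is no real obstacle here; the only point needing care is the normalisation convention for $\left\Vert\cdot\right\Vert_{{\rm TV}}$ — we use the convention under which it equals the supremum of $\left|P^k\psi(x)-\pi(\psi)\right|$ over $\psi$ valued in $[0,1]$ (equivalently, over sets $A$ of $\left|P^k(x,A)-\pi(A)\right|$), so that indicator-type test functions are admissible with no spurious factor of $2$; this is the same convention as in Proposition~\ref{prop:int-tv}. Note that, unlike Lemma~\ref{lem:rio-cov}, this estimate needs neither the mixing coefficient $\alpha$ nor any quantile truncation, precisely because the test functions are bounded by $1$ by construction.
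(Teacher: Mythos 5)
Your proof is correct and follows essentially the same route as the paper's: condition on $X_{0}$, use the Markov property and stationarity to reduce the covariance to $\int\pi({\rm d}x)\,f(x)\{P^{k}\phi(x)-\pi(\phi)\}$, bound the first factor by $1$ and the second by the total variation distance (your $\phi={\bf 1}_{B}\circ h$ is exactly the indicator of $h^{-1}(B)$ used in the paper). Your remark on the normalisation of $\left\Vert\cdot\right\Vert_{{\rm TV}}$ is consistent with the convention used elsewhere in the paper.
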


\begin{proof}
Denote the preimage of $B$ under $h$ as $h^{-1}(B)=\{x:h(x)\in B\}$.
We have

\begin{eqnarray*}
 &  & \left|\mathbb{P}_{\pi}(v(X_{0})\in A,h(X_{k})\in B)-\mathbb{P}_{\pi}(v(X_{0})\in A)\mathbb{P}_{\pi}(h(X_{k})\in B)\right|\\
 & = & \left|\mathbb{E}_{\pi}\left[{\mathds{1}}_{A}(v(X_{0}))\mathbb{E}\left[\left\{ {\mathds{1}}_{B}(h(X_{k}))-\mathbb{P}_{\pi}(h(X_{k})\in B)\right\} \mid\sigma(X_{0})\right]\right]\right|\\
 & = & \left|\mathbb{E}_{\pi}\left[{\mathds{1}}_{A}(v(X_{0}))\left\{ P^{k}(X_{0},h^{-1}(B))-\pi(h^{-1}(B))\right\} \right]\right|\\
 & \leq & \mathbb{E}_{\pi}\left[\left|P^{k}(X_{0},h^{-1}(B))-\pi(h^{-1}(B))\right|\right]\\
 & \leq & \mathbb{E}_{\pi}\left[\left\Vert P^{k}(X_{0},\cdot)-\pi\right\Vert _{{\rm TV}}\right]\\
 & \leq & \rho_{k}.
\end{eqnarray*}
\end{proof}
\begin{lem}
\label{lem:Plkh-tail-quantile}Assume that
\[
\int\pi({\rm d}x)\left\Vert P^{k}(x,\cdot)-\pi\right\Vert _{{\rm TV}}\leq\rho_{k}\leq1,\qquad k\in\mathbb{N}.
\]
Then with $h$ such that $\pi(h)=0$, and $k,\ell\in\mathbb{N}$,
\[
\left|\pi(P^{\ell}h\cdot P^{k}h)\right|\leq4\int_{0}^{\rho_{\ell}\wedge\rho_{k}}Q_{0}(u)^{2}{\rm d}u,
\]
where $Q_{0}$ is the tail quantile function of $|h(X_{0})|$.
\end{lem}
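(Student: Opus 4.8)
The plan is to represent $\pi(P^\ell h\cdot P^k h)$ as a covariance between two quantities that are conditionally independent given a common $\pi$-distributed state, then bound the associated strong-mixing coefficient by $\rho_\ell\wedge\rho_k$ and appeal to Rio's covariance inequality (Lemma~\ref{lem:rio-cov}).

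First I would draw $X_0\sim\pi$ and, conditionally on $X_0$, run two independent copies of the chain with kernel $P$ started at $X_0$, setting $U=h(X^A_\ell)$ and $V=h(X^B_k)$. Conditioning on $X_0$ and using conditional independence, $\mathbb{E}[UV]=\mathbb{E}_\pi[P^\ell h(X_0)\,P^k h(X_0)]=\pi(P^\ell h\cdot P^k h)$. By invariance $\mathbb{E}[U]=\pi P^\ell(h)=\pi(h)=0$ and $\mathbb{E}[V]=0$, so $\mathbb{E}[UV]={\rm cov}(U,V)$; moreover $X^A_\ell\sim\pi$ and $X^B_k\sim\pi$, so $|U|$ and $|V|$ share the tail quantile function $Q_0$.

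Next I would control $\alpha=\sup_{A,B}{\rm cov}(\mathbf{1}_A(U),\mathbf{1}_B(V))$. With $\tilde A=h^{-1}(A)$ and $\tilde B=h^{-1}(B)$, the same conditioning gives ${\rm cov}(\mathbf{1}_A(U),\mathbf{1}_B(V))=\mathbb{E}_\pi\big[\{P^\ell(X_0,\tilde A)-\pi(\tilde A)\}\{P^k(X_0,\tilde B)-\pi(\tilde B)\}\big]$, the two cross terms vanishing because $\pi P^\ell=\pi P^k=\pi$. The first factor has absolute value at most $\|P^\ell(X_0,\cdot)-\pi\|_{\rm TV}$ and at most $1$, and symmetrically for the second, so the integrand is dominated both by $\|P^\ell(X_0,\cdot)-\pi\|_{\rm TV}$ and by $\|P^k(X_0,\cdot)-\pi\|_{\rm TV}$; integrating against $\pi$ and using the hypothesis yields $\big|{\rm cov}(\mathbf{1}_A(U),\mathbf{1}_B(V))\big|\le\rho_\ell\wedge\rho_k$ for all $A,B$, hence $\alpha\le\rho_\ell\wedge\rho_k$.

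Finally, applying Lemma~\ref{lem:rio-cov} to the integrable pair $U,V$ (integrability since $h\in L^1(\pi)$), $\big|{\rm cov}(U,V)\big|\le 4\int_0^\alpha Q_0(u)^2\,{\rm d}u\le 4\int_0^{\rho_\ell\wedge\rho_k}Q_0(u)^2\,{\rm d}u$, using $\alpha\le\rho_\ell\wedge\rho_k$ and $Q_0\ge0$ in the last step; since ${\rm cov}(U,V)=\pi(P^\ell h\cdot P^k h)$, this proves the bound, and if $\int_0^{\rho_\ell\wedge\rho_k}Q_0(u)^2\,{\rm d}u=\infty$ there is nothing to show. The only genuinely delicate points are the cancellation of the cross terms via stationarity and the double domination of the integrand that produces the minimum $\rho_\ell\wedge\rho_k$; the rest is routine bookkeeping.
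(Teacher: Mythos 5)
Your proof is correct, but it takes a genuinely different route from the paper's. The paper works along a single stationary chain: it sets $g=P^{\ell}h$, applies Lemma~\ref{lem:int-tv-rhok-cov} and then Lemma~\ref{lem:rio-cov} to the pair $(g(X_{0}),h(X_{k}))$, which yields $4\int_{0}^{\rho_{k}}Q_{g(X_{0})}(u)Q_{0}(u)\,{\rm d}u$; it must then invoke the quantile-domination lemma \citet[Lemma~21.A.3]{douc2018MarkovChains} to replace $Q_{g(X_{0})}$ by $Q_{0}$ (via Cauchy--Schwarz), and finally obtains the minimum $\rho_{k}\wedge\rho_{\ell}$ only by repeating the argument with $k$ and $\ell$ interchanged. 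You instead build an auxiliary pair $U=h(X^{A}_{\ell})$, $V=h(X^{B}_{k})$ from two conditionally independent chains sharing the $\pi$-distributed seed $X_{0}$; since both $X^{A}_{\ell}$ and $X^{B}_{k}$ are then marginally $\pi$-distributed, both tail quantile functions are exactly $Q_{0}$ with no comparison lemma needed, and your double domination of the integrand $\{P^{\ell}(X_{0},\tilde A)-\pi(\tilde A)\}\{P^{k}(X_{0},\tilde B)-\pi(\tilde B)\}$ (one factor by the TV distance, the other by $1$, in either order) delivers $\alpha\leq\rho_{\ell}\wedge\rho_{k}$ in a single pass. Your identities $\mathbb{E}[UV]=\pi(P^{\ell}h\cdot P^{k}h)$ and the vanishing of the cross terms by stationarity check out, and the final appeal to Rio's inequality is legitimate (with the same mild caveat as in the paper that the conclusion is vacuous if the right-hand side is infinite). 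What your construction buys is economy -- it dispenses with both the external quantile lemma and the Cauchy--Schwarz step and makes the symmetry in $(k,\ell)$ manifest; what the paper's version buys is that it stays entirely within the standard single-chain mixing framework, reusing Lemma~\ref{lem:int-tv-rhok-cov} verbatim rather than introducing a bespoke coupling of two chains.
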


\begin{proof}
We follow the same strategy as in \citet[Lemma~21.4.3]{douc2018MarkovChains}.
By Lemma~\ref{lem:int-tv-rhok-cov} with $v=P^{\ell}h$, we obtain
\[
\alpha(v(X_{0}),h(X_{k}))=\sup_{A,B} {\rm cov}\left({\mathds{1}}_{A}(v(X_{0})),{\mathds{1}}_{B}(h(X_{k}))\right)\leq\rho_{k},
\]
for use in Lemma~\ref{lem:rio-cov}. It follows that
\begin{align*}
\left|\pi(P^{\ell}h\cdot P^{k}h)\right| & =\left|{\rm cov}(v(X_{0}),h(X_{k}))\right|\\
 & \leq4\int_{0}^{\rho_{k}}Q_{v(X_{0})}(u)Q_{h(X_{k})}(u){\rm d}u,
\end{align*}
where $Q_{Z}$ is the tail quantile function of $|Z|$. Since $h(X_{k})$
has the same distribution as $h(X_{0})$, $Q_{h(X_{k})}=Q_{0}$. On
the other hand by \citet[Lemma~21.A.3]{douc2018MarkovChains} we have for all $a\in [0,1]$,
$\int_{0}^{a}Q_{v(X_{0})}(u)^{2}{\rm d}u\leq\int_{0}^{a}Q_{0}(u)^{2}{\rm d}u$.
Hence, by Cauchy--Schwarz, we have
\begin{align*}
\left|\pi(P^{\ell}h\cdot P^{k}h)\right| & \leq4\int_{0}^{\rho_{k}}Q_{v(X_{0})}(u)Q_{h(X_{k})}(u){\rm d}u\\
 & \leq4\left\{ \int_{0}^{\rho_{k}}Q_{v(X_{0})}(u)^{2}{\rm d}u\right\} ^{1/2}\left\{ \int_{0}^{\rho_{k}}Q_{0}(u)^{2}{\rm d}u\right\} ^{1/2}\\
 & \leq4\int_{0}^{\rho_{k}}Q_{0}(u)^{2}{\rm d}u.
\end{align*}
By interchanging the use of $k$ and $\ell$, we obtain the final
bound.
\end{proof}
The following lemma is similar to \citet[Theorem~21.4.4]{douc2018MarkovChains}.
\begin{lem}
\label{lem:rho-sum-g-L2-avar}Assume that
\[
\int\pi({\rm d}x)\left\Vert P^{n}(x,\cdot)-\pi\right\Vert _{{\rm TV}}\leq\rho_{n}\leq1,\qquad n\in\mathbb{N},
\]
and let $h\in L_{0}^{m}(\pi)$. Then
\begin{enumerate}
\item $\left|\pi(P^{k}h\cdot P^{\ell}h)\right|\leq4\left\Vert h\right\Vert _{L^{m}(\pi)}^{2}\frac{m}{m-2}\left(\rho_{k}\wedge\rho_{\ell}\right)^{\frac{m-2}{m}}$.
\item If $g_{\star}\in L_{0}^{1}(\pi)$ and $\sum_{k=0}^{\infty}\rho_{k}^{\frac{m-2}{m}}<\infty$,
then $\pi(h\cdot g_{\star})<\infty$.
\end{enumerate}
\end{lem}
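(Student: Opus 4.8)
The plan is to derive part~(1) directly from Lemma~\ref{lem:Plkh-tail-quantile} by bounding the truncated integral of the squared tail quantile function of $|h(X_0)|$, and then to use part~(1) together with the representation $g_\star = \sum_{\ell\geq 0} P^\ell h$ to obtain part~(2) by a summable majorant argument.

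For part~(1), start from Lemma~\ref{lem:Plkh-tail-quantile}, which gives $|\pi(P^k h\cdot P^\ell h)| \leq 4\int_0^{\rho_k\wedge\rho_\ell} Q_0(u)^2\,{\rm d}u$, where $Q_0$ is the tail quantile function of $|h(X_0)|$. The task is to control $\int_0^a Q_0(u)^2\,{\rm d}u$ for $a = \rho_k\wedge\rho_\ell \leq 1$. Since $h\in L^m_0(\pi)$, we have $\int_0^1 Q_0(u)^m\,{\rm d}u = \pi(|h|^m) = \|h\|_{L^m(\pi)}^m < \infty$. Apply H\"older's inequality on $[0,a]$ with exponents $m/2$ and $m/(m-2)$:
\[
\int_0^a Q_0(u)^2\,{\rm d}u \leq \left(\int_0^a Q_0(u)^m\,{\rm d}u\right)^{2/m} \left(\int_0^a 1\,{\rm d}u\right)^{(m-2)/m} \leq \|h\|_{L^m(\pi)}^2\, a^{(m-2)/m}.
\]
Wait --- this gives the bound with constant $4$ rather than $4m/(m-2)$; the extra factor $m/(m-2)$ in the statement presumably comes from a slightly different truncation estimate, e.g. bounding $\int_0^a Q_0(u)^2\,{\rm d}u$ via Markov's inequality $\mathbb{P}(|h(X_0)|>t)\leq \|h\|_{L^m}^m t^{-m}$, so that $Q_0(u) \leq \|h\|_{L^m} u^{-1/m}$ and $\int_0^a Q_0(u)^2\,{\rm d}u \leq \|h\|_{L^m}^2 \int_0^a u^{-2/m}\,{\rm d}u = \|h\|_{L^m}^2 \frac{m}{m-2} a^{(m-2)/m}$. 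Either route yields $|\pi(P^k h\cdot P^\ell h)| \leq 4\|h\|_{L^m(\pi)}^2 \frac{m}{m-2}(\rho_k\wedge\rho_\ell)^{(m-2)/m}$, which is part~(1). Note that $m>2$ is needed here for the exponent $(m-2)/m$ to be positive and the integral $\int_0^a u^{-2/m}{\rm d}u$ to converge; this should hold under the ambient hypotheses when the lemma is invoked.

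For part~(2), write $g_\star = \sum_{\ell\geq 0} P^\ell h$ (well-defined in $L^1_0(\pi)$ by assumption). Then, modulo justifying the interchange of the integral with the infinite sum,
\[
|\pi(h\cdot g_\star)| = \left|\sum_{\ell\geq 0} \pi(h\cdot P^\ell h)\right| = \left|\sum_{\ell\geq 0} \pi(P^0 h\cdot P^\ell h)\right| \leq \sum_{\ell\geq 0} 4\|h\|_{L^m(\pi)}^2 \frac{m}{m-2}\, \rho_\ell^{(m-2)/m} < \infty,
\]
using part~(1) with $k=0$ (so $\rho_0\wedge\rho_\ell = \rho_\ell$ since $\rho_0\leq 1$ and, more to the point, $\rho_\ell\leq\rho_0$ may fail but $\rho_0\wedge\rho_\ell\leq\rho_\ell$ always, and we can also simply bound $\rho_0\wedge\rho_\ell\leq\rho_\ell$) and the hypothesis $\sum_{\ell\geq 0}\rho_\ell^{(m-2)/m}<\infty$. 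The interchange of $\pi(\cdot)$ with $\sum_\ell$ is justified by dominated convergence applied to the partial sums $g_\star^{(n)} = \sum_{\ell=0}^n P^\ell h$: one has $g_\star^{(n)}\to g_\star$ in $L^1(\pi)$ (hence a subsequence converges $\pi$-a.e.), while $\pi(|h|\cdot|P^\ell h|)$ is summable in $\ell$ by part~(1), so $\pi(h\cdot g_\star^{(n)}) = \sum_{\ell=0}^n\pi(h\cdot P^\ell h)$ converges to the claimed series; combined with $L^1$-convergence this identifies the limit as $\pi(h\cdot g_\star)$ (using that $h\cdot g_\star^{(n)}\to h\cdot g_\star$ and is dominated in the appropriate sense, or more carefully that $\pi(h\cdot g_\star^{(n)})\to\pi(h\cdot g_\star)$ since $|\pi(h(g_\star - g_\star^{(n)}))|\leq \sum_{\ell>n}\pi(|h||P^\ell h|)\to 0$). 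The main obstacle is precisely this last justification --- making the interchange rigorous when $h$ is only in $L^m$ and $g_\star$ only in $L^1$ --- but the per-term bound from part~(1) is exactly what makes the tail of the series vanish, so it goes through cleanly.
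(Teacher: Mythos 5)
Your proposal is correct and follows essentially the same route as the paper: part (1) is obtained from Lemma~\ref{lem:Plkh-tail-quantile} together with the Markov-inequality bound $Q_{0}(u)\leq\left\Vert h\right\Vert _{L^{m}(\pi)}u^{-1/m}$ (which is exactly the paper's derivation of the factor $\frac{m}{m-2}$), and part (2) sums the bound from part (1) over $\ell$ with $k=0$. Your additional care in justifying the interchange $\pi(h\cdot g_{\star})=\sum_{\ell\geq0}\pi(h\cdot P^{\ell}h)$ is a welcome refinement of a step the paper states without comment, and your observation that $m>2$ is implicitly required is accurate.
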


\begin{proof}
We have by Markov's inequality
\[
\mathbb{P}_{\pi}(\left|h(X_{0})\right|>t)\leq\pi(|h|^{m})/t^{m},
\]
from which we may deduce that $Q_{0}(u)\leq\left\Vert h\right\Vert _{L^{m}(\pi)}u^{-1/m}$
and so
\[
\int_{0}^{a}Q_{0}(u)^{2}{\rm d}u\leq\left\Vert h\right\Vert _{L^{m}(\pi)}^{2}\frac{m}{m-2}a^{\frac{m-2}{m}}.
\]
By Lemma~\ref{lem:Plkh-tail-quantile}, it follows that
\[
\left|\pi(P^{k}h\cdot P^{\ell}h)\right|\leq4\int_{0}^{\rho_{k}\wedge\rho_{\ell}}Q_{0}(u)^{2}{\rm d}u\leq4\left\Vert h\right\Vert _{L^{m}(\pi)}^{2}\frac{m}{m-2}\left(\rho_{k}\wedge\rho_{\ell}\right)^{\frac{m-2}{m}}.
\]
Moreover, if $g_{\star}\in L_{0}^{1}(\pi)$,
\begin{align*}
\pi(h\cdot g_{\star}) & =\sum_{k=0}^{\infty}\pi(h\cdot P^{k}h)\\
 & \leq\sum_{k=0}^{\infty}\left|\pi(h\cdot P^{k}h)\right|\\
 & \leq4\left\Vert h\right\Vert _{L^{p}(\pi)}^{2}\frac{m}{m-2}\sum_{k=0}^{\infty}\rho_{k}^{\frac{m-2}{m}},
\end{align*}
from which we may conclude.
\end{proof}
\begin{proof}[Proof of Theorem~\ref{thm:clt-kappa}]
Without loss of generality, assume $\pi(h)=0$. By Theorem~\ref{thm:poisson-Lp},
we have $g_{\star}\in L_{0}^{1}(\pi)$ since $\kappa>1$ and $m>2\kappa/(\kappa-1)>\kappa/(\kappa-1)$.
Then, by Proposition~\ref{prop:int-tv} and Lemma~\ref{lem:rho-sum-g-L2-avar}
we obtain $\pi(h\cdot g_{\star})<\infty$. For the CLT, we appeal
to \citet{maxwell2000central}, for which it is sufficient to show
that
\begin{equation}
\sum_{n\geq1}n^{-3/2}\left\Vert V_{n}h\right\Vert _{L^{2}(\pi)}<\infty,\label{eq:mw-cond}
\end{equation}
where $V_{n}f=\sum_{k=0}^{n-1}P^{k}f$, and $v(P,h)$ is then equal
to $\lim_{n\to\infty}n^{-1}\mathbb{E}_{\pi}\left[\left\{ \sum_{k=1}^{n}h(X_{k})\right\} ^{2}\right]$.
We find using Lemma~\ref{lem:rho-sum-g-L2-avar},
\begin{align*}
\left\Vert V_{n} h\right\Vert _{L^{2}(\pi)}^{2} & =\int\pi({\rm d}x)\left\{ \sum_{k=0}^{n-1}P^{k}h(x)\right\} ^{2}\\
 & =\int\pi({\rm d}x)\sum_{k=0}^{n-1}\sum_{\ell=0}^{n-1}P^{k}h(x)P^{\ell}h(x)\\
 & \leq\sum_{k=0}^{n-1}\sum_{\ell=0}^{n-1}\left|\int\pi({\rm d}x)P^{k}h(x)P^{\ell}h(x)\right|\\
 & \leq4\left\Vert h\right\Vert _{L^{m}(\pi)}^{2}\frac{m}{m-2}\sum_{k=0}^{n-1}\sum_{\ell=0}^{n-1}\left(\rho_{k}\wedge\rho_{\ell}\right)^{\frac{m-2}{m}}\\
 & \leq4\left\Vert h\right\Vert _{L^{m}(\pi)}^{2}\frac{m}{m-2}\sum_{k=0}^{n-1}\sum_{\ell=0}^{n-1}a_{k}\wedge a_{\ell},
\end{align*}
where we define $a_{k}=\min\{1,\mathbb{E}_{\pi\otimes\pi}[\tau^{\kappa}]^{\frac{m-2}{m}}(k+1)^{-\kappa\frac{m-2}{m}}\}$
by Proposition~\ref{prop:int-tv}. Since $(a_{k})$ is non-increasing,
we may deduce that
\[
\sum_{k=0}^{n-1}\sum_{\ell=0}^{n-1}a_{k}\wedge a_{\ell}=\sum_{k=0}^{n-1}(2k+1)a_{k}.
\]
It follows that
\[
\sum_{k=0}^{n-1}(2k+1)a_{k}\leq2\sum_{k=0}^{n-1}(k+1)a_{k}\leq2\mathbb{E}_{\pi\otimes\pi}[\tau^{\kappa}]^{\frac{m-2}{m}}\sum_{k=0}^{n-1}(k+1)^{1-\kappa\frac{m-2}{m}},
\]
and this is $O(n^{1-2\epsilon})$ for some $\epsilon>0$ since $m>2\kappa/(\kappa-1)$.
Hence, $\left\Vert V_{n}h\right\Vert _{L^{2}(\pi)}=O(n^{1/2-\epsilon})$
and \eqref{eq:mw-cond} is satisfied. Since $\pi(h\cdot g_{\star})<\infty$,
this implies that $\lim_{n\to\infty}n^{-1}\mathbb{E}\left[\left\{ \sum_{i=1}^{n}h(X_{i})\right\} ^{2}\right]=\pi(h\cdot g_{\star})-\pi(h^{2})$
by \citet[Lemma~21.2.7]{douc2018MarkovChains}, and we conclude.
\end{proof}

\subsection{\label{subsec:approx-fishy-function}Unbiased approximation of fishy functions}

The following technical lemma will be useful to obtain bounds on the
moments of the fishy function estimator in Definition \ref{def:Gyx}.
For a random variable $X$, we denote $\left\Vert X\right\Vert _{L^{p}}:=\mathbb{E}\left[\left|X\right|^{p}\right]^{\frac{1}{p}}$.
\begin{lem}
\label{lem:tech-new} Let $(U_{i})$, $(V_{i})$ be sequences of random
variables, $N$ a non-negative, integer-valued and almost surely finite
random variable, and $p\geq1$. Then for any $\delta_{0},\delta_{1}\in(0,1)$
with $\delta_{0}+\delta_{1}<1$,
\[
\mathbb{E}\left[\left|\sum_{i=0}^{N-1}U_{i}+V_{i}\right|^{p}\right]^{\frac{1}{p}}\leq\zeta\left(\frac{1-\delta_{1}}{\delta_{0}}\right)\mathbb{E}\left(N^{\frac{p}{\delta_{0}}}\right)^{\frac{1-\delta_{1}}{p}}\left\{ \sup_{i}\left\Vert U_{i}\right\Vert _{L^{\frac{p}{\delta_{1}}}}+\sup_{i}\left\Vert V_{i}\right\Vert _{L^{\frac{p}{\delta_{1}}}}\right\} ,
\]
where $\zeta(s):=\sum_{n=1}^{\infty}n^{-s}$ is finite for $s>1$.
\end{lem}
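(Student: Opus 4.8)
The plan is to reduce the statement to Minkowski's inequality in $L^p$, after controlling the random number of summands $N$ by a deterministic weight sequence. We may assume the right-hand side is finite, as otherwise there is nothing to prove. Write $W_i := U_i + V_i$; since $N$ is almost surely finite,
\[
\left| \sum_{i=0}^{N-1} W_i \right| \le \sum_{i=0}^{\infty} \bigl( |U_i| + |V_i| \bigr)\, \mathbb{I}(i < N).
\]
The key observation is that on $\{i<N\}=\{N\ge i+1\}$ one has $N/(i+1)\ge 1$, so that $\mathbb{I}(i<N)\le \bigl(N/(i+1)\bigr)^{\theta}$ for every $\theta>0$. I would take $\theta := (1-\delta_1)/\delta_0$, which is strictly larger than $1$ precisely because $\delta_0+\delta_1<1$; this is the only place the hypothesis on $\delta_0,\delta_1$ is used.

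Next I would apply the triangle inequality in $L^p$ (Minkowski), valid for a countable sum of nonnegative functions by monotone convergence:
\[
\left\| \sum_{i=0}^{N-1} W_i \right\|_{L^p} \le \left\| \sum_{i=0}^{\infty} \frac{N^{\theta}\bigl(|U_i|+|V_i|\bigr)}{(i+1)^{\theta}} \right\|_{L^p} \le \sum_{i=0}^{\infty} \frac{1}{(i+1)^{\theta}} \Bigl( \bigl\| N^{\theta} |U_i| \bigr\|_{L^p} + \bigl\| N^{\theta} |V_i| \bigr\|_{L^p} \Bigr).
\]
For each term, H\"older's inequality with conjugate exponents $1/(1-\delta_1)$ and $1/\delta_1$ gives
\[
\bigl\| N^{\theta} |U_i| \bigr\|_{L^p} = \mathbb{E}\bigl[ N^{\theta p} |U_i|^{p} \bigr]^{1/p} \le \mathbb{E}\bigl[ N^{\theta p/(1-\delta_1)} \bigr]^{(1-\delta_1)/p} \, \| U_i \|_{L^{p/\delta_1}},
\]
and by the choice of $\theta$ we have $\theta p/(1-\delta_1) = p/\delta_0$, so the first factor is exactly $\mathbb{E}[N^{p/\delta_0}]^{(1-\delta_1)/p}$; the same bound holds with $V_i$ in place of $U_i$.

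Finally I would bound $\|U_i\|_{L^{p/\delta_1}}\le \sup_i\|U_i\|_{L^{p/\delta_1}}$ and likewise for $V_i$, pull these suprema out of the sum, and use $\sum_{i=0}^{\infty}(i+1)^{-\theta}=\sum_{n=1}^{\infty}n^{-\theta}=\zeta(\theta)=\zeta\bigl((1-\delta_1)/\delta_0\bigr)<\infty$. Combining the displays then yields the claimed inequality. There is no serious obstacle here; the only points requiring care are the elementary pointwise bound $\mathbb{I}(i<N)\le (N/(i+1))^{\theta}$ and the bookkeeping that makes the H\"older exponents line up with the moments $p/\delta_0$ and $p/\delta_1$ appearing in the statement.
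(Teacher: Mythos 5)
Your proof is correct and follows essentially the same route as the paper's: Minkowski's inequality for the countable sum, H\"older with exponents $1/\delta_1$ and $1/(1-\delta_1)$, and a Markov/Chebyshev-type control of the event $\{N>i\}$ yielding the factor $(i+1)^{-(1-\delta_1)/\delta_0}$ and hence $\zeta((1-\delta_1)/\delta_0)$. The only (cosmetic) difference is that you apply the bound $\mathbb{I}(i<N)\leq (N/(i+1))^{\theta}$ pointwise before H\"older, whereas the paper applies H\"older to the indicator first and then invokes Markov's inequality on $\mathbb{P}(N>i)$; both give the identical final constant.
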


\begin{proof}
By Minkowski's inequality, H\"older's inequality and Markov's inequality,
\begin{align*}
\mathbb{E}\left[\left|\sum_{i=0}^{N-1}U_{i}+V_{i}\right|^{p}\right]^{\frac{1}{p}} & =\left\Vert \sum_{i=0}^{N-1}U_{i}+V_{i}\right\Vert _{L^{p}}\\
 & \leq\left\Vert \sum_{i=0}^{\infty}\left|U_{i}\right|\mathds{1}(N>i)+\left|V_{i}\right|\mathds{1}(N>i)\right\Vert _{L^{p}}\\
 & \leq\sum_{i=0}^{\infty}\left\Vert \left|U_{i}\right|\mathds{1}(N>i)\right\Vert _{L^{p}}+\left\Vert \left|V_{i}\right|\mathds{1}(N>i)\right\Vert _{L^{p}}\\
 & \leq\sum_{i=0}^{\infty}\mathbb{P}(N>i)^{\frac{1-\delta_{1}}{p}}\left\{ \left\Vert U_{i}\right\Vert _{L^{\frac{p}{\delta_{1}}}}+\left\Vert V_{i}\right\Vert _{L^{\frac{p}{\delta_{1}}}}\right\} \\
 & \leq\sum_{i=0}^{\infty}\left\{ \frac{\mathbb{E}\left(N^{\frac{p}{\delta_{0}}}\right)}{(i+1)^{\frac{p}{\delta_{0}}}}\right\} ^{\frac{1-\delta_{1}}{p}}\left\{ \left\Vert U_{i}\right\Vert _{L^{\frac{p}{\delta_{1}}}}+\left\Vert V_{i}\right\Vert _{L^{\frac{p}{\delta_{1}}}}\right\} \\
 & \leq\zeta\left(\frac{1-\delta_{1}}{\delta_{0}}\right)\mathbb{E}\left[N^{\frac{p}{\delta_{0}}}\right]^{\frac{1-\delta_{1}}{p}}\left\{ \sup_{i}\left\Vert U_{i}\right\Vert _{L^{\frac{p}{\delta_{1}}}}+\sup_{i}\left\Vert V_{i}\right\Vert _{L^{\frac{p}{\delta_{1}}}}\right\} .
\end{align*}
\end{proof}

The following lemma employs dominated convergence to justify the interchange
of expectation and infinite sum, and thereby ensure that $G_{y}(x)$
is an unbiased estimator of $g_{y}(x)$.

\begin{lem}
\label{lem:unbiasedness}Under Assumption~\ref{assu:tau-moment-kappa},
let $h\in L^{m}(\pi)$ for some $m>\kappa/(\kappa-1)$. For $\pi$-almost
all $x$ and $y$, if $\mathbb{P}_{x,y}(\tau<\infty)=1$ and
\begin{equation}
\mathbb{E}_{x,y}\left[\sum_{t=0}^{\tau-1}\left|\test(X_{t})\right|+\left|\test(Y_{t})\right|\right]<\infty,\label{eq:unbiased-condition}
\end{equation}
then $\mathbb{E}\left[G_{y}(x)\right]=\fishytest_{y}(x)$.
\end{lem}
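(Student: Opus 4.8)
The plan is to realize $G_y(x)$ as an almost-sure limit of partial sums and to justify the interchange of limit and expectation by dominated convergence, with the dominating function supplied exactly by condition~\eqref{eq:unbiased-condition}. First I would fix $(x,y)$ in the full-measure set where $\mathbb{P}_{x,y}(\tau<\infty)=1$, \eqref{eq:unbiased-condition} holds, and — invoking Theorem~\ref{thm:poisson-Lp} with $p=1$, which applies since $m>\kappa/(\kappa-1)$ — the function $g_\star$ is well-defined (so $g_\star(x)$, $g_\star(y)$ are finite and $g_y(x)=g_\star(x)-g_\star(y)$ makes sense). Since $\mathbb{P}_{x,y}(\tau<\infty)=1$ and $X_t=Y_t$ for all $t\geq\tau$ by faithfulness of $\bar P$, we have the pathwise identity
\[
G_y(x)=\sum_{t=0}^{\tau-1}\{h(X_t)-h(Y_t)\}=\sum_{t=0}^{\infty}\{h_0(X_t)-h_0(Y_t)\},
\]
where replacing $h$ by $h_0=h-\pi(h)$ changes nothing because the two constants cancel termwise.

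Next I would set $S_n:=\sum_{t=0}^{n\wedge\tau-1}\{h(X_t)-h(Y_t)\}$, so that $S_n\to G_y(x)$ almost surely as $n\to\infty$. The envelope is $|S_n|\leq\sum_{t=0}^{\tau-1}\{|h(X_t)|+|h(Y_t)|\}=:W$, and by hypothesis~\eqref{eq:unbiased-condition} we have $\mathbb{E}_{x,y}[W]<\infty$. Dominated convergence then gives $\mathbb{E}_{x,y}[G_y(x)]=\lim_{n\to\infty}\mathbb{E}_{x,y}[S_n]$. It remains to compute the limit. For the finite sum $S_n=\sum_{t=0}^{\infty}\mathbb{I}(\tau>t)\{h_0(X_t)-h_0(Y_t)\}$, integrability of $W$ lets me interchange expectation and the (now absolutely convergent) sum, and use the tower property: conditioning on the event $\{\tau>t\}$ is awkward, so instead I would write $\mathbb{I}(\tau>t)h_0(X_t)=h_0(X_t)-\mathbb{I}(\tau\leq t)h_0(X_t)$, and on $\{\tau\leq t\}$ we have $X_t=Y_t$, hence $\mathbb{I}(\tau\leq t)\{h_0(X_t)-h_0(Y_t)\}=0$; therefore $\mathbb{E}_{x,y}[S_n]=\sum_{t=0}^{n-1}\{\mathbb{E}_{x,y}[h_0(X_t)]-\mathbb{E}_{x,y}[h_0(Y_t)]\}=\sum_{t=0}^{n-1}\{P^th_0(x)-P^th_0(y)\}$, using that $X$ and $Y$ are each marginally Markov with kernel $P$ started from $x$ and $y$ respectively. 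Letting $n\to\infty$ and recalling $g_\star=\sum_{t\geq0}P^th_0$ is convergent (Theorem~\ref{thm:poisson-Lp}), this yields $\mathbb{E}_{x,y}[G_y(x)]=g_\star(x)-g_\star(y)=g_y(x)$.

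The only genuinely delicate point is the interchange of expectation and the infinite sum in the last display: one must be sure the partial-sum expectations $\mathbb{E}_{x,y}[S_n]$ converge to $g_\star(x)-g_\star(y)$ and not merely that each $\mathbb{E}_{x,y}[S_n]$ is a partial sum of a telescoping-looking series. Here the subtlety is that $\sum_t P^t h_0(x)$ is only conditionally, not absolutely, convergent in general for fixed $x$; however Theorem~\ref{thm:poisson-Lp} guarantees $\sum_t\|P^th_0\|_{L^1(\pi)}<\infty$, so the partial sums $\sum_{t=0}^{n-1}P^th_0$ converge to $g_\star$ in $L^1(\pi)$ and hence, along a subsequence, $\pi$-a.e.; combined with the a.s.\ convergence $S_n\to G_y(x)$ and the dominated-convergence identity $\mathbb{E}_{x,y}[G_y(x)]=\lim_n\mathbb{E}_{x,y}[S_n]$, we may identify the limit as $g_\star(x)-g_\star(y)$ for $\pi\otimes\pi$-almost all $(x,y)$, which is all that is claimed. (In the companion Theorem~\ref{thm:fishy-estimator} one upgrades \eqref{eq:unbiased-condition} to an explicit moment bound via Lemma~\ref{lem:tech-new}, but for this lemma \eqref{eq:unbiased-condition} is assumed outright, so no such estimate is needed here.)
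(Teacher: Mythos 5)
Your proposal is correct and follows essentially the same route as the paper's proof: realize $G_y(x)$ as the a.s.\ limit of partial sums dominated by the envelope in \eqref{eq:unbiased-condition}, apply dominated convergence, and identify $\lim_n\mathbb{E}_{x,y}[S_n]$ with $g_\star(x)-g_\star(y)$ via Theorem~\ref{thm:poisson-Lp}. Your closing remark on pointwise convergence of $\sum_t P^t h_0$ is a reasonable extra precaution, and in fact the normal convergence $\sum_t\|P^th_0\|_{L^1(\pi)}<\infty$ underlying Theorem~\ref{thm:poisson-Lp} already gives absolute convergence of the series $\pi$-almost everywhere, so no subsequence extraction is needed.
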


\begin{proof}
Fix $x$ and $y$, let $G_{n}=\sum_{t=0}^{n}h(X_{t})-h(Y_{t})$ with
$(X_{0},Y_{0})=(x,y)$. Then
\[
\mathbb{E}_{x,y}\left[G_{n}\right]=\sum_{t=0}^{n}P^{t}h(x)-P^{t}h(y)=\sum_{t=0}^{n}P^{t}h_{0}(x)-P^{t}h_{0}(y).
\]
Since $m>\kappa/(\kappa-1)$, Theorem~\ref{thm:poisson-Lp} provides
that $g_{\star}\in L_{0}^{1}(\pi)$, and hence for $\pi$-almost all
$x$ and $y$ we have
\[
\lim_{n\to\infty}\mathbb{E}_{x,y}\left[G_{n}\right]=g_{\star}(x)-g_{\star}(y)=g_{y}(x).
\]
Since $\mathbb{P}_{x,y}(\tau<\infty)=1$ we have $G_{n}\to G_{y}(x)$
$\mathbb{P}_{x,y}$-almost surely as $n\to\infty$, so $\mathbb{E}[G_{y}(x)]=\mathbb{E}_{x,y}\left[\lim_{n\to\infty}G_{n}\right]$.
For any $n\in\mathbb{N}$,
\[
\left|G_{n}\right|\leq\sum_{t=0}^{\tau-1}\left|h(X_{t})\right|+\left|h(Y_{t})\right|,
\]
and so the assumed integrability of the right-hand side implies, by
dominated convergence, that

\[
\mathbb{E}\left[G_{y}(x)\right]=\mathbb{E}_{x,y}\left[\lim_{n\to\infty}G_{n}\right]=\lim_{n\to\infty}\mathbb{E}_{x,y}\left[G_{n}\right]=g_{y}(x).
\]
\end{proof}
The following lemma is used several times to ensure that expectations
of functions of $X_{t}$ are uniformly bounded in $t$ under reasonable
conditions. In the first case, the conclusion is a result of stability
properties of $\pi$-invariant Markov chains started at almost all points,
while in the second case regularity is imposed by ensuring that the initial distribution $\mu$
cannot place too much probability in possibly problematic regions;
for example it guarantees that $\phi\in L^{1}(\mu)$.
It is possible that one can weaken the condition in the second statement to
${\rm d}\mu/{\rm d}\pi\in L^{p}(\pi)$ for some $p\geq1$, but this would require
stronger conditions on $\phi$ and hence complicate subsequent results.
\begin{lem}
\label{lem:sup-finite}Let $0\leq\phi\in L^{1}(\pi)$. Under Assumption~\ref{assu:tau-moment-kappa},
\begin{enumerate}
\item for $\pi$-almost all $x$
\[
\sup_{t\geq0}\mathbb{E}_{x}\left[\phi(X_{t})\right]<\infty,
\]
\item if $\mu\ll\pi$ is a probability measure such that ${\rm d}\mu/{\rm d}\pi\leq M<\infty$,
then
\[
\sup_{t\geq0}\mathbb{E}_{\mu}\left[\phi(X_{t})\right]\leq M\pi(\phi).
\]
\end{enumerate}
\end{lem}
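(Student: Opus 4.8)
The plan is to treat the two parts separately; the second is immediate, and the first rests on a renewal decomposition together with stability theory for $\pi$-invariant chains.

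For the second part I would simply change measure. Since $\phi\ge 0$ we have $P^t\phi\ge 0$, and since $\pi P^t=\pi$, for every $t\ge 0$
\[
\mathbb{E}_{\mu}\left[\phi(X_t)\right]=\int\mu({\rm d}x)\,P^t\phi(x)=\int\pi({\rm d}x)\,\frac{{\rm d}\mu}{{\rm d}\pi}(x)\,P^t\phi(x)\le M\int\pi({\rm d}x)\,P^t\phi(x)=M\,\pi(P^t\phi)=M\,\pi(\phi),
\]
which is the claimed bound, uniformly in $t$.

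For the first part, the first step is a renewal decomposition around an accessible atom. By Assumption~\ref{assu:tau-moment-kappa} the kernel $P$ is $\pi$-irreducible with invariant probability $\pi$, hence positive recurrent, and by Proposition~\ref{prop:int-tv} aperiodic; on a full absorbing set it is Harris, so (possibly after passing to a skeleton, and then Nummelin splitting) there is a proper accessible atom $\alpha$ with $\pi(\alpha)>0$. Let $\sigma_\alpha=\inf\{n\ge 1:X_n\in\alpha\}$ and write $P^t\phi(x)=\mathbb{E}_x[\phi(X_t)\mathbb{I}(\sigma_\alpha>t)]+\mathbb{E}_x[\phi(X_t)\mathbb{I}(\sigma_\alpha\le t)]$. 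On $\{\sigma_\alpha>t\}$ the index $t$ belongs to $\{0,\dots,\sigma_\alpha-1\}$ and $\phi\ge 0$, so the first term is at most $\Psi(x):=\mathbb{E}_x\big[\sum_{k=0}^{\sigma_\alpha-1}\phi(X_k)\big]$, uniformly in $t$. For the second term the strong Markov property at $\sigma_\alpha$ gives $\mathbb{E}_x[\phi(X_t)\mathbb{I}(\sigma_\alpha\le t)]=\mathbb{E}_x\big[\mathbb{I}(\sigma_\alpha\le t)\,\mathbb{E}_\alpha[\phi(X_{t-\sigma_\alpha})]\big]$; writing the renewal--reward identity $\mathbb{E}_\alpha[\phi(X_m)]=\sum_{j=0}^{m}u_j\,r_{m-j}$ with $u_j=\mathbb{P}_\alpha(X_j\in\alpha)\le 1$ and $r_i=\mathbb{E}_\alpha[\phi(X_i)\mathbb{I}(\sigma_\alpha>i)]$, one has $\sum_{i\ge 0}r_i=\mathbb{E}_\alpha\big[\sum_{i=0}^{\sigma_\alpha-1}\phi(X_i)\big]=\pi(\phi)/\pi(\alpha)$ by Kac's formula, hence $\sup_{m\ge 0}\mathbb{E}_\alpha[\phi(X_m)]\le\pi(\phi)/\pi(\alpha)$ and the second term is at most $\pi(\phi)/\pi(\alpha)$, again uniformly in $t$. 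Combining, $\sup_{t\ge 0}P^t\phi(x)\le\Psi(x)+\pi(\phi)/\pi(\alpha)$.

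It then remains to show $\Psi(x)<\infty$ for $\pi$-almost all $x$, and this is where the stability of $\pi$-invariant chains enters: since $\pi(\phi)<\infty$, the classical theory of $f$-regular sets for positive Harris chains (the $f$-regularity results in \citet{douc2018MarkovChains}, applied with $f=\phi+1$) gives that $\mathbb{E}_x\big[\sum_{k=0}^{\sigma_C-1}f(X_k)\big]<\infty$ for petite $C$, and in particular $\Psi(x)<\infty$, for all $x$ in a full set. I expect this to be the main obstacle, precisely because $\Psi$ need not be $\pi$-integrable --- for merely polynomially ergodic chains $\pi(\Psi)$ may be infinite --- so one cannot deduce $\pi$-a.e.\ finiteness from an integrability bound, and the real content is that $\pi$-a.e.\ starting point is $\phi$-regular. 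A less machinery-heavy alternative, closer to the rest of the paper, is to run the faithful coupling $\bar P$ with $Y_0\sim\pi$: on $\{\tau\le t\}$ one has $X_t=Y_t$ and $\mathbb{E}_{x,\pi}[\phi(Y_t)]=\pi(\phi)$, while on $\{\tau>t\}$ one bounds $\phi(X_t)\le\sum_{s=0}^{\tau-1}\phi(X_s)$, yielding $\sup_{t\ge 0}P^t\phi(x)\le\pi(\phi)+\mathbb{E}_{x,\pi}\big[\sum_{s=0}^{\tau-1}\phi(X_s)\big]$; but establishing $\pi$-a.e.\ finiteness of the excursion sum before the meeting time requires the same kind of stability input.
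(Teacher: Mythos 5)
Your part (2) is exactly the paper's argument (change of measure plus $\pi$-invariance), so nothing to add there. For part (1) you take a genuinely different route. The paper simply invokes the $f$-norm ergodic theorem \citep[Theorem 14.0.1]{meyn:tweedie:1993} with $f=1+\phi$: for $\pi$-almost all $x$ it gives $P^{t}f(x)<\infty$ for every $t$ and $\left\Vert P^{t}(x,\cdot)-\pi\right\Vert _{f}\to0$, hence $P^{t}\phi(x)\to\pi(\phi)$ and the supremum over $t$ is finite. You instead prove the uniform bound by hand via an atomic renewal decomposition: a last-exit expansion plus Kac's formula gives $\sup_{m}\mathbb{E}_{\alpha}[\phi(X_{m})]\leq\pi(\phi)/\pi(\alpha)$, and the pre-atom excursion contributes $\Psi(x)$, so $\sup_{t}P^{t}\phi(x)\leq\Psi(x)+\pi(\phi)/\pi(\alpha)$; the remaining content, $\Psi(x)<\infty$ $\pi$-a.e., is exactly the statement that the set of $f$-regular states is full when $\pi(f)<\infty$, which you correctly identify as the crux and cite from the same stability theory the paper's one-line appeal rests on. So the two proofs are logically equivalent in what they import from Meyn--Tweedie/Douc et al.; yours makes the mechanism explicit and yields a slightly more quantitative bound, at the cost of the Nummelin splitting and a small wrinkle you gloss over: if the atom only exists for an $m$-skeleton, your renewal bound controls $t$ along multiples of $m$, and you must run the argument for each residue class $P^{r}\phi$, $r\in\{0,\dots,m-1\}$ (harmless, since $\pi(P^{r}\phi)=\pi(\phi)$ and a finite union of null sets is null, but it should be said). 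Finally, your closing coupling-based alternative does \emph{not} close under the stated hypotheses: with only $\phi\in L^{1}(\pi)$, Assumption~\ref{assu:tau-moment-kappa} gives no control on $\mathbb{E}_{x,\pi}\bigl[\sum_{s=0}^{\tau-1}\phi(X_{s})\bigr]$ (the paper's Lemma~\ref{lem:tech-new} would require $\phi\in L^{m}(\pi)$ with $1>\frac{1}{m}+\frac{1}{\kappa}$), so it is right to present it only as a heuristic and not as a proof.
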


\begin{proof}
By assumption, $P$ is $\pi$-irreducible and from Proposition~\ref{prop:int-tv}
it is aperiodic. Since $\pi(\phi)<\infty$, we may define $f=1+\phi\geq1$
and $\pi(f)<\infty$. By the $f$-norm ergodic theorem \citep[Theorem 14.0.1]{meyn:tweedie:1993},
for $\pi$-almost all $x$, $P^{t}f(x)$ is finite for all $t\geq0$
and
\[
\lim_{t\to\infty}\left\Vert P^{t}(x,\cdot)-\pi\right\Vert _{f}=0,
\]
where
\[
\left\Vert P^{t}(x,\cdot)-\pi\right\Vert _{f}=\sup_{g:|g|\leq f}\left|P^{t}g(x)-\pi(g)\right|.
\]
Since $\phi=|\phi|\leq f$,
\[
P^{t}\phi(x)\leq\pi(\phi)+\left\Vert P^{t}(x,\cdot)-\pi\right\Vert _{f},
\]
it follows that for $\pi$-almost all $x$,
\[
\lim_{t\to\infty}\mathbb{E}_{x}\left[\phi(X_{t})\right]=\lim_{t\to\infty}P^{t}\phi(x)=\pi(\phi)<\infty,
\]
and hence $\sup_{t\geq0}\mathbb{E}_{x}\left[\phi(X_{t})\right]<\infty$.
For the second part, we have
\begin{align*}
\mathbb{E}_{\mu}\left[\phi(X_{t})\right] & =\mu P^{t}(\phi)\\
 & =\int\mu({\rm d}x)P^{t}(x,{\rm d}y)\phi(y)\\
 & =\int\pi({\rm d}x)\frac{{\rm d}\mu}{{\rm d}\pi}(x)P^{t}(x,{\rm d}y)\phi(y)\\
 & \leq M\pi P^{t}(\phi)\\
 & =M\pi(\phi),
\end{align*}
which concludes the proof.
\end{proof}
The following theorem provides sufficient conditions for the estimator
$G_{y}(x)$ to be unbiased and have finite $p$th moments.
\begin{lem}
\label{lem:fishy-lemma-new}Under Assumption~\ref{assu:tau-moment-kappa},
let $h\in L^{m}(\pi)$ for some $m>\kappa/(\kappa-1)$. Let $\psi(h,m,\nu)=\sup_{t\geq0}\mathbb{E}_{\nu}\left[\left|h(X_{t})\right|^{m}\right]^{\frac{1}{m}}$,
$\gamma$ be a probability measure on $\mathbb{X}\times \mathbb{X}$,
and let $\gamma_{1}=\gamma(\cdot \times \mathbb{X})$
and $\gamma_{2}=\gamma(\mathbb{X} \times \cdot)$ satisfy $\gamma_{1},\gamma_{2}\ll\pi$.
If $\mathbb{E}_{\gamma}[\tau^{\kappa}]$, $\psi(h,m,\gamma_{1})$
and $\psi(h,m,\gamma_{2})$ are all finite then $\mathbb{E}_{\gamma}\left[G_{Y_{0}}(X_{0})\right]=\gamma_{1}(g_{\star})-\gamma_{2}(g_{\star})$,
and for $p\geq1$ such that $\frac{1}{p}>\frac{1}{m}+\frac{1}{\kappa}$,
\begin{equation}
\mathbb{E}_{\gamma}\left[\left|G_{Y_{0}}(X_{0})\right|^{p}\right]^{\frac{1}{p}}\leq\zeta\left(\frac{(m-p)\kappa}{mp}\right)\mathbb{E}_{\gamma}[\tau^{\kappa}]^{\frac{m-p}{mp}}\left\{ \psi(h,m,\gamma_{1})+\psi(h,m,\gamma_{2})\right\} <\infty,\label{eq:fishy-moment-bound-1}
\end{equation}
where $\zeta(s)=\sum_{n=1}^{\infty}n^{-s}$ for $s>1$.
\end{lem}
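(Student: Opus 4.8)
The plan is to read \eqref{eq:fishy-moment-bound-1} off Lemma~\ref{lem:tech-new} directly, and to obtain the identity $\mathbb{E}_{\gamma}[G_{Y_0}(X_0)]=\gamma_1(g_\star)-\gamma_2(g_\star)$ by conditioning on the initial pair $(X_0,Y_0)$, invoking Lemma~\ref{lem:unbiasedness} pointwise, and integrating. For the moment bound I would write $G_{Y_0}(X_0)=\sum_{t=0}^{\tau-1}\{h(X_t)-h(Y_t)\}$, note that $\mathbb{E}_{\gamma}[\tau^{\kappa}]<\infty$ makes $\tau$ finite $\mathbb{P}_{\gamma}$-almost surely, and apply Lemma~\ref{lem:tech-new} with $N=\tau$, $U_t=h(X_t)$, $V_t=-h(Y_t)$ under $\mathbb{P}_{\gamma}$, taking $\delta_0=p/\kappa$ and $\delta_1=p/m$. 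The hypothesis $\tfrac1p>\tfrac1m+\tfrac1\kappa$ is exactly $\delta_0+\delta_1<1$; it also forces $\delta_0,\delta_1\in(0,1)$ and $\tfrac{1-\delta_1}{\delta_0}=\tfrac{(m-p)\kappa}{mp}>1$, so $\zeta$ is evaluated where it is finite. Since $p/\delta_1=m$ and marginally $X_0\sim\gamma_1$, $Y_0\sim\gamma_2$, the factor $\sup_i\|U_i\|_{L^{p/\delta_1}}+\sup_i\|V_i\|_{L^{p/\delta_1}}$ equals $\psi(h,m,\gamma_1)+\psi(h,m,\gamma_2)$, and substituting yields \eqref{eq:fishy-moment-bound-1}; the right-hand side is finite by the three finiteness assumptions.

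For the unbiasedness, I would first run the same estimate with $p=1$ (legitimate because $m>\kappa/(\kappa-1)$ gives $\tfrac1m+\tfrac1\kappa<1$) and with $U_t=|h(X_t)|$, $V_t=|h(Y_t)|$, so that $G_{Y_0}(X_0)\in L^1(\mathbb{P}_{\gamma})$ and $\mathbb{E}_{\gamma}[\sum_{t=0}^{\tau-1}\{|h(X_t)|+|h(Y_t)|\}]<\infty$; in particular, for $\gamma$-almost all $(x,y)$ one has $\mathbb{P}_{x,y}(\tau<\infty)=1$ and \eqref{eq:unbiased-condition} holds. The set where Lemma~\ref{lem:unbiasedness} could fail is contained in $(E\times\mathbb{X})\cup(\mathbb{X}\times E)$ with $E=\{x:\sum_{t}|P^{t}h_0(x)|=\infty\}$, which is $\pi$-null because $g_\star\in L_0^1(\pi)$ by Theorem~\ref{thm:poisson-Lp}; since $\gamma_1,\gamma_2\ll\pi$ this union is $\gamma$-null. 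Hence $\mathbb{E}_{x,y}[G_y(x)]=g_\star(x)-g_\star(y)$ for $\gamma$-almost all $(x,y)$, and the tower property, valid since $G_{Y_0}(X_0)\in L^1(\mathbb{P}_{\gamma})$, gives $\mathbb{E}_{\gamma}[G_{Y_0}(X_0)]=\mathbb{E}_{\gamma}[g_\star(X_0)-g_\star(Y_0)]$.

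The step I expect to need the most care is identifying $\mathbb{E}_{\gamma}[g_\star(X_0)-g_\star(Y_0)]$ with $\gamma_1(g_\star)-\gamma_2(g_\star)$, i.e. checking that $g_\star$ is separately $\gamma_1$- and $\gamma_2$-integrable so the difference of integrals is unambiguous. For this I would use the pointwise domination $|g_\star(x)|\le\mathbb{E}_{x,\pi}[\sum_{t=0}^{\tau-1}\{|h_0(X_t)|+|h_0(Y_t)|\}]$, which holds for $\pi$-almost all $x$ by running the second coordinate stationary and using $\pi(h_0)=0$, and then apply Lemma~\ref{lem:tech-new} once more under $\gamma_i\otimes\pi$; this closes as soon as $\mathbb{E}_{\gamma_i\otimes\pi}[\tau^{\kappa}]<\infty$, which holds whenever ${\rm d}\gamma_i/{\rm d}\pi$ is bounded (the situation in all the applications, via $\mathbb{E}_{\gamma_i\otimes\pi}[\tau^{\kappa}]\le M\,\mathbb{E}_{\pi\otimes\pi}[\tau^{\kappa}]$). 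The rest is bookkeeping: tracking that the exceptional sets inherited from Lemma~\ref{lem:unbiasedness} and Theorem~\ref{thm:poisson-Lp} are genuinely $\gamma$-null, and noting $\psi(h_0,m,\gamma_i)\le\psi(h,m,\gamma_i)+|\pi(h)|<\infty$.
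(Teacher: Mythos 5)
Your proof is correct and follows essentially the same route as the paper's: Lemma~\ref{lem:tech-new} with $N=\tau$, $\delta_0=p/\kappa$, $\delta_1=p/m$ for the moment bound, then the $p=1$ case plus Lemma~\ref{lem:unbiasedness} and the tower property for lack of bias. The one divergence is your final paragraph: the paper passes from $\mathbb{E}_{\gamma}[g_{\star}(X_0)-g_{\star}(Y_0)]$ to $\gamma_1(g_{\star})-\gamma_2(g_{\star})$ without comment, whereas you correctly flag that separate $\gamma_i$-integrability of $g_{\star}$ needs checking; your fix via $\mathbb{E}_{\gamma_i\otimes\pi}[\tau^{\kappa}]<\infty$ quietly imports a hypothesis (bounded $\mathrm{d}\gamma_i/\mathrm{d}\pi$) not in the lemma's statement, but since the paper itself leaves this step unjustified and the hypothesis holds in every application (Theorem~\ref{thm:fishy-estimator}), this is a refinement rather than a defect.
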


\begin{proof}
With $(X_{0},Y_{0})\sim\gamma$, let
\[
\bar{G}=\sum_{t=0}^{\tau-1}\left|h(X_{t})\right|+\left|h(Y_{t})\right|\geq\left|G_{Y_{0}}(X_{0})\right|.
\]
If $p\geq1$ and $\frac{1}{p}>\frac{1}{m}+\frac{1}{\kappa}$, $\delta_{0}=p/\kappa$
and $\delta_{1}=p/m$ are in $(0,1)$ with $\delta_{0}+\delta_{1}<1$. We
may therefore use Lemma~\ref{lem:tech-new} with $U_{t}=\left|h(X_{t})\right|$
and $V_{t}=\left|h(Y_{t})\right|$ to deduce that
\[
\mathbb{E}_{\gamma}\left[\left|\bar{G}\right|^{p}\right]^{\frac{1}{p}}\leq\zeta\left(\frac{(m-p)\kappa}{mp}\right)\mathbb{E}{}_{\gamma}\left[\tau^{\kappa}\right]^{\frac{m-p}{mp}}\left\{ \sup_{t\geq0}\mathbb{E}_{\gamma_{1}}\left[\left|h(X_{t})\right|^{m}\right]^{\frac{1}{m}}+\sup_{t\geq0}\mathbb{E}_{\gamma_{2}}\left[\left|h(Y_{t})\right|^{m}\right]^{\frac{1}{m}}\right\} ,
\]
from which (\ref{eq:fishy-moment-bound-1}) follows. For the lack-of-bias
property, $m>\kappa/(\kappa-1)$ implies $1>\frac{1}{m}+\frac{1}{\kappa}$,
so the RHS of (\ref{eq:fishy-moment-bound-1}) is finite for $p=1$,
and (\ref{eq:unbiased-condition}) holds for $\gamma$-almost all
$(x,y)$. Since $\mathbb{E}_{\gamma}[\tau^{\kappa}]<\infty$ implies
$\mathbb{P}_{x,y}(\tau<\infty)$ for $\gamma$-almost all $(x,y)$,
we deduce by Lemma~\ref{lem:unbiasedness} that $\mathbb{E}\left[G_{y}(x)\right]=g_{y}(x)$
for $\gamma$-almost all $(x,y)$. It follows that
\[
\mathbb{E}_{\gamma}\left[\mathbb{E}\left[G_{Y_{0}}(X_{0})\mid\sigma(X_{0},Y_{0})\right]\right]=\mathbb{E}_{\gamma}\left[g_{\star}(X_{0})-g_{\star}(Y_{0})\right]=\gamma_{1}(g_{\star})-\gamma_{2}(g_{\star}).
\]
\end{proof}

\begin{thm}
\label{thm:fishy-estimator}Under Assumption~\ref{assu:tau-moment-kappa},
let $h\in L^{m}(\pi)$ for some $m>\kappa/(\kappa-1)$. Let $p\geq1$
satisfy $\frac{1}{p}>\frac{1}{m}+\frac{1}{\kappa}$, and $\zeta$
and $\psi$ be as defined in Lemma~\ref{lem:fishy-lemma-new}. Let
$\gamma$ be a probability measure with $\gamma_{1}=\gamma(\cdot \times \mathbb{X})$
and $\gamma_{2}=\gamma(\mathbb{X} \times \cdot)$.
\begin{enumerate}
\item For $\pi$-almost all $x$ and $\pi$-almost all $y$, if $\gamma_{1}=\delta_{x}$
and $\gamma_{2}=\delta_{y}$ and $\mathbb{E}_{x,y}[\tau^{\kappa}]<\infty$
then (\ref{eq:fishy-moment-bound-1}) is finite and $\mathbb{E}\left[G_{y}(x)\right]=g_{\star}(x)-g_{\star}(y)$.
\item For $\pi\otimes\pi$-almost all $(x,y)$, if $\gamma_{1}=\delta_{x}$
and $\gamma_{2}=\delta_{y}$ then (\ref{eq:fishy-moment-bound-1})
is finite and $\mathbb{E}\left[G_{y}(x)\right]=g_{\star}(x)-g_{\star}(y)$.
\item For $\pi$-almost all $y$, if $\gamma_{1}\ll\pi$, ${\rm d}\gamma_{1}/{\rm d}\pi\leq M$
and $\gamma_{2}=\delta_{y}$,
\[
\mathbb{E}_{\gamma}\left[\left|G_{Y_{0}}(X_{0})\right|^{p}\right]^{\frac{1}{p}}\leq\zeta\left(\frac{(m-p)\kappa}{mp}\right)\mathbb{E}_{\gamma}[\tau^{\kappa}]^{\frac{m-p}{mp}}\left\{ M^{\frac{1}{m}}\left\Vert h\right\Vert _{L^{m}(\pi)}+\psi(h,m,\delta_{y})\right\} <\infty,
\]
and $\mathbb{E}_{\gamma}\left[G_{Y_{0}}(X_{0})\right]=\gamma_{1}(g_{\star})-g_{\star}(y)$.
\item If $\gamma_{i}\ll\pi$, ${\rm d}\gamma_{i}/{\rm d}\pi\leq M$ for
$i\in\{1,2\}$ and $\mathbb{E}_{\gamma}\left[\tau^{\kappa}\right]<\infty$
then
\[
\mathbb{E}_{\gamma}\left[\left|G_{Y_{0}}(X_{0})\right|^{p}\right]^{\frac{1}{p}}\leq2\zeta\left(\frac{(m-p)\kappa}{mp}\right)\mathbb{E}_{\gamma}[\tau^{\kappa}]^{\frac{m-p}{mp}}M^{\frac{1}{m}}\left\Vert h\right\Vert _{L^{m}(\pi)}<\infty,
\]
and $\mathbb{E}_{\gamma}\left[G_{Y_{0}}(X_{0})\right]=\gamma_{1}(g_{\star})-\gamma_{2}(g_{\star})$.
\end{enumerate}
\end{thm}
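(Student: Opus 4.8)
The plan is to obtain all four items as consequences of Lemma~\ref{lem:fishy-lemma-new}, whose two ingredients already do the real work: Lemma~\ref{lem:tech-new} supplies the $p$th-moment bound \eqref{eq:fishy-moment-bound-1} once $\psi(h,m,\gamma_1)$, $\psi(h,m,\gamma_2)$ and $\mathbb{E}_\gamma[\tau^\kappa]$ are finite, and Lemma~\ref{lem:unbiasedness} gives the identity $\mathbb{E}_\gamma[G_{Y_0}(X_0)]=\gamma_1(g_\star)-\gamma_2(g_\star)$ provided in addition $g_\star$ is defined at the relevant points, which it is $\pi$-a.e. by Theorem~\ref{thm:poisson-Lp} since $m>\kappa/(\kappa-1)$. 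Throughout, the hypothesis $\frac{1}{p}>\frac{1}{m}+\frac{1}{\kappa}$ is exactly what places $\delta_0=p/\kappa$ and $\delta_1=p/m$ in $(0,1)$ with $\delta_0+\delta_1<1$, and equivalently what makes the argument $(m-p)\kappa/(mp)$ of $\zeta$ strictly above $1$; at $p=1$ it reduces to $m>\kappa/(\kappa-1)$, which is what drives the lack-of-bias claims. So for each item the only task is to verify the three finiteness conditions in the appropriate almost-everywhere sense.

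\emph{Items 1 and 2.} For $\pi$-almost all $x$, Lemma~\ref{lem:sup-finite}(1) with $\phi=|h|^m$ gives $\psi(h,m,\delta_x)=\sup_{t\geq0}\mathbb{E}_x[|h(X_t)|^m]^{1/m}<\infty$, and likewise $\psi(h,m,\delta_y)<\infty$ for $\pi$-almost all $y$. Since $\delta_x$, $\delta_y$ need not be absolutely continuous with respect to $\pi$, I would not quote Lemma~\ref{lem:fishy-lemma-new} verbatim but re-run its short proof with $\gamma=\delta_x\otimes\delta_y$: the moment bound follows from Lemma~\ref{lem:tech-new} exactly as there (with $U_t=|h(X_t)|$, $V_t=|h(Y_t)|$, $N=\tau$), and taking $p=1$ shows \eqref{eq:unbiased-condition} holds at $(x,y)$, so Lemma~\ref{lem:unbiasedness} (which holds for $\pi$-almost all $x,y$) gives $\mathbb{E}[G_y(x)]=g_\star(x)-g_\star(y)$. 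For item~1 the remaining condition $\mathbb{E}_{x,y}[\tau^\kappa]<\infty$ is assumed; for item~2 it holds for $\pi\otimes\pi$-almost all $(x,y)$ by the first part of Proposition~\ref{prop:tau-moment-survival} under Assumption~\ref{assu:tau-moment-kappa}, so item~2 is item~1 restricted to a $\pi\otimes\pi$-full set.

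\emph{Item 3.} Here $\gamma_2=\delta_y$ forces $\gamma=\gamma_1\otimes\delta_y$. Lemma~\ref{lem:sup-finite}(2) with $\phi=|h|^m$ and $\mu=\gamma_1$ gives $\psi(h,m,\gamma_1)\le(M\pi(|h|^m))^{1/m}=M^{1/m}\|h\|_{L^m(\pi)}<\infty$, while Lemma~\ref{lem:sup-finite}(1) gives $\psi(h,m,\delta_y)<\infty$ for $\pi$-almost all $y$. Moreover $\mathbb{E}_\gamma[\tau^\kappa]=\int\gamma_1({\rm d}x)\,\mathbb{E}_{x,y}[\tau^\kappa]\le M\,\mathbb{E}_{\pi,y}[\tau^\kappa]$, which is finite for $\pi$-almost all $y$ since $\mathbb{E}_{\pi\otimes\pi}[\tau^\kappa]=\int\pi({\rm d}y)\,\mathbb{E}_{\pi,y}[\tau^\kappa]<\infty$. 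Feeding these into Lemma~\ref{lem:fishy-lemma-new} (the $\ll\pi$ hypothesis is now met) yields the displayed bound with $\psi(h,m,\gamma_1)$ replaced by $M^{1/m}\|h\|_{L^m(\pi)}$, together with $\mathbb{E}_\gamma[G_{Y_0}(X_0)]=\gamma_1(g_\star)-g_\star(y)$.

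\emph{Item 4.} This is the cleanest case: Lemma~\ref{lem:sup-finite}(2) with $\phi=|h|^m$ applied to each marginal gives $\psi(h,m,\gamma_i)\le M^{1/m}\|h\|_{L^m(\pi)}<\infty$ for $i\in\{1,2\}$, and $\mathbb{E}_\gamma[\tau^\kappa]<\infty$ is assumed, so all hypotheses of Lemma~\ref{lem:fishy-lemma-new} hold, giving $\mathbb{E}_\gamma[G_{Y_0}(X_0)]=\gamma_1(g_\star)-\gamma_2(g_\star)$ and
\begin{align*}
\mathbb{E}_\gamma\left[\left|G_{Y_0}(X_0)\right|^p\right]^{1/p}
&\le \zeta\!\left(\frac{(m-p)\kappa}{mp}\right)\mathbb{E}_\gamma[\tau^\kappa]^{\frac{m-p}{mp}}\left\{\psi(h,m,\gamma_1)+\psi(h,m,\gamma_2)\right\}\\
&\le 2\,\zeta\!\left(\frac{(m-p)\kappa}{mp}\right)\mathbb{E}_\gamma[\tau^\kappa]^{\frac{m-p}{mp}}M^{1/m}\|h\|_{L^m(\pi)},
\end{align*}
which is finite since $\|h\|_{L^m(\pi)}<\infty$. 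I do not expect any serious obstacle here: the analytic content all lives in the earlier lemmas, and the only points requiring care are bookkeeping of which null sets appear (Lemma~\ref{lem:sup-finite}(1) and Proposition~\ref{prop:tau-moment-survival}) in the point-mass cases, and the elementary check that $(m-p)\kappa/(mp)>1$ so that $\zeta$ is evaluated where its defining series converges.
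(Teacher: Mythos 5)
Your proposal is correct and follows essentially the same route as the paper: every item is reduced to Lemma~\ref{lem:fishy-lemma-new}, with the finiteness of $\psi(h,m,\cdot)$ supplied by the two parts of Lemma~\ref{lem:sup-finite} and the finiteness of $\mathbb{E}_{\gamma}[\tau^{\kappa}]$ by hypothesis or by Assumption~\ref{assu:tau-moment-kappa}. Your remark that point masses $\delta_x$, $\delta_y$ violate the literal $\gamma_1,\gamma_2\ll\pi$ hypothesis of Lemma~\ref{lem:fishy-lemma-new}, so that items~1--2 require re-running its (short) proof on a $\pi$-full set rather than citing it verbatim, is a fine point the paper glosses over, and your handling of it is sound.
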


\begin{proof}
All the parts are deduced from Lemma~\ref{lem:fishy-lemma-new}.
Recall that $\zeta$ is defined as in Lemma \ref{lem:tech-new}.
For the first part, $\psi(h,m,\delta_{z})$ is finite for $\pi$-almost
all $z$ by Lemma~\ref{lem:sup-finite}. For the second part, we
add to this that $\mathbb{E}_{x,y}[\tau^{\kappa}]<\infty$ for $\pi\otimes\pi$-almost
all $(x,y)$ by Assumption~\ref{assu:tau-moment-kappa}. For the
third part, $\psi(h,m,\delta_{y})$ is finite for $\pi$-almost all
$y$ and $\psi(h,m,\gamma_{1})^{m}\leq M\pi(\left|h\right|^{m})$
by Lemma~\ref{lem:sup-finite}, while
\[
\mathbb{E}_{\gamma}[\tau^{\kappa}]=\mathbb{E}_{\gamma_{1},y}[\tau^{\kappa}]\leq M\int\pi({\rm d}x)\mathbb{E}_{x,y}[\tau^{\kappa}]<\infty,
\]
for $\pi$-almost all $y$ by Assumption~\ref{assu:tau-moment-kappa}.
For the fourth part, we add to this that $\psi(h,m,\gamma_{2})^{m}\leq M\pi(\left|h\right|^{m})$
by Lemma~\ref{lem:sup-finite}.
\end{proof}

\subsection{\label{subsec:Unbiased-approx-pi-test}Unbiased approximation of \texorpdfstring{$\pi(h)$}{pi(h)}}

We next demonstrate that the approximation in Definition~\ref{def:unbiased-estimator}
is indeed unbiased and has finite $p$th moments under suitable conditions.
The proof of Theorem~\ref{thm:ub-simple} is an application of Theorem~\ref{thm:fishy-estimator}
and its statement can be compared with \citet[Theorem~1]{middleton2020unbiased},
which treats the case $p=2$, for which we essentially arrive at the
same condition for $\kappa$ and $m$. The lack-of-bias condition
here is less demanding, and we deduce finiteness of higher moments
of $H$ for $m$ sufficiently large.

In Propositions~\ref{prop:bound-L-k}--\ref{prop:lag-average-transference}
we establish that the properties obtained for $H$ may be deduced
also for averages of lagged and offset estimators, in Definition~\ref{def:lagged-offset}, that are used in
practice. Finally, in Proposition~\ref{prop:subsample}
we show that subsampled estimators are also unbiased and have finite
$p$th moments under the same conditions. Subsampling is an
important aspect of the proposed asymptotic variance estimator (UPAVE, Section
\ref{subsec:unbiasedavar}).
In this section we use $\mu$ as notation for a possible initial distribution
of a chain evolving according to $P$. Later on this initial distribution
will be set to either the user-specified distribution $\pi_0$ or to the distribution
$\pi_k = \pi_0 P^k$ for some integer $k$.

The following two lemmas guarantee that if an independent initialization from $\mu$
is used for coupled lagged chains, i.e. $\gamma = \mu P \otimes \mu$, and ${\rm d}\mu/{\rm d}\pi\leq M$, then $\mathbb{E}_{\gamma}[\tau^{\kappa}]<\infty$
is guaranteed by Assumption~\ref{assu:tau-moment-kappa}, and similarly
lack-of-bias results for $\gamma$-almost all $(x,y)$ may be deduced
from lack-of-bias results for $\pi\otimes\pi$-almost all $(x,y)$. 
\begin{lem}
\label{lem:gamma-M-sq-independent-init}Let $\mu$ be a probability measure on $(\mathbb{X},\mathcal{X})$
and $P$ a $\pi$-invariant Markov kernel. Then if ${\rm d}\mu/{\rm d}\pi\leq M$
then ${\rm d}\mu P/{\rm d}\pi\leq M$ and ${\rm d} (\mu P\otimes \mu)/{\rm d}(\pi\otimes\pi)\leq M^{2}$.
\end{lem}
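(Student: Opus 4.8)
The plan is to establish the two bounds in turn, the second being essentially immediate once the first is in hand.

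\textbf{The map $\mu \mapsto \mu P$.} To show ${\rm d}\mu P/{\rm d}\pi \leq M$ I would argue directly at the level of set functions. Writing $f = {\rm d}\mu/{\rm d}\pi$, for every $A \in \mathcal{X}$ we have
\[
\mu P(A) = \int \mu({\rm d}x)\, P(x,A) = \int \pi({\rm d}x)\, f(x)\, P(x,A) \leq M \int \pi({\rm d}x)\, P(x,A) = M\, \pi P(A) = M\, \pi(A),
\]
where the last equality uses $\pi$-invariance of $P$ and the inequality uses $f \leq M$; the integrand $x \mapsto P(x,A)$ is measurable since $P$ is a Markov kernel. In particular $\mu P(A)=0$ whenever $\pi(A)=0$, so $\mu P \ll \pi$, and writing $g := {\rm d}\mu P/{\rm d}\pi$, the relation $\int_A g\, {\rm d}\pi \leq M\, \pi(A)$ for all $A \in \mathcal{X}$ forces $g \leq M$ $\pi$-almost everywhere.

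\textbf{Passing to the product.} For $\gamma = \mu P \otimes \mu$ I would invoke the standard factorization of Radon--Nikodym derivatives of product measures,
\[
\frac{{\rm d}(\mu P \otimes \mu)}{{\rm d}(\pi \otimes \pi)}(x,y) = \frac{{\rm d}\mu P}{{\rm d}\pi}(x)\cdot\frac{{\rm d}\mu}{{\rm d}\pi}(y),
\]
which is verified on measurable rectangles $A\times B$ using Tonelli and extended by a monotone class argument. Combining this with the bound on ${\rm d}\mu P/{\rm d}\pi$ from the first step and the hypothesis ${\rm d}\mu/{\rm d}\pi \leq M$ gives ${\rm d}\gamma/{\rm d}(\pi\otimes\pi) \leq M\cdot M = M^2$, $(\pi\otimes\pi)$-almost everywhere.

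I do not anticipate any real obstacle here: the only substantive point is the interchange of integration against $\mu$ and against $\pi$, legitimised by $f \leq M$ together with $\pi$-invariance, after which everything is routine measure-theoretic bookkeeping — upgrading the set inequality $\mu P(A) \leq M\pi(A)$ to an almost-everywhere density bound, and factorizing the product density.
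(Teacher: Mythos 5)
Your proposal is correct and follows essentially the same route as the paper's proof: both establish $\mu P(A)\leq M\pi(A)$ via $\pi$-invariance and then pass to the product measure by bounding on rectangles (the paper) or, equivalently, by factorizing the product density (your version). Your write-up simply makes explicit the measure-theoretic bookkeeping that the paper leaves implicit.
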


\begin{proof}
For $A\in\mathcal{X}$,
\[
\mu P(A)=\int\mu({\rm d}x)P(x,A)\leq M\int\pi({\rm d}x)P(x,A)=M\pi(A),
\]
from which we may deduce that ${\rm d}\mu P/{\rm d}\pi\leq M$. It
then follows that for $A,B\in\mathcal{X}$,
\[
(\mu P\otimes\mu)(A\times B)=\int\mu P(A)\mu(B)\leq M^{2}(\pi\otimes\pi)(A\times B),
\]
from which we may conclude.
\end{proof}
\begin{lem}
\label{lem:unbiasedness-exp}If $h,g_{\star}\in L^{1}(\pi)$ and
$\gamma$ is a coupling of $\mu P$ and $\mu$
for which $\mathbb{E}\left[G_{y}(x)\right]=g_{y}(x)$ for $\gamma$-almost all $(x,y)$,
then
\[
\mathbb{E}\left[H\right]=\mathbb{E}\left[h(X_{0}')+G_{Y_{0}'}(X_{1}')\right]=\pi(h),
\]
where $H$ is in Definition~\ref{def:unbiased-estimator} with $(X_1',Y_0') \sim \gamma$ and $X_0'\sim \mu$.
\end{lem}

\begin{proof}
We have
\begin{align*}
\mathbb{E}\left[H\right] & =\mathbb{E}\left[h(X_{0}')+G_{Y_{0}'}(X_{1}')\right]\\
 & =\mathbb{E}\left[h(X_{0}')\right]+\mathbb{E}\left[G_{Y_{0}'}(X_{1}')\right]\\
 & =\mathbb{E}\left[h(X_{0}')\right]+\mathbb{E}\left[g_{\star}(X_{1}')-g_{\star}(Y_{0}')\right]\\
 & =\mu(h)+\mu P(g_{\star})-\mu(g_{\star})\\
 & =\mu(h)-\mu(h-\pi(h)) = \pi(h).
\end{align*}
\end{proof}
\begin{thm}
\label{thm:ub-simple}Under Assumption~\ref{assu:tau-moment-kappa},
let $H$ and $\gamma$ be defined
as in Definition~\ref{def:unbiased-estimator},
assume ${\rm d}\mu/{\rm d}\pi\leq M$, $\mathbb{E}_{\gamma}[\tau^{\kappa}]<\infty$,
and  $h\in L^{m}(\pi)$ for some $m>\kappa/(\kappa-1)$. 
Then, $\mathbb{E}\left[H\right]=\pi(h)$,
and for $p\geq1$ such that $\frac{1}{p}>\frac{1}{m}+\frac{1}{\kappa}$,
\begin{equation}
\mathbb{E}\left[\left|H\right|^{p}\right]^{\frac{1}{p}}\leq M^{\frac{1}{p}}\left\Vert h\right\Vert _{L^{m}(\pi)}\zeta\left(\frac{(m-p)\kappa}{mp}\right)\left\{ 1+2M^{\frac{p-m}{mp}}\mathbb{E}_{\gamma}[\tau^{\kappa}]^{\frac{m-p}{mp}}\right\} <\infty.\label{eq:ub-simple-bound}
\end{equation}
\end{thm}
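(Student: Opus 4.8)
The plan is to derive both conclusions from the fourth part of Theorem~\ref{thm:fishy-estimator}, after checking its hypotheses for the coupling $\gamma$ of $\mu P$ with $\mu$ appearing in Definition~\ref{def:unbiased-estimator}. First I would record the marginal density bounds: writing $\gamma_1=\mu P$ and $\gamma_2=\mu$ for the marginals, the assumption ${\rm d}\mu/{\rm d}\pi\leq M$ gives ${\rm d}\gamma_2/{\rm d}\pi\leq M$ immediately, while Lemma~\ref{lem:gamma-M-sq-independent-init} gives ${\rm d}\gamma_1/{\rm d}\pi={\rm d}\mu P/{\rm d}\pi\leq M$ (its proof of this fact uses only $\pi$-invariance of $P$ and the bound on ${\rm d}\mu/{\rm d}\pi$, so the product structure there is irrelevant here). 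The remaining hypothesis $\mathbb{E}_\gamma[\tau^\kappa]<\infty$ is assumed outright. I would also note that $\frac{(m-p)\kappa}{mp}>1$ if and only if $\frac1p>\frac1m+\frac1\kappa$, so the constant $\zeta\bigl(\frac{(m-p)\kappa}{mp}\bigr)$ is finite and at least $1$ under the stated condition on $p$; taking $p=1$ here (permitted since $m>\kappa/(\kappa-1)$), Theorem~\ref{thm:poisson-Lp} also gives $g_\star\in L^1_0(\pi)$.

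For the lack of bias, I would apply the fourth part of Theorem~\ref{thm:fishy-estimator} with $(X_0,Y_0)\sim\gamma$ to obtain $\mathbb{E}_\gamma[G_{Y_0}(X_0)]=\gamma_1(g_\star)-\gamma_2(g_\star)=\mu P(g_\star)-\mu(g_\star)$. Since $h\in L^m(\pi)\subseteq L^1(\pi)$ and $g_\star\in L^1(\pi)$, combining this with $\mathbb{E}[h(X_0')]=\mu(h)$ and the Poisson identity $Pg_\star=g_\star-h_0$ gives $\mathbb{E}[H]=\mu(h)+\mu P(g_\star)-\mu(g_\star)=\mu(h)-\mu(h_0)=\pi(h)$; this is exactly the computation in the proof of Lemma~\ref{lem:unbiasedness-exp}, specialised to the integrated identity (which is all that proof in fact uses).

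For the moment bound I would start from Minkowski's inequality, $\mathbb{E}[|H|^p]^{1/p}\leq\mathbb{E}_\mu[|h(X_0)|^p]^{1/p}+\mathbb{E}_\gamma[|G_{Y_0}(X_0)|^p]^{1/p}$, using that $(X_1',Y_0')\sim\gamma$. The first term is handled by Jensen's inequality together with ${\rm d}\mu/{\rm d}\pi\leq M$: since $p<m$ (because $\frac1p>\frac1m$), one has $\mathbb{E}_\mu[|h|^p]\leq M\pi(|h|^p)\leq M\pi(|h|^m)^{p/m}$, so the first term is at most $M^{1/p}\|h\|_{L^m(\pi)}$. The second term is bounded directly by the fourth part of Theorem~\ref{thm:fishy-estimator} by $2\zeta\bigl(\frac{(m-p)\kappa}{mp}\bigr)\mathbb{E}_\gamma[\tau^\kappa]^{\frac{m-p}{mp}}M^{1/m}\|h\|_{L^m(\pi)}$. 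Adding the two, factoring out $M^{1/p}\|h\|_{L^m(\pi)}$, writing $M^{1/m}=M^{1/p}M^{(p-m)/(mp)}$, and bounding the leftover $1$ by $\zeta\bigl(\frac{(m-p)\kappa}{mp}\bigr)$, reproduces exactly \eqref{eq:ub-simple-bound}; finiteness is then immediate since each of $\zeta\bigl(\frac{(m-p)\kappa}{mp}\bigr)$, $\|h\|_{L^m(\pi)}$ and $\mathbb{E}_\gamma[\tau^\kappa]$ is finite.

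The calculations are routine, so there is no serious obstacle; the one point requiring care is that $\gamma$ is only known to have both marginals dominated by $\pi$ and need not be dominated by $\pi\otimes\pi$, so one cannot invoke the pointwise unbiasedness $\mathbb{E}[G_y(x)]=g_\star(x)-g_\star(y)$ at $\gamma$-almost every $(x,y)$ (nor directly cite Lemma~\ref{lem:unbiasedness-exp} as stated), and must route the argument through the integrated conclusion of the fourth part of Theorem~\ref{thm:fishy-estimator} rather than its first two parts. Everything else is bookkeeping of the exponents $\frac1p$, $\frac1m$ and $\frac1\kappa$.
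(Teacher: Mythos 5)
Your proof is correct and follows essentially the same route as the paper's: Minkowski plus the fourth part of Theorem~\ref{thm:fishy-estimator} (with $\gamma_1=\mu P$, $\gamma_2=\mu$ and the density bounds from Lemma~\ref{lem:gamma-M-sq-independent-init}) for the moment bound, and the same Poisson-identity computation as Lemma~\ref{lem:unbiasedness-exp} for the lack of bias. The only difference is cosmetic: the paper reaches unbiasedness via the pointwise identity $\mathbb{E}[G_y(x)]=g_y(x)$ for $\gamma$-almost all $(x,y)$ (which its Lemma~\ref{lem:fishy-lemma-new} does establish, since the exceptional set is controlled by the marginals alone and so no domination of $\gamma$ by $\pi\otimes\pi$ is needed), whereas you go through the integrated conclusion of part 4 directly --- both are valid.
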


\begin{proof}
Since $m>\kappa/(\kappa-1)$ and $\mathbb{E}_{\gamma}[\tau^{\kappa}]<\infty$,
Theorem~\ref{thm:fishy-estimator} gives $\mathbb{E}\left[G_{y}(x)\right]=g_{y}(x)$
for $\gamma$-almost all $(x,y)$ and Lemma~\ref{lem:unbiasedness-exp}
then implies that $\mathbb{E}[H]=\pi(h)$. By Minkowski's inequality,
Theorem~\ref{thm:fishy-estimator}, ${\rm d}\mu/{\rm d}\pi\leq M$,
$m>p$, and $1\leq\zeta\left(\frac{(m-p)\kappa}{mp}\right)$ we obtain
\begin{align*}
\mathbb{E}\left[\left|H\right|^{p}\right]^{\frac{1}{p}} & \leq\mathbb{E}\left[\left|h(X_{0}')\right|^{p}\right]^{\frac{1}{p}}+\mathbb{E}\left[\left|G_{Y_{0}'}(X_{1}')\right|^{p}\right]^{\frac{1}{p}}\\
 & \leq\left\Vert h\right\Vert _{L^{p}(\mu)}+2\zeta\left(\frac{(m-p)\kappa}{mp}\right)\mathbb{E}_{\gamma}[\tau^{\kappa}]^{\frac{m-p}{mp}}M^{\frac{1}{m}}\left\Vert h\right\Vert _{L^{m}(\pi)}\\
 & \leq M^{\frac{1}{p}}\left\Vert h\right\Vert _{L^{p}(\pi)}+2\zeta\left(\frac{(m-p)\kappa}{mp}\right)\mathbb{E}_{\gamma}[\tau^{\kappa}]^{\frac{m-p}{mp}}M^{\frac{1}{m}}\left\Vert h\right\Vert _{L^{m}(\pi)}\\
 & \leq M^{\frac{1}{p}}\left\Vert h\right\Vert _{L^{m}(\pi)}\left\{ 1+2\zeta\left(\frac{(m-p)\kappa}{mp}\right)M^{\frac{p-m}{mp}}\mathbb{E}_{\gamma}[\tau^{\kappa}]^{\frac{m-p}{mp}}\right\} \\
 & \leq M^{\frac{1}{p}}\left\Vert h\right\Vert _{L^{m}(\pi)}\zeta\left(\frac{(m-p)\kappa}{mp}\right)\left\{ 1+2M^{\frac{p-m}{mp}}\mathbb{E}_{\gamma}[\tau^{\kappa}]^{\frac{m-p}{mp}}\right\} .
\end{align*}
\end{proof}
We now show how Theorem~\ref{thm:ub-simple} can be used to extend
the results to the approximations in Definition~\ref{def:lagged-offset}.

\begin{rem}
\label{rem:L-k-approx-Poisson} The approximation $H_k$ in Definition~\ref{def:lagged-offset} may be viewed as
the sum of $h(X_{k})$ and an unbiased approximation $G_{Y_{k}}^{(L)}(X_{k+L})$
of $g_{Y_{k}}^{(L)}(X_{k+L})=g_{\star}^{(L)}(X_{k+L})-g_{\star}^{(L)}(Y_{k})$,
where $g_{\star}^{(L)}$ is the mean-zero solution of the Poisson
equation for $(P^{L},h)$, i.e.
\[
(I-P^{L})g_{\star}^{(L)}=h-\pi(h).
\]
If $g_{\star}^{(L)}\in L^{p}(\pi)$ then, noting that
\[
(I-P^{L})g_{\star}^{(L)}=(I-P)(\sum_{k=0}^{L-1}P^{k})g_{\star}^{(L)},
\]
we obtain $g_{\star}^{(1)}=(\sum_{k=0}^{L-1}P^{k})g_{\star}^{(L)}\in L^{p}(\pi)$,
since $P$ is a bounded linear operator in $L^{p}(\pi)$.
\end{rem}

The following shows that Theorem~\ref{thm:ub-simple} holds for general
$L\geq1$ and $k\geq0$, and in fact increasing either of these decreases
the upper bound on the moments of the estimator.
\begin{prop}
\label{prop:bound-L-k}Under Assumption~\ref{assu:tau-moment-kappa},
let $h\in L^{m}(\pi)$ for some $m>\kappa/(\kappa-1)$, ${\rm d}\pi_{0}/{\rm d}\pi\leq M$.
For any $L\geq1$, $k\geq0$, $\mathbb{E}[H_{k}]=\pi(h)$ and
for $p\geq1$ such that $\frac{1}{p}>\frac{1}{m}+\frac{1}{\kappa}$,
\begin{equation}
\mathbb{E}\left[\left|H_{k}\right|^{p}\right]^{\frac{1}{p}}\leq M^{\frac{1}{p}}\left\Vert h\right\Vert _{L^{m}(\pi)}\zeta\left(\frac{(m-p)\kappa}{mp}\right)\left\{ 1+2M^{\frac{m-p}{mp}}\mathbb{E}_{\pi\otimes\pi}\left[\left(0 \vee \left\lceil \frac{\tau-k}{L}\right\rceil\right)^{\kappa}\right]^{\frac{m-p}{mp}}\right\} <\infty,\label{eq:HkLp-bound}
\end{equation}
where $\tau=\inf\{t\geq0:X_{t}=Y_{t}\}$ for the Markov chain $(X,Y)$
with Markov kernel $\bar{P}$.
\end{prop}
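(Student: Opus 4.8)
The plan is to recognise $H_{k}^{(L)}$ as an instance of the unbiased estimator $H$ of Definition~\ref{def:unbiased-estimator} built from the $L$-step kernel $P^{L}$, and then invoke Theorem~\ref{thm:ub-simple}. To set this up, note that $P^{L}$ is $\pi$-invariant and, since $P$ is $\pi$-irreducible and (by Proposition~\ref{prop:int-tv}) aperiodic, also $\pi$-irreducible, while $\bar{P}^{L}$ is again a faithful coupling of $P^{L}$ with itself. Observing the $\bar{P}$-chain only at times in $L\mathbb{N}$, its meeting time becomes $\tau^{(L)}=\lceil\tau/L\rceil\le\tau$ on the same probability space (faithfulness: once met, the chains stay met), so Assumption~\ref{assu:tau-moment-kappa} transfers to $(P^{L},\bar{P}^{L})$.

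Unfolding Definition~\ref{def:lagged-offset}, set $\tilde{X}_{i}:=X_{k+(i+1)L}'$ and $\tilde{Y}_{i}:=Y_{k+iL}'$, so that $H_{k}^{(L)}=h(X_{k}')+\sum_{i\ge0}\{h(\tilde{X}_{i})-h(\tilde{Y}_{i})\}$, where $(\tilde{X}_{i},\tilde{Y}_{i})_{i\ge0}$ is a Markov chain with kernel $\bar{P}^{L}$ started from $(\tilde{X}_{0},\tilde{Y}_{0})=(X_{k+L}',Y_{k}')$, whose joint law $\gamma_{k}^{(L)}$ is a coupling of $\pi_{0}P^{k+L}$ and $\pi_{0}P^{k}$. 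Thus $H_{k}^{(L)}$ has exactly the form of $H$ in Definition~\ref{def:unbiased-estimator} with $(P,\bar{P})$ replaced by $(P^{L},\bar{P}^{L})$, $\mu=\pi_{0}P^{k}$ and $\gamma=\gamma_{k}^{(L)}$; since ${\rm d}\pi_{0}/{\rm d}\pi\le M$ and $\pi P^{k}=\pi$ we have ${\rm d}\mu/{\rm d}\pi\le M$, and the moment quantities $\psi(h,m,\cdot)$ appearing in Theorem~\ref{thm:fishy-estimator} for the $P^{L}$-chain are subsequences of those for the $P$-chain, hence still controlled via Lemma~\ref{lem:sup-finite}. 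Consequently, once $\mathbb{E}_{\gamma_{k}^{(L)}}[(\tau^{(L)})^{\kappa}]<\infty$ is established, Theorem~\ref{thm:ub-simple} applied to $P^{L}$ gives $\mathbb{E}[H_{k}^{(L)}]=\pi(h)$ together with
\[
\mathbb{E}[|H_{k}^{(L)}|^{p}]^{1/p}\le M^{1/p}\|h\|_{L^{m}(\pi)}\,\zeta\!\left(\tfrac{(m-p)\kappa}{mp}\right)\Bigl\{1+2M^{\frac{p-m}{mp}}\,\mathbb{E}_{\gamma_{k}^{(L)}}[(\tau^{(L)})^{\kappa}]^{\frac{m-p}{mp}}\Bigr\}
\]
for $p\ge1$ with $1/p>1/m+1/\kappa$.

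The remaining point is to bound $\mathbb{E}_{\gamma_{k}^{(L)}}[(\tau^{(L)})^{\kappa}]$. Writing $\gamma_{0}^{(L)}=\pi_{0}P^{L}\otimes\pi_{0}$, the product structure and ${\rm d}\pi_{0}/{\rm d}\pi\le M$ give ${\rm d}\gamma_{0}^{(L)}/{\rm d}(\pi\otimes\pi)\le M^{2}$, as in Lemma~\ref{lem:gamma-M-sq-independent-init}. Running the fine-scale $\bar{P}$-chain from $\gamma_{0}^{(L)}$ with (lag-$0$) meeting time $\bar{\tau}$, faithfulness together with the sublattice observation yields, pathwise, that the effective meeting time of $(\tilde{X}_{i},\tilde{Y}_{i})$ is $\tau^{(L)}=0\vee\lceil(\bar{\tau}-k)/L\rceil$, a non-increasing function of $k$ and of $L$. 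Hence
\[
\mathbb{E}_{\gamma_{k}^{(L)}}[(\tau^{(L)})^{\kappa}]=\mathbb{E}_{\gamma_{0}^{(L)}}\bigl[(0\vee\lceil(\bar{\tau}-k)/L\rceil)^{\kappa}\bigr]\le M^{2}\,\mathbb{E}_{\pi\otimes\pi}\bigl[(0\vee\tfrac{\tau-k}{L})^{\kappa}\bigr]\le M^{2}\,\mathbb{E}_{\pi\otimes\pi}[\tau^{\kappa}]<\infty,
\]
bounding $0\vee\lceil(\bar\tau-k)/L\rceil\le 1+0\vee(\bar\tau-k)/L$ where needed and absorbing the harmless additive constant (equivalently, keeping the ceiling inside \eqref{eq:HkLp-bound}). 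Substituting into the previous display and using $\tfrac{p-m}{mp}+2\cdot\tfrac{m-p}{mp}=\tfrac{m-p}{mp}$ to collect the powers of $M$ yields \eqref{eq:HkLp-bound}, and the monotonicity in $k$ and $L$ follows from that of $\mathbb{E}_{\pi\otimes\pi}[(0\vee\lceil(\tau-k)/L\rceil)^{\kappa}]$.

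The main obstacle is this last step: correctly identifying the lag-$L$, offset-$k$ meeting time of the effective coupled chain and transferring its $\kappa$-th moment to one under $\pi\otimes\pi$. One must simultaneously account for the shift by $k$ (Markov property plus a monotone coupling of the two initialisations), the ceiling produced by observing the chain on the sublattice $L\mathbb{N}$ together with faithfulness, and the factor $M^{2}$ incurred in replacing $\pi_{0}P^{L}\otimes\pi_{0}$ by $\pi\otimes\pi$ — which is precisely what turns the factor $M^{(p-m)/(mp)}$ of Theorem~\ref{thm:ub-simple} into the factor $M^{(m-p)/(mp)}$ appearing in \eqref{eq:HkLp-bound}. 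Everything else is a direct specialisation of Theorem~\ref{thm:ub-simple} (equivalently Theorem~\ref{thm:fishy-estimator}) to the $L$-step kernel.
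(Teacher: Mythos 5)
Your proof is correct and follows essentially the same route as the paper's: view $H_{k}^{(L)}$ as the Definition~\ref{def:unbiased-estimator} estimator for the $L$-step kernel $P^{L}$ with $\mu=\pi_{0}P^{k}$, identify the coarse-grained meeting time as $0\vee\lceil(\tau-k)/L\rceil$, transfer its $\kappa$-th moment to $\pi\otimes\pi$ at the price of $M^{2}$ via Lemma~\ref{lem:gamma-M-sq-independent-init}, and plug into Theorem~\ref{thm:ub-simple}; your bookkeeping of the powers of $M$ matches the paper's. (Your remark about keeping the ceiling, versus the $0\wedge$ and missing ceiling in the displayed bound, correctly flags what is a typo in the statement rather than a gap in your argument.)
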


\begin{proof}
Recall that $a\vee b$ stands for the maximum and $a\wedge b$ for
the minimum of $a$ and $b$. Since $\bar{P}^{L}$ is a coupling of
$P^{L}$ with itself and $P^{L}$ is $\pi$-invariant, and following
Remark~\ref{rem:L-k-approx-Poisson}, we seek to apply Theorem~\ref{thm:ub-simple}
to the approximation
\[
h(X_{k})+G_{Y_{k}}^{(L)}(X_{k+L}),
\]
where the second term is obtained by considering a chain $(X^{(L,k)},Y^{(L,k)})$
with Markov transition kernel $\bar{P}^{L}$, with $(X_{0}^{(L,k)},Y_{0}^{(L,k)})\sim\gamma_{k}^{(L)}$,
and $\gamma_{k}^{(L)}=\gamma^{(L)}\bar{P}^{k}=(\pi_{0}P^{L}\otimes\pi_{0})\bar{P}^{k}$.
This is analogous to the coupled chain described in Section~\ref{subsec:Coupled-Markov-chains}
which is used in Definition~\ref{def:Gyx}. We define
\[
\tau_{L,k}=\inf\{t\geq0:X_{t}^{(L,k)}=Y_{t}^{(L,k)}\},
\]
and we seek to verify that Assumption~\ref{assu:tau-moment-kappa}
holds for $\tau_{L,k}$ and obtain a finite bound for $\mathbb{E}_{\gamma_{k}^{(L)}}^{(L)}[\tau_{L,k}^{\kappa}]$,
where $\mathbb{E}^{(L)}_{\gamma_{k}^{(L)}}$ denotes expectation w.r.t. to the law of
$(X^{(L,k)},Y^{(L,k)})$. In Theorem~\ref{thm:ub-simple}, we take
$\mu=\pi_{0}P^{k}$.

If we define $(X,Y)$ with Markov transition kernel $\bar{P}$ and
$(X_{0},Y_{0})\sim\gamma^{(L)}$, and define $\tau=\inf\{t\geq0:X_{t}=Y_{t}\}$,
we observe that $(X^{(L,k)},Y^{(L,k)})$ may be taken as a skeleton
of this chain, i.e.
\[
(X_{t}^{(L,k)},Y_{t}^{(L,k)})\overset{d}{=}(X_{k+tL},Y_{k+tL}),\qquad t\geq0.
\]
Therefore, we may deduce that
\[
\tau_{L,k}\overset{d}{=}0\vee\left\lceil \frac{\tau-k}{L}\right\rceil ,
\]
and therefore $\mathbb{E}_{\pi\otimes\pi}\left[\tau^{\kappa}\right]<\infty$
implies $\mathbb{E}_{\pi\otimes\pi}^{(L)}\left[\tau_{L,k}^{\kappa}\right]<\infty$.
By Lemma~\ref{lem:gamma-M-sq-independent-init}, we have ${\rm d}\mu/{\rm d}\pi={\rm d}\pi_{0}P^{k}/{\rm d}\pi\leq M$
and ${\rm d}\gamma^{(L)}/{\rm d}(\pi\otimes\pi)\leq M^{2}$, and hence
\[
\mathbb{E}_{\gamma_{k}^{(L)}}^{(L)}\left[\tau_{L,k}^{\kappa}\right]=\mathbb{E}_{\gamma^{(L)}}\left[\left(0\vee\left\lceil \frac{\tau-k}{L}\right\rceil \right)^{\kappa}\right]\leq M^{2}\mathbb{E}_{\pi\otimes\pi}\left[\left(0\vee\left\lceil \frac{\tau-k}{L}\right\rceil \right)^{\kappa}\right].
\]
It follows from (\ref{eq:ub-simple-bound}) that for $p\geq1$ such
that $\frac{1}{p}>\frac{1}{m}+\frac{1}{\kappa}$,
\begin{align*}
\mathbb{E}\left[\left|H_{k}\right|^{p}\right]^{\frac{1}{p}} & \leq M^{\frac{1}{p}}\left\Vert h\right\Vert _{L^{m}(\pi)}\zeta\left(\frac{(m-p)\kappa}{mp}\right)\left\{ 1+2M^{\frac{p-m}{mp}}\mathbb{E}_{\gamma^{(L)}}^{(L)}[\tau_{L,k}^{\kappa}]^{\frac{m-p}{mp}}\right\} \\
 & \leq M^{\frac{1}{p}}\left\Vert h\right\Vert _{L^{m}(\pi)}\zeta\left(\frac{(m-p)\kappa}{mp}\right)\left\{ 1+2M^{\frac{m-p}{mp}}\mathbb{E}_{\pi\otimes\pi}\left[\left(0 \vee \left\lceil \frac{\tau-k}{L}\right\rceil\right)^{\kappa}\right]^{\frac{m-p}{mp}}\right\} \\
 & <\infty,
\end{align*}
and that $m>\kappa/(\kappa-1)$ is sufficient for $\mathbb{E}[H_{k}]=\pi(h)$.
\end{proof}
The following shows that the unbiased signed measure \eqref{eq:pihatmeasure} 
is indeed unbiased
for functions with suitably large moments when $\kappa$ is large
enough, and that moments of averaged unbiased estimators are finite
under the same conditions as for $H$.
\begin{prop}
\label{prop:lag-average-transference}Under Assumption~\ref{assu:tau-moment-kappa},
let $h\in L^{m}(\pi)$ for some $m>\kappa/(\kappa-1)$, ${\rm d}\pi_{0}/{\rm d}\pi\leq M$.
Then for any $k,\ell\in\mathbb{N}$ with $\ell\geq k$, $\mathbb{E}[H_{k:\ell}]=\pi(h)$
and for $p\geq1$ such that $\frac{1}{p}>\frac{1}{m}+\frac{1}{\kappa}$,
\[
\mathbb{E}\left[\left|H_{k:\ell}\right|^{p}\right]^{\frac{1}{p}}\leq M^{\frac{1}{p}}\left\Vert h\right\Vert _{L^{m}(\pi)}\zeta\left(\frac{(m-p)\kappa}{mp}\right)\left\{ 1+2M^{\frac{m-p}{mp}}\mathbb{E}_{\pi\otimes\pi}\left[\left(0 \vee \left\lceil \frac{\tau-k}{L}\right\rceil\right)^{\kappa}\right]^{\frac{m-p}{mp}}\right\} <\infty.
\]
\end{prop}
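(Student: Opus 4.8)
The plan is to deduce Proposition~\ref{prop:lag-average-transference} directly from Proposition~\ref{prop:bound-L-k}, exploiting the fact that $H_{k:\ell}^{(L)}$ is, by Definition~\ref{def:lagged-offset}, nothing but the arithmetic average $(\ell-k+1)^{-1}\sum_{t=k}^{\ell}H_t^{(L)}$ of offset estimators built from the same coupled trajectories. Lack-of-bias is then immediate: since $m>\kappa/(\kappa-1)$ and ${\rm d}\pi_0/{\rm d}\pi\le M$, Proposition~\ref{prop:bound-L-k} gives $\mathbb{E}[H_t^{(L)}]=\pi(h)$ for every $t\geq0$, hence in particular for $t\in\{k,\ldots,\ell\}$, and by linearity of expectation $\mathbb{E}[H_{k:\ell}^{(L)}]=(\ell-k+1)^{-1}\sum_{t=k}^{\ell}\mathbb{E}[H_t^{(L)}]=\pi(h)$.

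For the moment bound, I would first apply Minkowski's inequality to the average,
\[
\mathbb{E}\big[|H_{k:\ell}^{(L)}|^{p}\big]^{1/p}\le\frac{1}{\ell-k+1}\sum_{t=k}^{\ell}\mathbb{E}\big[|H_t^{(L)}|^{p}\big]^{1/p},
\]
and then bound each summand by \eqref{eq:HkLp-bound} with $k$ replaced by $t$. The key observation is that $t$ enters that bound only through the offset-dependent expectation $\mathbb{E}_{\pi\otimes\pi}[(0\wedge\frac{\tau-t}{L})^{\kappa}]$ (equivalently, through $\mathbb{E}_{\pi\otimes\pi}[(0\vee\lceil(\tau-t)/L\rceil)^{\kappa}]$ as it appears in the proof of Proposition~\ref{prop:bound-L-k}), and that the pointwise map $t\mapsto 0\vee\lceil(\tau-t)/L\rceil$ is non-increasing in $t$; hence so is this expectation. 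Therefore every summand is at most the value of \eqref{eq:HkLp-bound} at $t=k$, and the average, being a convex combination, inherits exactly that bound, which is the right-hand side claimed in the statement. Finiteness carries over from Proposition~\ref{prop:bound-L-k}, which relies on $\mathbb{E}_{\pi\otimes\pi}[\tau^{\kappa}]<\infty$ (Assumption~\ref{assu:tau-moment-kappa}) and ${\rm d}\pi_0/{\rm d}\pi\le M$.

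I do not expect a genuine obstacle here: the argument is simply the transference of a uniform-in-$t$ bound through a convex combination. The only steps needing a line of care are (i) verifying the monotonicity in $t$ of the offset-dependent expectation, so that the $t=k$ term dominates, and (ii) checking that the hypotheses invoked here ($m>\kappa/(\kappa-1)$, ${\rm d}\pi_0/{\rm d}\pi\le M$, and the moment condition $\frac{1}{p}>\frac{1}{m}+\frac{1}{\kappa}$) are precisely those of Proposition~\ref{prop:bound-L-k}, so that it applies verbatim for each $t\in\{k,\ldots,\ell\}$.
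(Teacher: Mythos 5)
Your proposal is correct and follows essentially the same route as the paper's proof: lack-of-bias by applying Proposition~\ref{prop:bound-L-k} to each $H_t^{(L)}$ and using linearity, then Minkowski's inequality on the average followed by the observation that the bound \eqref{eq:HkLp-bound} is non-increasing in the offset, so the $t=k$ term dominates. Your extra care about the monotonicity of $t\mapsto 0\vee\lceil(\tau-t)/L\rceil$ is exactly the justification the paper leaves implicit.
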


\begin{proof}
By Proposition~\ref{prop:bound-L-k}, if $m>\kappa/(\kappa-1)>1$
then $\mathbb{E}[H_{t}]=\pi(h)$ for all $t\in\{k,\ldots,\ell\}$
and so $\mathbb{E}[H_{k:\ell}]=\pi(h)$. Using Minkowski's inequality,
we have
\[
\mathbb{E}\left[\left|H_{k:\ell}\right|^{p}\right]^{\frac{1}{p}}=\frac{1}{\ell-k+1}\mathbb{E}\left[\left|\sum_{t=k}^{\ell}H_{t}\right|^{p}\right]^{\frac{1}{p}}\leq\frac{1}{\ell-k+1}\sum_{t=k}^{\ell}\mathbb{E}\left[\left|H_{t}\right|^{p}\right]^{\frac{1}{p}},
\]
and we may conclude using the fact that the upper bound in
(\ref{eq:HkLp-bound}) is non-increasing in $k$.
\end{proof}

\begin{proof}[Proof of Proposition~\ref{prop:cltunbiasedmcmc}]
We may write $H_{t}=h(X_{t})+B_{t}$ for a random variable $B_t$
defined on the right-hand side of \eqref{eq:H_k^L} as  $\sum_{j=1}^{\infty} h(X_{t+jL}) - h(Y_{t+(j-1)L})$.
Here $(X_{t+L},Y_{t})$ is a Markov chain with transition kernel $\bar{P}$, as generated by Algorithm~\ref{alg:coupledchains}.
It follows that we may write
\[
\sqrt{\ell-k+1}\left( H_{k:\ell}-\pi(h)\right) =\frac{1}{\sqrt{\ell-k+1}}\sum_{t=k}^{\ell}h_{0}(X_{t})+\frac{1}{\sqrt{\ell-k+1}}\sum_{t=k}^{\ell}B_{t}.
\]
Since $B_{t}=0$ for $t\geq\tau$, we have $\frac{1}{\sqrt{\ell-k+1}}\sum_{t=k}^{\ell}B_{t}\to0$
in probability as $\ell\to\infty$, and by Theorem~\ref{thm:clt-kappa}
and Slutsky, we may conclude.
\end{proof}

We now demonstrate that estimators associated with subsampling the
unbiased signed measure $\hat{\pi}$ in \eqref{eq:pihatmeasure_simplified}
are unbiased and have finite $p$th moments under the same conditions
as the standard estimator. We start with a technical lemma stating that
the weights in $\hat{\pi}$ are bounded by expressions that depend on $k,\ell,L$ but not on $\tau$.

\begin{lem}
\label{lem:signed-measure-weight-bounds}Let $\hat{\pi}=\sum_{i=1}^{N}\omega_{i}\delta_{Z_{i}}$
be the unbiased signed measure in \eqref{eq:pihatmeasure_simplified}.
Then
\[
\frac{1}{\ell-k+1}\leq\min_{i}\left|\omega_{i}\right|\leq\max_{i}\left|\omega_{i}\right|\leq\frac{1}{\ell-k+1}\left(1+\frac{\ell-k}{L}\right)
\]
\end{lem}

\begin{proof}
For the lower bound it suffices to note that if a weight is non-zero,
its absolute value is necessarily greater than or equal to $1/(\ell-k+1)$. For the
upper bound, we find that $\left|\omega_{i}\right|\leq1/(\ell-k+1)$
for the first $\ell-k+1$ points, and for the remaining points,
\begin{align*}
\left|\omega_{i}\right| & \leq\frac{\left\lfloor (t-k)/L\right\rfloor -\left\lceil \max(L,t-\ell)/L\right\rceil +1}{\ell-k+1}\\
 & \leq\frac{(t-k)/L-\max(L,t-\ell)/L+1}{\ell-k+1}\\
 & \leq\frac{(t-k)/L-(t-\ell)/L+1}{\ell-k+1}\\
 & =\frac{(\ell-k)/L+1}{\ell-k+1} =\frac{1}{\ell-k+1}\left(1+\frac{\ell-k}{L}\right).
\end{align*}
\end{proof}

\begin{prop}
\label{prop:subsample}Under Assumption~\ref{assu:tau-moment-kappa},
let $h\in L^{m}(\pi)$ for some $m>\kappa/(\kappa-1)$, ${\rm d}\pi_{0}/{\rm d}\pi\leq M$,
$k,\ell\in\mathbb{N}$ with $k\leq\ell$, and $\hat{\pi}=\sum_{i=1}^{N}\omega_{i}\delta_{Z_{i}}$
be the unbiased signed measure in \eqref{eq:pihatmeasure_simplified}.
Define for some
$R\geq1$,
\[
S_{R}=\frac{1}{R}\sum_{i=1}^{R} \xi_{I_i}^{-1} \omega_{I_{i}}h(Z_{I_{i}}),
\]
where $I_{1},\ldots,I_{R}$ are conditionally independent ${\rm Categorical}\{\xi_{1},\ldots,\xi_{N}\}$
variables satisfying
\[
\frac{a}{N}\leq\min_{i}\xi_{i}\leq\max_{i}\xi_{i}\leq\frac{b}{N},
\]
for some constants $0<a\leq b<\infty$ that may be functions of $k,\ell,L$ but not $\tau$. Then $\mathbb{E}[S_{R}]=\pi(h)$
and for $p\geq1$ such that $\frac{1}{p}>\frac{1}{m}+\frac{1}{\kappa}$,
\begin{eqnarray*}
\mathbb{E}\left[\left|S_{R}\right|^{p}\right]^{\frac{1}{p}} &\leq& a^{-1}b^{\frac{\kappa-p}{\kappa p}}\left\{ \frac{1}{\ell-k+1}\left(1+\frac{\ell-k}{L}\right)\right\} \mathbb{E}\left[\left|N\right|^{\kappa}\right]^{\frac{m-p}{mp}} \\
&& \times M^{\frac{1}{m}}\left\Vert h\right\Vert _{L^{m}(\pi)}\left\{ \sum_{i=1}^{\infty}\frac{1}{i^{1+\varepsilon}}\right\} ^{\frac{\kappa-p}{\kappa p}}<\infty,
\end{eqnarray*}
where $\varepsilon=\frac{\kappa}{m(\kappa-p)}\cdot(\kappa m-\kappa p-mp)$.
\end{prop}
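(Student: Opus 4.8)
The plan is to establish lack of bias by a tower--property argument, and then to control $\mathbb{E}[|S_R|^p]$ by a \emph{global} application of H\"older's inequality that converts the spurious factor of $N$ appearing in $S_R$ into $\kappa$ moments of $N$ and $a$ moments of $h$ for a suitable $a<m$, after which the residual random-length sum is handled by the estimate of Lemma~\ref{lem:tech-new}. For unbiasedness: conditionally on $\hat{\pi}$ (equivalently, on the coupled lagged chains, hence on the weights and atoms $(\omega_i,Z_i)_{i=1}^N$), $S_R$ is an importance-sampling estimator of $\hat{\pi}(h)$, so $\mathbb{E}[S_R\mid\hat{\pi}]=\hat{\pi}(h)=H_{k:\ell}^{(L)}$; taking expectations and invoking Proposition~\ref{prop:lag-average-transference} gives $\mathbb{E}[S_R]=\mathbb{E}[H_{k:\ell}^{(L)}]=\pi(h)$, the interchange being justified by $\mathbb{E}[|S_R|]<\infty$, which is the $p=1$ instance of the moment bound (valid since $m>\kappa/(\kappa-1)$ forces $1>\tfrac1m+\tfrac1\kappa$).

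For the moment bound, put $W=\tfrac1{\ell-k+1}\bigl(1+\tfrac{\ell-k}{L}\bigr)$; by Lemma~\ref{lem:signed-measure-weight-bounds}, $\max_i|\omega_i|\le W$, so $|S_R|\le N W\cdot\tfrac1R\sum_{i=1}^R|h(Z_{K_i})|$. Because $\tfrac1p>\tfrac1m+\tfrac1\kappa$, the exponent $a:=\tfrac{\kappa p}{\kappa-p}$ satisfies $\tfrac1p=\tfrac1\kappa+\tfrac1a$ and $1\le a<m$; H\"older with exponents $\kappa$ and $a$, then Minkowski over the $R$ identically distributed summands, yields
\[
\mathbb{E}\!\left[|S_R|^p\right]^{1/p}\le W\,\mathbb{E}[N^\kappa]^{1/\kappa}\,\mathbb{E}\!\left[|h(Z_{K_1})|^a\right]^{1/a}.
\]
For the last factor, conditioning on $\hat{\pi}$, integrating over $K_1$, using $\max_i\xi_i\le b/N$, and that the atoms are (with multiplicity) among $X_k',\dots,X_\ell'$ and the pairs $X_t',Y_{t-L}'$ for $k+L\le t\le\tau^{(L)}-1$,
\[
\mathbb{E}\!\left[|h(Z_{K_1})|^a\mid\hat{\pi}\right]=\sum_{k=1}^N\xi_k\,|h(Z_k)|^a\le\frac{b}{N}\left\{\sum_{t=k}^{\ell}|h(X_t')|^a+\sum_{t=k+L}^{\tau^{(L)}-1}\bigl(|h(X_t')|^a+|h(Y_{t-L}')|^a\bigr)\right\}.
\]
The fixed-length first sum has uniformly bounded expectation, since $\sup_t\mathbb{E}[|h(X_t')|^m]\le M\pi(|h|^m)$ (and likewise for $Y'$) by Lemma~\ref{lem:sup-finite}(ii) combined with Lemma~\ref{lem:gamma-M-sq-independent-init}, using ${\rm d}\pi_0/{\rm d}\pi\le M$. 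In the random-length second sum the factor $1/N$ is essential: on $\{\tau^{(L)}>t\}$ one has $N\gtrsim t$, so writing the sum as $\sum_{t\ge k+L}(\cdots)\,\mathbb{I}(\tau^{(L)}>t)$, splitting each term by H\"older into $\bigl\||h(X_t')|^a\bigr\|_{L^{m/a}}$ and $\mathbb{P}(\tau^{(L)}>t)^{1-a/m}$, and bounding $\mathbb{P}(\tau^{(L)}>t)\lesssim\mathbb{E}[N^\kappa]\,t^{-\kappa}$ by Markov's inequality, one is left with the series $\sum_{i\ge1}i^{-(1+\varepsilon)}$, where $\varepsilon=\kappa(1-a/m)=\tfrac{\kappa}{m(\kappa-p)}(\kappa m-\kappa p-mp)$ is positive precisely when $\tfrac1p>\tfrac1m+\tfrac1\kappa$. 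Collecting the powers of $\mathbb{E}[N^\kappa]$, namely $\tfrac1\kappa$ from the first H\"older step and $\tfrac1a-\tfrac1m$ from the second, produces total exponent $\tfrac1p-\tfrac1m=\tfrac{m-p}{mp}$, while $b$ and the series each enter to the power $\tfrac1a=\tfrac{\kappa-p}{\kappa p}$, which matches the asserted bound; finiteness holds because $\mathbb{E}[N^\kappa]<\infty$ under Assumption~\ref{assu:tau-moment-kappa} and ${\rm d}\pi_0/{\rm d}\pi\le M$, via the skeleton-chain reduction of Proposition~\ref{prop:bound-L-k}.

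The main obstacle is exactly this factor of $N$ (equivalently, importance weights of size of order $1/N$): a term-by-term estimate inside the average over $i$ would cost a factor $N^{p-1}$ that fails to be integrable under the stated hypothesis. The resolution is to apply H\"older globally rather than inside the average, trading the $p$ moments of $S_R$ for $\kappa$ moments of $N$ and $a=\tfrac{\kappa p}{\kappa-p}$ moments of $h$; the choice of $a$ is forced, and the same inequality $\tfrac1p>\tfrac1m+\tfrac1\kappa$ simultaneously guarantees $a<m$ and the convergence of the resulting series. The remaining work---tracking the index ranges in the second component of $\hat{\pi}$ and keeping the $1/N$ factor through the Lemma~\ref{lem:tech-new}-style estimates so that nothing blows up---is routine.
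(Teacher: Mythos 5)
Your proof is correct in substance and reaches the same conclusion, but it differs from the paper's argument in two technical devices. First, for the reduction from $R$ summands to one, the paper invokes the convex order $S_{R}\leq_{{\rm cx}}S_{1}$ (realizing $S_{1}$ by drawing $K\sim\frac{1}{R}\sum_{i}\delta_{K_{i}}$, so that $\mathbb{E}[S_{1}\mid\sigma(K_{1:R},\omega_{1:N},Z_{1:N})]=S_{R}$ and Jensen gives $\mathbb{E}[|S_{R}|^{p}]\leq\mathbb{E}[|S_{1}|^{p}]$), whereas you apply H\"older globally to $N^{p}\bar{h}_{R}^{p}$ and then Minkowski over the $R$ identically distributed terms; both yield the same factorization $W\,\mathbb{E}[N^{\kappa}]^{1/\kappa}\,\mathbb{E}[|h(Z_{K})|^{a}]^{1/a}$ with $a=\kappa p/(\kappa-p)$, and your route is no less rigorous. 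Second, for $\mathbb{E}[|h(Z_{K})|^{a}]$ the paper sums over the index $K=i$, uses $\mathbb{P}(K=i\mid\sigma(N,Z_{1:N}))\leq b\,i^{-1}$, inserts $\mathbb{I}(N\geq i)\leq(N/i)^{\varepsilon}$ and applies H\"older with exponent $\kappa/\varepsilon$, choosing $\varepsilon$ so that the inner moment of $h$ is exactly the $m$th; you instead sum over the time index $t$, exploit $N\gtrsim t$ on $\{\tau^{(L)}>t\}$, and split each term via H\"older into $\mathbb{E}[|h(X_{t}')|^{m}]^{a/m}$ and $\mathbb{P}(\tau^{(L)}>t)^{1-a/m}$ with Markov's inequality supplying the polynomial tail. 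The two computations are essentially dual to one another and produce the same exponents ($\mathbb{E}[N^{\kappa}]$ to the power $\tfrac{1}{p}-\tfrac{1}{m}$, the series and $b$ to the power $\tfrac{1}{a}$, and the same $\varepsilon$), so your argument establishes lack of bias and finiteness of the $p$th moment under the stated hypotheses.

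Two small caveats. Your derivation does not reproduce the displayed constant exactly: by treating the fixed-length block $\sum_{t=k}^{\ell}$ separately from the random-length block you obtain a bound of the form $C_{1}+C_{2}\sum_{i}i^{-(1+\varepsilon)}$ rather than the single product, and the provenance of the factor $a^{-1}$ is never shown --- it arises from the lower bound $\xi_{K}^{-1}\leq N/a$ applied to the Horvitz--Thompson form $S_{R}=\frac{1}{R}\sum_{i}\omega_{K_{i}}\xi_{K_{i}}^{-1}h(Z_{K_{i}})$ (which is also the form needed for your conditional-unbiasedness claim $\mathbb{E}[S_{R}\mid\hat{\pi}]=\hat{\pi}(h)$ when the $\xi_{i}$ are not uniform), whereas your estimate $|S_{R}|\leq NW\cdot\frac{1}{R}\sum_{i}|h(Z_{K_{i}})|$ works from the literal $N\omega_{K_{i}}$ form. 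Neither point affects the qualitative conclusion, but the second should be made explicit to keep unbiasedness and the moment bound consistent for general selection probabilities.
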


\begin{proof}
The signed measure $\hat{\pi}$ is such that $\hat{\pi}(h)=H_{k:\ell}$.
Hence, Proposition~\ref{prop:lag-average-transference} may be applied
to show that $\mathbb{E}[\hat{\pi}(h)]=\pi(h)$ when $m>\kappa/(\kappa-1)$.
It follows that
\[
\mathbb{E}[S_{R}]=\mathbb{E}\left[\mathbb{E}[S_{R}\mid\hat{\pi}]\right]=\mathbb{E}\left[\hat{\pi}(h)\right]=\pi(h).
\]
To determine that $\mathbb{E}\left[\left|S_{R}\right|^{p}\right]<\infty$
for $1\geq\frac{1}{p}>\frac{1}{m}+\frac{1}{\kappa}$ we define $S_{1}=\omega_{I}\xi_{I}^{-1}h(Z_{I})$
where $I\mid\hat{\pi}\sim{\rm Categorical}\{\xi_{1},\ldots,\xi_{N}\}$.
Then $S_{R}$ is less than $S_{1}$ in the convex order, i.e. $S_{R}\leq_{{\rm cx}}S_{1}$
\citep[see, e.g.,][]{shaked2007stochastic} for any $R\geq2$ since
one may define $S_{1}$ by drawing $I\sim {R}^{-1}\sum_{r=1}^{R}\delta_{I_{r}}$
and it is then clear that $\mathbb{E}[S_{1}\mid\sigma(I_{1:R},\omega_{1:N},Z_{1:N})]=S_{R}$.
It follows that $\mathbb{E}\left[\left|S_{R}\right|^{p}\right]\leq\mathbb{E}\left[\left|S_{1}\right|^{p}\right]$
since $x\mapsto\left|x\right|^{p}$ is convex for $p\geq1$. Using Lemma~\ref{lem:signed-measure-weight-bounds},
the fact that $N\leq\ell-k+1+2\tau$ and H\"older's inequality,
\begin{align*}
\mathbb{E}\left[\left|S_{1}\right|^{p}\right] & =\mathbb{E}\left[\left|\omega_{I}\xi_{I}^{-1}h(Z_{I})\right|^{p}\right]\\
 & \leq a^{-p}\mathbb{E}\left[\left|N\omega_{I}h(Z_{I})\right|^{p}\right]\\
 & \leq a^{-p}\left\{ \frac{1}{\ell-k+1}\left(1+\frac{\ell-k}{L}\right)\right\} ^{p}\mathbb{E}\left[\left|N\right|^{\kappa}\right]^{\frac{p}{\kappa}}\mathbb{E}\left[\left|h(Z_{I})\right|^{\frac{\kappa p}{\kappa-p}}\right]^{1-\frac{p}{\kappa}},
\end{align*}
and since $\mathbb{E}\left[\left|N\right|^{\kappa}\right]<\infty$
by Assumption~\ref{assu:tau-moment-kappa} it remains to show that
$\mathbb{E}\left[\left|h(Z_{I})\right|^{\frac{\kappa p}{\kappa-p}}\right]<\infty$.
Now, since $\mathbb{P}(I=i\mid\sigma(N,Z_{1},\ldots,Z_{N}))\leq bi^{-1}$
for all $1\leq i\leq N$, and taking any $\varepsilon\in(0,\kappa)$, we
obtain for $q=\kappa p/(\kappa-p)$,
\begin{align*}
\mathbb{E}\left[\left|h(Z_{I})\right|^{q}\right] 
 & =\sum_{i=1}^{\infty}\mathbb{E}\left[\mathds{1}(N\geq i,I=i)\left|h(Z_{i})\right|^{q}\right]\\
 & \leq b\sum_{i=1}^{\infty}\frac{1}{i}\mathbb{E}\left[\mathds{1}(N\geq i)\left|h(Z_{i})\right|^{q}\right]\\
 & \leq b\sum_{i=1}^{\infty}\frac{1}{i}\mathbb{E}\left[\left(\frac{N}{i}\right)^{\varepsilon}\left|h(Z_{i})\right|^{q}\right]\\
 & \leq b\mathbb{E}\left[N^{\kappa}\right]^{\frac{\varepsilon}{\kappa}}\sum_{i=1}^{\infty}\frac{1}{i^{1+\varepsilon}}\mathbb{E}\left[\left|h(Z_{i})\right|^{\frac{\kappa q}{\kappa-\varepsilon}}\right]^{1-\frac{\varepsilon}{\kappa}}.
\end{align*}
For $\frac{1}{p}>\frac{1}{m}+\frac{1}{\kappa}$, it follows that we
may take $\varepsilon=\kappa(m-q)/m$ so that $\kappa q/(\kappa-\varepsilon)=m$.
Moreover, since $Z_{i}\sim\pi_{0}P^{t}$ for some $t\in\mathbb{N}$,
$\mathbb{E}\left[\left|h(Z_{i})\right|^{m}\right]\leq M\pi(\left|h\right|^{m})$
by Lemma~\ref{lem:sup-finite} and so using the fact that $(1-\frac{\varepsilon}{\kappa})(1-\frac{p}{\kappa})=\frac{p}{m}$,
we obtain
\begin{align*}
&\mathbb{E}\left[\left|S_{1}\right|^{p}\right] \\
& \leq a^{-p}\left\{ \frac{1}{\ell-k+1}\left(1+\frac{\ell-k}{L}\right)\right\} ^{p}\mathbb{E}\left[\left|N\right|^{\kappa}\right]^{\frac{p}{\kappa}}\left\{ b\mathbb{E}\left[N^{\kappa}\right]^{\frac{\varepsilon}{\kappa}}M^{1-\frac{\varepsilon}{\kappa}}\pi(\left|h\right|^{m})^{1-\frac{\varepsilon}{\kappa}}\sum_{i=1}^{\infty}\frac{1}{i^{1+\varepsilon}}\right\} ^{1-\frac{p}{\kappa}}\\
 & =a^{-p}b^{\frac{\kappa-p}{\kappa}}\left\{ \frac{1}{\ell-k+1}\left(1+\frac{\ell-k}{L}\right)\right\} ^{p}\mathbb{E}\left[\left|N\right|^{\kappa}\right]^{1-\frac{p}{m}}M^{\frac{p}{m}}\left\Vert h\right\Vert _{L^{m}(\pi)}^{p}\left\{ \sum_{i=1}^{\infty}\frac{1}{i^{1+\varepsilon}}\right\} ^{1-\frac{p}{\kappa}},
\end{align*}
which is finite. We thus conclude.
\end{proof}
\begin{example}
Natural choices of $\xi_{i}$ are to take $\xi_i = 1/N$ or $\xi_{i}\propto\left|\omega_{i}\right|$.
In the latter case, it follows from Lemma~\ref{lem:signed-measure-weight-bounds} that
\[
\max_{i}\frac{\left|\omega_{i}\right|}{\sum_{j=1}^{N}\left|\omega_{j}\right|}\leq\frac{\frac{1}{\ell-k+1}\left(1+\frac{\ell-k}{L}\right)}{N\frac{1}{\ell-k+1}}=\frac{1+(\ell-k)/L}{N},
\]
and
\[
\min_{i}\frac{\left|\omega_{i}\right|}{\sum_{j=1}^{N}\left|\omega_{j}\right|}\geq\frac{\frac{1}{\ell-k+1}}{N\frac{1}{\ell-k+1}\left(1+\frac{\ell-k}{L}\right)}=\frac{1}{N(1+\frac{\ell-k}{L})},
\]
and so one may take $a=1/(1+\frac{\ell-k}{L})$ and $b=1+(\ell-k)/L$
in Proposition~\ref{prop:subsample} for this choice.
\end{example}

\subsection{\label{subsec:estimator-pi-h-g}Unbiased approximation of \texorpdfstring{$\pi(h_{0}\cdot g)$}{pi(h0g)}}

We now look at combinations of unbiased fishy function estimation
and unbiased estimation of $\pi(h)$. This will involve estimators
$G_{y}(x)$ at random points $x$. We first introduce
an alternative representation to avoid ambiguity in the
following developments. We define a probability measure $Q$ such
that, with $U\sim Q$,
\[
\bar{g}_{y}(x,U)\overset{d}{=}G_{y}(x),
\]
with $G_y(x)$ in Definition~\ref{def:Gyx}, and we define an extended distribution $\check{\pi}({\rm d}x,{\rm d}u)=\pi({\rm d}x)Q({\rm d}u)$
with $\check{\pi}$-invariant Markov kernel $T(x,u;{\rm d}y,{\rm d}v)=P(x,{\rm d}y)Q({\rm d}v)$,
and coupled Markov kernel
\[
\begin{aligned}
\bar{T}(x,u,y,v;{\rm d}x',{\rm d}u',{\rm d}y',{\rm d}v') =\;&
\bar{P}(x,y;{\rm d}x',{\rm d}y') \, Q({\rm d}u') \\
&\times \left\{ 
  \mathds{1}(x'=y')\, \delta_{u'}({\rm d}v') 
  + \mathds{1}(x'\neq y')\, Q({\rm d}v') 
\right\}.
\end{aligned}
\]
Thus, if $(X',U',Y',V')\sim\bar{T}(x,y,u,v;\cdot)$
then $(X',Y')\sim\bar{P}(x,y;\cdot)$ and if $X'=Y'$ then $V'=U'\sim Q$
but if $X'\neq Y'$ then $U',V'\sim Q$ independently.

We denote by $\check{G}_{y,v}^{f}(x,u)$
an unbiased approximation of a fishy function associated with the transition $T$ and
test function $f$,
as opposed to $P$ and $h$, i.e. 
\[
\mathbb{E}\left[\check{G}_{y,v}^{f}(x,u)\right]=\check{g}_{y,v}^{f}(x,u):=\check{g}_{\star}^{f}(x,u)-\check{g}_{\star}^{f}(y,v),
\]
where $\check{g}_{\star}^{f}(x,u):=\sum_{t=0}^{\infty}T^{t}f_{0}(x,u)$,
with $f_{0}:=f-\check{\pi}(f)$  for
$f$ in $L_{0}^{1}(\check{\pi})$.

The Markov chain $(X,U,Y,V)$ has the same meeting time as the Markov
chain $(X,Y)$ by construction, so Assumption~\ref{assu:tau-moment-kappa}
holds for this chain with $\pi$ replaced by $\check{\pi}$. Similarly
if ${\rm d}\mu/{\rm d}\pi\leq M$ then with $\check{\mu}=\mu\otimes Q$
and $\check{\pi}=\pi\otimes Q$ we have ${\rm d}\check{\mu}/{\rm d}\check{\pi}\leq M$.
Hence, we may apply Proposition~\ref{prop:lag-average-transference}
or Proposition~\ref{prop:subsample} to deduce lack-of-bias of an
appropriate approximation of $\check{\pi}(\phi)$ and finite $p$th
moments if $\phi\in L^{m}(\check{\pi})$ and $1\geq\frac{1}{p}>\frac{1}{\kappa}+\frac{1}{m}$.

The following two lemmas provide conditions for $\bar{g}_{y}\in L^{q}(\check{\pi})$
and $h\cdot \bar{g}_{y}:(x,u)\mapsto h(x)\bar{g}_y(x,u)\in L^{s}(\check{\pi})$, which are used to analyze
both the MCMC estimator of $v(P,h)$ of Section \ref{subsec:consistentavar} (EPAVE)
and the unbiased estimators of Section \ref{subsec:unbiasedavar} (UPAVE).
\begin{lem}
\label{lem:check-gy-moment} Let $h\in L^{m}(\pi)$ for some $m>1$.
Under Assumption~\ref{assu:tau-moment-kappa}, for $\pi$-almost
all $y$, $\bar{g}_{y}\in L^{q}(\check{\pi})$ for $q\geq1$ such that $\frac{1}{q}>\frac{1}{m}+\frac{1}{\kappa}$.
\end{lem}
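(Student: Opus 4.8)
The plan is to reduce the claim to the already-established integrability of $g_\star$ and then transfer it, first to $g_y$ and then to the extended space. First I would note that the assertion is only non-vacuous when some $q\geq 1$ satisfies $\frac{1}{q}>\frac{1}{m}+\frac{1}{\kappa}$, and since $q\geq1$ gives $\frac{1}{q}\leq1$, such a $q$ exists only if $\frac{1}{m}+\frac{1}{\kappa}<1$, i.e.\ $m>\kappa/(\kappa-1)$. So fix $q\geq1$ with $\frac{1}{q}>\frac{1}{m}+\frac{1}{\kappa}$; the constraint forces $m>\kappa/(\kappa-1)$, and hence Theorem~\ref{thm:poisson-Lp} applies under Assumption~\ref{assu:tau-moment-kappa} to yield $g_\star\in L_0^{q}(\pi)$.

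Next I would pass from $g_\star$ to $g_y$. Since $g_\star\in L^{q}(\pi)$, it is finite $\pi$-almost everywhere, so for $\pi$-almost all $y$ the constant $g_\star(y)$ is finite; for any such $y$, Definition~\ref{def:gy} gives $g_y=g_\star-g_\star(y)$ and Minkowski's inequality yields $\|g_y\|_{L^{q}(\pi)}\leq\|g_\star\|_{L^{q}(\pi)}+|g_\star(y)|<\infty$.

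Finally I would transfer to $\check{\pi}$. The function $g_y$, regarded as a function on the extended space, depends only on the first coordinate, and $Q$ is a probability measure, so $\int\check{\pi}({\rm d}x,{\rm d}u)\,|g_y(x)|^{q}=\int\pi({\rm d}x)\,|g_y(x)|^{q}<\infty$; that is, $g_y\in L^{q}(\check{\pi})$ for $\pi$-almost all $y$, which is the claim.

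I do not expect any genuine obstacle here; the only points needing a little care are the observation that the moment constraint on $q$ already enforces $m>\kappa/(\kappa-1)$ (so that Theorem~\ref{thm:poisson-Lp} is applicable), and the handling of the $\pi$-null set of $y$ on which $g_\star(y)$ might fail to be finite. A slightly longer alternative route, if one prefers to stay with the estimators, is to bound $\pi(|g_y|^{q})$ via Jensen's inequality by $\int\pi({\rm d}x)\,\mathbb{E}[|G_y(x)|^{q}]$ and then invoke part~3 of Theorem~\ref{thm:fishy-estimator} with $\gamma_1=\pi$ (so $M=1$) and $\gamma_2=\delta_y$, together with Fubini to ensure $\mathbb{E}_{\pi\otimes\delta_y}[\tau^{\kappa}]<\infty$ for $\pi$-almost all $y$; but the direct argument above is shorter and suffices.
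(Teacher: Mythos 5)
There is a genuine gap, and it comes from a misreading of what $g_{y}\in L^{q}(\check{\pi})$ means in this lemma. In Section~\ref{subsec:estimator-pi-h-g} the symbol $g_{y}$ is overloaded: on the extended space, $g_{y}(x,u)$ is defined so that $g_{y}(x,U)\overset{d}{=}G_{y}(x)$ with $U\sim Q$, i.e.\ it is the \emph{randomized estimator} written as a deterministic function of the auxiliary variable $u$, not the fishy function $g_{\star}(\cdot)-g_{\star}(y)$ lifted trivially to the product space. Consequently $\check{\pi}(|g_{y}|^{q})=\int\pi({\rm d}x)\,\mathbb{E}\left[\left|G_{y}(x)\right|^{q}\right]=\mathbb{E}_{\pi}\left[\left|G_{y}(X_{0})\right|^{q}\right]$, and the lemma is a statement about the $q$th moment of the estimator under $X_{0}\sim\pi$. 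Your main argument establishes only that $g_{\star}-g_{\star}(y)\in L^{q}(\pi)$; by Jensen's inequality $|g_{y}(x)|^{q}=|\mathbb{E}[G_{y}(x)]|^{q}\leq\mathbb{E}[|G_{y}(x)|^{q}]$, so this is strictly weaker than what is claimed and does not suffice for the downstream uses (Lemma~\ref{lem:check-phi-1-moment} needs moments of $h(x)g_{y}(x,u)$, and Theorem~\ref{thm:epave-clt} needs a CLT for $t^{-1}\sum_{s}G_{y}(X_{s})$, both of which require moments of the estimator itself, not of its conditional expectation).

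The good news is that the ``slightly longer alternative route'' you mention at the end is essentially the paper's actual proof: apply Theorem~\ref{thm:fishy-estimator} with $\gamma=\pi\otimes\delta_{y}$ (so $\gamma_{1}=\pi$, $M=1$, $\gamma_{2}=\delta_{y}$) to get $\mathbb{E}_{\pi\otimes\delta_{y}}\left[\left|G_{Y_{0}}(X_{0})\right|^{q}\right]<\infty$ for $\pi$-almost all $y$, which \emph{is} $\check{\pi}(|g_{y}|^{q})<\infty$ — no Jensen step is needed (indeed Jensen points in the wrong direction for the quantity you actually have to control). Your observation that the constraint $\frac{1}{q}>\frac{1}{m}+\frac{1}{\kappa}$ with $q\geq1$ already forces $m>\kappa/(\kappa-1)$ is correct and worth keeping, but you should promote the estimator-based argument from an optional aside to the proof itself.
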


\begin{proof}
We observe that $\check{\pi}(\left|\bar{g}_y\right|^{q})^{\frac{1}{q}}=\mathbb{E}_{\pi}\left[\left|G_{y}(X_{0})\right|^{q}\right]^{\frac{1}{q}}$
and then we apply  Theorem~\ref{thm:fishy-estimator}, part 3.,
with $\gamma=\pi\otimes\delta_{y}$ for $\pi$-almost all $y$. 
\end{proof}
\begin{lem}
\label{lem:check-phi-1-moment}Let $h\in L^{m}(\pi)$ for some $m>1$.
Under Assumption~\ref{assu:tau-moment-kappa}, with $\phi_{1}(x,u)=h(x)\bar{g}_{y}(x,u)$,
for $\pi$-almost all $y$, $\phi_{1}\in L^{s}(\check{\pi})$ for
$s\geq1$ such that $\frac{1}{s}>\frac{2}{m}+\frac{1}{\kappa}$.
\end{lem}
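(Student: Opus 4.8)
The plan is a direct application of H\"older's inequality, combining the moment bound for $h$ under $\pi$ (which lifts trivially to $\check{\pi}$) with the moment bound for $g_{y}$ under $\check{\pi}$ supplied by Lemma~\ref{lem:check-gy-moment}, whose hypotheses ($m>1$ and Assumption~\ref{assu:tau-moment-kappa}) are exactly those in force here.

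First I would fix $s\geq1$ with $\frac{1}{s}>\frac{2}{m}+\frac{1}{\kappa}$ and introduce the auxiliary exponent $q$ defined by $\frac{1}{q}:=\frac{1}{s}-\frac{1}{m}$, so that $\frac{1}{s}=\frac{1}{m}+\frac{1}{q}$. From the assumed strict inequality we obtain $\frac{1}{q}=\frac{1}{s}-\frac{1}{m}>\frac{1}{m}+\frac{1}{\kappa}$, and since $\frac{1}{s}\leq1$ while $\frac{1}{m}>0$ we also have $0<\frac{1}{q}<1$, hence $q\in(1,\infty)$; thus $q$ is an admissible exponent for Lemma~\ref{lem:check-gy-moment}. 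Because $h$ does not depend on $u$ and $\check{\pi}=\pi\otimes Q$ with $Q$ a probability measure, $\left\Vert h\right\Vert _{L^{m}(\check{\pi})}=\left\Vert h\right\Vert _{L^{m}(\pi)}<\infty$, and by Lemma~\ref{lem:check-gy-moment}, for $\pi$-almost all $y$ we have $g_{y}\in L^{q}(\check{\pi})$.

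It then remains to apply H\"older's inequality with the conjugate exponents $m/s$ and $q/s$ to the product $\phi_{1}=h\cdot g_{y}$, which gives, for $\pi$-almost all $y$,
\[
\check{\pi}(\left|\phi_{1}\right|^{s})^{\frac{1}{s}}=\left\Vert h\cdot g_{y}\right\Vert _{L^{s}(\check{\pi})}\leq\left\Vert h\right\Vert _{L^{m}(\check{\pi})}\left\Vert g_{y}\right\Vert _{L^{q}(\check{\pi})}<\infty,
\]
which is the claim, the exceptional $\pi$-null set of ``bad'' $y$ being inherited from Lemma~\ref{lem:check-gy-moment}. The argument is essentially bookkeeping with H\"older exponents; the only point requiring care is to verify that the chosen $q$ simultaneously satisfies $q\geq1$ and the strict inequality $\frac{1}{q}>\frac{1}{m}+\frac{1}{\kappa}$ needed to invoke Lemma~\ref{lem:check-gy-moment}, so I would not expect any substantive obstacle.
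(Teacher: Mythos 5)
Your proof is correct and follows essentially the same route as the paper's: a H\"older split of $\phi_1 = h\cdot g_y$ into an $L^m(\check\pi)$ factor for $h$ and an $L^q(\check\pi)$ factor for $g_y$ handled by Lemma~\ref{lem:check-gy-moment}. The only (immaterial) difference is bookkeeping: you fix the $h$-exponent at exactly $m$ and set $\frac{1}{q}=\frac{1}{s}-\frac{1}{m}$, whereas the paper parametrizes the split by $\delta$ and then chooses $\delta=\kappa/(2\kappa+m)$; both choices satisfy the hypotheses of Lemma~\ref{lem:check-gy-moment} under $\frac{1}{s}>\frac{2}{m}+\frac{1}{\kappa}$.
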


\begin{proof}
Let $f(x,u)=h(x)$. By H\"older's inequality with $\delta\in(0,1)$,
\begin{align*}
\check{\pi}(\left|\phi_{1}\right|^{s}) & \leq\check{\pi}(\left|f\right|^{\frac{s}{\delta}})^{\delta}\check{\pi}(\left|\bar{g}_y\right|^{\frac{s}{1-\delta}})^{1-\delta}=\pi(\left|h\right|^{\frac{s}{\delta}})^{\delta}\check{\pi}(\left|\bar{g}_y\right|^{\frac{s}{1-\delta}})^{1-\delta}.
\end{align*}
With $q=s/(1-\delta)$, we deduce by Lemma~\ref{lem:check-gy-moment}
that if $\frac{1}{q}>\frac{1}{m}+\frac{1}{\kappa}$ then $\check{\pi}(\left|\bar{g}_{y}\right|^{q})<\infty$
for $\pi$-almost all $y$. On the other hand, if $s<m\delta$, then
$\pi(\left|h\right|^{\frac{s}{\delta}})<\infty$. Taking $\delta=\kappa/(2\kappa+m)$
we find that $m\delta=(1-\delta)(\frac{1}{m}+\frac{1}{\kappa})^{-1}$,
and this implies that $\frac{1}{s}>\frac{2}{m}+\frac{1}{\kappa}$
is sufficient for $\check{\pi}(\left|\phi_{1}\right|^{s})<\infty$.
\end{proof}

\begin{defn}[Estimator of $\pi(h_{0}\cdot g_{y})$]
\label{def:zeta-basic}Let $H$ be an unbiased estimator of $\pi(h)$
as defined in Definition~\ref{def:lagged-offset}. Let $\phi_{H}(x,u)=(h(x)-H)\bar{g}_{y}(x,u)$
and, with random variables independent to those used to define $H$,
let $\Phi_{R}$ (resp. $\Phi_{0}$) be the approximation corresponding
to $S_{R}$ (resp. $H_{k:\ell}$) of $\check{\pi}(\phi_{H})$
in Proposition~\ref{prop:subsample} (resp. Proposition~\ref{prop:lag-average-transference})
with the unbiased signed measure approximating $\check{\pi}$ involving
random variables independent  to those used to define $H$.
\end{defn}

\begin{prop}
\label{prop:Phi-R-unbiased-moments}Under Assumption~\ref{assu:tau-moment-kappa}
with $\kappa>2$, let $h\in L^{m}(\pi)$ for some $m>2\kappa/(\kappa-2)$, and ${\rm d}\pi_0/{\rm d}\pi\leq M$.
For any $R\in\{0,1,\ldots\}$, $\Phi_R$ in Definition~\ref{def:zeta-basic} satisfies,
\begin{enumerate}
\item $\mathbb{E}\left[\Phi_{R}\right]=\pi(h_{0}\cdot g_{y})$.
\item For $p\geq1$ such that $\frac{1}{p}>\frac{2}{\kappa}+\frac{2}{m}$,
$\mathbb{E}\left[\left|\Phi_{R}\right|^{p}\right]^{\frac{1}{p}}<\infty$.
\end{enumerate}
\end{prop}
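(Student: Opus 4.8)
The plan is to condition on $H$, apply the subsampling/averaging results on the extended space $(\mathbb{X}\times\mathbb{R},\check\pi)$ with the $\check\pi$-invariant kernel $T$, and then integrate out $H$. The whole argument hinges on the elementary observation that $\phi_H$ depends on $H$ only additively: writing $\phi_1(x,u)=h(x)g_y(x,u)$ we have $\phi_H=\phi_1-H\,g_y$, so that for each fixed value of $H$,
\[
\|\phi_H\|_{L^s(\check\pi)}\le\|\phi_1\|_{L^s(\check\pi)}+|H|\,\|g_y\|_{L^s(\check\pi)}.
\]
First I would fix the exponents. By Lemma~\ref{lem:check-phi-1-moment}, $\phi_1\in L^s(\check\pi)$ whenever $\tfrac1s>\tfrac2m+\tfrac1\kappa$, and by Lemma~\ref{lem:check-gy-moment}, $g_y\in L^q(\check\pi)$ whenever $\tfrac1q>\tfrac1m+\tfrac1\kappa$; taking $s\le q$ we may bound $\|g_y\|_{L^s(\check\pi)}\le\|g_y\|_{L^q(\check\pi)}$. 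The hypotheses $\kappa>2$ and $m>2\kappa/(\kappa-2)$ are exactly what allow a choice of $s$ with both $s>\kappa/(\kappa-1)$ (i.e.\ $\tfrac1s<1-\tfrac1\kappa$) and $\tfrac1s>\tfrac2m+\tfrac1\kappa$, since these are compatible iff $\tfrac2m+\tfrac1\kappa<1-\tfrac1\kappa$. Moreover, given $p$ with $\tfrac1p>\tfrac2m+\tfrac2\kappa$, one may take $\tfrac1s$ close enough to $\tfrac2m+\tfrac1\kappa$ that also $\tfrac1p>\tfrac1s+\tfrac1\kappa$.

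For lack of bias, fix the value of $H$. Since the randomness defining $\Phi_R$ — the coupled chains on the extended space producing the signed measure approximating $\check\pi$, and the subsampling indices — is independent of $H$, conditionally on $H$ the variable $\Phi_R$ is precisely the estimator $S_R$ (for $R\ge1$; $H_{k:\ell}^{(L)}$ for $R=0$) of $\check\pi(\phi_H)$ to which Proposition~\ref{prop:subsample} (resp.\ Proposition~\ref{prop:lag-average-transference}) applies: Assumption~\ref{assu:tau-moment-kappa} holds for $T$ with $\check\pi$ because the meeting time is unchanged, ${\rm d}(\pi_0\otimes Q)/{\rm d}\check\pi={\rm d}\pi_0/{\rm d}\pi\le M$, and $\phi_H\in L^s(\check\pi)$ with $s>\kappa/(\kappa-1)$. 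Hence $\mathbb{E}[\Phi_R\mid H]=\check\pi(\phi_H)=\check\pi(\phi_1)-H\,\check\pi(g_y)$. Using $\check\pi(|\phi_1|)<\infty$ to invoke Fubini, and Theorem~\ref{thm:fishy-estimator} so that $\int Q({\rm d}u)g_y(x,u)=g_\star(x)-g_\star(y)$ for $\pi$-almost all $x$, we get $\check\pi(\phi_1)=\pi(h\cdot g_y)$ and $\check\pi(g_y)=\pi(g_y)$. Taking expectations over $H$ and using $\mathbb{E}[H]=\pi(h)$ (from Proposition~\ref{prop:bound-L-k}) together with the integrability established below,
\[
\mathbb{E}[\Phi_R]=\pi(h\cdot g_y)-\pi(h)\pi(g_y)=\pi\bigl((h-\pi(h))\cdot g_y\bigr)=\pi(h_0\cdot g_y).
\]

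For the moment bound, fix $p$ with $\tfrac1p>\tfrac2m+\tfrac2\kappa$ and choose $s$ as above. Conditionally on $H$, the explicit bound in Proposition~\ref{prop:subsample} (resp.\ Proposition~\ref{prop:lag-average-transference}), applied with $\check\pi$, $T$, test function $\phi_H$ and exponent $s$, yields $\mathbb{E}[|\Phi_R|^p\mid H]^{1/p}\le C\,\|\phi_H\|_{L^s(\check\pi)}$, where $C$ depends only on $k,\ell,L,M,\kappa,s,p$ and on moments of the number of atoms $N$ (finite under Assumption~\ref{assu:tau-moment-kappa} since $N\le \ell-k+1+2\tau^{(L)}$), but not on $H$. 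Combining this with the first display and $(a+b)^p\le 2^{p-1}(a^p+b^p)$,
\[
\mathbb{E}\bigl[|\Phi_R|^p\bigr]\le 2^{p-1}C^p\Bigl(\|\phi_1\|_{L^s(\check\pi)}^p+\mathbb{E}\bigl[|H|^p\bigr]\,\|g_y\|_{L^q(\check\pi)}^p\Bigr),
\]
and $\mathbb{E}[|H|^p]<\infty$ by Proposition~\ref{prop:bound-L-k}, since $\tfrac1p>\tfrac2m+\tfrac2\kappa>\tfrac1m+\tfrac1\kappa$. This is finite, proving part 2; specialising to $p=1$ (legitimate since $\tfrac2m+\tfrac2\kappa<1$ under the hypotheses) supplies the integrability used to justify the interchange of expectations in part 1.

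The main obstacle is purely the bookkeeping with the exponents: one must track how the loss of integrability in forming $\phi_1=h\cdot g_y$ (two factors, the ``$2/m$''), the loss from the unbiased-estimation machinery (the ``$+1/\kappa$''), and the subsampling step combine, and verify that the resulting feasibility conditions collapse to exactly $m>2\kappa/(\kappa-2)$ for lack of bias and $\tfrac1p>\tfrac2m+\tfrac2\kappa$ for the $p$th moment. A secondary point to handle carefully is that $\Phi_R$ depends on $H$ only through the additive term $-H g_y$, which makes the conditional bounds uniform in $H$ up to the factor $|H|$, itself absorbed by the already-established moments of $H$; together with the independence of the $\Phi_R$-randomness from $H$, this is what makes the conditioning argument valid.
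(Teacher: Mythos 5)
Your proposal is correct and follows essentially the same route as the paper's proof: the same decomposition $\phi_H=\phi_1-H\,g_y$, the same use of Lemmas~\ref{lem:check-gy-moment} and \ref{lem:check-phi-1-moment} to fix the exponent $s$, and the same conditioning on $H$ to invoke Propositions~\ref{prop:lag-average-transference} and \ref{prop:subsample} on the extended space before integrating out $H$ via its moments from Proposition~\ref{prop:bound-L-k}. The only differences are expository (you spell out the feasibility of the exponent choices and the Fubini step giving $\check\pi(\phi_1)=\pi(h\cdot g_y)$ slightly more explicitly).
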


\begin{proof}
We first determine $s\geq1$ such that $\phi_{H}\in L^{s}(\check{\pi})$.
Let $\phi_{1}(x,u)=h(x)\bar{g}_{y}(x,u)$ and $\phi_{2,c}(x,u)=c\bar{g}_{y}(x,u)$,
so that $\phi_{H}=\phi_{1}-\phi_{2,H}$. We find that for a fixed
$H$ and $s\geq1$,
\[
\check{\pi}(\left|\phi_{H}\right|^{s})^{\frac{1}{s}}\leq\check{\pi}(\left|\phi_{1}\right|^{s})^{\frac{1}{s}}+\check{\pi}(\left|\phi_{2,H}\right|^{s})^{\frac{1}{s}}=\check{\pi}(\left|\phi_{1}\right|^{s})^{\frac{1}{s}}+\left|H\right|\check{\pi}(\left|\bar{g}_{y}\right|^{s})^{\frac{1}{s}}.
\]
By Lemma~\ref{lem:check-gy-moment}, $\check{\pi}(\left|\bar{g}_{y}\right|^{s})<\infty$
if $\frac{1}{s}>\frac{1}{m}+\frac{1}{\kappa}$. By Lemma~\ref{lem:check-phi-1-moment},
$\check{\pi}(\left|\phi_{1}\right|^{s})<\infty$ if $\frac{1}{s}>\frac{2}{m}+\frac{1}{\kappa}$.
Hence, $\phi_{H}\in L^{s}(\check{\pi})$ for $\frac{1}{s}>\frac{2}{m}+\frac{1}{\kappa}$.
It follows that if $\frac{1}{p}>\frac{2}{m}+\frac{2}{\kappa}$ then
there exists $s\in[1,(\frac{2}{m}+\frac{1}{\kappa})^{-1})$ such that
by Proposition~\ref{prop:lag-average-transference},
\[
\mathbb{E}\left[\left|\Phi_{0}\right|^{p}\mid H\right]\leq M\left\Vert \phi_{H}\right\Vert _{L^{s}(\check{\pi})}^{p}\zeta\left(\frac{(s-p)\kappa}{sp}\right)^{p}\left\{ 1+2M^{\frac{p-s}{sp}}\mathbb{E}_{\gamma}\left[\tau^{\kappa}\right]^{\frac{s-p}{sp}}\right\} ^{p}<\infty,
\]
and for $R\geq1$, by Proposition~\ref{prop:subsample},
\[
\mathbb{E}\left[\left|S_{R}\right|^{p}\mid H\right]\leq\mathbb{E}\left[\left|N\right|^{\kappa}\right]^{\frac{s-p}{s}}M^{\frac{p}{s}}\left\Vert \phi_{H}\right\Vert _{L^{s}(\check{\pi})}^{p}\left\{ \sum_{i=1}^{\infty}\frac{1}{i^{1+\varepsilon}}\right\} ^{\frac{\kappa-p}{\kappa}}<\infty,
\]
where $\varepsilon=\frac{\kappa}{s(\kappa-p)}\cdot(\kappa s-\kappa p-sp)$.
It follows that if $\mathbb{E}\left[\left\Vert \phi_{H}\right\Vert _{L^{s}(\check{\pi})}^{p}\right]<\infty$
then $\mathbb{E}\left[\left|\Phi_{R}\right|^{p}\right]=\mathbb{E}\left[\mathbb{E}\left[\left|\Phi_{R}\right|^{p}\mid H\right]\right]<\infty$
for $R\geq0$. We have
\begin{align*}
\mathbb{E}\left[\left\Vert \phi_{H}\right\Vert _{L^{s}(\check{\pi})}^{p}\right] & \leq\mathbb{E}\left[\left\{ \check{\pi}(\left|\phi_{1}\right|^{s})^{\frac{1}{s}}+\left|H\right|\check{\pi}(\left|\bar{g}_{y}\right|^{s})^{\frac{1}{s}}\right\} ^{p}\right]\\
 & \leq2^{p-1}\left\{ \check{\pi}(\left|\phi_{1}\right|^{s})^{\frac{p}{s}}+\mathbb{E}\left[\left|H\right|^{p}\right]\check{\pi}(\left|\bar{g}_{y}\right|^{s})^{\frac{p}{s}}\right\} ,
\end{align*}
and $\mathbb{E}\left[\left|H\right|^{p}\right]<\infty$ by Proposition~\ref{prop:lag-average-transference}
if $\frac{1}{p}>\frac{1}{m}+\frac{1}{\kappa}$ , which imposes no
additional constraints on $p$ or $s$. Hence, we may conclude that
for $\frac{1}{p}>\frac{2}{m}+\frac{2}{\kappa}$, $\mathbb{E}\left[\left|\Phi_{R}\right|^{p}\right]<\infty$
for $R\geq0$. For the lack-of-bias property, we consider $p=1$ and
if $m>2\kappa/(\kappa-2)$ then, by Proposition~\ref{prop:lag-average-transference}
and Proposition~\ref{prop:subsample},
\[
\mathbb{E}[\Phi_{R}\mid H]=\check{\pi}(\phi_{H})=\pi(h\cdot g_{y})-H\pi(g_{y}),
\]
and since $\mathbb{E}[H]=\pi(h)$, we may conclude.
\end{proof}

\subsection{\label{subsec:Unbiased-asymptotic-variance}Unbiased Poisson asymptotic
variance estimator}

We show that the basic and subsampled unbiased Poisson asymptotic variance
estimators are indeed unbiased, and have finite $p$th moments under
the same conditions.
\begin{defn}
\label{def:avar-estimator}Let $H_{1}$ and $H_{2}$ be two independent
unbiased estimators of $\pi(h)$ and $S$ be an unbiased estimator
of $\pi(h^{2})$, all of the type described in Proposition~\ref{prop:lag-average-transference}.
Let $R\in\{0,1,\ldots\}$, $\Phi_{R}$ be as in Definition~\ref{def:zeta-basic}
and define
\[
V_{R}=-(S-H_{1}H_{2})+2\Phi_{R}.
\]
\end{defn}

\begin{lem}
\label{lem:v-pi-h-estimator}Under Assumption~\ref{assu:tau-moment-kappa},
let $h\in L^{m}(\pi)$ for some $m>2\kappa/(\kappa-1)$, and ${\rm d}\pi_0/{\rm d} \pi\leq M$. Let $H_{1}$,
$H_{2}$ and $S$ be as in Definition~\ref{def:avar-estimator}.
Then
\begin{enumerate}
\item $\mathbb{E}\left[S-H_{1}H_{2}\right]=\pi(h^{2})-\pi(h)^{2}=v(\pi,h)$.
\item For $p\geq1$ such that $\frac{1}{p}>\frac{2}{m}+\frac{1}{\kappa}$,
$\mathbb{E}\left[\left|S-H_{1}H_{2}\right|^{p}\right]<\infty$.
\end{enumerate}
\end{lem}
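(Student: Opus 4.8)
The plan is to handle the two terms $S$ and $H_1H_2$ separately, in each case reducing to Proposition~\ref{prop:lag-average-transference}; since all three estimators are of the type covered there, the only real work is the moment bookkeeping incurred when passing from the test function $h\in L^m(\pi)$ to $h^2$, which lies in $L^{m/2}(\pi)$.

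For the lack-of-bias claim, I would first note that $S$ is an estimator of $\pi(h^2)$ for the test function $h^2\in L^{m/2}(\pi)$, and that the hypothesis $m>2\kappa/(\kappa-1)$ is precisely $m/2>\kappa/(\kappa-1)$, so Proposition~\ref{prop:lag-average-transference} (applied with $h^2$ in place of $h$, under its standing assumption ${\rm d}\pi_0/{\rm d}\pi\leq M$) yields $\mathbb{E}[S]=\pi(h^2)$. Likewise $h\in L^m(\pi)$ with $m>\kappa/(\kappa-1)$, so $\mathbb{E}[H_i]=\pi(h)$ for $i\in\{1,2\}$. Since $H_1$ and $H_2$ are independent and integrable, $\mathbb{E}[H_1H_2]=\mathbb{E}[H_1]\,\mathbb{E}[H_2]=\pi(h)^2$, and therefore $\mathbb{E}[S-H_1H_2]=\pi(h^2)-\pi(h)^2=v(\pi,h)$.

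For the moment bound, fix $p\geq1$ with $\tfrac{1}{p}>\tfrac{2}{m}+\tfrac{1}{\kappa}$ and apply Minkowski's inequality to get $\mathbb{E}[|S-H_1H_2|^p]^{1/p}\leq\mathbb{E}[|S|^p]^{1/p}+\mathbb{E}[|H_1H_2|^p]^{1/p}$. For the first term, since $h^2\in L^{m/2}(\pi)$ and $\tfrac{1}{p}>\tfrac{2}{m}+\tfrac{1}{\kappa}=\tfrac{1}{m/2}+\tfrac{1}{\kappa}$, Proposition~\ref{prop:lag-average-transference} applied with test function $h^2$ gives $\mathbb{E}[|S|^p]<\infty$. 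For the second term, independence of $H_1$ and $H_2$ gives $\mathbb{E}[|H_1H_2|^p]=\mathbb{E}[|H_1|^p]\,\mathbb{E}[|H_2|^p]$, and since $\tfrac{1}{p}>\tfrac{2}{m}+\tfrac{1}{\kappa}>\tfrac{1}{m}+\tfrac{1}{\kappa}$, Proposition~\ref{prop:lag-average-transference} applied to $h$ gives $\mathbb{E}[|H_i|^p]<\infty$ for $i\in\{1,2\}$. Combining the two bounds yields $\mathbb{E}[|S-H_1H_2|^p]<\infty$.

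There is no genuine obstacle here: the argument is essentially a triangle-inequality plus independence split. The only point needing care is the translation of the moment conditions under squaring, namely verifying that the constraint $\tfrac{1}{p}>\tfrac{2}{m}+\tfrac{1}{\kappa}$ is exactly what Proposition~\ref{prop:lag-average-transference} demands for the $L^{m/2}$-function $h^2$ (and a fortiori more than enough for the $L^m$-function $h$), and that $m>2\kappa/(\kappa-1)$ secures lack-of-bias of $S$ by guaranteeing $m/2>\kappa/(\kappa-1)$.
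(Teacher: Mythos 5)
Your proof is correct and follows the same route as the paper's: Minkowski's inequality to split $S$ from $H_1H_2$, independence to factor the product term, and Proposition~\ref{prop:lag-average-transference} applied to $h^2\in L^{m/2}(\pi)$ and to $h$, with exactly the same translation of the moment conditions under squaring. No differences worth noting.
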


\begin{proof}
We note that if $h\in L^{m}(\pi)$ then $h^{2}\in L^{m/2}(\pi)$.
Hence, we deduce by Proposition~\ref{prop:lag-average-transference}
that if $m/2>\kappa/(\kappa-1)$, i.e. $m>2\kappa/(\kappa-1)$ then
$\mathbb{E}[S]=\pi(h^{2})$ and also $\mathbb{E}[H_{1}H_{2}]=\mathbb{E}[H_{1}]\mathbb{E}[H_{2}]=\pi(h)^{2}$.
By Minkowski's inequality, for $p\geq1$:
\begin{align*}
\mathbb{E}\left[\left|S-H_{1}H_{2}\right|^{p}\right]^{\frac{1}{p}} & \leq\mathbb{E}\left[\left|S\right|^{p}\right]^{\frac{1}{p}}+\mathbb{E}\left[\left|H_{1}H_{2}\right|^{p}\right]^{\frac{1}{p}}\\
 & =\mathbb{E}\left[\left|S\right|^{p}\right]^{\frac{1}{p}}+\mathbb{E}\left[\left|H_{1}\right|^{p}\right]^{\frac{1}{p}}\mathbb{E}\left[\left|H_{2}\right|^{p}\right]^{\frac{1}{p}},
\end{align*}
and the terms on the right-hand side are finite by Proposition~\ref{prop:lag-average-transference}
if $\frac{1}{p}>\frac{2}{m}+\frac{1}{\kappa}$.
\end{proof}
\begin{thm}
\label{thm:avar-unbiased-moments}Under Assumption~\ref{assu:tau-moment-kappa},
let $h\in L^{m}(\pi)$ for some $m>2\kappa/(\kappa-2)$, and ${\rm d}\pi_0/{\rm d} \pi\leq M$. Let $V_{R}$
be as in Definition~\ref{def:avar-estimator} with $R\geq0$. For
$\pi$-almost all $y$:
\begin{enumerate}
\item $\mathbb{E}\left[V_{R}\right]=v(P,h)$.
\item For $p\geq1$ such that $\frac{1}{p}>\frac{2}{m}+\frac{2}{\kappa}$,
$\mathbb{E}\left[\left|V_{R}\right|^{p}\right]<\infty$.
\end{enumerate}
\end{thm}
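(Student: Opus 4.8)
The plan is to read off $V_{R}=-(S-H_{1}H_{2})+2\Phi_{R}$ and handle the two unbiased building blocks separately, invoking results already in hand. Note first that $m>2\kappa/(\kappa-2)$ presupposes $\kappa>2$, and that it subsumes the weaker thresholds $m>2\kappa/(\kappa-1)$ and $m>\kappa/(\kappa-1)$ used elsewhere, since $2\kappa/(\kappa-2)>2\kappa/(\kappa-1)>\kappa/(\kappa-1)$. For the lack-of-bias claim, I would apply Lemma~\ref{lem:v-pi-h-estimator}, whose hypothesis $m>2\kappa/(\kappa-1)$ is met, to get $\mathbb{E}[S-H_{1}H_{2}]=v(\pi,h)$, and Proposition~\ref{prop:Phi-R-unbiased-moments}(1), whose hypotheses $\kappa>2$ and $m>2\kappa/(\kappa-2)$ are exactly what we assume, to get $\mathbb{E}[\Phi_{R}]=\pi(h_{0}\cdot g_{y})$ for $\pi$-almost all $y$. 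Combining, $\mathbb{E}[V_{R}]=-v(\pi,h)+2\pi(h_{0}\cdot g_{y})$.

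It then remains to identify this quantity with $v(P,h)$. Since $g_{y}=g_{\star}-g_{\star}(y)$ and $\pi(h_{0})=0$, we have $\pi(h_{0}\cdot g_{y})=\pi(h_{0}\cdot g_{\star})-g_{\star}(y)\pi(h_{0})=\pi(h_{0}\cdot g_{\star})$; this manipulation is legitimate because Theorem~\ref{thm:clt-kappa} gives $g_{\star}\in L_{0}^{1}(\pi)$ and $h_{0}\cdot g_{\star}\in L^{1}(\pi)$ under the present moment hypothesis, and $g_{\star}(y)$ is finite for $\pi$-almost all $y$. The same theorem yields $v(P,h)=2\pi(h_{0}\cdot g_{\star})-\pi(h_{0}^{2})$, and $\pi(h_{0}^{2})=v(\pi,h)$, so that $v(P,h)=-v(\pi,h)+2\pi(h_{0}\cdot g_{y})=\mathbb{E}[V_{R}]$; this is also consistent with the alternate representation \eqref{eq:asymptoticvariance:estimable2} that motivated the estimator. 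This proves part (1).

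For part (2), Minkowski's inequality gives
\[
\mathbb{E}\left[\left|V_{R}\right|^{p}\right]^{\frac{1}{p}}\leq\mathbb{E}\left[\left|S-H_{1}H_{2}\right|^{p}\right]^{\frac{1}{p}}+2\,\mathbb{E}\left[\left|\Phi_{R}\right|^{p}\right]^{\frac{1}{p}},
\]
so it suffices to bound each term. When $\tfrac{1}{p}>\tfrac{2}{m}+\tfrac{2}{\kappa}$ we have a fortiori $\tfrac{1}{p}>\tfrac{2}{m}+\tfrac{1}{\kappa}$, whence Lemma~\ref{lem:v-pi-h-estimator}(2) gives $\mathbb{E}[|S-H_{1}H_{2}|^{p}]<\infty$; and Proposition~\ref{prop:Phi-R-unbiased-moments}(2) gives $\mathbb{E}[|\Phi_{R}|^{p}]<\infty$ under precisely the assumed condition $\tfrac{1}{p}>\tfrac{2}{m}+\tfrac{2}{\kappa}$. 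This completes the proof.

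The argument is essentially bookkeeping, and the only genuine point of care is the identification $\mathbb{E}[V_{R}]=v(P,h)$, which requires matching the expression $-v(\pi,h)+2\pi(h_{0}\cdot g_{\star})$ arising from the estimator's construction against the asymptotic variance formula of Theorem~\ref{thm:clt-kappa}, together with checking that the single moment hypothesis $m>2\kappa/(\kappa-2)$ is strong enough to trigger every invoked result and that all ``$\pi$-almost all $y$'' qualifiers (inherited from Definition~\ref{def:zeta-basic} and Theorem~\ref{thm:fishy-estimator} via Proposition~\ref{prop:Phi-R-unbiased-moments}) propagate consistently. Finally, for the averaged estimator of Remark~\ref{rem:average-avar-estimators} one applies Minkowski once more across the independent copies, which changes none of the conditions.
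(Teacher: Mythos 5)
Your proposal is correct and follows essentially the same route as the paper: decompose $V_R$, invoke Lemma~\ref{lem:v-pi-h-estimator} for the $S-H_1H_2$ part and Proposition~\ref{prop:Phi-R-unbiased-moments} for $\Phi_R$, and combine via Minkowski. The only difference is that you spell out the identification $-v(\pi,h)+2\pi(h_0\cdot g_y)=v(P,h)$ via Theorem~\ref{thm:clt-kappa}, which the paper leaves implicit; that check is correct and harmless.
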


\begin{proof}
By Lemma~\ref{lem:v-pi-h-estimator}, $\mathbb{E}[S-H_{1}H_{2}]=v(\pi,h)$
if $m>2\kappa/(\kappa-2)$ and $\mathbb{E}\left[\left|S-H_{1}H_{2}\right|^{p}\right]<\infty$
for the range of $p$ given. By Proposition~\ref{prop:Phi-R-unbiased-moments},
for $\pi$-almost all $y$, $\mathbb{E}[\Phi_{R}]=\pi(h_{0}\cdot g_{y})$
if $m>2\kappa/(\kappa-2)$ and $\mathbb{E}[\left|\Phi_{R}\right|^{p}]$
is finite for the range of $p$ given. Hence, for $m>2\kappa/(\kappa-2)$
we have $\mathbb{E}[V_{R}]=v(P,h)$ and we may conclude the finiteness
of $\mathbb{E}\left[\left|V_{R}\right|^{p}\right]$ by Minkowski's
inequality.
\end{proof}
\begin{rem}
\label{rem:average-avar-estimators}By Minkowski's inequality we may
similarly conclude that any average of estimators of the form given
in Definition~\ref{def:avar-estimator} also has lack-of-bias and
moments implied by Theorem~\ref{thm:avar-unbiased-moments}, and
hence that the results apply to (\ref{eq:unbiasedasymptvar}) when
$\xi_{n}^{(j)}=1/N^{(j)}$ for $n\in\{1,\ldots,N^{(j)}\}$.

\end{rem}

\subsection{\label{subsec:mcmc-estimator-avar}Ergodic Poisson asymptotic variance
estimator}

The asymptotic variance estimator in (\ref{eq:asymptoticvariance:consistentestimator})
can be analyzed using somewhat standard convergence theorems. For
the CLT in particular, it is helpful to view the Markov chain on the
extended space introduced in Section~\ref{subsec:estimator-pi-h-g}.
\begin{prop}
\label{prop:epave-as}Under Assumption~\ref{assu:tau-moment-kappa},
let $X$ be a Markov chain with Markov kernel $P$, and $h\in L^{m}(\pi)$
with $m>2\kappa/(\kappa-1)$. For $\pi$-almost all $X_{0}$, the
CLT holds for $h$ and for $\pi$-almost all $y$, $v(P,h)=-v(\pi,h)+2\pi(h_{0}\cdot g_{y})$.
The estimator (\ref{eq:asymptoticvariance:consistentestimator}) with
$G=G_{y}$, satisfies $\hat{v}_E(P,h)\to_{{\rm a.s.}}v(P,h)$ as $t\to\infty$.
\end{prop}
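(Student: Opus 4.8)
The plan is to read off the CLT and the algebraic identity for $v(P,h)$ directly from Theorem~\ref{thm:clt-kappa}, and then to obtain the a.s.\ convergence of $\hat v(P,h)$ by decomposing it into three pieces, each of which converges by a strong law of large numbers: one for the $X$-chain itself, and two for the extended chain $(X_s,U_s)_{s\ge0}$ of Section~\ref{subsec:estimator-pi-h-g}, in which $U_s\sim Q$ are i.i.d.\ auxiliary variables and the conditionally independent copies of the fishy estimator are realized as $G(X_s)=g_y(X_s,U_s)$.

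Since $m>2\kappa/(\kappa-1)$, Theorem~\ref{thm:clt-kappa} gives that the CLT~\eqref{eq:clt} holds for $\pi$-almost all $X_0$, that $g_\star\in L_0^1(\pi)$ and $h_0\cdot g_\star\in L^1(\pi)$, and that $v(P,h)=2\pi(h_0\cdot g_\star)-\pi(h_0^2)$. For $\pi$-almost all $y$ the constant $g_\star(y)$ is finite, so $g_y=g_\star-g_\star(y)$ and, using $\pi(h_0)=0$, $\pi(h_0\cdot g_y)=\pi(h_0\cdot g_\star)$; since $v(\pi,h)=\pi(h_0^2)$ by definition, this gives $v(P,h)=-v(\pi,h)+2\pi(h_0\cdot g_y)$.

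For the a.s.\ convergence, write $\pi^{\text{MC}}(h)=t^{-1}\sum_{s=0}^{t-1}h(X_s)$ and, using $(h(X_s)-\pi^{\text{MC}}(h))G(X_s)=h_0(X_s)G(X_s)+(\pi(h)-\pi^{\text{MC}}(h))G(X_s)$,
\[
\hat v(P,h)=-\Big\{\frac1t\sum_{s=0}^{t-1}h(X_s)^2-\pi^{\text{MC}}(h)^2\Big\}+\frac2t\sum_{s=0}^{t-1}h_0(X_s)G(X_s)+2\big(\pi(h)-\pi^{\text{MC}}(h)\big)\frac1t\sum_{s=0}^{t-1}G(X_s).
\]
Under Assumption~\ref{assu:tau-moment-kappa}, $P$ is $\pi$-irreducible with invariant probability $\pi$, hence positive recurrent, so the strong LLN holds for $\pi$-almost all $X_0$; the extended kernel $T(x,u;{\rm d}y,{\rm d}v)=P(x,{\rm d}y)Q({\rm d}v)$ is then $\check\pi$-irreducible with invariant probability $\check\pi=\pi\otimes Q$, and since the $U_s$ are i.i.d.\ and independent of the $X$-chain the LLN applies to $(X_s,U_s)_{s\ge0}$ for $\pi$-almost all $X_0$. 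As $m>2\kappa/(\kappa-1)>2$ we have $h\in L^2(\pi)$, so $t^{-1}\sum_s h(X_s)^2\to\pi(h^2)$ and $\pi^{\text{MC}}(h)\to\pi(h)$ a.s., and the first brace tends to $v(\pi,h)$. Because $m>2\kappa/(\kappa-1)$ forces $1>\tfrac1m+\tfrac1\kappa$ and $1>\tfrac2m+\tfrac1\kappa$, Lemmas~\ref{lem:check-gy-moment} and~\ref{lem:check-phi-1-moment} (the latter applied with $h_0$ in place of $h$) give, for $\pi$-almost all $y$, $g_y\in L^1(\check\pi)$ and $\phi(x,u):=h_0(x)g_y(x,u)\in L^1(\check\pi)$; the LLN then yields $t^{-1}\sum_s G(X_s)\to\check\pi(g_y)=\pi(g_y)$ a.s.\ (a finite constant) and $t^{-1}\sum_s h_0(X_s)G(X_s)\to\check\pi(\phi)$ a.s., with $\check\pi(\phi)=\int\pi({\rm d}x)h_0(x)\int Q({\rm d}u)g_y(x,u)=\int\pi({\rm d}x)h_0(x)g_y(x)=\pi(h_0\cdot g_y)$, using $\int Q({\rm d}u)g_y(x,u)=\mathbb E[G_y(x)]=g_y(x)$ for $\pi$-almost all $x$ (Theorem~\ref{thm:fishy-estimator}). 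Since $\pi(h)-\pi^{\text{MC}}(h)\to0$ kills the last term, combining these limits gives $\hat v(P,h)\to-v(\pi,h)+2\pi(h_0\cdot g_y)=v(P,h)$ a.s.

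The main obstacle is making the strong LLN on the extended chain rigorous at a $\pi$-typical initialization: one must argue that $\pi$-irreducibility together with the existence of the invariant probability $\pi$ (supplied by Assumption~\ref{assu:tau-moment-kappa}, with aperiodicity from Proposition~\ref{prop:int-tv}) is enough for Birkhoff's ergodic theorem to hold off a $\pi$-null set, that this lifts to $T$ and $\check\pi$, and that the resulting full absorbing set of $T$ contains $(X_0,U_0)$ with $U_0\sim Q$ for $\pi$-almost all $X_0$ --- which is precisely where the ``$\pi$-almost all $X_0$'' and ``$\pi$-almost all $y$'' qualifications arise, and where one should cite the ergodic theorems already invoked in Lemma~\ref{lem:sup-finite}.
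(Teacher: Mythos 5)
Your proof is correct and follows essentially the same route as the paper's: read the CLT and the identity $v(P,h)=-v(\pi,h)+2\pi(h_0\cdot g_y)$ off Theorem~\ref{thm:clt-kappa}, then decompose $\hat v(P,h)$ and apply the Markov chain law of large numbers term by term, with the fishy-estimator terms handled via the extended chain of Section~\ref{subsec:estimator-pi-h-g}. If anything you are slightly more explicit than the paper in checking the $\check\pi$-integrability of $(x,u)\mapsto h_0(x)g_y(x,u)$ via Lemmas~\ref{lem:check-gy-moment} and~\ref{lem:check-phi-1-moment}, where the paper simply invokes $h\cdot g_y\in L^1(\pi)$; this is a welcome refinement, not a different argument.
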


\begin{proof}
We have
\[
v^{{\rm MC}}(h)=\frac{1}{t}\sum_{s=0}^{t-1}h(X_{i})^{2}-\left\{ \frac{1}{t}\sum_{s=0}^{t-1}h(X_{i})\right\} ^{2},
\]
and these terms converge almost surely to $\pi(h^{2})$ and $\pi(h)^{2}$,
respectively, as $t\to\infty$ by the Markov chain law of large numbers
\citep[see, e.g.,][Theorem~5.2.9]{douc2018MarkovChains} and continuous
mapping, and hence $v^{{\rm MC}}(h)\to_{{\rm a.s.}}\pi(h^{2})-\pi(h)^{2}={\rm var}_{\pi}(h)$.
Now consider
\[
\frac{1}{t}\sum_{s=0}^{t-1}\left\{ h(X_{s})-\pi^{{\rm MC}}(h)\right\} G_{y}(X_{s})=\frac{1}{t}\sum_{s=0}^{t-1}h(X_{s})G_{y}(X_{s})-\frac{1}{t}\sum_{s=0}^{t-1}\pi^{{\rm MC}}(h)G_{y}(X_{s}).
\]
The assumptions guarantee by Theorem~\ref{thm:clt-kappa} that for
$\pi$-almost all $y$, $h\cdot g_{y}\in L^{1}(\pi)$ and since $\mathbb{E}\left[G_{y}(x)\right]=g_{y}(x)$
for $\pi$-almost all $x$ by Theorem~\ref{thm:fishy-estimator},
the first term on the right-hand side converges almost surely to $\pi(h\cdot g_{y})$
by the Markov chain law of large numbers, while similarly for the
second term we have $\pi^{{\rm MC}}(h)\to_{{\rm a.s.}}\pi(h)$ and
$\frac{1}{t}\sum_{s=0}^{t-1}G_{y}(X_{s})\to_{{\rm a.s.}}\pi(g_{y})$,
so the second term converges almost surely to $\pi(h)\pi(g_{y})$.
Hence, the left-hand side converges almost surely to $\pi(h_{0}\cdot g_{y})$,
and we conclude.
\end{proof}

\begin{thm}
\label{thm:epave-clt}Under Assumption~\ref{assu:tau-moment-kappa},
let $X$ be a Markov chain with Markov kernel $P$, and $h\in L^{m}(\pi)$
with $m>4\kappa/(\kappa-3)$. For $\pi$-almost all $X_{0}$, the
estimator (\ref{eq:asymptoticvariance:consistentestimator}) with
$G=G_{y}$ satisfies a $\sqrt{t}$-CLT for $\pi$-almost all $y$.
\end{thm}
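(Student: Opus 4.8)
The plan is to reduce the statement to the univariate central limit theorem of Theorem~\ref{thm:clt-kappa}, applied on the extended state space of Section~\ref{subsec:estimator-pi-h-g}. I would work with the chain $(X_s,U_s)_{s\ge0}$ with Markov kernel $T$ and invariant law $\check\pi=\pi\otimes Q$, where the $U_s$ are i.i.d.\ $\sim Q$ and supply the auxiliary randomness defining the fishy-function estimators, so that $\fishytestestimator_y(X_s)=\fishytest_y(X_s,U_s)$ in distribution with $\fishytest_y(\cdot,\cdot)$ as in Section~\ref{subsec:estimator-pi-h-g}; by the discussion there, $T$ inherits Assumption~\ref{assu:tau-moment-kappa} with $\check\pi$ in place of $\pi$. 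Writing $f_1=\test$, $f_2=\test^2$, $f_3=\test\cdot \fishytest_y$, $f_4=\fishytest_y$ and $\bar f_i=t^{-1}\sum_{s=0}^{t-1}f_i(X_s,U_s)$, the estimator \eqref{eq:asymptoticvariance:consistentestimator} is the image $\hat v(P,\test)=\psi(\bar f_1,\bar f_2,\bar f_3,\bar f_4)$ of the \emph{fixed} quadratic polynomial $\psi(a_1,a_2,a_3,a_4)=a_1^2-a_2+2a_3-2a_1a_4$. By the proof of Proposition~\ref{prop:epave-as}, $\bar A_t:=(\bar f_1,\dots,\bar f_4)\to \bar A:=(\pi(\test),\pi(\test^2),\pi(\test\cdot \fishytest_y),\pi(\fishytest_y))$ almost surely for $\pi$-almost all $X_0$ and $\pi$-almost all $y$, and $\psi(\bar A)=v(P,\test)$.

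Next I would Taylor-expand $\psi$ around $\bar A$. Since $\psi$ is quadratic, with gradient $\varpi:=\nabla\psi(\bar A)=(2\pi(\test)-2\pi(\fishytest_y),-1,2,-2\pi(\test))$ and constant Hessian $H$,
\[
\sqrt t\,\bigl(\hat v(P,\test)-v(P,\test)\bigr)=\sqrt t\;\varpi^{\!\top}(\bar A_t-\bar A)+\tfrac12\sqrt t\,(\bar A_t-\bar A)^{\!\top}H(\bar A_t-\bar A).
\]
The first term equals $\sqrt t\bigl(t^{-1}\sum_{s=0}^{t-1}f_\varpi(X_s,U_s)-\check\pi(f_\varpi)\bigr)$ for the \emph{single, fixed} function $f_\varpi:=\sum_i\varpi_if_i$. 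I would then check that $f_\varpi\in L^{m'}(\check\pi)$ for some $m'>2\kappa/(\kappa-1)$: indeed $\test^2\in L^{m/2}(\check\pi)$, $\fishytest_y\in L^{q}(\check\pi)$ for $q$ with $1/q>1/m+1/\kappa$ by Lemma~\ref{lem:check-gy-moment}, and $\test\cdot\fishytest_y\in L^{s}(\check\pi)$ for $s$ with $1/s>2/m+1/\kappa$ by Lemma~\ref{lem:check-phi-1-moment}; the binding term is $\test\cdot\fishytest_y$, for which the supremal exponent $m\kappa/(2\kappa+m)$ exceeds $2\kappa/(\kappa-1)$ precisely when $m(\kappa-3)>4\kappa$, i.e.\ under the hypothesis $m>4\kappa/(\kappa-3)$ (which forces $\kappa>3$). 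Hence Theorem~\ref{thm:clt-kappa}, applied to $T$, $f_\varpi$ and $\check\pi$, yields $\sqrt t\bigl(t^{-1}\sum_{s=0}^{t-1}f_\varpi(X_s,U_s)-\check\pi(f_\varpi)\bigr)\overset d\to N(0,v(T,f_\varpi))$ for $\check\pi$-almost all starting points, hence, by Fubini in the coordinate $U_0\sim Q$, for $\pi$-almost all $X_0$. The second term is $o_{\mathbb P}(1)$: each $f_i$ lies in some $L^{m'_i}(\check\pi)$ with $m'_i>2\kappa/(\kappa-1)$, so $\bar f_i-\check\pi(f_i)=O_{\mathbb P}(t^{-1/2})$ by Theorem~\ref{thm:clt-kappa}, whence $(\bar A_t-\bar A)^{\!\top}H(\bar A_t-\bar A)=O_{\mathbb P}(t^{-1})$ and the prefactor $\sqrt t$ makes it vanish. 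By Slutsky's theorem, $\sqrt t\,(\hat v(P,\test)-v(P,\test))\overset d\to N(0,v(T,f_\varpi))$ for $\pi$-almost all $X_0$ and $\pi$-almost all $y$, with a limiting variance depending on the chosen coupling through $f_\varpi$ and $Q$.

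The main obstacle is the moment bookkeeping in the second paragraph: one must combine Lemma~\ref{lem:check-gy-moment} and Lemma~\ref{lem:check-phi-1-moment} for the functions living on the extended space and verify that the exponent available for $\test\cdot\fishytest_y$ exceeds the CLT threshold $2\kappa/(\kappa-1)$ of Theorem~\ref{thm:clt-kappa} \emph{exactly} under $m>4\kappa/(\kappa-3)$. A secondary technical point is transferring the central limit theorem and the law of large numbers from stationary starts to $\pi$-almost all $X_0$ on the extended chain; this follows from the ``$\pi$-almost all $X_0$'' form of Theorem~\ref{thm:clt-kappa} together with the Fubini argument in $U_0$, and from the proof of Proposition~\ref{prop:epave-as} for the almost-sure convergence $\bar A_t\to\bar A$. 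The remaining steps --- the Taylor expansion of the quadratic $\psi$ and the Slutsky argument --- are routine.
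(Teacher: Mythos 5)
Your proof is correct and follows essentially the same route as the paper's: the same extended chain $(X_s,U_s)$ with kernel $T$ and invariant law $\check\pi$, the same four functions $f_1,\dots,f_4$, the same moment bookkeeping via Lemmas~\ref{lem:check-gy-moment} and~\ref{lem:check-phi-1-moment} yielding the threshold $m>4\kappa/(\kappa-3)$ (equivalently $\kappa>3$), and the same quadratic functional of the empirical averages. The only difference is in the last step and is essentially cosmetic: where the paper invokes the Cram\'er--Wold device to get a joint CLT for $Z_t$ and then applies the delta method, you expand the quadratic exactly and apply the univariate CLT of Theorem~\ref{thm:clt-kappa} to the single fixed combination $f_\varpi=\nabla\psi(\bar A)^{\top}f$, killing the remainder by coordinatewise $O_{\mathbb{P}}(t^{-1/2})$ tightness --- a marginally more economical way to arrive at the same limiting variance $\nabla\psi(\bar A)^{\top}\Sigma\,\nabla\psi(\bar A)$.
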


\begin{proof}
Recall the notation of Appendix~\ref{subsec:estimator-pi-h-g} 
with $\bar{g}_y$ and $\check\pi$.
Define $f=(f_{1},\ldots,f_{4})$ with $f_{1}:(x,u)\mapsto h(x)$,
$f_{2}:(x,u)\mapsto h(x)^{2}$, $f_{3}=f_{1}\cdot \bar{g}_{y}$ and $f_{4}=\bar{g}_{y}$.
We observe that \eqref{eq:asymptoticvariance:consistentestimator} can be rewritten as
\begin{align*}
\hat{v}_{E}(P,h) & =\check{\pi}_{t}(f_{1})^{2}-\check{\pi}_{t}(f_{2})+2\check{\pi}_{t}(f_{3})-2\check{\pi}_{t}(f_{1})\check{\pi}_{t}(f_{4}),
\end{align*}
where $\check{\pi}_{t}=\frac{1}{t}\sum_{i=0}^{t-1}\delta_{(X_{i},U_{i})}$
is the empirical measure of the Markov chain with transition $T$ introduced in Section~\ref{subsec:estimator-pi-h-g}.
Hence, we define
\[
Z_{t}=(\check{\pi}_{t}(f_{1}),\check{\pi}_{t}(f_{2}),\check{\pi}_{t}(f_{3}),\check{\pi}_{t}(f_{4})),
\]
and $\mu_{Z}=\check{\pi}(f)=(\pi(h),\pi(h^{2}),\pi(h\cdot g_{y}),\pi(g_{y}))$.
For each $i\in\{1,\ldots,4\}$, we need to check that $f_{i}\in L^{s}(\check{\pi})$
for appropriately large $s$. We find $f_{1}\in L^{s}(\check{\pi})$
for $s\leq m$, $f_{2}\in L^{s}(\check{\pi})$ for $s\leq m/2$, $f_{3}\in L^{s}(\check{\pi})$
for $s<(\frac{2}{m}+\frac{1}{\kappa})^{-1}$ by Lemma~\ref{lem:check-phi-1-moment}
and $f_{4}\in L^{s}(\pi)$ for $s<(\frac{1}{m}+\frac{1}{\kappa})^{-1}$
by Lemma~\ref{lem:check-gy-moment}. It follows that all of these
are in $L^{s}(\check{\pi})$ if $s<(\frac{2}{m}+\frac{1}{\kappa})^{-1}$.
The CLT then holds for all $f_{i}$ individually, i.e. there exists
$\sigma_{i}^{2}<\infty$ such that
\[
\sqrt{t}(Z_{t,i}-\check{\pi}(f_{i}))=\sqrt{t}(\check{\pi}_{t}(f_{i})-\check{\pi}(f_{i}))\to_{d}\text{Normal}(0,\sigma_{i}^{2}),
\]
if $s>2\kappa/(\kappa-1)$ by Theorem~\ref{thm:clt-kappa}, and combining
these inequalities leads to the condition $\kappa>3$ and $m>4\kappa/(\kappa-3)$.
Using the Cram\'er--Wold device, we may then deduce that $\sqrt{n}(Z_{n}-\mu_{Z})\to_{d}\text{Normal}(0,\Sigma)$,
where
\[
\Sigma_{ij}={\rm cov}_{\pi}(f_{i},f_{j})+\sum_{t=1}^{\infty}{\rm cov}_{\pi}(f_{i}(X_{0}),f_{j}(X_{t}))+{\rm cov}_{\pi}(f_{j}(X_{0}),f_{i}(X_{t}))<\infty.
\]
Taking $\phi(z)=z_{1}^{2}-z_{2}+2z_{3}-2z_{1}z_{4}$, we obtain $\phi(\mu_{Z})=v(P,h)$
and by the delta method,
\[
\sqrt{t}(\hat{v}_{t}(P,h)-v(P,h))=\sqrt{t}(\phi(Z_{t})-\phi(\mu_{Z}))\to_{d}\text{Normal}(0,\sigma^{2}),
\]
where $\sigma^{2}=\nabla\phi(\mu_{Z})^{T}\Sigma\nabla\phi(\mu_{Z})<\infty$.
\end{proof}

\section{Comparison with long-run variance estimators\label{appx:reviewlongrunvarianceestimation}}

Estimation of the long-run variance $v(P, h)$ in \eqref{eq:avarfamiliar} has been a long-standing problem in MCMC, stochastic simulation, and time-series analysis. Two common estimators are the spectral variance (SV) and batch means (BM) estimators. We employ parallel chain versions of these long-run variance estimators, which we describe with references to single-chain estimators as well. In this section we assume that an adequate portion of the trajectories has been discarded as burn-in, without referring to it in the notation.

For $k = 1, \dots, m$, let $(X_{k,t})_{t\geq 0}$ denote the $k$-the Markov chain from a $\pi$-invariant kernel $P$. Long-run variance estimators for parallel chain implementations employ ``global-centering'' estimators in \eqref{eq:avarfamiliar}. For $k = 1, \dots, m$,  let $\pi^{\text{MC},k}(h) = t^{-1} \sum_{s=0}^{t-1} h(X_{k,s})$ denote the MCMC estimator of $\pi(h)$ from the $k$th chain and let the global average be denoted by $\bar{\pi}^{\text{MC}}(h) = m^{-1} \sum_{k=1}^{m} \pi^{\text{MC},k}(h)$.  Each $\text{cov}_{\pi}(h(X_0), h(X_k))$ appearing in \eqref{eq:avarfamiliar} denotes the lag-$k$ autocovariance under stationarity. \cite{agarwal2022globally} propose to estimate the lag-$k$ autocovariance using this global mean:
\begin{align}
    \hat{\gamma}(u) = \dfrac{1}{m} \sum_{k=1}^{m}\dfrac{1}{t} \sum_{s=0}^{t-u-1} (h(X_{k,s}) - \bar{\pi}^{\text{MC}}(h))(h(X_{k,s+u}) - \bar{\pi}^{\text{MC}}(h)),
\end{align}
The SV estimator of \cite{agarwal2022globally,ande:1971, hannan:1970, priest:1981} is a truncated and weighted sum of these sample autocovariances. Let $b_t$ denote the truncation or bandwidth and let $w: \mathbb{R} \to \mathbb{R}$ be a weight function, then the SV estimator is
\begin{equation}\label{eq:avar:SV}
    v^{\text{sv}}_{b_t}(P, h):=  \sum_{u=-(t-2)}^{t-2} w \left( \frac{u}{b_t} \right)\hat{\gamma}(u)\,.
\end{equation}
\cite{ande:1971} presents a variety of weight functions, including the popular Bartlett, Tukey--Hanning, and quadratic spectral weights. 
In our experiments we employ the Tukey--Hanning weight function, and the calculations are done as in \citet{heberle2017fast}.

As an alternative, BM estimators are constructed as follows. Let $t = a_tb_t$ where $b_t$ is the size of a batch and $a_t$ is the number of batches. For each of the $a_t$ batches, define the mean of the $s$th batch in the $k$th chain as $\bar{h}_{k,s} = b_t^{-1}\sum_{i=0}^{b_t - 1} h(X_{k,sb_t +i})$, for $s = 0, \dots, a_t-1$ and $k = 1, \dots, m$. Then the parallel-chain version of the BM estimator as defined by \cite{argon2006replicated,gupta2020estimating} is
\begin{equation}\label{eq:avar:BM}
    v^{\text{BM}}_{b_t}(P,h):= \dfrac{b_t}{ma_t-1} \sum_{k=1}^{m} \sum_{s=0}^{a_t-1}\left(\bar{h}_{k,s} - \bar{\pi}^{\text{MC}}(h) \right)^2.
\end{equation}
The single-chain BM estimator dates back to \cite{schm:1982} and its asymptotic properties for MCMC were first studied in \cite{flegal2010batch}. For both SV and BM we select $b_t$
with the method of \citet{liuvatsflegal2022}, implemented in the function \texttt{batchSize} of \texttt{mcmcse} \citep{mcmcse}.

Both SV and BM estimators can exhibit significant bias, specifically negative bias for positively-correlated Markov chains.  \cite{vats:fleg:2022} proposed a jackknife-like strategy to control the bias of BM and SV estimators, yielding \textit{lugsail} versions of BM and SV. For $r > 0$, a lugsail version of the estimator is
\begin{equation}\label{eq:avar:lugsail}
	v^{\circ}_{r}(P,h):= 2 v^{\circ}_{b_t}(P,h) - v^{\circ}_{\lfloor b_t/r \rfloor}(P,h)\,,
\end{equation}
where $v^{\circ}$ represents either $v^{\text{BM}}$ or $v^{\text{SV}}$. When $r = 1$, $v^{\circ}_{r}(P,h)$ is the original BM or SV estimator. For larger values of $r$, the estimator exhibits reduced bias, as will be presented in the results summarized in this section.  The choice of $b_t$ (bandwidth for SV and batch-size for BM) directly impacts the bias; large $b_t$ implies small bias. Of course, if $b_t$ is large, $t/b_t$ (or $a_t$ in BM) is small, which yields high variance of the variance estimator. This trade-off is often  balanced by seeking mean-square-optimality.

Expressions of bias and variance for single-chain BM and SV were first obtained by \cite{damerdji:1995} under strong mixing conditions of the underlying process. These expressions were later considered by \cite{flegal2010batch}, and the conditions were weakened by \cite{vats2019multivariate,vats2018strong}. \cite{agarwal2022globally} and \cite{gupta2020estimating} present bias and variance expressions for the parallel chain versions of SV and BM, respectively. We present a synergized version of the results here. But first, the result below allows the existence of a strong invariance principle for $h$ and $P$. Let $S_{1,t} = \sum_{i=0}^{t-1} h(X_{1,t})$.

\begin{thm}[\cite{kuel:phil:1980,vats2018strong}]
\label{thm:sip}
Suppose $\mathbb{E}_{\pi}[|h(X)|^{2 + \delta}] < \infty$ for some $\delta > 0$, and let $P$ be polynomially ergodic of order $\xi > (q + 1 + \epsilon)/(1 + 2/\delta)$ for some $q \geq 1$ and $\epsilon > 0$. Then, without changing its distribution, process $\{S_{1,t}\}_{t\geq 0}$ can be redefined on a richer probability space together with a standard Brownian motion, $\{B(t)\}_{t\geq 0}$ such that   there exist $\sigma > 0$, $\lambda > 0$, and a finite random variable $D$, such that
\[
\left| S_{1,t} - t \pi(h) - \sigma B(t) \right| < D \,t^{1/2 - \lambda}\,.
\]
\end{thm}

Theorem~\ref{thm:sip} yields a strong invariance principle result under mild conditions on the process, with $\sigma^2 = v(P,h)$. The existence of a strong invariance principle is one of the sufficient conditions for obtaining the variance of SV and BM estimators.
For $q \geq 1$, define
\begin{equation}\label{eq:phiq}
  \Phi^{(q)} = \sum_{u=-\infty}^{\infty}|u|^{q} {\rm cov}_{\pi}\left(h(X_{1,0}), h(X_{1,u}) \right)\,.
\end{equation}
The term $\Phi^{(q)}$ appears in the asymptotic biases below.

\begin{thm}[Bias and variance of SV estimators \citep{agarwal2022globally,damerdji:1995,liuvatsflegal2022}]
\label{thm:mse_sv}
Suppose Theorem~\ref{thm:sip} holds with $q$ such that
\[
  k_q := \lim_{x \to 0} \dfrac{1 - w(x)}{|x|^q}  < \infty\,,
\]
and $b_t^{q+1}/t \to 0$ as $t \to \infty$. Then, with $\Phi^{(q)}$ as in \eqref{eq:phiq},
\begin{equation}
    \lim_{t\to \infty} b_t^{q} \mathbb{E} \left(v^{\text{sv}}_r(P,h) - v(P,h)  \right) = - (2 - r^q) k_{q} \Phi^{(q)}\,.
\end{equation}
Additionally, if $\mathbb{E}(D^4) < \infty$, and $\mathbb{E}_{\pi}[h(X)^4] <\infty$, then
\begin{equation}
    \lim_{t \to \infty} \dfrac{t}{b_t}{\rm var}\left( v^{\text{sv}}_r(P,h) \right) = \dfrac{2}{m} v(P,h)^2 \int_{-\infty}^{\infty} w_r(x)^2 {\rm d}x\,,
\end{equation}
where $w_r(x)$ is a known function of $w$ and $r$. 
\end{thm}

The bias witnessed in any implementation then depends on the choice of $r$. For $r = 1$ (regular SV) the bias is significantly negative, whereas higher choices of $r$ remove the negative bias. \cite{vats:fleg:2022} recommend going up to $r = 3$ to control the effects on the variance of the estimator. Although the rate on the bias depends on the weight function chosen, the rate on the variance does not depend on the weight function. Theorem~\ref{thm:mse_sv} implies that the variance of the SV estimator vanishes at $t/b_t$ rate. In the context of MCMC, the asymptotic distribution of SV estimators have not been studied yet.
The bias and variance of BM estimators are similar to SV for $q = 1$. 

\begin{thm}[Bias and variance of BM estimators \citep{flegal2010batch,gupta2020estimating,vats:fleg:2022}]
\label{thm:mse_BM}
Suppose Theorem~\ref{thm:sip} holds with $q = 1$, then
\begin{equation}
    \lim_{t\to \infty} b_t \mathbb{E} \left(v^{\text{BM}}_r(P,h) - v(P,h)  \right) = -(2 - r) \Phi^{(1)}\,.
\end{equation}
Additionally, if $\mathbb{E}(D^4) < \infty$, and $\mathbb{E}_{\pi}[h(X)^4] <\infty$, then
\begin{equation}
    \lim_{t \to \infty} a_t{\rm var}\left( v^{\text{BM}}_r(P,h) \right) = \dfrac{2}{m}\left[\dfrac{1}{r} + \dfrac{4(r-1)}{r}   \right]v(P,h)^2\,.
\end{equation}
\end{thm}
Here again, for $r = 1$ we get a significant negative bias (in the presence of positive autocorrelations), with larger values of $r$ correcting for this underestimation. In our simulations we consider $r = 1, 2, 3$.

For the BM estimators as well, the variance has an $a_t$ rate of convergence. Given Theorem~\ref{thm:mse_BM}, \cite{damerdji:1995,flegal2010batch} argue that the mean-square optimal choice of $b_t$ is $b_t \propto t^{1/3}$, where the proportionality constant may be estimated using parametric techniques discussed in \cite{liuvatsflegal2022}. Recently, \cite{chakraborty2019estimating} obtained asymptotic normality of single-chain BM estimators $(m = 1)$ for geometrically ergodic, reversible chains.
\begin{thm}[\cite{chakraborty2019estimating}]
\label{thm:norm_bm}
Suppose $P$ is geometrically ergodic and $\pi$-reversible, with $\mathbb{E}_{\pi} [h(X)^8] < \infty$. If $b_t/t^{1/3} \to \infty$, then,
\[
\sqrt{a_t} \left(v^{\text{BM}}(P,h) - v(P,h) \right) \overset{d}{\to} \text{Normal}(0, 2v(P,h)^2).
\]
\end{thm}

Thus, in general we see that SV and BM estimators exhibit a $\sqrt{a_t}$ rate of convergence, which is slower than the Monte Carlo rate of EPAVE (as $t\to\infty$) and UPAVE (as the number of independent copies go to infinity).

\section{\label{appx:AR1}Verifying the assumptions in the AR(1)
case}

Assumption~\ref{assu:tau-moment-kappa} may be verified,
using a geometric Lyapunov drift condition as described in Section \ref{subsec:interpretassumption}. The assumption holds for all $\kappa\geq 1$.
Below we directly establish this, and we describe efforts to obtain bounds on the survival probabilities of the meeting times that have explicit dependencies 
on the parameters of the AR(1) process.

A Markov chain $X$ is an ${\rm AR}(1,\phi,\sigma^{2})$ chain if
for a sequence of independent $\text{Normal}(0,1)$ random variables
$Z=(Z_{n})$, $X_{n}=\phi X_{n-1}+\sigma Z_{n}$ for all $n\geq1$.
We shall assume here that $\phi\in(0,1)$. We use the reflection-maximal
coupling in Algorithm~\ref{alg:reflmaxunivariate} to construct a
Markovian coupling of two AR(1,$\phi$,$\sigma^{2}$) chains $X$
and $Y$ started from $x_{0}$ and $y_{0}$, respectively. For convenience
in this section, we describe the reflection-maximal coupling of $\text{Normal}(x,\sigma^{2})$
and $\text{Normal}(y,\sigma^{2})$ as follows:
\begin{enumerate}
\item Set $z\leftarrow(x-y)/\sigma$.
\item Sample $W\sim\text{Normal}(0,1)$.
\item Sample $B\sim{\rm Bernoulli}(1\wedge\frac{\varphi(z+W)}{\varphi(W)})$,
where $\varphi$ is the standard Normal probability density function.
\item If $B=1$, output $(x+\sigma W,x+\sigma W)$.
\item If $B=0$, output $(x+\sigma W,y-\sigma W)$.
\end{enumerate}
\begin{algorithm}
\caption{\label{alg:max-refl-coupling-ar1}Simulating pairs of ${\rm AR}(1,\phi,\sigma^{2})$
chains with reflection-maximal coupling.}

\begin{enumerate}
\item Set $X_{0}\leftarrow x_{0}$, $Y_{0}\leftarrow y_{0}$.
\item For $n=1,2,\ldots$ sample $(X_{n},Y_{n})$ from the reflection-maximal
coupling of $\text{Normal}(\phi X_{n-1},\sigma^{2})$ and $\text{Normal}(\phi Y_{n-1},\sigma^{2})$.
\end{enumerate}
\end{algorithm}

\begin{lem}
\label{lem:ar1-equiv}For Algorithm~\ref{alg:max-refl-coupling-ar1},
the meeting time $\tau=\inf\{n\geq1:X_{n}=Y_{n}\}$ satisfies
\[
\mathbb{P}_{x_{0},y_{0}}(\tau>n)=\mathbb{E}_{d_{0}}\left[\prod_{i=0}^{n-1}G(D_{i},D_{i+1})\right],
\]
where $D=(D_{n})$ is an ${\rm AR}(1,\phi,1)$ chain with $d_{0}=(x_{0}-y_{0})/(2\sigma)$
and $G(x,x')=\left(1-\exp\left\{ -2\phi xx'\right\} \right)_{+}$.
\end{lem}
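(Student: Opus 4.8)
The plan is to make the randomness driving Algorithm~\ref{alg:max-refl-coupling-ar1} explicit, track the scaled difference of the two chains, and rewrite the non-meeting event as a product of conditionally independent indicators. Concretely, let $(W_n)_{n\ge 1}$ and $(V_n)_{n\ge 1}$ be mutually independent with $W_n\sim\text{Normal}(0,1)$ and $V_n\sim\text{Uniform}(0,1)$, and realise the $n$-th step by $z_n=\phi(X_{n-1}-Y_{n-1})/\sigma$, $X_n=\phi X_{n-1}+\sigma W_n$, $B_n=\mathbb{I}\bigl(V_n\le 1\wedge \varphi(z_n+W_n)/\varphi(W_n)\bigr)$, and $Y_n=X_n$ if $B_n=1$ while $Y_n=\phi Y_{n-1}-\sigma W_n$ if $B_n=0$. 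Writing $\Delta_n=X_n-Y_n$ and $D_n=\Delta_n/(2\sigma)$, so $D_0=d_0$, one checks that on $\{B_n=0\}$ one has $\Delta_n=\phi\Delta_{n-1}+2\sigma W_n$, hence $D_n=\phi D_{n-1}+W_n$, and $z_n=2\phi D_{n-1}$.

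First I would show that, up to a $\mathbb{P}_{x_0,y_0}$-null set, $\{\tau>n\}$ coincides with $A_n:=\{B_1=\dots=B_n=0\}$: on $\{B_j=1\}$ the chains meet, so $\{\tau>n\}\subseteq A_n$; conversely, on $\{B_j=0\}$ the difference $X_j-Y_j=\phi\Delta_{j-1}+2\sigma W_j$ has an absolutely continuous conditional law given the past, so $\{X_j=Y_j\}$ is null there, giving $A_n\subseteq\{\tau>n\}$ modulo null sets. The key step is then a deterministic substitution: introduce the AR$(1,\phi,1)$ chain $(\widetilde D_n)$ driven by the same innovations, $\widetilde D_0=d_0$, $\widetilde D_n=\phi\widetilde D_{n-1}+W_n$; on $A_{j-1}$ one has $D_{j-1}=\widetilde D_{j-1}$ and hence $z_j=2\phi\widetilde D_{j-1}$, so by induction on $j$
\[
\prod_{j=1}^{n}\mathbb{I}(B_j=0)=\prod_{j=1}^{n}\mathbb{I}\!\left(V_j>1\wedge\frac{\varphi(2\phi\widetilde D_{j-1}+W_j)}{\varphi(W_j)}\right)\quad\text{a.s.}
\]

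To finish, I would condition on $(W_1,\dots,W_n)$: the thresholds on the right-hand side become deterministic and the $V_j$ are i.i.d.\ uniform and independent of the $W_j$, so the conditional expectation factorises into $\prod_{j=1}^{n}\bigl(1-\varphi(2\phi\widetilde D_{j-1}+W_j)/\varphi(W_j)\bigr)_{+}$. The Gaussian identity $\varphi(a+w)/\varphi(w)=\exp(-a^2/2-aw)$ with $a=2\phi\widetilde D_{j-1}$ gives exponent $-2\phi\widetilde D_{j-1}(\phi\widetilde D_{j-1}+W_j)=-2\phi\widetilde D_{j-1}\widetilde D_j$, so each factor is exactly $G(\widetilde D_{j-1},\widetilde D_j)$. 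Taking expectations over $(W_n)$ and recognising $(\widetilde D_n)$ as the AR$(1,\phi,1)$ chain started from $d_0$ yields $\mathbb{P}_{x_0,y_0}(\tau>n)=\mathbb{P}_{x_0,y_0}(A_n)=\mathbb{E}_{d_0}[\prod_{i=0}^{n-1}G(D_i,D_{i+1})]$. I expect the main obstacle to be purely bookkeeping: carefully justifying the a.s.\ identity $\{\tau>n\}=A_n$ and the inductive substitution of $\widetilde D_{j-1}$ for $D_{j-1}$ on $A_{j-1}$, after which the conditional factorisation and the algebraic simplification are routine.
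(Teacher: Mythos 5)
Your proof is correct and follows essentially the same route as the paper's: realize the reflection-maximal coupling through a common Gaussian innovation plus a Bernoulli acceptance, identify $\{\tau>n\}$ with the all-rejections event, recognize the (scaled) difference process as ${\rm AR}(1,\phi,1)$, and apply the Gaussian ratio identity $\varphi(a+w)/\varphi(w)=\exp(-a^2/2-aw)$. The only cosmetic difference is that the paper lets the reflected chain $Y'$ run forever so that $(X_n-Y'_n)/(2\sigma)$ is AR$(1)$ unconditionally, whereas you track the true difference and substitute an auxiliary AR$(1)$ chain $\widetilde D$ on the non-meeting event by induction; your explicit treatment of the accidental-meeting null set is a detail the paper leaves implicit.
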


\begin{proof}
We consider the following equivalent construction of $X$ and $Y$
as in Algorithm~\ref{alg:max-refl-coupling-ar1}. Let $(W_{n})$
be a sequence of independent $\text{Normal}(0,1)$ random variables.
Set $X_{0}=x_{0}$ and define
\[
X_{n}=\phi X_{n-1}+\sigma W_{n},\qquad n\geq1,
\]
and then let $Y'_{0}=y_{0}$ and define
\[
Y'_{n}=\phi Y'_{n-1}-\sigma W_{n},\qquad n\geq1.
\]
We can then define $(B_{n})$ to be a sequence of conditionally independent
Bernoulli random variables with
\[
B_{n}\sim{\rm Bernoulli}\left(1\wedge g(X_{n-1},Y_{n-1},W_{n})\right),
\]
where $g(x,y,w)=\varphi\left(\phi\frac{(x-y)}{\sigma}+w\right)/\varphi(w)$.
$X$ is the Markov chain described by the algorithm, while $Y'$ is
the Markov chain associated with the $Y$ chain when the Bernoulli
random variables $B_{n}$ take the value $0$ for all $n$. In particular,
if $\tau=\inf\{n\geq1:B_{n}=1\}$ then $Y_{n}=Y'_{n}$ for $n\in\{0,\ldots,\tau-1\}$
and $Y_{n}=X_{n}$ for $n\geq\tau$. Hence, the meeting time $\tau$
can be determined by analyzing the $X$ and $Y'$ chains, and in particular
\[
\mathbb{P}_{x_{0},y_{0}}(\tau>n)=\mathbb{E}_{x_{0},y_{0}}\left[\prod_{i=0}^{n-1}(1-1\wedge g(X_{i},Y_{i},W_{i+1}))\right]=\mathbb{P}_{d_{0}}(\tau>n).
\]
Now define $D_{n}=(X_{n}-Y'_{n})/(2\sigma)$. We notice that this
is an ${\rm AR}(1,\phi,1)$ chain with
\[
D_{n}=\phi D_{n-1}+W_{n},
\]
and that we may re-express the conditional distribution of $B_{n}$
as
\[
B_{n}\sim{\rm Bernoulli}\left(1\wedge\frac{\varphi\left(2\phi D_{n-1}+W_{n}\right)}{\varphi(W_{n})}\right),
\]
where
\begin{align*}
\frac{\varphi\left(2\phi D_{n-1}+W_{n}\right)}{\varphi(W_{n})} & =\exp\left\{ -2\phi D_{n-1}(\phi D_{n-1}+W_{n})\right\} \\
 & =\exp\left\{ -2\phi D_{n-1}D_{n}\right\} .
\end{align*}
It follows that, with $d_{0}=(x_{0}-y_{0})/(2\sigma)$,
\begin{align*}
\mathbb{P}_{x_{0},y_{0}}(\tau>n) & =\mathbb{P}_{x_{0},y_{0}}\left(B_{1}=0,\ldots,B_{n}=0\right)\\
 & =\mathbb{P}_{d_{0}}\left(B_{1}=0,\ldots,B_{n}=0\right)\\
 & =\mathbb{E}_{d_{0}}\left[\prod_{i=1}^{n}\left(1-\exp\left\{ -2\phi D_{i-1}D_{i}\right\} \right)_{+}\right]\\
 & =\mathbb{E}_{d_{0}}\left[\prod_{i=0}^{n-1}G(D_{i},D_{i+1})\right].
\end{align*}
\end{proof}
This equivalence suggests the relevance of that expectation with respect
to an AR$(1,\phi,1)$ chain, which we bound below. The proof is delayed
to the end of this appendix as it requires several intermediate results.
\begin{prop}
\label{prop:ar1-d-bound}Let $X$ be ${\rm AR(1,\phi,1)}$ and $P$
its corresponding Markov kernel. Then
\[
\mathbb{E}_{x}\left[\prod_{i=0}^{n-1}G(X_{i},X_{i+1})\right]\leq\left(\frac{2}{\tilde{\beta}}+|x|+3\right)\left\{ \tilde{\beta}^{\frac{\log(\phi)}{\log(\tilde{\beta})+\log(\phi)}}\right\} ^{n},
\]
where $\tilde{\beta}=\beta^{\delta}$, with $\beta=(1+\phi^{2})/2$
, $b=2-\phi^{2}$, $h=1-\frac{1}{\sqrt{2}}\exp\left\{ -\frac{3\phi^{2}}{1-\phi^{2}}\right\} $,
$\delta=\frac{\log h}{\log h+\log\beta-\log b}$.
\end{prop}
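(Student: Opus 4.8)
The plan is to establish a geometric drift condition for the AR$(1,\phi,1)$ chain with a quadratic-exponential Lyapunov function, and then use it together with a uniform bound $G(x,x')\leq 1$ and a sharper bound on $G$ when both arguments are small, to control the product $\prod_{i=0}^{n-1}G(X_i,X_{i+1})$. The key observation is that $G(x,x') = (1-\exp\{-2\phi xx'\})_+$ is always at most $1$, and is strictly less than $1$ — bounded away from $1$ in fact — whenever $xx'$ is bounded above, i.e. when the chain is in a compact region. So the product decays geometrically provided the chain visits a compact set "often enough", which is exactly what a drift-and-minorization argument quantifies.

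Concretely, I would first show a one-step drift inequality of the form $PV(x)\leq \beta V(x) + b\,\mathbb{I}_C(x)$ for $V(x)=\exp\{c x^2\}$ with a suitably small $c$ (here the paper's choice $\beta=(1+\phi^2)/2$, $b=2-\phi^2$ suggests $c$ chosen so that $\mathbb{E}[\exp\{c(\phi x+W)^2\}]$ with $W\sim N(0,1)$ is computable in closed form and gives contraction off a compact set $C$). This is a routine Gaussian integral: $\mathbb{E}[\exp\{cZ^2\}]=(1-2c)^{-1/2}$ for $Z\sim N(m,1)$ up to the mean-dependent factor $\exp\{cm^2/(1-2c)\}$, so one picks $c$ small enough that the resulting multiplicative constant on $V(x)$ is $\beta<1$ outside $C$. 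Second, on the compact set $C$ one bounds $G$ directly: if $x\in C$ then $\mathbb{E}_x[G(x,X_1)\mathbb{I}_C(X_1)]$ or more simply $\sup_{x,x'\in C}G(x,x')\leq 1-\varepsilon$ for some $\varepsilon>0$, and this is where the constant $h=1-\tfrac{1}{\sqrt2}\exp\{-3\phi^2/(1-\phi^2)\}$ enters — it is $1$ minus a lower bound on $\exp\{-2\phi xx'\}$ over the relevant region, incorporating the probability that $X_1$ lands in $C$.

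Third, I would combine these. The standard trick: multiply $V(X_i)$ into the product and telescope. Define $M_n = V(X_n)\prod_{i=0}^{n-1}G(X_i,X_{i+1})$ and show $\mathbb{E}[M_{n+1}\mid \mathcal{F}_n]\leq \tilde\beta M_n$ where $\tilde\beta = \beta^\delta$ arises from optimizing the split between "drift contributes $\beta$" and "on $C$, $G$ contributes $h$, but $V$ might grow by $b$": the exponent $\delta = \log h/(\log h+\log\beta-\log b)$ is precisely the value balancing $\beta^{1-\delta}$ (loss of drift power) against $(hb^{?})^{\delta}$ — i.e. one writes $G\cdot V(X_{i+1}) \leq V(X_{i+1})^{1-\delta}(G\,V(X_{i+1}))^\delta$ and bounds the two factors separately using the drift on the first and $G\leq 1$, $PV\leq bV$ on $C$ plus $G\leq h$ on $C$ for the second. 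Then $\mathbb{E}_x[M_n]\leq \tilde\beta^n V(x)$, and since $\prod G \leq M_n$ (as $V\geq 1$), we get $\mathbb{E}_x[\prod_{i=0}^{n-1}G(X_i,X_{i+1})]\leq \tilde\beta^n \exp\{c x^2\}$. Finally, the paper's stated bound has $\{\tilde\beta^{\log\phi/(\log\tilde\beta+\log\phi)}\}^n$ rather than $\tilde\beta^n$ and a linear-in-$|x|$ prefactor rather than $\exp\{cx^2\}$; this last reshaping comes from a second interpolation — splitting each step's bound between the geometric rate $\tilde\beta$ and the "automatic" rate $\phi$ at which $|X_i|$ itself contracts in mean (so that $\mathbb{E}_x|X_n|\lesssim \phi^n|x|+$const), trading the $\exp\{cx^2\}$ factor down to a linear one at the cost of worsening the rate from $\tilde\beta$ to $\tilde\beta^{\log\phi/(\log\tilde\beta+\log\phi)}$.

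The main obstacle I anticipate is getting the constants exactly right — in particular verifying that the specific $c$ implicit in $V$ makes the Gaussian moment generating function computations yield precisely $\beta=(1+\phi^2)/2$ and $b=2-\phi^2$ with a clean compact set $C$, and then carefully tracking the Hölder/interpolation exponents so that the $\delta$ and the final rate exponent $\log\phi/(\log\tilde\beta+\log\phi)$ come out as claimed. The structure of the argument (drift + minorization + telescoping supermartingale with an interpolation) is standard, but the bookkeeping to land on these exact closed forms, and the final trade to replace the exponential prefactor by a linear one, will require care; I would expect to need intermediate lemmas establishing (i) the drift inequality, (ii) the bound $G\leq h$ on $C$, and (iii) the mean-contraction $\mathbb{E}_x|X_n|\leq \phi^n|x| + O(1)$, before assembling the final estimate.
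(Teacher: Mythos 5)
Your overall architecture matches the paper's: a geometric drift condition, a bound on $G$ over a small set, a H\"older interpolation producing the exponent $\delta$, and a final trade that converts the Lyapunov prefactor into a linear one at the cost of worsening the rate to $\tilde\beta^{\log\phi/(\log\tilde\beta+\log\phi)}$. Two of your specific choices, however, would not deliver the stated constants and one step is misidentified. First, the paper's Lyapunov function is the \emph{quadratic} $V(x)=1+(1-\phi^2)x^2$, not $\exp\{cx^2\}$; the identity $PV(x)=\phi^2V(x)+2(1-\phi^2)$ is what produces $\beta=(1+\phi^2)/2$, $b=2-\phi^2$ and $C=\{x:x^2\leq 3/(1-\phi^2)\}$ in closed form. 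Second, the quantity that must be bounded by $h$ on $C$ is $\mathbb{E}_x[G(x,X_1)]=PG(x,\cdot)(x)$, not $\sup_{x,x'\in C}G(x,x')$: the latter ignores the event $X_1\notin C$, on which $G$ can be arbitrarily close to $1$, so the "more simply" variant you offer does not suffice for the H\"older step $P\{G(x,\cdot)^{1-\delta}V(\cdot)^{\delta}\}(x)\leq\{PG(x,\cdot)(x)\}^{1-\delta}\{PV(x)\}^{\delta}$. The paper sidesteps this entirely by computing the expectation exactly: $\mathbb{E}_x[G(x,X_1)]=\left\Vert \text{Normal}(0,1)-\text{Normal}(-2\phi x,1)\right\Vert_{\rm TV}=2\Phi(\phi|x|)-1\leq 1-\tfrac{1}{\sqrt2}\exp\{-(\phi x)^2\}$, whence $h$.

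The final step is also not a per-step interpolation between the rates $\tilde\beta$ and $\phi$, nor does it rest on the mean contraction $\mathbb{E}_x|X_n|\lesssim\phi^n|x|$. The paper splits the \emph{time horizon}: since $\mathbb{P}_x(\tau>n)\leq\mathbb{P}_{\mu_k}(\tau>n-k)$ with $\mu_k=P^k(x,\cdot)$, and $f\mapsto\mathbb{P}_\cdot(\tau>n-k)$ takes values in $[0,1]$, one gets $\mathbb{P}_x(\tau>n)\leq\mathbb{P}_\mu(\tau>n-k)+2\left\Vert P^k(x,\cdot)-\mu\right\Vert_{\rm TV}$ for the stationary $\mu$. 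The first term is $\leq\mu(V)\tilde\beta^{n-k}=2\tilde\beta^{n-k}$ by the drift bound integrated against $\mu$, and the second is $\leq\frac{|x|+3}{2}\phi^k$ by a quantitative Gaussian total-variation bound; optimizing $k\approx n\log\beta/(\log\beta+\log\phi)$ yields exactly the stated rate and the prefactor $\frac{2}{\tilde\beta}+|x|+3$. So your proposal identifies the right ingredients in spirit but would need these three repairs — quadratic $V$, the exact TV evaluation of $\mathbb{E}_x[G(x,X_1)]$, and the burn-in/TV decomposition with optimization over the burn-in length — to reach the proposition as stated.
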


Combining Lemma~\ref{lem:ar1-equiv} with Proposition~\ref{prop:ar1-d-bound},
we obtain the following.
\begin{cor}
For Algorithm~\ref{alg:max-refl-coupling-ar1}, the meeting time
$\tau=\inf\{n\geq1:X_{n}=Y_{n}\}$ satisfies
\[
\mathbb{P}_{x_{0},y_{0}}(\tau>n)=\tilde{C}(x_{0},y_{0})\bar{\beta}^{n},
\]
where with the same constants as in Proposition~\ref{prop:ar1-d-bound},
$\tilde{C}(x,y)=\frac{2}{\tilde{\beta}}+|\frac{x-y}{2\sigma}|+3$
and $\bar{\beta}=\tilde{\beta}^{\frac{\log(\phi)}{\log(\tilde{\beta})+\log(\phi)}}$,
and satisfies $\pi\otimes\pi(\tilde{C})<\infty$.
\end{cor}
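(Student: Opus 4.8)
The plan is to chain together Lemma~\ref{lem:ar1-equiv} and Proposition~\ref{prop:ar1-d-bound} and then do a one-line moment computation. By Lemma~\ref{lem:ar1-equiv}, the survival function of the meeting time for Algorithm~\ref{alg:max-refl-coupling-ar1} equals $\mathbb{P}_{x_{0},y_{0}}(\tau>n)=\mathbb{E}_{d_{0}}[\prod_{i=0}^{n-1}G(D_{i},D_{i+1})]$, where $D$ is an ${\rm AR}(1,\phi,1)$ chain started at $d_{0}=(x_{0}-y_{0})/(2\sigma)$. Applying Proposition~\ref{prop:ar1-d-bound} with initial point $x=d_{0}$ gives directly $\mathbb{P}_{x_{0},y_{0}}(\tau>n)\leq\bigl(\tfrac{2}{\tilde{\beta}}+|d_{0}|+3\bigr)\bar{\beta}^{n}$ with $\bar{\beta}=\tilde{\beta}^{\log(\phi)/(\log(\tilde{\beta})+\log(\phi))}$, which is the asserted bound with $\tilde{C}(x,y)=\tfrac{2}{\tilde{\beta}}+|\tfrac{x-y}{2\sigma}|+3$.

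Before concluding I would record that this is a genuine geometric bound, i.e.\ $\bar{\beta}\in(0,1)$. Since $\phi\in(0,1)$ we have $\beta=(1+\phi^{2})/2\in(0,1)$, and the exponent $\delta$ in Proposition~\ref{prop:ar1-d-bound} lies in $(0,1)$, so $\tilde{\beta}=\beta^{\delta}\in(0,1)$; then $\log\phi<0$ and $\log\tilde{\beta}+\log\phi<0$, so the exponent $\log(\phi)/(\log(\tilde{\beta})+\log(\phi))$ is a ratio of two negative numbers, hence positive, and strictly below $1$ because $\log\tilde{\beta}<0$. Therefore $\bar{\beta}=\tilde{\beta}^{c}$ for some $c\in(0,1)$, so $\tilde{\beta}<\bar{\beta}<1$ and in particular $\sum_{n}\bar{\beta}^{n}<\infty$.

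Finally I would check $\pi\otimes\pi(\tilde{C})<\infty$. Here $\tilde{C}(x,y)=\tfrac{2}{\tilde{\beta}}+3+\tfrac{1}{2\sigma}|x-y|$ is affine in $|x-y|$, and under $\pi\otimes\pi$ the difference of two independent draws from the Gaussian stationary law $\pi={\rm Normal}(0,\sigma^{2}/(1-\phi^{2}))$ is again centred Gaussian, so $\mathbb{E}_{\pi\otimes\pi}[|X-Y|]<\infty$ and hence $\pi\otimes\pi(\tilde{C})=\tfrac{2}{\tilde{\beta}}+3+\tfrac{1}{2\sigma}\mathbb{E}_{\pi\otimes\pi}[|X-Y|]<\infty$. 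Together with the converse part of Proposition~\ref{prop:tau-moment-survival}, applied with any $s>\kappa$, this yields that Assumption~\ref{assu:tau-moment-kappa} holds for every $\kappa>1$. There is essentially no obstacle at this stage: all the substantive work (the polynomial--indeed geometric--drift and minorization analysis of the auxiliary ${\rm AR}(1,\phi,1)$ chain driving the Bernoulli coupling decisions) is carried out in Proposition~\ref{prop:ar1-d-bound}, and the only points requiring care are confirming that the exponent defining $\bar{\beta}$ lies in $(0,1)$, so the bound is summable, and the trivial Gaussian first-moment computation for $\tilde{C}$.
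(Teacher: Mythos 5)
Your proposal is correct and is exactly the route the paper intends: the corollary is stated as an immediate combination of Lemma~\ref{lem:ar1-equiv} (which reduces the meeting-time survival function to the product expectation for the auxiliary ${\rm AR}(1,\phi,1)$ chain started at $d_0=(x_0-y_0)/(2\sigma)$) with Proposition~\ref{prop:ar1-d-bound} applied at $x=d_0$, plus the trivial Gaussian first-moment check for $\pi\otimes\pi(\tilde{C})<\infty$. Your added verifications that $\bar{\beta}\in(0,1)$ and that the displayed relation should be read as an upper bound ``$\leq$'' rather than the equality printed in the statement are both correct and worth recording.
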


We see that the dependence on $\sigma$ and $|x-y|$ is fairly mild.
On the other hand, if one calculates the dependence of $\bar{\beta}$
on $\phi$, one finds that it deteriorates quickly as $\phi\nearrow1$,
even though it remains less than $1$. In contrast, numerical experiments
suggest that the true geometric rate is in fact $\phi$, but we are
not aware of a proof technique that is able to capture such a rate.
Indeed, the calculations we have used to provide a rigorous bound
are similar to those used to provide quantitative convergence rates
for Markov chains more generally and these are often loose in practice.
\begin{lem}
\label{lem:holder-survival}Let $X$ be a Markov chain with Markov
kernel $P$. Assume there exists $V\geq1$, $(\beta,b)\in(0,1)\times[1,\infty)$
such that for some set $C\subset\mathbb{X}$,
\[
PV(x)\leq\beta V(x){\mathds{1}}_{C^{\complement}}(x)+bV(x){\mathds{1}}_{C}(x),
\]
where $C^{\complement}$ is the complement $\mathbb{X}\setminus C$.
Then for $G:\mathbb{X}\times\mathbb{X}\to[0,1]$,
\[
A_{n}=\mathbb{E}_{x}\left[\prod_{i=0}^{n-1}G(X_{i},X_{i+1})\right]\leq V(x)^{\delta}\beta^{\delta n}\leq V(x)\beta^{\delta n},
\]
where we may take $\delta=\log h/(\log h+\log\beta-\log b)\in(0,1)$
for any $(0,1)\ni h\geq\sup_{x\in C}\mathbb{E}_{x}\left[G(x,X_{1})\right]$.
\end{lem}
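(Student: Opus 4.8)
The plan is to upgrade the drift condition into a statement about $V^{\delta}$ weighted by the running product of the $G$-factors, obtaining a supermartingale with geometric rate $\beta^{\delta}$, and then to iterate.

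I would first record the elementary facts about $\delta$. Since $h,\beta\in(0,1)$ and $b\geq1$, we have $\log h<0$, $\log\beta<0$ and $\log b\geq0$, so the denominator $\log h+\log\beta-\log b$ is strictly negative, whence $\delta>0$; moreover $\delta<1$ because $\log b>\log\beta$. Taking logarithms in the definition of $\delta$ gives the identity
\[
b^{\delta}h^{1-\delta}=\beta^{\delta},
\]
which is the algebraic fact that makes everything fit together, and $\beta^{\delta}\in(0,1)$ so this is a genuine contraction.

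The core step is the one-step inequality
\[
\Psi(x):=\mathbb{E}_{x}\!\left[V(X_{1})^{\delta}G(x,X_{1})\right]\leq\beta^{\delta}V(x)^{\delta},\qquad x\in\mathbb{X},
\]
proved by splitting on membership in $C$. For $x\notin C$ I would bound $G\leq1$, apply Jensen's inequality (the map $t\mapsto t^{\delta}$ is concave since $\delta\in(0,1)$) and then the drift $PV(x)\leq\beta V(x)$, to get $\Psi(x)\leq(PV(x))^{\delta}\leq\beta^{\delta}V(x)^{\delta}$. For $x\in C$ I would instead apply H\"older's inequality with exponents $1/\delta$ and $1/(1-\delta)$:
\[
\Psi(x)\leq\big(PV(x)\big)^{\delta}\,\mathbb{E}_{x}\!\left[G(x,X_{1})^{1/(1-\delta)}\right]^{1-\delta};
\]
since $G\in[0,1]$ and $1/(1-\delta)\geq1$ we have $G^{1/(1-\delta)}\leq G$, so $\mathbb{E}_{x}[G(x,X_{1})^{1/(1-\delta)}]\leq\mathbb{E}_{x}[G(x,X_{1})]\leq h$, while $PV(x)\leq bV(x)$ on $C$; combining and invoking $b^{\delta}h^{1-\delta}=\beta^{\delta}$ again yields $\Psi(x)\leq\beta^{\delta}V(x)^{\delta}$.

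To finish I would iterate. Assuming $V(x)<\infty$ (otherwise the claimed bound is vacuous), the drift gives $P^{n}V(x)\leq b^{n}V(x)<\infty$, so all the expectations below are finite. Set $M_{n}=V(X_{n})^{\delta}\prod_{i=0}^{n-1}G(X_{i},X_{i+1})$; the product is $\mathcal{F}_{n}$-measurable, nonnegative and bounded by $1$, so by ``taking out what is known'' and the Markov property, $\mathbb{E}[M_{n+1}\mid\mathcal{F}_{n}]=\big(\prod_{i=0}^{n-1}G(X_{i},X_{i+1})\big)\Psi(X_{n})\leq\beta^{\delta}M_{n}$, hence $\mathbb{E}_{x}[M_{n}]\leq\beta^{\delta n}V(x)^{\delta}$ by induction. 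Since $V\geq1$ gives $V(X_{n})^{\delta}\geq1$, we conclude $A_{n}=\mathbb{E}_{x}[\prod_{i=0}^{n-1}G(X_{i},X_{i+1})]\leq\mathbb{E}_{x}[M_{n}]\leq\beta^{\delta n}V(x)^{\delta}\leq\beta^{\delta n}V(x)$, the last step using $V(x)\geq1$ and $\delta\leq1$. I expect the main obstacle to be the $x\in C$ case: one must balance the expansive one-step drift $PV\leq bV$ there against the contraction furnished by $\mathbb{E}_{x}[G(x,X_{1})]\leq h$, and it is precisely the H\"older exponent achieving this balance that forces the stated value of $\delta$.
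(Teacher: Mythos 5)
Your proof is correct and follows essentially the same route as the paper's: a H\"older-type one-step bound balanced by the specific choice of $\delta$ satisfying $b^{\delta}h^{1-\delta}=\beta^{\delta}$, split on $C$ versus $C^{\complement}$, then iterated via the Markov property. The only (immaterial) difference is bookkeeping: you keep the full product $\prod_i G(X_i,X_{i+1})$ in the iterated quantity and absorb the exponent via $G^{1/(1-\delta)}\leq G$, whereas the paper iterates $\mathbb{E}_x[\{\prod_i G(X_i,X_{i+1})^{1-\delta}\}V(X_n)^{\delta}]$ and uses $G\leq G^{1-\delta}$ at the end.
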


\begin{proof}
By H\"older's inequality and the assumptions we have for any $\delta\in(0,1)$,
\begin{align*}
P\left\{ G(x,\cdot)^{1-\delta}V(\cdot)^{\delta}\right\} (x) & \leq\left\{ PG(x,\cdot)(x)\right\} ^{1-\delta}\left\{ PV(x)\right\} ^{\delta}\\
 & \leq{\mathds{1}}_{C^{\complement}}(x)\beta^{\delta}V(x)^{\delta}+{\mathds{1}}_{C}(x)h^{1-\delta}b^{\delta}V(x)^{\delta}\\
 & =\beta^{\delta}V(x)^{\delta},
\end{align*}
where the equality is due to the specific choice of $\delta$. Now,
since $0\leq G\leq1$ and $V\geq1$, we have
\[
A_{n}\leq\mathbb{E}_{x}\left[\left\{ \prod_{i=0}^{n-1}G(X_{i},X_{i+1})^{1-\delta}\right\} V(X_{n})^{\delta}\right]=:B_{n}.
\]
It follows that
\begin{align*}
B_{n} & =\mathbb{E}_{x}\left[\left\{ \prod_{i=0}^{n-2}G(X_{i},X_{i+1})^{1-\delta}\right\} P\left\{ G(X_{n-1},\cdot)^{1-\delta}V(\cdot)^{\delta}\right\} (X_{n-1})\right]\\
 & \leq\mathbb{E}_{x}\left[\left\{ \prod_{i=0}^{n-2}G(X_{i},X_{i+1})^{1-\delta}\right\} \beta^{\delta}V(X_{n-1})^{\delta}\right]\\
 & =\beta^{\delta}B_{n-1},
\end{align*}
and hence that $A_{n}\leq\beta^{n\delta}V(x)^{\delta}.$
\end{proof}
\begin{prop}
\label{prop:ar1-drift-survival-bound}Let $X$ be ${\rm AR(1,\phi,1)}$
and $P$ its corresponding Markov kernel. Then
\[
PV(x)\leq\beta V(x){\mathds{1}}_{C^{\complement}}(x)+bV(x){\mathds{1}}_{C}(x),
\]
where $V(x)=1+(1-\phi^{2})x^{2}$, $\beta=(1+\phi^{2})/2$, $C=\left\{ x:x^{2}\leq\frac{3}{1-\phi^{2}}\right\} $,
$b=2-\phi^{2}$, and
\[
\sup_{x\in C}\mathbb{E}_{x}[G(x,X_{1})]\leq h=1-\frac{1}{\sqrt{2}}\exp\left\{ -\frac{3\phi^{2}}{1-\phi^{2}}\right\} .
\]
Hence,
\[
\mathbb{E}_{x}\left[\prod_{i=0}^{n-1}G(X_{i},X_{i+1})\right]\leq V(x)\tilde{\beta}^{n},
\]
where
\[
\delta=\frac{\log h}{\log h+\log\beta-\log b}\in(0,1),\qquad\tilde{\beta}=\beta^{\delta}.
\]
\end{prop}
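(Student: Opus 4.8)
The plan is to prove the drift inequality and the minorization-type bound stated in Proposition~\ref{prop:ar1-drift-survival-bound}, and then invoke Lemma~\ref{lem:holder-survival}. I would proceed in three steps corresponding to the three assertions in the statement.

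First, the drift inequality for $V(x) = 1 + (1-\phi^2)x^2$. Since $X_1 = \phi x + Z$ with $Z \sim \text{Normal}(0,1)$, we have $\mathbb{E}_x[X_1^2] = \phi^2 x^2 + 1$, so
\[
PV(x) = 1 + (1-\phi^2)(\phi^2 x^2 + 1) = 2 - \phi^2 + (1-\phi^2)\phi^2 x^2.
\]
I would then compare this to $\beta V(x) = \tfrac{1+\phi^2}{2}(1 + (1-\phi^2)x^2)$ on $C^\complement = \{x : x^2 > 3/(1-\phi^2)\}$. The coefficient of $x^2$ in $PV$ is $(1-\phi^2)\phi^2$, which is at most $\beta(1-\phi^2) = \tfrac{1+\phi^2}{2}(1-\phi^2)$ since $\phi^2 \leq \tfrac{1+\phi^2}{2}$; the constant term $2-\phi^2$ must be absorbed by the slack, and a short computation shows that when $(1-\phi^2)x^2 > 3$ the inequality $PV(x) \leq \beta V(x)$ holds. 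On $C$ itself, I would just bound $PV(x) \leq 2 - \phi^2 + \text{(something)}$ crudely by $bV(x)$ with $b = 2-\phi^2$, using $V \geq 1$ and monotonicity — this is the easy case since $b > \beta$ gives plenty of room, and in fact $PV(x) = (2-\phi^2) + (1-\phi^2)\phi^2 x^2 \leq (2-\phi^2)(1 + (1-\phi^2)x^2)$ because $(1-\phi^2)\phi^2 \leq 2-\phi^2$.

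Second, the bound on $\sup_{x \in C}\mathbb{E}_x[G(x,X_1)]$. Here $G(x,x') = (1 - \exp\{-2\phi x x'\})_+$, so $G \leq 1$ always, and $G(x,x') < 1$ whenever $x x' > 0$; more precisely $\mathbb{E}_x[G(x,X_1)] = 1 - \mathbb{E}_x[\exp\{-2\phi x X_1\}\mathbb{I}(\text{relevant sign})] \leq 1 - \mathbb{E}_x[\min(1,\exp\{-2\phi x X_1\})]$. The cleanest route is to lower-bound the ``rejection probability'' $\mathbb{E}_x[\min(1, \exp\{-2\phi x X_1\})]$, i.e. the meeting probability in one step of the reflection coupling. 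I would write $X_1 = \phi x + Z$ and bound below by restricting to a favorable event for $Z$: on the event $\{\text{sign}(Z) = -\text{sign}(x)\} \cap \{|Z| \leq |x|\phi\}$ or similar, the exponent is controlled. Actually the standard argument: the probability that the reflection-maximal coupling succeeds is $\mathbb{E}[\min(1, \varphi(z+W)/\varphi(W))]$ which for $z = 2\phi d$ equals $2\Phi(-|z|/2) \geq 2\Phi(-|\phi x|) \cdot(\text{factor})$. On $C$ we have $|x| \leq \sqrt{3/(1-\phi^2)}$ so $|\phi x| \leq \phi\sqrt{3/(1-\phi^2)}$, and a Gaussian tail bound $\Phi(-t) \geq \tfrac{1}{2}e^{-t^2/2}/\sqrt{2}$ (valid for the needed range, or uniformly via a crude bound) yields the meeting probability $\geq \tfrac{1}{\sqrt 2}\exp\{-3\phi^2/(1-\phi^2)\}$, hence $\mathbb{E}_x[G(x,X_1)] \leq h = 1 - \tfrac{1}{\sqrt 2}\exp\{-3\phi^2/(1-\phi^2)\}$.

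Third, having established both the drift condition with $(\beta, b)$ and $\sup_{x\in C}\mathbb{E}_x[G(x,X_1)] \leq h$, I would simply apply Lemma~\ref{lem:holder-survival} directly with these constants to conclude $\mathbb{E}_x[\prod_{i=0}^{n-1}G(X_i,X_{i+1})] \leq V(x)\tilde\beta^n$ with $\delta = \log h/(\log h + \log\beta - \log b)$ and $\tilde\beta = \beta^\delta$, checking only that $\delta \in (0,1)$: this holds because $h \in (0,1)$ gives $\log h < 0$, and $\beta < 1 \leq b$ gives $\log\beta - \log b < 0$, so both numerator and denominator are negative with the denominator more negative, forcing $\delta \in (0,1)$. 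I expect the main obstacle to be Step~2 — getting the precise constant $h$ requires choosing the right Gaussian tail inequality and the right favorable event so that the bound is clean and the stated closed form comes out exactly; the drift computation in Step~1 is routine algebra, and Step~3 is immediate from the lemma.
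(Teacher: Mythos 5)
Your proposal follows the same route as the paper: the same algebra for the drift inequality (which indeed holds with $\beta$ off $C$ and with $b=2-\phi^{2}$ everywhere, hence on $C$), the same identification of the one-step quantity $\mathbb{E}_{x}[G(x,X_{1})]$ with a total-variation distance between unit-variance Normals, and the same final appeal to Lemma~\ref{lem:holder-survival}, including the correct sign argument for $\delta\in(0,1)$.

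The one genuine slip is the Gaussian tail inequality you quote in Step~2. As written, $\Phi(-t)\geq\frac{1}{2\sqrt{2}}e^{-t^{2}/2}$ is false for large $t$ (the Gaussian tail is $\sim e^{-t^{2}/2}/(t\sqrt{2\pi})$, which eventually drops below any constant multiple of $e^{-t^{2}/2}$), and even where it holds it would produce $h=1-\frac{1}{\sqrt{2}}\exp\{-\tfrac{3\phi^{2}}{2(1-\phi^{2})}\}$ rather than the stated constant. What you actually need, and what the paper uses, is
\[
\mathbb{E}_{x}[G(x,X_{1})]=\left\Vert \text{Normal}(0,1)-\text{Normal}(-2\phi x,1)\right\Vert _{{\rm TV}}=2\Phi(\phi|x|)-1\leq1-\frac{1}{\sqrt{2}}e^{-(\phi x)^{2}},
\]
i.e.\ the bound $2\Phi(-t)\geq\frac{1}{\sqrt{2}}e^{-t^{2}}$ with exponent $-t^{2}$, not $-t^{2}/2$; this version is valid for all $t\geq0$ (the faster-decaying right-hand side is dominated by the tail uniformly) and plugs in to give exactly $h=1-\frac{1}{\sqrt{2}}\exp\{-\tfrac{3\phi^{2}}{1-\phi^{2}}\}$ after taking the supremum over $C$. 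With that correction your argument is complete and matches the paper's.
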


\begin{proof}
Let $a=1-\phi^{2}$. We have
\begin{align*}
PV(x) & =1+a\mathbb{E}\left[\left(\phi x+W\right)^{2}\right]\\
 & =1+a\phi^{2}x^{2}+a\\
 & =\phi^{2}V(x)+1-\phi^{2}+a.
\end{align*}
Now take $\beta=(1+\phi^{2})/2$. Then we find
\[
PV(x)\leq\beta V(x){\mathds{1}}_{C^{\complement}}(x)+bV(x){\mathds{1}}_{C}(x),
\]
where
\[
C=\left\{ x:x^{2}\leq\frac{3}{1-\phi^{2}}\right\} ,
\]
and $b=1+a=2-\phi^{2}$. With $G(x,x')=\left(1-\exp\left\{ -2\phi xx'\right\} \right)_{+}$,
we find
\begin{align*}
\mathbb{E}_{x}[G(x,X_{1})] & =\int\left\{ 1-\frac{\varphi\left(2\phi x+w\right)}{\varphi(w)}\right\} _{+}\varphi(w){\rm d}w\\
 & =\int\left\{ \varphi(w)-\varphi\left(2\phi x+w\right)\right\} _{+}{\rm d}w\\
 & =\left\Vert \text{Normal}(0,1)-\text{Normal}(-2\phi x,1)\right\Vert _{{\rm TV}}\\
 & =2\Phi(\phi|x|)-1\\
 & \leq1-\frac{1}{\sqrt{2}}\exp\left\{ -\left(\phi x\right)^{2}\right\} ,
\end{align*}
and so we may take
\begin{align*}
h & =\sup\left\{ 1-\frac{1}{\sqrt{2}}\exp\left\{ -\left(\phi x\right)^{2}\right\} :x^{2}\leq\frac{3}{1-\phi^{2}}\right\} \\
 & =1-\frac{1}{\sqrt{2}}\exp\left\{ -\frac{3\phi^{2}}{1-\phi^{2}}\right\} .
\end{align*}
We conclude by Lemma~\ref{lem:holder-survival}.
\end{proof}
In combination with Lemma~\ref{lem:ar1-equiv}, we obtain the following.
\begin{cor}
For Algorithm~\ref{alg:max-refl-coupling-ar1}, the meeting time
$\tau=\inf\{n\geq1:X_{n}=Y_{n}\}$ satisfies
\[
\mathbb{P}_{x_{0},y_{0}}(\tau>n)=\tilde{C}(x_{0},y_{0})\tilde{\beta}^{n},
\]
where $\tilde{\beta}\in(0,1)$ is as in Proposition~\ref{prop:ar1-drift-survival-bound},
and $\tilde{C}(x,y)=1+(1-\phi^{2})\left(\frac{x_{0}-y_{0}}{2\sigma}\right)^{2}$
satisfies $\pi\otimes\pi(\tilde{C})<\infty$.
\end{cor}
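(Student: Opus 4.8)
The plan is to combine the two results already established in this appendix. First, Lemma~\ref{lem:ar1-equiv} rewrites the survival function of the meeting time for Algorithm~\ref{alg:max-refl-coupling-ar1} as
\[
\mathbb{P}_{x_{0},y_{0}}(\tau>n)=\mathbb{E}_{d_{0}}\left[\prod_{i=0}^{n-1}G(D_{i},D_{i+1})\right],
\]
where $D$ is an ${\rm AR}(1,\phi,1)$ chain started from $d_{0}=(x_{0}-y_{0})/(2\sigma)$ and $G(x,x')=(1-\exp\{-2\phi x x'\})_{+}$. Then Proposition~\ref{prop:ar1-drift-survival-bound} bounds exactly this functional of an ${\rm AR}(1,\phi,1)$ chain: $\mathbb{E}_{x}[\prod_{i=0}^{n-1}G(X_{i},X_{i+1})]\leq V(x)\tilde{\beta}^{n}$ with $V(x)=1+(1-\phi^{2})x^{2}$ and $\tilde{\beta}\in(0,1)$ as defined there. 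First I would substitute $x=d_{0}$, obtaining
\[
\mathbb{P}_{x_{0},y_{0}}(\tau>n)\leq V(d_{0})\tilde{\beta}^{n}=\left(1+(1-\phi^{2})\left(\tfrac{x_{0}-y_{0}}{2\sigma}\right)^{2}\right)\tilde{\beta}^{n}=\tilde{C}(x_{0},y_{0})\tilde{\beta}^{n},
\]
which is the claimed geometric bound.

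It then remains to verify $\pi\otimes\pi(\tilde{C})<\infty$, where $\pi=\text{Normal}(0,\sigma^{2}/(1-\phi^{2}))$ is the stationary law of the ${\rm AR}(1,\phi,\sigma^{2})$ chain. Under $(X_{0},Y_{0})\sim\pi\otimes\pi$, the difference $X_{0}-Y_{0}$ is $\text{Normal}(0,2\sigma^{2}/(1-\phi^{2}))$, so $\mathbb{E}_{\pi\otimes\pi}[(X_{0}-Y_{0})^{2}]=2\sigma^{2}/(1-\phi^{2})$, whence $\pi\otimes\pi(\tilde{C})=1+(1-\phi^{2})(4\sigma^{2})^{-1}\cdot 2\sigma^{2}/(1-\phi^{2})=3/2<\infty$. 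Combined with Proposition~\ref{prop:tau-moment-survival}, in which one may take the polynomial order $s$ arbitrarily large since the decay here is in fact geometric, this shows that Assumption~\ref{assu:tau-moment-kappa} holds in the AR(1) example for every $\kappa>1$.

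Since both ingredients are already proved, there is no genuine obstacle; the only thing requiring care is the bookkeeping of constants, namely checking that the $V$ and $\tilde{\beta}$ supplied by Proposition~\ref{prop:ar1-drift-survival-bound} are precisely what Lemma~\ref{lem:ar1-equiv} needs, and that the substitution $d_{0}\mapsto V(d_{0})$ yields exactly the stated $\tilde{C}$. The verification $\pi\otimes\pi(\tilde{C})<\infty$ is a one-line Gaussian second-moment computation.
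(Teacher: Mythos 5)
Your proof is correct and follows exactly the route the paper intends: the corollary is stated in the paper as an immediate consequence of combining Lemma~\ref{lem:ar1-equiv} with Proposition~\ref{prop:ar1-drift-survival-bound}, substituting $d_{0}=(x_{0}-y_{0})/(2\sigma)$ into $V$, and the Gaussian second-moment check gives $\pi\otimes\pi(\tilde{C})=3/2$ as you compute. The only cosmetic remark is that the ``$=$'' in the displayed statement should be ``$\leq$'', which your derivation correctly produces.
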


In the above, the dependence of $\tilde{C}$ on $(x_{0}-y_{0})^{2}$
is suboptimal, and we can improve this via the following result.
\begin{lem}
\label{lem:combine-rates-x-mu}Let $X$ be a Markov chain and
\[
\mathbb{P}_{\nu}(\tau>n)=\mathbb{E}_{\nu}\left[\prod_{i=0}^{n-1}G(X_{i},X_{i+1})\right],\qquad n\in\mathbb{N},
\]
for some $G:\mathbb{X}^{2}\to[0,1]$. Then for any distribution $\mu$,
$m\in\mathbb{N}$ and $k\in\{0,\ldots,m\}$,
\[
\mathbb{P}_{x}(\tau>m)\leq\mathbb{P}_{\mu}(\tau>m-k)+2\left\Vert P^{k}(x,\cdot)-\mu\right\Vert _{{\rm TV}}.
\]
\end{lem}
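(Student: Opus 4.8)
The plan is to exploit the product\nobreakdash-over\nobreakdash-transitions representation $\mathbb{P}_\nu(\tau>n)=\mathbb{E}_\nu\bigl[\prod_{i=0}^{n-1}G(X_i,X_{i+1})\bigr]$ together with the Markov property, by splitting the product after step $k$. First I would introduce the auxiliary function $\Psi(z):=\mathbb{P}_z(\tau>m-k)=\mathbb{E}_z\bigl[\prod_{i=0}^{m-k-1}G(X_i,X_{i+1})\bigr]$, with the usual convention that an empty product (the case $k=m$) equals $1$. Since $G$ takes values in $[0,1]$, so does $\Psi$, and $\Psi$ is measurable by the Markov structure; moreover, since $\mathbb{P}_\mu(\cdot)=\int\mu({\rm d}z)\,\mathbb{P}_z(\cdot)$, we have $\mathbb{P}_\mu(\tau>m-k)=\mu(\Psi)$.

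Next, starting from $\mathbb{P}_x(\tau>m)=\mathbb{E}_x\bigl[\prod_{i=0}^{m-1}G(X_i,X_{i+1})\bigr]$, I would bound the first $k$ factors $\prod_{i=0}^{k-1}G(X_i,X_{i+1})$ by $1$ (all factors being nonnegative), which gives $\mathbb{P}_x(\tau>m)\le\mathbb{E}_x\bigl[\prod_{i=k}^{m-1}G(X_i,X_{i+1})\bigr]$. Then I would apply the Markov property at time $k$: conditioning on $\sigma(X_0,\ldots,X_k)$ and using time\nobreakdash-homogeneity, $\mathbb{E}_x\bigl[\prod_{i=k}^{m-1}G(X_i,X_{i+1})\,\bigm|\,X_0,\ldots,X_k\bigr]=\Psi(X_k)$, so that $\mathbb{P}_x(\tau>m)\le\mathbb{E}_x[\Psi(X_k)]=P^k\Psi(x)$.

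Finally I would write $P^k\Psi(x)=\mu(\Psi)+\bigl(P^k(x,\cdot)(\Psi)-\mu(\Psi)\bigr)$, identify $\mu(\Psi)=\mathbb{P}_\mu(\tau>m-k)$, and apply the elementary bound $|\nu_1(\Psi)-\nu_2(\Psi)|\le 2\|\nu_1-\nu_2\|_{\mathrm{TV}}$, valid for any $[0,1]$\nobreakdash-valued $\Psi$, with $\nu_1=P^k(x,\cdot)$ and $\nu_2=\mu$, to conclude. I do not expect a genuine obstacle here; the only points needing a little care are the boundary cases $k=0$ and $k=m$ (both subsumed by the empty\nobreakdash-product convention) and consistency with the normalization of $\|\cdot\|_{\mathrm{TV}}$ used elsewhere in the paper, which is why the harmless factor $2$ is retained rather than the sharper constant $1$.
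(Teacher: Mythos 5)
Your proposal is correct and follows essentially the same route as the paper's proof: drop the first $k$ factors using $G\in[0,1]$, apply the Markov property at time $k$ to identify the remainder as $P^{k}(x,\cdot)$ applied to $z\mapsto\mathbb{P}_{z}(\tau>m-k)$, and then compare against $\mu$ via the total-variation bound for $[0,1]$-valued test functions. No gaps.
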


\begin{proof}
Let $k\in\{0,\ldots,m\}$. With $\mu_{k}(\cdot)=P^{k}(x,\cdot)$,
we have
\begin{align*}
\mathbb{P}_{x}(\tau>m) & =\mathbb{E}_{x}\left[\prod_{i=0}^{m-1}G(X_{i},X_{i+1})\right]\\
 & \leq\mathbb{E}_{x}\left[\prod_{i=k}^{m-1}G(X_{i},X_{i+1})\right]\\
 & =\mathbb{P}_{\mu_{k}}(\tau>m-k).
\end{align*}
Now $f=x\mapsto\mathbb{P}_{x}(\tau>m-k)$ takes values in $[0,1]$,
and
\[
\mathbb{P}_{\nu}(\tau>m-k)=\int\nu({\rm d}x)\mathbb{P}_{x}(\tau>m-k)=\nu(f).
\]
Hence, by the definition of TV:
\[
\left\Vert \mu-\nu\right\Vert _{{\rm TV}}=\frac{1}{2}\sup_{f:\mathbb{R}\to[-1,1]}|\mu(f)-\nu(f)|,
\]
we conclude that
\[
\mathbb{P}_{x}(\tau>m)\leq\mathbb{P}_{\mu_{k}}(\tau>m-k)\leq\mathbb{P}_{\mu}(\tau>m-k)+2\left\Vert \mu_{k}-\mu\right\Vert _{{\rm TV}}.
\]
\end{proof}
\begin{cor}
\label{cor:combined-rate}Assume that for all $n\in\mathbb{N}$, $\left\Vert P^{n}(x,\cdot)-\mu\right\Vert _{{\rm TV}}\leq C_{1}\alpha^{n}$
and $\mathbb{P}_{\mu}(\tau>n)\leq C_{2}\beta^{n}$. Then
\[
\mathbb{P}_{x}(\tau>n)\leq\left(\frac{C_{2}}{\beta}+2C_{1}\right)\left\{ \beta^{\frac{\log(\alpha)}{\log(\beta)+\log(\alpha)}}\right\} ^{n},
\]
where $\gamma=\beta^{\frac{\log(\alpha)}{\log(\beta)+\log(\alpha)}}$
and $\mathbb{P}_{\nu}(\tau>n)$ is as in Lemma~\ref{lem:combine-rates-x-mu}.
\end{cor}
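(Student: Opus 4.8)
The plan is to combine the two assumed geometric rates through Lemma~\ref{lem:combine-rates-x-mu} and then optimise over the free integer offset $k$. Fix $n\in\mathbb{N}$. Lemma~\ref{lem:combine-rates-x-mu} with $m=n$ and any $k\in\{0,\ldots,n\}$ gives, after inserting the two hypotheses,
\[
\mathbb{P}_{x}(\tau>n)\le\mathbb{P}_{\mu}(\tau>n-k)+2\left\Vert P^{k}(x,\cdot)-\mu\right\Vert _{{\rm TV}}\le C_{2}\beta^{n-k}+2C_{1}\alpha^{k}.
\]
The first summand is increasing in $k$ and the second is decreasing in $k$, since $\alpha,\beta\in(0,1)$, so I would balance their exponential rates by choosing $k\approx\rho n$ with $\rho$ such that $\beta^{1-\rho}=\alpha^{\rho}$. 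Solving $(1-\rho)\log\beta=\rho\log\alpha$ gives $\rho=\log\beta/(\log\alpha+\log\beta)\in(0,1)$, and the common rate is then
\[
\gamma:=\beta^{1-\rho}=\alpha^{\rho}=\beta^{\frac{\log\alpha}{\log\alpha+\log\beta}},
\]
which is exactly the base appearing in the statement.

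Concretely I would take $k=\lceil\rho n\rceil$, which lies in $\{0,\ldots,n\}$ because $0\le\rho n\le n$. From $k\ge\rho n$ one gets $\alpha^{k}\le\alpha^{\rho n}=\gamma^{n}$, and from $k<\rho n+1$ one gets $n-k>(1-\rho)n-1$, hence $\beta^{n-k}<\beta^{(1-\rho)n-1}=\beta^{-1}\gamma^{n}$ because $\beta<1$. Substituting into the displayed bound,
\[
\mathbb{P}_{x}(\tau>n)\le C_{2}\beta^{n-k}+2C_{1}\alpha^{k}\le\frac{C_{2}}{\beta}\gamma^{n}+2C_{1}\gamma^{n}=\left(\frac{C_{2}}{\beta}+2C_{1}\right)\gamma^{n},
\]
which is the claimed estimate. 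If one includes $n=0$, the bound there is immediate: $\mathbb{P}_{x}(\tau>0)\le1$, while the hypothesis $\mathbb{P}_{\mu}(\tau>n)\le C_{2}\beta^{n}$ at $n=0$ (whose left-hand side is $1$) forces $C_{2}\ge1$, so $C_{2}/\beta+2C_{1}\ge1$.

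I do not expect a genuine obstacle here: the argument is just Lemma~\ref{lem:combine-rates-x-mu} followed by a one-parameter balancing. The only slightly delicate point is that $\rho n$ need not be an integer, handled above by rounding up and paying a single factor $\beta^{-1}$ in the constant; a finer interpolation in $k$ is possible but unnecessary and would only obscure the explicit constant. Finally, this corollary is meant to be applied with $\mu=\pi$, $\beta=\tilde{\beta}$ from Proposition~\ref{prop:ar1-drift-survival-bound} and $\alpha$ the geometric convergence rate of the AR(1) kernel to $\pi$, which together with Lemma~\ref{lem:ar1-equiv} yields the improved (linear rather than quadratic) dependence of the meeting-time constant on the initial states.
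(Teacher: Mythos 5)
Your proof is correct and follows essentially the same route as the paper: apply Lemma~\ref{lem:combine-rates-x-mu}, insert the two geometric bounds, choose $k=\lceil k_{\star}\rceil$ with $k_{\star}=n\log(\beta)/(\log(\beta)+\log(\alpha))$, and absorb the rounding error into the factor $\beta^{-1}$ on $C_{2}$. The extra remarks on the $n=0$ case and on $k$ lying in $\{0,\ldots,n\}$ are fine but not needed beyond what the paper does.
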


\begin{proof}
Lemma~\ref{lem:combine-rates-x-mu} provides that
\begin{align*}
\mathbb{P}_{x}(\tau>n) & \leq\mathbb{P}_{\mu}(\tau>n-k)+2\left\Vert P^{k}(x,\cdot)-\mu\right\Vert _{{\rm TV}}\\
 & =C_{2}\beta^{n-k}+2C_{1}\alpha^{k},
\end{align*}
and it remains to choose $k$ appropriately. Let
\[
k_{\star}=n\frac{\log(\beta)}{\log(\beta)+\log(\alpha)},
\]
which may not be an integer. If we take $k=\lceil k_{\star}\rceil\geq k_{\star}$,
we have $n-k\geq n-k_{\star}-1$. Hence we have
\begin{align*}
\mathbb{P}_{x}(\tau>n) & \leq C_{2}\beta^{n-k_{\star}-1}+2C_{1}\alpha^{k_{\star}}\\
 & \leq\frac{C_{2}}{\beta}\beta^{n-k_{\star}}+2C_{1}\alpha^{k_{\star}}\\
 & =\left(\frac{C_{2}}{\beta}+2C_{1}\right)\left\{ \beta^{\frac{\log(\alpha)}{\log(\beta)+\log(\alpha)}}\right\} ^{n}.
\end{align*}
\end{proof}
\begin{proof}[Proof of Proposition~\ref{prop:ar1-d-bound}]
We denote $\mu_{n}=P^{n}(x,\cdot)=\text{Normal}(\phi^{n}x,\frac{1-\phi^{2n}}{1-\phi^{2}})$
and we take $\mu$ to be the stationary distribution $\text{Normal}(0,\frac{1}{1-\phi^{2}})$.
Then we can compute
\begin{align*}
\left\Vert \mu_{k}-\mu\right\Vert _{{\rm TV}} & \leq\frac{3}{2}\phi^{2k}+\frac{\phi^{k}|x|}{2}\sqrt{1-\phi^{2}}\\
 & \leq\frac{|x|+3}{2}\phi^{k},
\end{align*}
where we have used \citet[Theorem~1.3]{devroye2018total} in the first
line. From Proposition~\ref{prop:ar1-drift-survival-bound}, we find
that
\[
\mathbb{P}_{x}(\tau>n)\leq V(x)\tilde{\beta}^{n},
\]
where $V(x)=1+(1-\phi^{2})x^{2}$ and since
\[
\mathbb{P}_{\mu}(\tau>n)=\int\mu({\rm d}x)\mathbb{P}_{x}(\tau>n)\leq\mu(V)\tilde{\beta}^{n}=2\tilde{\beta}^{n},
\]
we may deduce that $\mathbb{P}_{\mu}(\tau>n)\leq2\tilde{\beta}^{n}$.
Hence, we obtain by Corollary~\ref{cor:combined-rate},
\[
\mathbb{P}_{x}(\tau>n)\leq\left(\frac{2}{\tilde{\beta}}+|x|+3\right)\left\{ \tilde{\beta}^{\frac{\log(\phi)}{\log(\tilde{\beta})+\log(\phi)}}\right\}^{n},
\]
 and obtain the final bound.
\end{proof}

\section{Hamiltonian Monte Carlo\label{sec:hmc}}

\subsection{Basic version of HMC}

Consider a target distribution on $\mathbb{R}^d$ with density $\pi$ and gradient of the log-density $\nabla \log \pi$, that can be evaluated pointwise.
With Hamiltonian Monte Carlo \citep[e.g.][]{neal2011mcmc}, we augment the state space $\mathbb{R}^d$ with a momentum variable, also taking values in $\mathbb{R}^d$. The momentum is periodically resampled from a $\text{Normal}(0,M)$ distribution, with covariance matrix $M\in\mathbb{R}^{d\times d}$. 
Algorithm \ref{alg:leapfrog} describes one step of the leapfrog integrator, which updates a pair $(x,m)$ of position and momentum variables in a way that mimics certain Hamiltonian dynamics.  

\begin{algorithm}
 \begin{enumerate}
  \item Compute $\tilde{m} = m + \frac{\epsilon}{2}\nabla \log \pi(x)$.
  \item Compute $x' = x + \epsilon M^{-1} \tilde{m}$.
  \item Compute $m' = \tilde{m} + \frac{\epsilon}{2}\nabla \log \pi(x')$.
  \item Return $(x',m')$.
 \end{enumerate}
 \caption{Function $(x,m)\mapsto \text{leapfrog}(x,m)$, with stepsize $\epsilon>0$.\label{alg:leapfrog}}
\end{algorithm}

The transition kernel of HMC is described in Algorithm \ref{alg:hmc1transition}.
From an initial position, it draws an independent momentum from a centered multivariate Normal, 
then performs $L$ leapfrog steps, and finally the terminal position is accepted or not as the new state, with a Metropolis--Rosenbluth--Teller--Hastings acceptance step.

\begin{algorithm}
  \begin{enumerate}
   \item Sample $m_0\sim \text{Normal}(0,M)$, and set $x_0 = x$.
   \item For $l\in \{1,\ldots,L\}$, $(x_l,m_l) = \text{leapfrog}(x_{l-1},m_{l-1})$, as described in Algorithm \ref{alg:leapfrog}.
   \item With probability:
    \[1\wedge \frac{\pi(x_L)\text{Normal}(m_L;0,M)}{\pi(x_0)\text{Normal}(m_0;0,M)},\]
    return $x_L$, otherwise return $x$.
  \end{enumerate}
  \caption{Hamiltonian Monte Carlo with stepsize $\epsilon>0$, mass matrix $M\in\mathbb{R}^{d\times d}$, and number of leapfrog steps $L\in\mathbb{N}$, starting from position $x\in\mathbb{R}^d$.\label{alg:hmc1transition}}
 \end{algorithm}

The tuning parameters of HMC are the \emph{mass matrix} $M$, often chosen as an estimate of the inverse of the covariance matrix of the target distribution, the stepsize $\epsilon$ and the number $L$ of leapfrog steps per transition. We can consider stepsizes that are different for different components of the state, but this is equivalent to a change of mass matrix. Indeed, consider the case where $M$ is diagonal with entries $(m_1,\ldots,m_d)$. As noted in \citet{neal2011mcmc}, the algorithm performs as if  the mass matrix was the identity matrix and the $i$-th stepsize $\epsilon_i$ was equal to $\epsilon/\sqrt{m_i}$. Thus, we can use a scalar stepsize without loss of generality. We can also randomize both the stepsize $\epsilon$ and the number of leapfrog steps $L$, at the start of each transition.

\subsection{Relation with MALA}

If $L=1$, the algorithm simplifies (5.2 in \citet{neal2011mcmc}): from the current position $x_0$, one draws $x'\sim\text{Normal}(x+(\epsilon^2)/2 M^{-1} \nabla \log \pi(x), \epsilon^2 M^{-1})$.
This is the proposal distribution in the Metropolis-adjusted Langevin algorithm (MALA, \citet{rossky1978brownian}).
Naturally the acceptance ratio of HMC with $L=1$ corresponds exactly to the acceptance ratio in MALA.
For MALA we have the following ratio of proposal densities
\begin{equation}\label{eq:mala:proposalratio}
  \frac{\text{Normal}(x; x'+(\epsilon^2)/2 M^{-1} \nabla \log \pi(x'), \epsilon^2 M^{-1})}{\text{Normal}(x'; x+(\epsilon^2)/2 M^{-1} \nabla \log \pi(x), \epsilon^2 M^{-1})}.
\end{equation}
A calculation confirms that this corresponds 
to the ratio $\text{Normal}(m';0,M)/\text{Normal}(m;0,M)$ in HMC with $L=1$.

The MALA acceptance ratio can be computed more efficiently than by separately computing two Normal density evaluations in \eqref{eq:mala:proposalratio},
as described in Proposition 1 of \citet{titsias2024optimal}. Indeed the log-ratio can directly be computed as
\begin{equation}\label{eq:mala:proposalratio:titsias}
  h(x,x') - h(x',x),\; \text{with} \; h(z,v) = \frac{1}{2}\left(z - v - \frac{\epsilon^2}{4} M^{-1} \nabla \log \pi(z) \right)^T\nabla \log \pi(v).
\end{equation}

\subsection{Coupling of HMC}

It is difficult to obtain exact meetings between two chains that evolve according to Hamiltonian Monte Carlo in general. Indeed, if the first chain at state $x$ draws a momentum $v$ and ends up in $x'$, it is not obvious that a second chain at state $\tilde{x}$ could also arrive at state $x'$ at the next step. Section 4 of  \citet{chen2023does} addresses the question of existence of a momentum vector $\tilde{v}$ with which a leapfrog trajectory started at $\tilde{x}$ would end up in $x'$. Even when such a momentum exists, it may not be available numerically. On the other hand, 
a simple coupling that consists of using the same momentum variable to propagate the two chains can result in a contraction \citep{mangoubi2017rapid,bou2020coupling}, i.e. the distance between the two chains goes to zero as the iterations go. Algorithm \ref{alg:hmc:crncoupling} describes the common random numbers coupling of HMC.

\begin{algorithm}
  \begin{enumerate}
   \item Sample $m_0\sim \text{Normal}(0,M)$, and set $x_0 = x,\tilde{x}_0=\tilde{x}$ and $\tilde{m}_0=m_0$.
   \item For $l\in \{1,\ldots,L\}$, $(x_l,m_l) = \text{leapfrog}(x_{l-1},m_{l-1})$, 
   and $(\tilde{x}_l,\tilde{m}_l) = \text{leapfrog}(\tilde{x}_{l-1},\tilde{m}_{l-1})$, as described in Algorithm \ref{alg:leapfrog}.
    \item Draw $U\sim\text{Uniform}(0,1)$, and define
    \[\alpha(x_0,m_0) = 1\wedge \frac{\pi(x_L)\text{Normal}(m_L;0,M)}{\pi(x_0)\text{Normal}(m_0;0,M)}.\]
    \item Compute
    \begin{align*}
      x' &= x_L \mathds{1}(U<\alpha(x_0,m_0)) + x_0 \mathds{1}(U\geq \alpha(x_0,m_0)), \\
      \tilde{x}' &= \tilde{x}_L \mathds{1}(U<\alpha(\tilde{x}_0,\tilde{m}_0)) + \tilde{x}_0 \mathds{1}(U\geq \alpha(\tilde{x}_0,\tilde{m}_0)).
    \end{align*}
    \item Return $(x',\tilde{x}')$.
  \end{enumerate}
  \caption{Coupled Hamiltonian Monte Carlo with common random numbers, started from states $x$ and $\tilde{x}$.\label{alg:hmc:crncoupling}}
 \end{algorithm}

In view of this \citet{heng2019unbiased} propose the following strategy. Denote by $P_{\text{H}}$ the transition of HMC, and by $\bar{P}_{\text{H}}$ the common random number coupling of HMC. When chains are close to one another, a coupling $\bar{P}_{\text{M}}$ of an MRTH kernel $P_{\text{M}}$  with maximally coupled random walk proposals may result in an exact meeting (see Appendix~\ref{appx:couplingrh}). Thus \citet{heng2019unbiased} employ an MCMC algorithm with transition $\omega_{\text{H}} P_{\text{H}} + \omega_{\text{M}} P_{\text{M}}$, with $(\omega_{\text{H}},\omega_{\text{M}})$ probabilities summing to one, and they define its coupling
as $\omega_{\text{H}} \bar{P}_{\text{H}} + \omega_{\text{M}} \bar{P}_{\text{M}}$. To generate these chains, with probability $\omega_{\text{H}}$ one employs the HMC kernel, otherwise one employs the MRTH kernel. The intuition is that the coupled HMC kernel brings the pair of chains closer to one another, and that the coupled MRTH kernel triggers exact meetings when the chains are close enough. The strategy is also used in \citet{xu2021couplings} for multinomial HMC.

In Section \ref{subsec:logitrandom} we employ the following approach, suggested in the supplementary materials of \citet{heng2019unbiased}. We make the number of leapfrog steps $L$ random, and sample it uniformly in $\{1,\ldots,L_{\max}\}$. In the event $\{L=1\}$, HMC reverts to MALA, and we couple it with a reflection-maximal coupling of the proposal distributions, see Algorithms \ref{alg:reflmax} and \ref{alg:mala:reflmaxcoupling}. When $\{L>1\}$ we employ a common random numbers coupling of HMC (Algorithm \ref{alg:hmc:crncoupling}). The tuning parameters are $\text{L}_{\max}$, $\epsilon$, and $M$. 
To summarise, the coupling of HMC employed in our experiments of Section~\ref{subsec:logitrandom} is in Algorithm \ref{alg:proposedhmccoupling}.

\begin{algorithm}
  \begin{enumerate}
    \item Compute proposal means
    \begin{align*}
      \mu &= x + (\epsilon^2)/2 M^{-1} \nabla \log \pi(x),\\
      \tilde{\mu} &= \tilde{x} + (\epsilon^2)/2 M^{-1} \nabla \log \pi(\tilde{x}).
    \end{align*}
    \item Sample proposals $(x^\star,\tilde{x}^\star)$ from a reflection-maximal coupling of $\text{Normal}(\mu,\epsilon^2M^{-1})$ and  $\text{Normal}(\tilde{\mu},\epsilon^2M^{-1})$, see Algorithm \ref{alg:reflmax}.
    \item Draw $U\sim\text{Uniform}(0,1)$, and define
    \[\alpha(x,x^\star) = 1\wedge \frac{\pi(x^\star)}{\pi(x)}\exp(h(x,x^\star) - h(x^\star,x)),\]
    where $h$ is as in \eqref{eq:mala:proposalratio:titsias}.
    \item Compute new states with accept/reject mechanism
    \begin{align*}
      x' &= x^\star \mathds{1}(U<\alpha(x,x^\star)) + x \mathds{1}(U\geq \alpha(x,x^\star)), \\
      \tilde{x}' &= \tilde{x}^\star \mathds{1}(U<\alpha(\tilde{x},\tilde{x}^\star)) + \tilde{x} \mathds{1}(U\geq \alpha(\tilde{x},\tilde{x}^\star)).
    \end{align*}
    \item Return $(x',\tilde{x}')$.
  \end{enumerate}
  \caption{Coupled MALA with maximal coupling of proposals, started from $x$ and $\tilde{x}$.\label{alg:mala:reflmaxcoupling}}
 \end{algorithm}

\begin{algorithm}
  \begin{enumerate}
    \item Draw $L\sim\text{Uniform}(\{1,\ldots,L_{\max}\})$.
    \item If $L=1$, obtain $(x',\tilde{x}')$ from the coupled MALA transition in Algorithm \ref{alg:mala:reflmaxcoupling}.
    \item If $L>1$, obtain $(x',\tilde{x}')$ from the coupled HMC transition in Algorithm \ref{alg:hmc:crncoupling}, with $L$ leapfrog steps.
    \item Return $(x',\tilde{x}')$.
  \end{enumerate}
  \caption{Coupling strategy for HMC, started from $x$ and $\tilde{x}$.\label{alg:proposedhmccoupling}}
 \end{algorithm}

\section{MCMC for Bayesian linear regression with shrinkage prior\label{appx:highdimreg}}

We examine a Bayesian linear regression of
$n=71$ responses on $p=4088$ predictors of the riboflavin data set
\citep{buhlmann2014high}, using a continuous shrinkage prior on the coefficients \citep[e.g.][]{bhadra2019lasso}. The model, MCMC algorithm and its coupling
are taken from \citet{johndrow2020scalable,biswas2021couplingbased}.
Computational difficulties
arise from the use of the shrinkage prior, that induces multimodality and heavy tails in the posterior distribution
of the coefficients, denoted by $\beta$. The target is defined on the space of coefficients, global precision, local precisions,
and the variance of the observation noise. The state space is of dimension
$2p+2=8178$.
In this example, the meeting times have been shown to have Exponential tails
in \citet[][Proposition 6]{biswas2021couplingbased} under the assumption that the global precision $\xi$ has a compact support,
in which case Assumption \ref{assu:tau-moment-kappa}
holds for any $\kappa$. In the experiments, we do not restrict the support of $\xi$, and
we use a half-Cauchy distribution on $\xi^{-1/2}$.

We provide here a short and self-contained description of one particular
version of the Gibbs sampler and its coupling; many more algorithmic
considerations can be found in \citet{biswas2021couplingbased}.

\subsection{Model}

The context is that of linear regression, with $n$ individuals and
$p$ covariates, with $p\gg n$. The generative model is described
below, where $Y$ is the outcome, $X$ the vector of explanatory variables,
$\beta\in\mathbb{R}^{p}$ the regression coefficients, $\sigma^{2}\in\mathbb{R}_{+}$
the observation noise, $\xi$ is called the global precision and $\eta_{j}$
is the local precision associated with $\beta_{j}$ for $j\in\{1,\ldots,p\}$,
\begin{align*}
 & Y\sim\text{Normal}(X\beta,\sigma^{2}I_{n}),\\
 & \sigma^{2}\sim\text{InverseGamma}(a_{0}/2,b_{0}/2),\\
 & \xi^{-1/2}\sim\text{Cauchy}(0,1)^{+},\\
\text{for }j=1,\ldots,p\quad & \beta_{j}\sim\text{Normal}(0,\sigma^{2}/\xi\eta_{j}),\quad\eta_{j}^{-1/2}\sim t(\nu)^{+}.
\end{align*}
The distribution $t(\nu)^{+}$ refers to the Student t-distribution
with $\nu$ degrees of freedom, truncated on $(0,\infty)$, with density
$x\mapsto(1+x^{2}/\nu)^{-(\nu+1)/2}\mathds{1}(x\in(0,\infty))$ up
to a multiplicative constant. The hyper-parameters are set as $a_{0}=1,b_{0}=1,\nu=2$.
In our experiments we initialize Markov chains from the prior distribution.

\subsection{Gibbs sampler}

The main steps of the Gibbs sampler under consideration are as follows.
\begin{itemize}
\item For $j=1,\ldots,p$, sample each $\eta_{j}$ given $\beta,\xi,\sigma^{2}$
using slice sampling.
\item Given $\eta$, sample $\beta,\xi,\sigma^{2}$:
\begin{itemize}
\item $\xi$ given $\eta$ using an MRTH step,
\item $\sigma^{2}$ given $\eta,\xi$ from an Inverse Gamma distribution,
\item $\beta$ given $\eta,\xi,\sigma^{2}$ from a p-dimensional Normal
distribution.
\end{itemize}
\end{itemize}
Overall the computational complexity is of the order of $n^{2}p$
operations per iteration, therefore it can be used with large $p$
and moderate values of $n$. Details on each step can be found below.

\subsubsection{$\eta$-update}

The conditional distribution of $\eta$ given the rest has density
\[
\pi(\eta|\beta,\sigma^{2},\xi)\propto\prod_{j=1}^{p}\frac{e^{-m_{j}\eta_{j}}}{\eta_{j}^{\frac{1-\nu}{2}}(1+\nu\eta_{j})^{\frac{\nu+1}{2}}}\quad\text{ where }m_{j}=\frac{\xi\beta_{j}^{2}}{2\sigma^{2}},
\]
which we can target with the slice sampler described in Algorithm
\ref{alg:slice_sampling_1}, applied independently component-wise.

\begin{algorithm}
\begin{enumerate}
\item Sample $V\sim\Uniform(0,1)$.
\item Sample $U_{j}|\eta_{j}\sim\Uniform(0,(1+\nu\eta_{j})^{-\frac{\nu+1}{2}})$
by setting $U_{j}=V\times(1+\nu\eta_{j})^{-\frac{\nu+1}{2}}$.
\item Sample $\eta_{j}|U_{j}$ from the distribution with unnormalized density
$\eta_{j}\mapsto\eta_{j}^{s-1}e^{-m_{j}\eta_{j}}$ on $(0,T_{j})$,
with $T_{j}=(U_{j}^{-2/(1+\nu)}-1)/\nu$ and $s=(1+\nu)/2$. This
can be done by sampling $U^{*}\sim\Uniform(0,1)$ and setting
\[
\eta_{j}=\frac{1}{m_{j}}\gamma_{s}^{-1}\left(\gamma_{s}(m_{j}T_{j})U^{*}\right),
\]
where $\gamma_{s}(x):=\Gamma(s)^{-1}\int_{0}^{x}t^{s-1}e^{-t}dt\in[0,1]$
is the cdf of the $\mathrm{Gamma}(s,1)$ distribution. 
\end{enumerate}
\caption{Iteration of slice sampling targeting $\eta_{j}\protect\mapsto(\eta_{j}^{\frac{1-\nu}{2}}(1+\nu\eta_{j})^{\frac{\nu+1}{2}})^{-1}e^{-m_{j}\eta_{j}}$
on $(0,\infty)$.}
\label{alg:slice_sampling_1}
\end{algorithm}

\subsubsection{$\xi$-update}

The conditional distribution of $\xi$ given $\eta$ has density
\[
\pi(\xi|\eta\propto L(y|\xi,\eta)\pi_{\xi}(\xi),
\]
where $L(y|\xi,\eta)$ is the marginal likelihood of the observations
given $\xi$ and $\eta$, and $\pi_{\xi}$ is the prior density for
$\xi$. We sample $\xi|\eta$ using a Metropolis--Rosenbluth--Teller--Hastings
scheme. Given the current value of $\xi$, propose $\log(\xi^{*})\sim\text{Normal}(\log(\xi),\sigma_{\text{MRTH}}^{2})$,
where we set $\sigma_{\text{MRTH}}=0.8$. Then calculate the ratio
\[
q=\frac{L(y|\xi^{*},\eta)\pi_{\xi}(\xi^{*})\xi^{*}}{L(y|\xi,\eta)\pi_{\xi}(\xi)\xi},
\]
using
\[
\log(L(y|\xi,\eta))=-\frac{1}{2}\log(|M_{\xi,\eta}|)-\frac{a_{0}+n}{2}\log(b_{0}+y^{T}M_{\xi,\eta}^{-1}y).
\]
where $M_{\xi,\eta}:=I_{n}+\xi^{-1}X\,\text{Diag}(\eta^{-1})\,X^{T}$.
Set $\xi:=\xi^{*}$ with probability $\min(1,q)$, otherwise keep
$\xi$ unchanged.

\subsubsection{$\sigma^{2}$-update}

Using the same notation $M_{\xi,\eta}=I_{n}+\xi^{-1}X\,\text{Diag}(\eta^{-1})\,X^{T}$,
the conditional distribution of $\sigma^{2}$ given $\xi,\eta$ is
Inverse Gamma:
\[
\sigma^{2}|\xi,\eta\sim\text{InverseGamma}\bigg(\frac{a_{0}+n}{2},\frac{y^{T}M_{\xi,\eta}^{-1}y+b_{0}}{2}\bigg).
\]

\subsubsection{$\beta$-update}

With the notation $\Sigma=X^{T}X+\xi\text{Diag}(\eta)$, the distribution
of $\beta$ given the rest is Normal with mean $\Sigma^{-1}X^{T}y$
and covariance matrix $\sigma^{2}\Sigma^{-1}$. We can sample from
such Normals in a cost of order $n^{2}p$ using Algorithm \ref{alg:fast_mvn_bhattacharya_1},
as described in \citet{bhattacharya2016fast}.

\addtocounter{algorithm}{-1}
\begin{algorithm}
\begin{enumerate}
\item Sample $r\sim\text{Normal}(0,I_{p})$, $\delta\sim\text{Normal}(0,I_{n})$
\item Compute $u=\frac{1}{\sqrt{\xi\eta}}r$, $v=Xu+\delta$.
\item Compute $v^{*}=M^{-1}(\frac{y}{\sigma}-v)$ where $M=I_{n}+(\xi)^{-1}X\text{Diag}(\eta^{-1})X^{T}$.
\item Define $U$ as $X^{T}$ with the $j$-th row divided by $\xi\eta_{j}$.
\item Return $\beta=\sigma(u+Uv^{*}).$ \caption{Bhattacharya's algorithm with common random numbers.}
\end{enumerate}
\caption{Sampling from Normal$((X^{T}X+\xi\text{Diag}(\eta))^{-1}X^{T}y,\sigma^{2}(X^{T}X+\xi\text{Diag}(\eta))^{-1})$.}
\label{alg:fast_mvn_bhattacharya_1}
\end{algorithm}

\subsection{Coupled Gibbs sampler}

We consider only one of the variants in \citet{biswas2021couplingbased},
which is not necessarily the most efficient but achieves good performance
in the experiments of Appendix~\ref{subsec:highdim:experiments} and is simpler
than the ``two-scale'' coupling described in \citet{biswas2021couplingbased}.
We describe how to couple each update, with the first chain in state
$\eta,\xi,\sigma^{2},\beta$ and the second in state $\tilde{\eta},\tilde{\xi},\tilde{\sigma}^{2},\tilde{\beta}$.

\subsubsection{$\eta$-update}

We consider two strategies to couple the slice sampling updates of
$\eta_{j}$, as described in Algorithm \ref{alg:slice_sampling_1}.
\begin{enumerate}
\item We can use a common uniform $V$ in the first step of Algorithm \ref{alg:slice_sampling_1},
to define $U_{j}$ for the first chain and $\tilde{U}_{j}$ for the
second. Then we can sample from a maximal coupling of the distributions
of $\eta_{j}|U_{j}$ and $\tilde{\eta}_{j}|\tilde{U}_{j}$, using
Algorithm \ref{alg:maximalcoupling}. This strategy results in a non-zero
probability for the event $\{\eta_{j}=\tilde{\eta}_{j}\}$.
\item We can use a common uniform $V$ in the first step of Algorithm \ref{alg:slice_sampling_1},
and then a common uniform $U^{*}$ in the third step. This is a pure
``common random numbers'' (CRN) strategy.
\end{enumerate}
We adopt a ``switch-to-CRN'' strategy: we scan the components $j\in\{1,\ldots,p\}$,
and sample $\eta_{j},\tilde{\eta}_{j}$ using the maximal coupling
strategy above. If any component fails to meet, we switch to the CRN
strategy for the remaining components.

\subsubsection{$\xi$-update}

To update $\xi,\tilde{\xi}$, we draw the proposals in the MRTH step
using a maximal coupling as in Algorithm \ref{alg:maximalcoupling}.
We then employ a common uniform variable for the two acceptance steps.

\subsubsection{$\sigma^{2}$-update}

To sample $\sigma^{2},\tilde{\sigma}^{2}$, we employ a maximal coupling
of Inverse Gamma distributions implemented using Algorithm \ref{alg:maximalcoupling}.

\subsubsection{$\beta$-update}

We use a CRN strategy, which amounts to using the same draws $r,\delta$
in the first step of Algorithm \ref{alg:fast_mvn_bhattacharya_1}
to sample both $\beta$ and $\tilde{\beta}$.

\subsection{Experiments\label{subsec:highdim:experiments}}

\begin{figure}[t]
  \centering \begin{subfigure}[b]{0.3\columnwidth} \includegraphics[width=1\columnwidth]{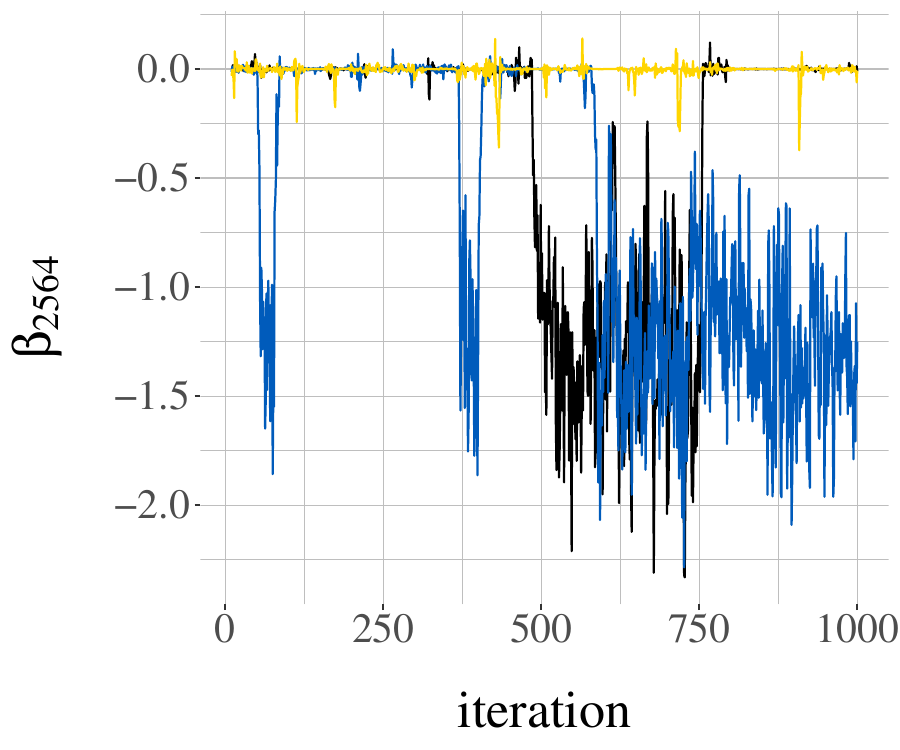}
\caption{}
\label{subfig:highdimreg:trace} \end{subfigure} \hspace*{0.1cm}
\begin{subfigure}[b]{0.3\columnwidth} \includegraphics[width=1\columnwidth]{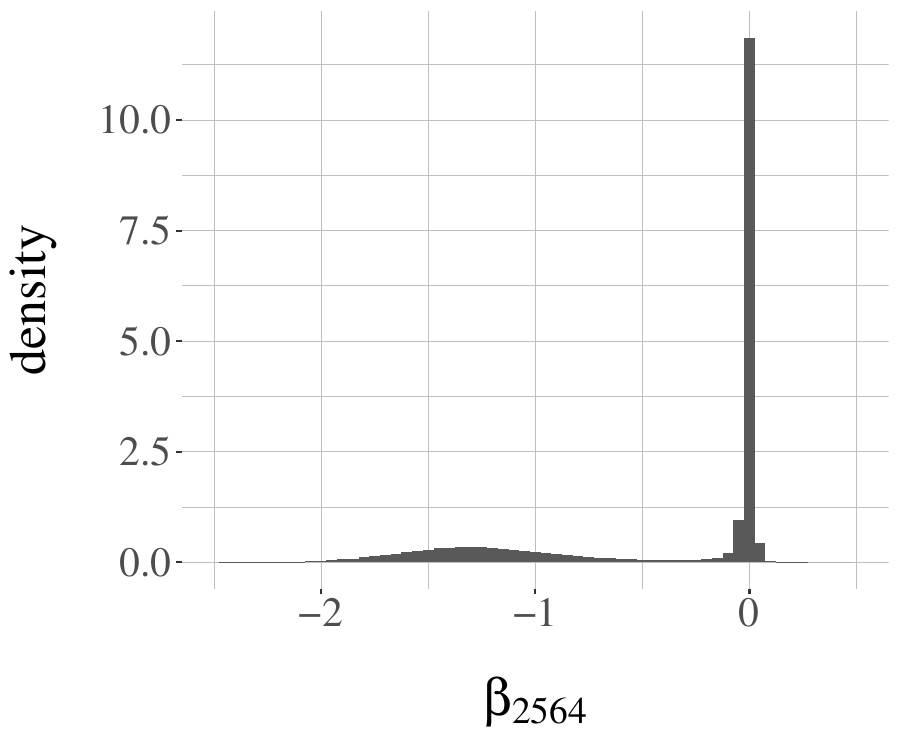}
\caption{}
\label{subfig:highdimreg:hist} \end{subfigure} \hspace*{0.1cm}
\begin{subfigure}[b]{0.3\columnwidth} \includegraphics[width=1\columnwidth]{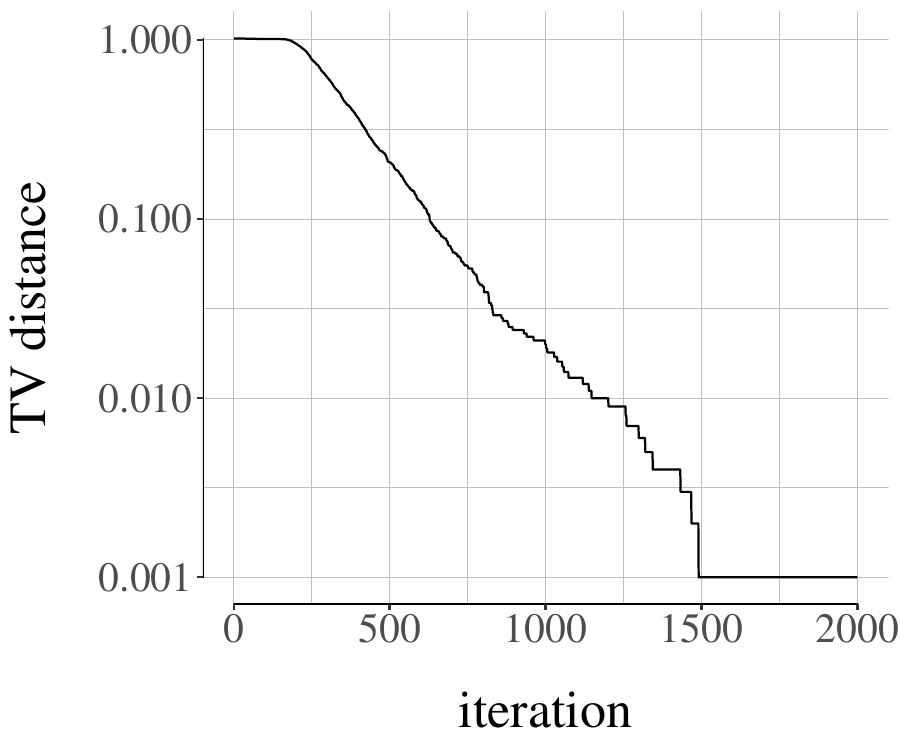}
\caption{}
\label{subfig:highdimreg:tvbounds} \end{subfigure} \caption{Gibbs sampler for linear
regression with shrinkage prior. Left: trace of the component $\beta_{2564}$ of three
independent chains. Middle: histogram
of $\beta_{2564}$, obtained from long MCMC runs. Right: upper
bounds on $|\pi_0 P^t - \pi|_{\text{TV}}$. \label{fig:highdimreg}}
\end{figure}

Based on preliminary runs, we choose the test function $\test:x\mapsto\beta_{2564}$,
which is a coordinate of the regression coefficients with a clearly
bimodal marginal posterior distribution. Figure \ref{subfig:highdimreg:trace}
shows three independent traces of $\beta_{2564}$ over 1000
iterations of the chain. Figure \ref{subfig:highdimreg:hist} presents a histogram
of $\beta_{2564}$, obtained from 10 independent chains run for 50,000
iterations each and discarding 2000 iterations as burn-in. Figure
\ref{subfig:highdimreg:tvbounds} shows upper bounds on $|\pi_{0}P^t-\pi|_{\text{TV}}$
obtained with the method of \citet{biswas2019estimating}, see Appendix~\ref{appx:tvupperbounds}, using a
lag $L=1000$ and $10^{3}$ independent meeting times. From this we
choose $k=L=1000$ and $\ell=5k$.

To define $\fishytest=g_{y}$, we draw $y$ once from the prior, and
keep it fixed. We generate $M=10^{3}$ independent estimates of $v(P,\test)$,
for $R\in\{1,5,10\}$. The results are summarized in Table \ref{tab:highdimreg:unbiased}.
We again observe tangible gains in efficiency when increasing
$R$, with diminishing returns. Overall we obtain relatively precise
information about $v(P,h)$.

\begin{table}[t]
\centering 
\begin{tabular}{r|l|l|l|l|l}
\hline
R & estimate & total cost & fishy cost & variance of estimator & inefficiency\\
\hline
1 & [77 - 98] & [12308 - 12383] & [1522 - 1594] & [2.2e+04 - 3.2e+04] & [2.7e+08 - 4.1e+08]\\
\hline
5 & [78 - 87] & [18469 - 18635] & [7683 - 7836] & [5.4e+03 - 6.9e+03] & [1e+08 - 1.3e+08]\\
\hline
10 & [78 - 85] & [26219 - 26444] & [15428 - 15652] & [2.6e+03 - 3.1e+03] & [6.7e+07 - 8.1e+07]\\
\hline
\end{tabular} \caption{Gibbs sampler for linear
regression with shrinkage prior: proposed estimators of $v(P,\protect\test)$.}
\label{tab:highdimreg:unbiased}
\end{table}

Here, unbiased MCMC estimators
of $\pi(h)$ have an expected cost of $5394$ and a variance of $0.020$, leading to an
inefficiency of $106$, while the
asymptotic variance $v(P,h)$ is estimated at $81$. Thus, unbiased MCMC is about 30\% less efficient than  ergodic average MCMC.
Users can then decide whether increasing the values of $k$, $L$ or $\ell$ to reduce the inefficiency of unbiased MCMC is warranted.

Finally, we compare with batch means and spectral variance estimators, as in the previous section. 
We compute BM and SV from $M=25$ independent runs, each of which involves $4$ chains. We discard the first 1,500 discarded as burn-in, and run the chains for $t=2\cdot10^5$ iterations. Figure \ref{fig:highdimreg:bmsv}
shows the resulting estimates. The BM and SV estimates have a low variability
compared to UPAVE. For example, the estimators ``SV r = 2'' with $ t=2\cdot10^5$ have a cost of $8\times 10^5$ and a variance of $4$. To match that variance, one would need to average about 700 UPAVE runs with $R=10$, which would cost about $2\times 10^7$ units of transitions, that is 23 times more than ``SV r = 2''.
Meanwhile, the bias of batch means and spectral variance is noticeable in Figure \ref{fig:highdimreg:bmsv}.

\begin{figure}[t]
  \centering 
\includegraphics[width=0.8\columnwidth]{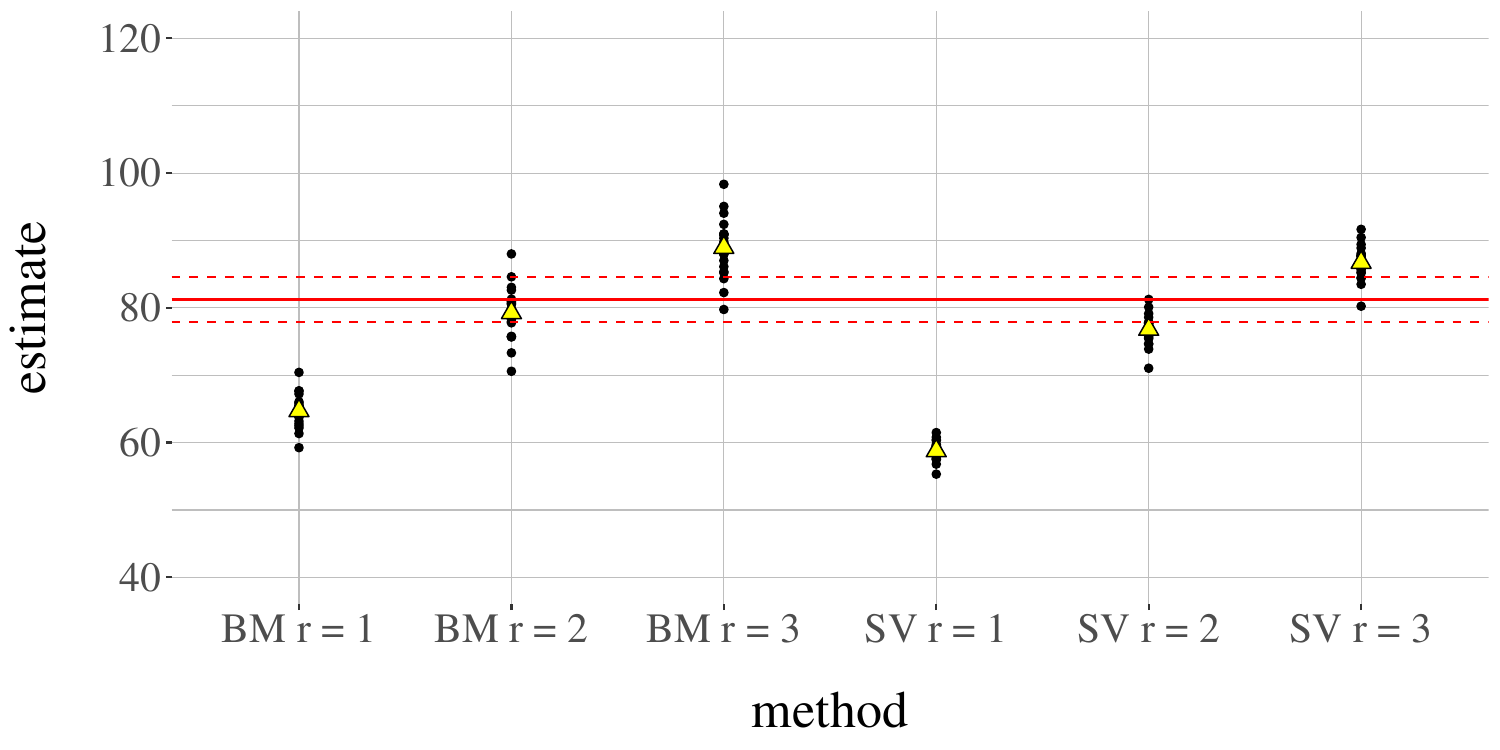}
\caption{Gibbs sampler for linear regression with shrinkage prior: batch mean and spectral variance estimators, from runs of length $2\cdot10^5$. Each dot represents an estimate of $v(P,h)$. The yellow triangles represent the means for each method. The horizontal lines represent the estimate of $v(P,h)$ and a $95\%$ confidence interval from 1000 runs of UPAVE with $R=10$. \label{fig:highdimreg:bmsv}}
\end{figure}

\section{Particle marginal Metropolis--Hastings\label{subsec:pmmh}}

We consider the state space model (SSM) example in \citet[Section 4.2,][]{middleton2020unbiased},
inspired by a model capturing the activation of neuron of rats when responding to a periodic stimulus.
The observations are counts of neuron activations over $50$ experiments. We consider $100$ data points
represented in Figure \ref{subfig:ssm:data}.
They are modelled as
$$ y_t | x_t \sim \text{Binomial}(50, \text{logistic}(x_t)),$$
where  $\text{logistic}:x\mapsto 1/(1+\exp(-x))$ and
$$ x_0 \sim \text{Normal}(0,1),\quad \text{and} \quad \forall t \geq 1\quad x_t | x_{t-1} \sim \text{Normal}(\alpha x_{t-1}, \sigma^2).$$
The prior is $\text{Uniform}(0,1)$ on $\alpha$,
and $\sigma^2$ is fixed to 1.5 here for simplicity.
The likelihood is intractable but can be estimated using a particle filter.
As in \citet{middleton2020unbiased} we use controlled SMC \citep{heng2020controlled},
and we plug the likelihood estimator in the particle marginal Metropolis--Hastings algorithms \citep[PMMH,][]{andrieu:doucet:holenstein:2010}.
We use 3 iterations of controlled SMC at each PMMH iteration. The proposal on $\alpha$ is a Normal
random walk, with a standard deviation drawn from $\text{Uniform}(0.001,0.2)$ at each iteration.
The coupling operates with a reflection-maximal coupling of the proposals,
and independent runs of SMC if the proposals differ. Verification
of Assumption \ref{assu:tau-moment-kappa} for the algorithm employed here
is an open question. Relevant comments can be found in \citet[][Section 2.3]{middleton2020unbiased}.
We initialize the chains from the prior $\text{Uniform}(0,1)$. An approximation of the posterior
distribution is shown in Figure \ref{subfig:ssm:posterior}, obtained from 100 chains of length $20,000$ and a burn-in of $1000$ steps.

\begin{figure}[t]
  \centering \begin{subfigure}[b]{0.45\columnwidth} \includegraphics[width=1\columnwidth]{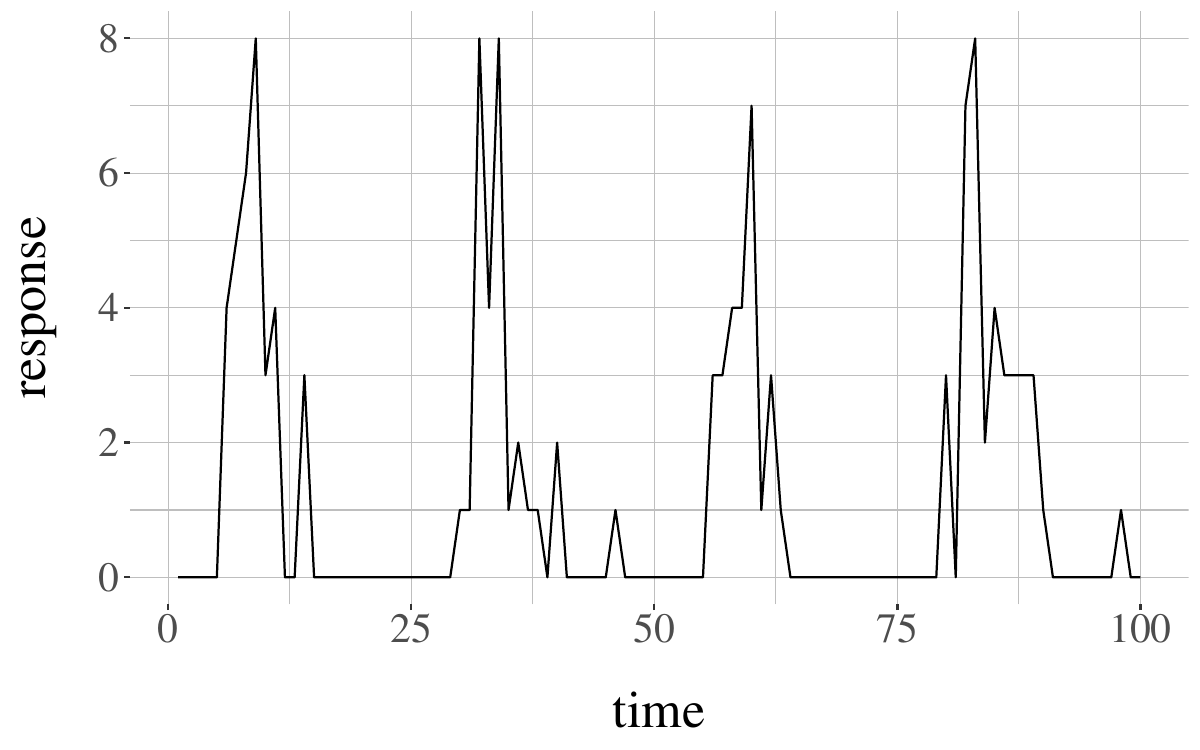}
\caption{}
\label{subfig:ssm:data} \end{subfigure} \hspace*{1cm} \begin{subfigure}[b]{0.45\columnwidth}
\includegraphics[width=1\columnwidth]{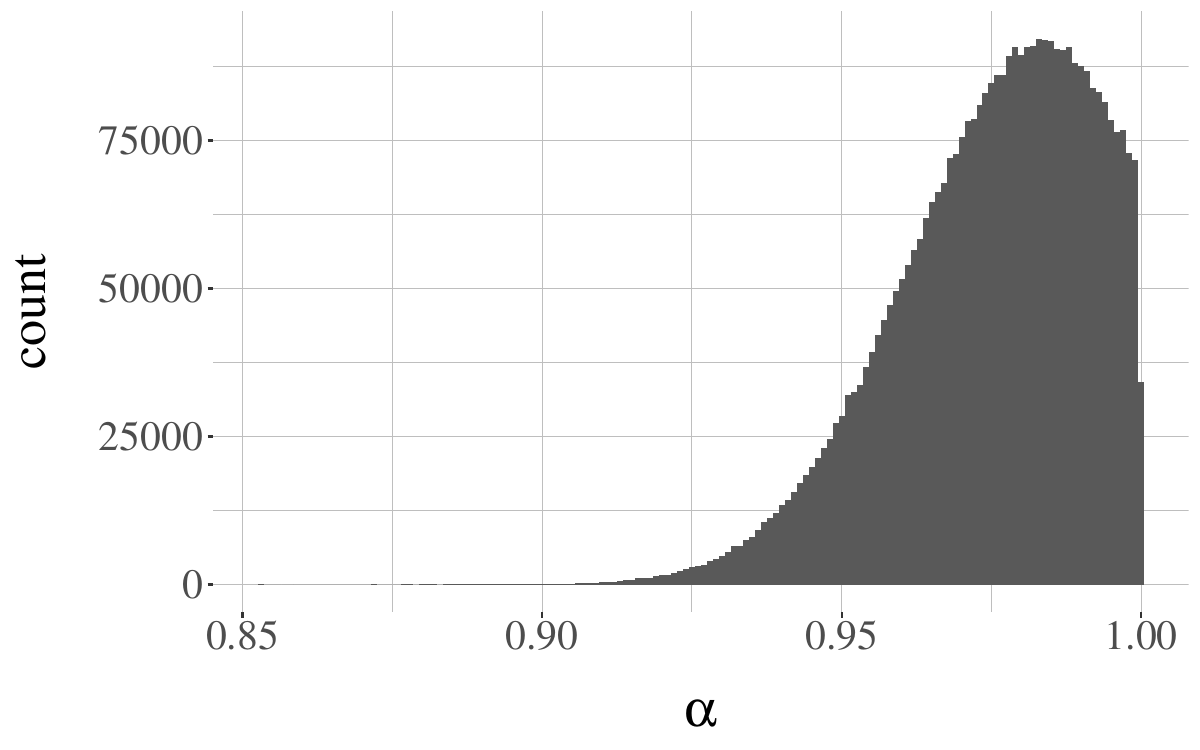}
\caption{}
\label{subfig:ssm:posterior} \end{subfigure}
\caption{Particle marginal Metropolis--Hastings: 100 observations (left) and posterior distribution on $\alpha$ (right) approximated with particle marginal Metropolis--Hastings.\label{fig:ssm:dataposterior}}
\end{figure}

Expecting PMMH to be polynomially ergodic,
we examine the tails of the distribution of the meeting times. We generate $10^5$ meeting times, either using 64 or 256 particles in each run of SMC within PMMH.
The empirical survival functions of the meeting times $\tau$, or more exactly of $\tau - L$ with a lag $L=100$, are shown in Figure \ref{fig:ssm:meetings}.
Since both axes are on logarithmic scale, a straight line indicates a polynomial decay for $\mathbb{P}(\tau > t)$. We indeed observe straight lines
on the parts of figure where $t$ is large enough.
Using linear regression we estimate the polynomial rate to be around 1 when using 64 particles (focusing on $t>200$),
and above 2 when using 256 particles (focusing on $t>100$).

\begin{figure}[t]
  \centering \begin{subfigure}[b]{0.45\columnwidth} \includegraphics[width=1\columnwidth]{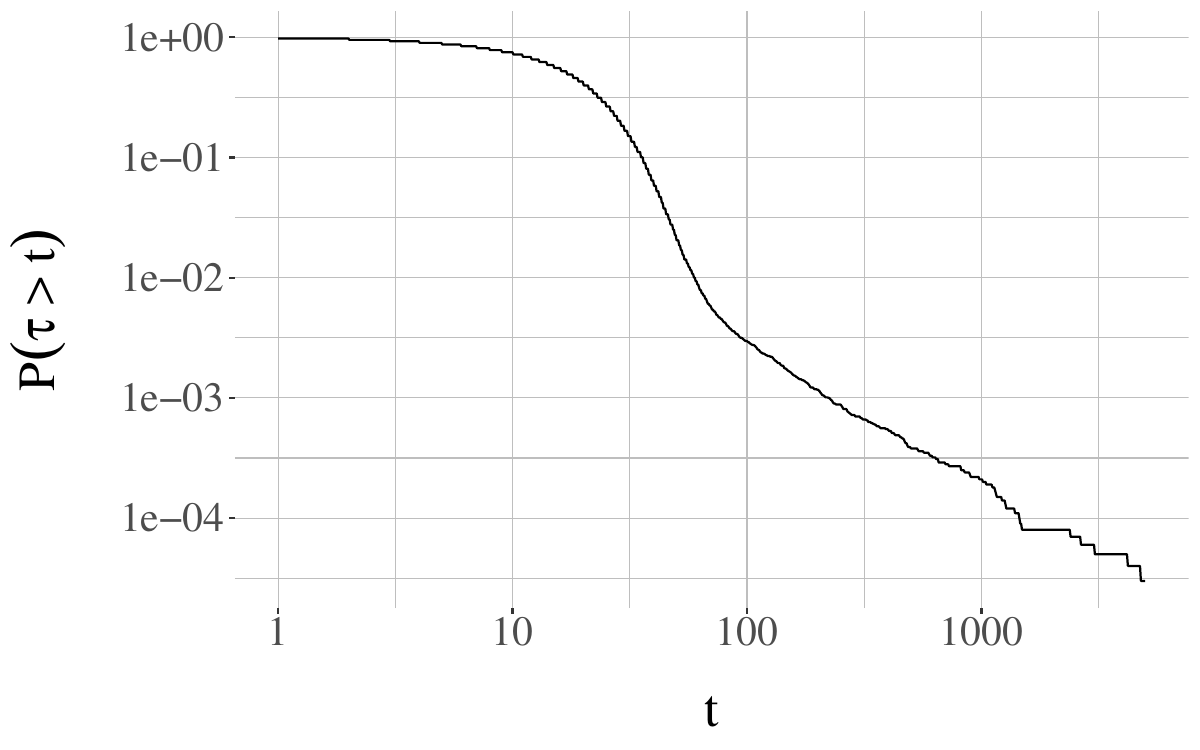}
\caption{}
\label{subfig:ssm:meetings64} \end{subfigure} \hspace*{1cm} \begin{subfigure}[b]{0.45\columnwidth}
\includegraphics[width=1\columnwidth]{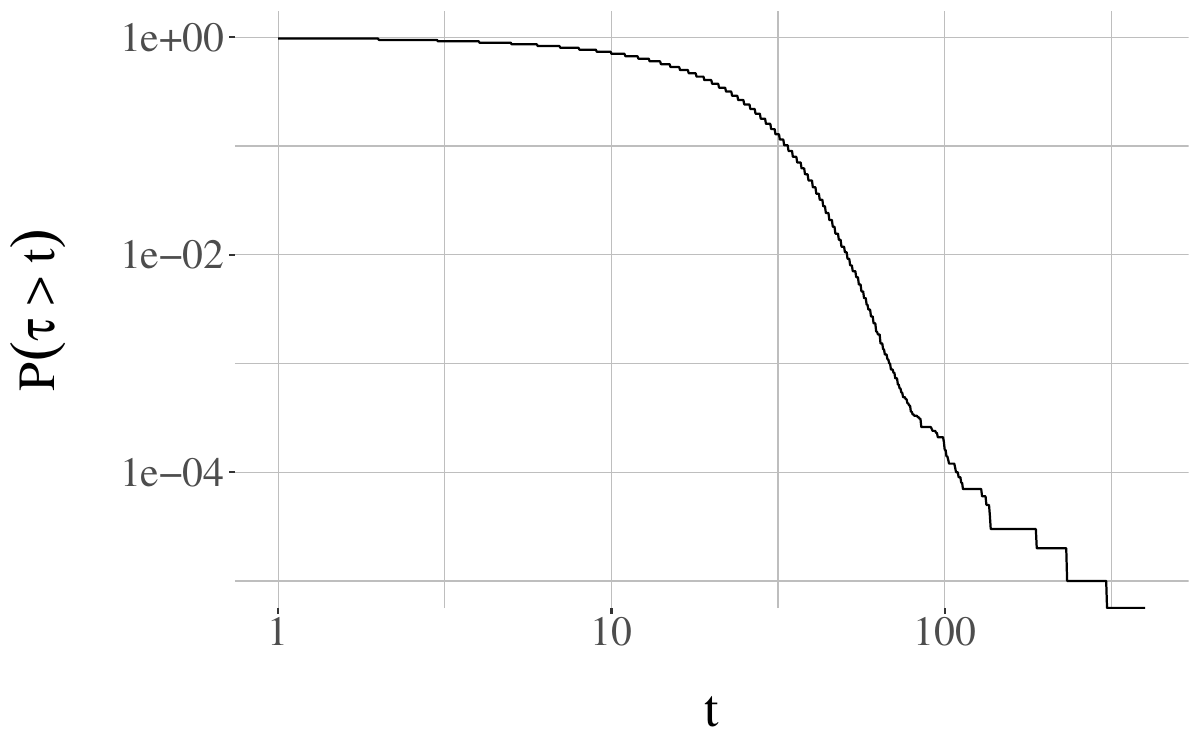}
\caption{}
\label{subfig:ssm:meetings256} \end{subfigure}
\caption{Particle marginal Metropolis--Hastings: survival function $\mathbb{P}(\tau>t)$ when using 64 particles (left) or 256 particles (right) in PMMH.
Both axes use logarithmic scale.\label{fig:ssm:meetings}}
\end{figure}

Given the heavy tails of $\tau$ when using 64 particles, we were not able to reliably estimate
the associated $v(P,\test)$.
We thus focus on the use of 256 particles, and we generate UPAVE to estimate $v(P,h)$,
$M=500$ times independently, for $h:x\mapsto x$.
We set $y=0.5$ in the definition of the fishy function estimator $\fishytestestimator_y$.
We use $k=L=500$ and $\ell=5k$ for unbiased MCMC approximations. We choose $R=50$,
the number of atoms at which $\fishytest_y$ is estimated per signed measure.
From the UPAVE runs, we can extract all the locations at which $\fishytest_y$ is estimated by $\fishytestestimator_y$,
along with the estimates.  We then represent an approximation of $\fishytest_y$ in Figure \ref{subfig:ssm:htilde:badanchor},
and a histogram of the $500$ estimates of $v(P,\test)$  in Figure \ref{subfig:ssm:UPAVE:badanchor}. We see that the relative variance
is fairly large, and notice that many estimates are negative.

\begin{figure}[t]
  \centering \begin{subfigure}[b]{0.45\columnwidth} \includegraphics[width=1\columnwidth]{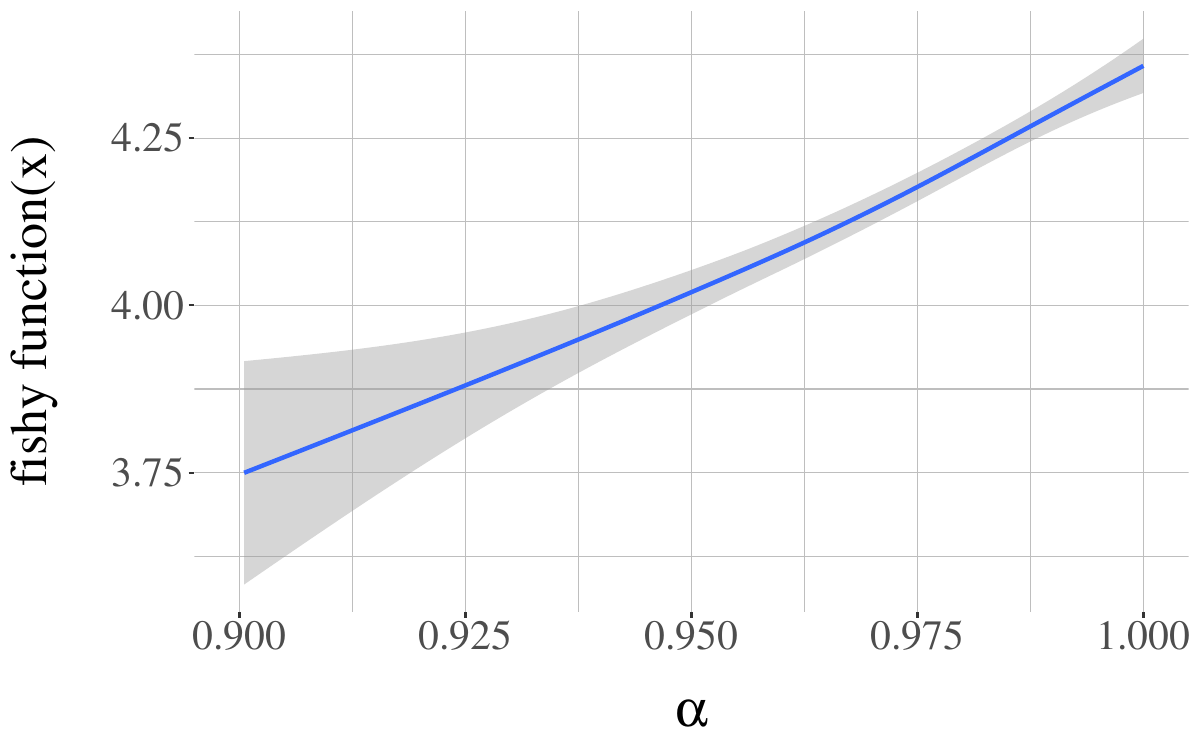}
\caption{}
\label{subfig:ssm:htilde:badanchor} \end{subfigure} \hspace*{1cm} \begin{subfigure}[b]{0.45\columnwidth}
\includegraphics[width=1\columnwidth]{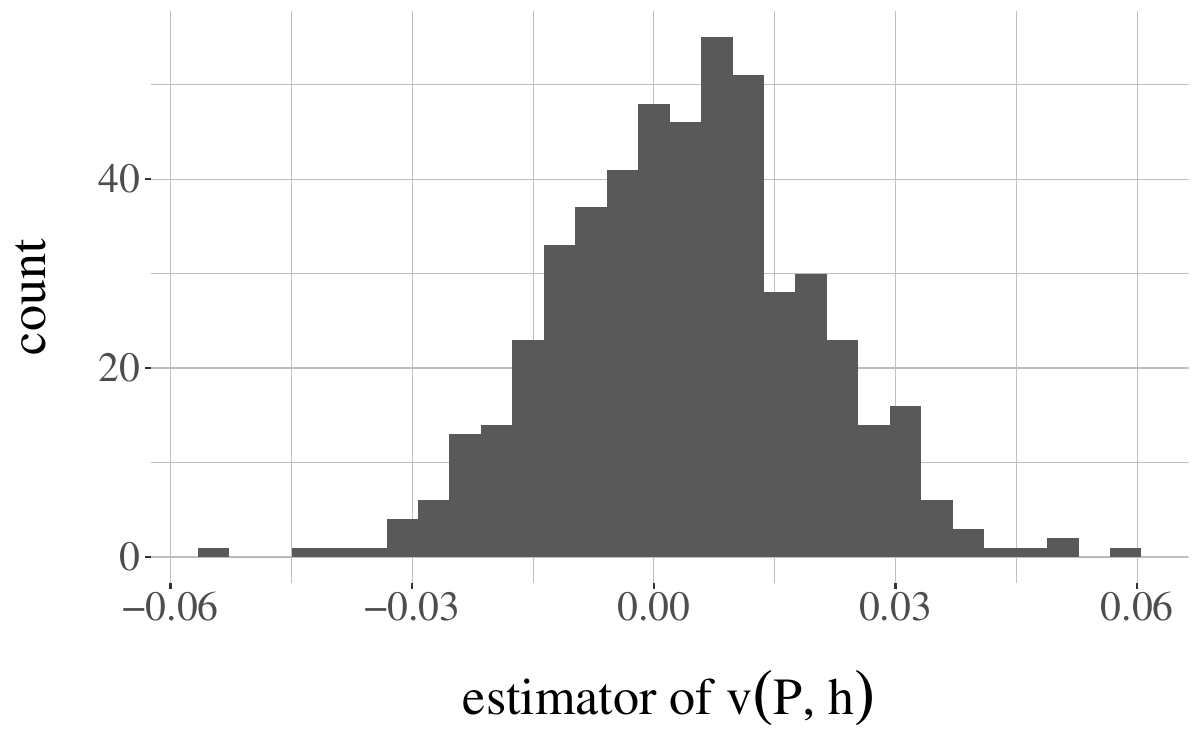}
\caption{}
\label{subfig:ssm:UPAVE:badanchor} \end{subfigure}
\caption{Particle marginal Metropolis--Hastings: using $y=0.5$, estimation of $\fishytest_y(x)$ (left) and histogram of proposed estimators of $v(P,\test)$ (right).\label{fig:ssm:badanchor}}
\end{figure}

We then change $y$ from $0.5$ to $0.975$, i.e. we place $y$ in the middle of the posterior distribution as shown in Figure \ref{subfig:ssm:posterior}, and reproduce the same plots in Figure \ref{fig:ssm:goodanchor}.
We see that the fishy function takes smaller values and its estimation is more precise.
As a result, the distribution
of $\hat{v}(P,\test)$ is considerably more concentrated.
The effect of the choice of $y$ is summarized in Table \ref{tab:ssm:comparey},
where all entries are confidence intervals based on the nonparametric bootstrap.
We observe that the choice of $y$ impacts the cost of fishy function estimation,
as well as its variance and thus the variance of UPAVE. Here this results in orders
of magnitude of difference in efficiency.

\begin{figure}[t]
  \centering \begin{subfigure}[b]{0.45\columnwidth} \includegraphics[width=1\columnwidth]{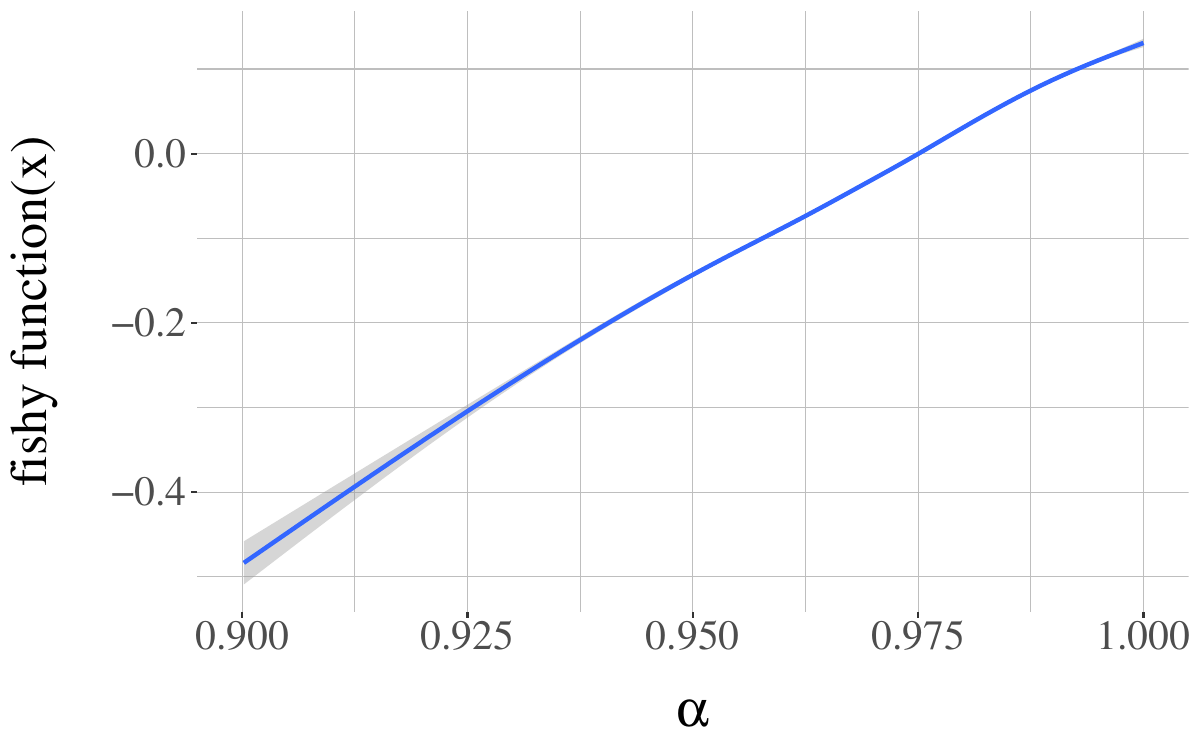}
\caption{}
\label{subfig:ssm:htilde:goodanchor} \end{subfigure} \hspace*{1cm} \begin{subfigure}[b]{0.45\columnwidth}
\includegraphics[width=1\columnwidth]{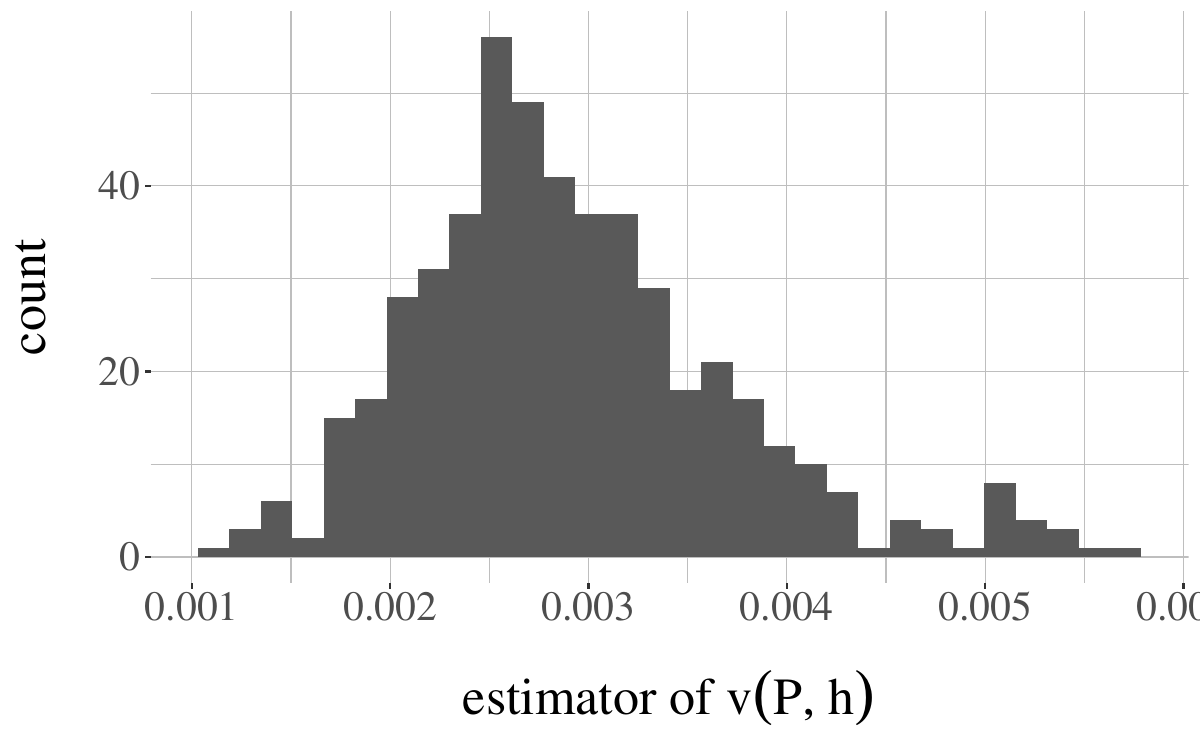}
\caption{}
\label{subfig:ssm:UPAVE:goodanchor} \end{subfigure}
\caption{Particle marginal Metropolis--Hastings: using $y=0.975$, estimation of $\fishytest_y(x)$ (left) and histogram of proposed estimators of $v(P,\test)$ (right).\label{fig:ssm:goodanchor}}
\end{figure}

\begin{table}[t]
	\centering 
\begin{tabular}{l|l|l|l|l}
\hline
y & estimate & total cost & fishy cost & variance of estimator\\
\hline
0.5 & [2.68e-03 - 5.36e-03] & [8.65e+03 - 8.7e+03] & [3.61e+03 - 3.67e+03] & [2.2e-04 - 2.8e-04]\\
\hline
0.975 & [2.85e-03 - 2.99e-03] & [6.04e+03 - 6.08e+03] & [1.01e+03 - 1.05e+03] & [5.4e-07 - 7.3e-07]\\
\hline
\end{tabular} \caption{Particle marginal Metropolis--Hastings: effect of $y$ on the proposed confidence interval for $v(P,\test)$,
	the cost of fishy function estimation, the variance of $\hat{v}(P,\test)$ and its inefficiency, based on $M=500$ independent repeats, and using $R=50$,
$k=L=500$, $\ell=5k$.}
\label{tab:ssm:comparey}
\end{table}

We can compare $v(P,\test)$, which is approximately $2.9\times 10^{-3}$, to the inefficiency associated with unbiased MCMC with $k=L=500$ and $\ell=5k$.
We compute the variance and the expected cost of unbiased MCMC estimators of $\pi(h)$ and find an inefficiency of $3.8\times 10^{-3}$. The loss
of efficiency of unbiased MCMC relative to  ergodic average MCMC is approximately $30\%$. 

Figure \ref{fig:binomialssm:bmsv} presents 25 BM and SV estimates, each based on 4 parallel chains. The chains are of length $40,000$
and the first $1000$ iterations are discarded as burn-in. Here the bias is hard to notice. Meanwhile, the variability of BM and SV is comparable with that of UPAVE: for example the SV estimator with $r = 2$ has a variance of $3.3 \cdot 10^{-8}$, comparable to that of an average of 18 UPAVE with $R=50$, which would amount to a cost of 110,000 transitions, which is less than the cost of producing the SV estimator with 4 chains of length $40,000$.

\begin{figure}[t]
  \centering 
\includegraphics[width=.8\columnwidth]{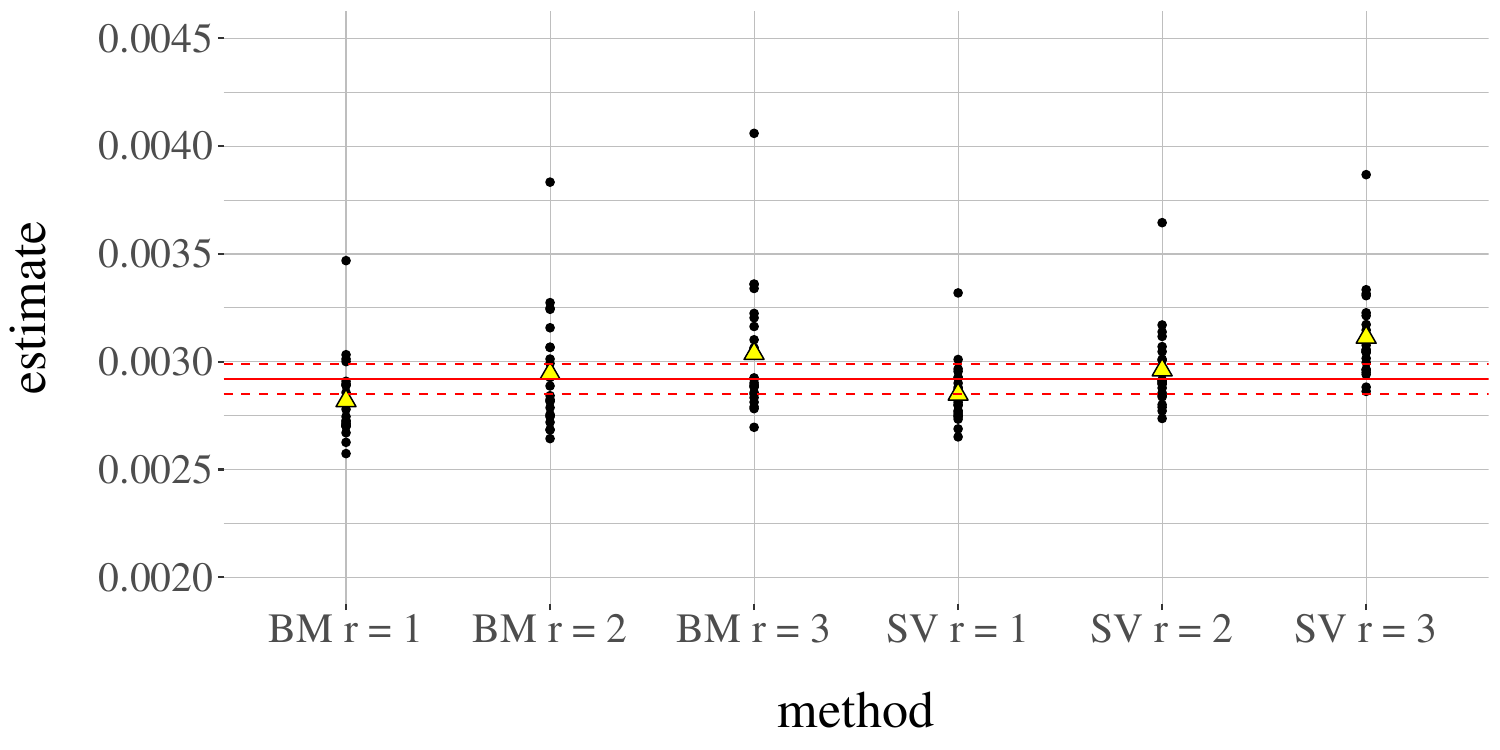}
\caption{Particle marginal Metropolis--Hastings: batch mean and spectral variance estimators, from runs of length $4\cdot10^4$. Each dot represents an estimate of $v(P,h)$. The yellow triangles represent the means for each method. The horizontal lines represent the estimate of $v(P,h)$ and a $95\%$ confidence interval obtained from 500 runs of UPAVE with $R=50$. \label{fig:binomialssm:bmsv}}
\end{figure}

\section{More experiments in the Cauchy location model\label{appx:twocompetingmcmc}}

\begin{figure}[t]
  \centering \begin{subfigure}[b]{0.45\columnwidth} \includegraphics[width=1\columnwidth]{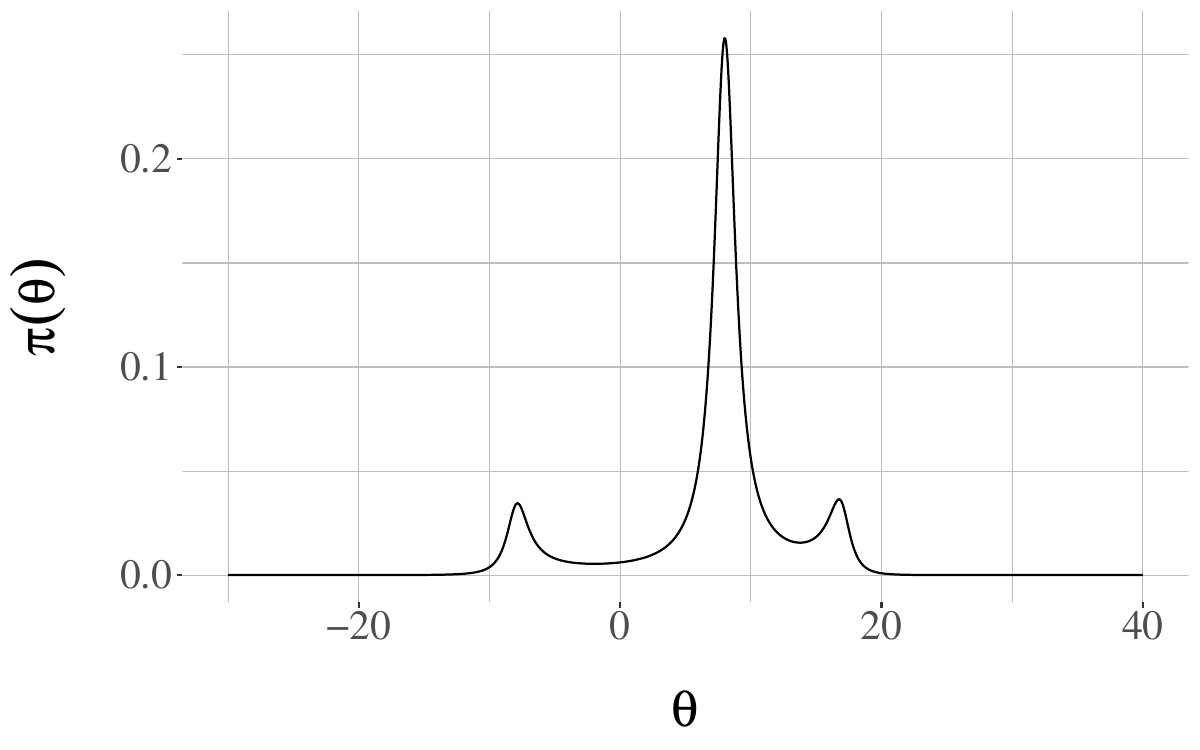}
\caption{}
\label{subfig:cauchynormal:target} \end{subfigure} \hspace*{1cm}
\begin{subfigure}[b]{0.45\columnwidth} \includegraphics[width=1\columnwidth]{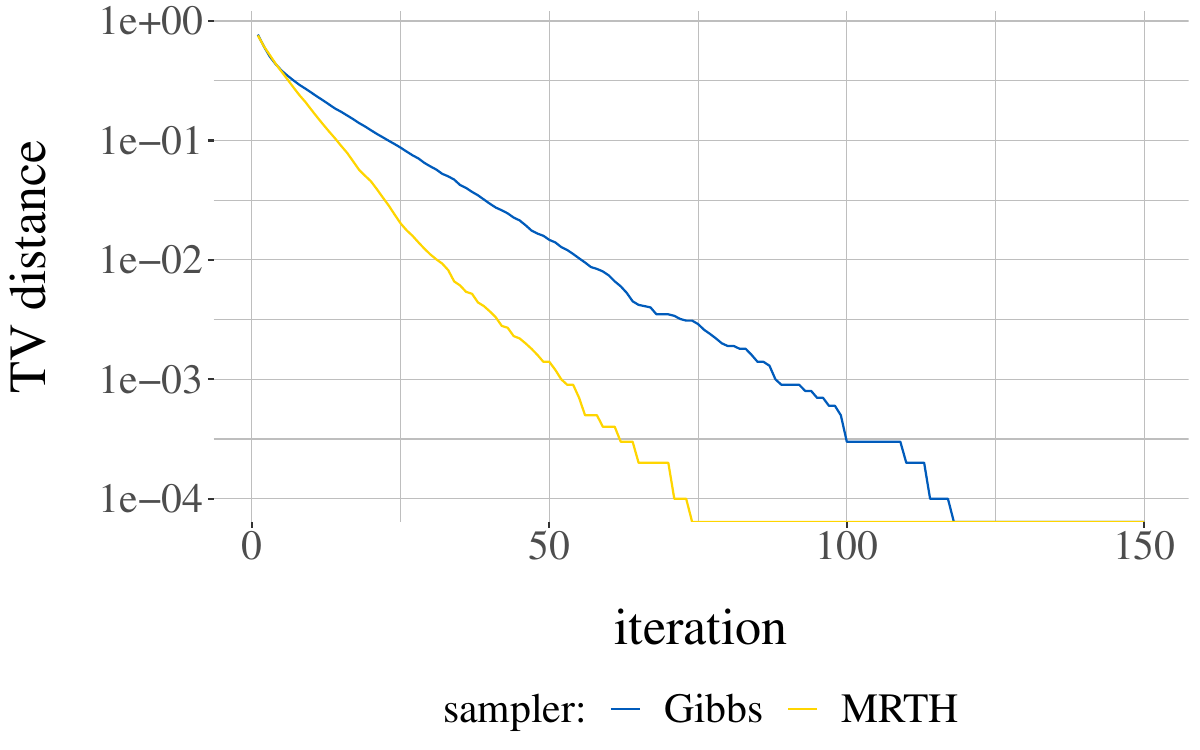}
\caption{}
\label{subfig:cauchynormal:tvupperbounds} \end{subfigure}
\caption{Cauchy-Normal example: target density (left) and upper bounds on $|\pi_0 P^t - \pi|_{\text{TV}}$
for two algorithms (right).
\label{fig:cauchynormal:targetandtvupperbounds}}
\end{figure}

We revisit the Cauchy location model and the two MCMC algorithms defined in Section \ref{subsec:fishyillustration}.
Figure \ref{subfig:cauchynormal:target}
shows the target density.
The initial distribution, for both chains, is set
to $\pi_{0}=\text{Normal}(0,1)$, for which one can verify that ${\rm d}\pi_0 / {\rm d}\pi$ is bounded.
Figure \ref{subfig:cauchynormal:tvupperbounds}
provides upper bounds on $|\pi_0 P^t - \pi|_{\text{TV}}$
for the two algorithms, using the method of \citet{biswas2019estimating}.

From Figure  \ref{subfig:cauchynormal:tvupperbounds} we select $k=100$, $L=100$ for the Gibbs sampler,
and $k=75$, $L=75$ for MRTH. In both cases we use $\ell=5k$. We generate UPAVE estimators
using different values of $R$, for both MCMC algorithms. We first use uniform selection probabilities,
$\xi = 1/N$. The results of $M=10^3$ independent runs are shown in Tables
\ref{tab:cauchynormal:gibbs} and \ref{tab:cauchynormal:mrth}.
Each entry shows a $95\%$ confidence interval obtained with the nonparametric bootstrap
from the $M$ independent replications.
The columns correspond to:
1) $R$: the number of atoms in each signed measure $\hat{\pi}$ at which fishy function estimators are generated,
2) estimate: overall estimate of $v(P,h)$, obtained by averaging $M=10^3$ independent estimates,
3) total cost of each proposed estimate, in units of ``Markov transitions''
4) fishy cost: subcost associated with the fishy function estimates (increases with $R$), 5) empirical variance of the proposed estimators (decreases with $R$), and 6) inefficiency: product of variance and total cost (smaller is better).
We observe that it is worth increasing $R$ up to the point where the fishy cost accounts for a significant portion of the total cost.
From these tables we can confidently conclude that MRTH leads to a smaller asymptotic variance than the Gibbs sampler.
Combined with an implementation-specific measure of the wall-clock time per iteration this can lead to a practical ranking of these two algorithms.

\begin{table}[h]
  \centering 
\begin{tabular}{r|l|l|l|l|l}
\hline
R & estimate & total cost & fishy cost & variance of estimator & inefficiency\\
\hline
1 & [736 - 992] & [1049 - 1054] & [32 - 36] & [3e+06 - 6.4e+06] & [3.1e+09 - 6.7e+09]\\
\hline
10 & [835 - 923] & [1349 - 1363] & [332 - 345] & [4.7e+05 - 5.9e+05] & [6.4e+08 - 8e+08]\\
\hline
50 & [849 - 903] & [2686 - 2713] & [1667 - 1696] & [1.7e+05 - 2.1e+05] & [4.7e+08 - 5.6e+08]\\
\hline
100 & [856 - 903] & [4379 - 4423] & [3361 - 3406] & [1.4e+05 - 1.7e+05] & [6.3e+08 - 7.4e+08]\\
\hline
\end{tabular} \caption{Cauchy-Normal example: estimators of $v(P,\protect\test)$ for the Gibbs sampler.}
\label{tab:cauchynormal:gibbs}
\end{table}

\begin{table}
  \centering 
\begin{tabular}{r|l|l|l|l|l}
\hline
R & estimate & total cost & fishy cost & variance of estimator & inefficiency\\
\hline
1 & [299 - 388] & [786 - 788] & [23 - 25] & [4e+05 - 7.3e+05] & [3.2e+08 - 5.8e+08]\\
\hline
10 & [331 - 364] & [996 - 1003] & [233 - 240] & [6.2e+04 - 7.9e+04] & [6.3e+07 - 7.8e+07]\\
\hline
50 & [333 - 351] & [1947 - 1966] & [1185 - 1203] & [1.9e+04 - 2.3e+04] & [3.8e+07 - 4.6e+07]\\
\hline
100 & [335 - 349] & [3139 - 3168] & [2376 - 2405] & [1.3e+04 - 1.6e+04] & [4.2e+07 - 5e+07]\\
\hline
\end{tabular} \caption{Cauchy-Normal example: estimators of $v(P,\protect\test)$ for the MRTH sampler.}
\label{tab:cauchynormal:mrth}
\end{table}

Using the fishy function estimates shown in Figure
\ref{fig:cauchynormal:htilde}, we fit generalized additive models
\citep{woodgam2017} with a cubic spline basis for the function
$x\mapsto\mathbb{E}[{\fishytestestimator}(x)^{2}]$ in order to approximate the
optimal selection probabilities $\xi$ in \eqref{eq:optimalselection}.
We then
run the proposed estimators of $v(P,\test)$, for both algorithms, with $R=10$
and $M=10^{3}$ independent replicates, using the approximated optimal $\xi$.  The results are shown in Table
\ref{tab:cauchynormal:comparison}.
We report the fishy cost, and we note that it is impacted
by the optimal tuning of selection probabilities: for Gibbs it increases,
while for MRTH it decreases. The variance of the estimator decreases, as expected.
Overall the inefficiency decreases by a factor of $2$ or $3$ in this example.

\begin{table}
\centering 
\begin{tabular}{l|l|l|l|l}
\hline
algorithm & selection $\xi$ & fishy cost & variance of estimator & inefficiency\\
\hline
Gibbs & uniform & [332 - 345] & [4.7e+05 - 5.9e+05] & [6.4e+08 - 8e+08]\\
\hline
Gibbs & optimal & [408 - 422] & [2.2e+05 - 2.8e+05] & [3.1e+08 - 4e+08]\\
\hline
MRTH & uniform & [233 - 240] & [6.2e+04 - 7.8e+04] & [6.2e+07 - 7.8e+07]\\
\hline
MRTH & optimal & [190 - 196] & [2.2e+04 - 2.7e+04] & [2.1e+07 - 2.6e+07]\\
\hline
\end{tabular} \caption{Cauchy-Normal example: estimators of $v(P,\protect\test)$
  for Gibbs and MRTH, using either optimal or uniform selection probabilities $\xi$.}
\label{tab:cauchynormal:comparison}
\end{table}

\end{document}